\documentclass[a4paper,11pt]{article}

\usepackage{jheppub}
\usepackage{amsmath}
\usepackage{amsthm}
\usepackage{amssymb}
\allowdisplaybreaks
\usepackage{graphicx}
\usepackage{epstopdf}
\usepackage{xcolor}
\usepackage{dsfont}
\usepackage{hyperref}
\usepackage{rotating}
\usepackage{afterpage}
\usepackage{arydshln}
\usepackage{cancel}

\newtheorem{prop}{Proposition}

\newtheorem*{prop*}{Proposition}
\newtheorem{cor}[prop]{Corollary}
\newtheorem*{cor*}{Corollary}

\newcommand{\be}{\begin{equation}}
	\newcommand{\ee}{\end{equation}}
\newcommand{\bea}{\begin{eqnarray}}
	\newcommand{\eea}{\end{eqnarray}}
\newcommand{\nn}{\nonumber}
\newcommand{\bdm}{\begin{displaymath}}
	\newcommand{\edm}{\end{displaymath}}

\newcommand{\dd}{\text{d}}
\newcommand{\diag}{\text{diag}}
\DeclareMathOperator{\Ker}{Ker}
\DeclareMathOperator{\image}{Im}

\newcommand{\Rea}{\text{Re}}

\title{
Timelike Supersymmetric Geometries 
in Seven-Dimensional Supergravity} 

\author{Mich\`ele Levi}%

\affiliation{Mathematical Institute, University of Oxford, \\
	Radcliffe Observatory Quarter, Woodstock Road, \\
	Oxford OX2 6GG, United Kingdom}

\emailAdd{michele.levi@maths.ox.ac.uk}

\abstract{
Seven-dimensional supergravity poses a uniquely intricate supersymmetry 
problem, combining topologically massive higher-form gauge dynamics, 
distinct Clifford algebra properties, 
and rich geometric structures.
We first establish the complete formulation of the maximal Cartan-gauged theory 
in terms of the gauge-invariant higher-form flux, 
including both the bosonic action and supersymmetry variations.
This formulation, which also makes the smooth ungauged limit manifest, is 
achieved by requiring a consistent truncation to the minimal Cartan-gauged
theory in the equal-charge limit.
We construct the complete bilinear system for a timelike Killing 
vector, where additional internal orientation of the symplectic-Majorana spinor 
arises already in the scalar sector, 
uniquely among pure minimal ungauged theories.
In the equal-charge sector, we decouple the algebraic constraints into 
SU$(3)$-isotypic components, and analyse the differential bilinear relations, 
determining the permitted dilaton gradient, fluxes, fibration twist 
and intrinsic torsion of the base.
We prove that with gauging and active Cartan flux, 
the complex scalar bilinear vanishes identically. 
We also construct a Killing gauge for the higher-form potential 
compatible with the self-duality condition.
We show that with gauging but no active Cartan flux,
the Killing vector also acts as a generator of the gauged R-symmetry,
and the base structure is enlarged from SU$(3)$ to U$(3)$. 
We find that BPS-flow solutions realize this setting. 
With supersymmetry, we isolate a single independent component of the 3-form 
equation across an exhaustive four-class partition for both theories 
within this setup.
Applied to known and proposed solutions in both theories, this classification 
exposes unexplored sectors and provides an accessible toolbox for 
constructing and testing solutions and developing new geometric 
and holographic applications.	
}

\keywords{
Supergravity Models, 
Differential and Algebraic Geometry,
Black Holes in String Theory,
Gauge-Gravity Correspondence
}

\begin{document}

\maketitle

\flushbottom

\pagenumbering{arabic}
\setcounter{page}{1}

\section{Introduction}

Six-dimensional superconformal field theories (SCFTs) occupy an exceptional place
among quantum field theories (QFTs).
Their ${\cal N}=(2,0)$ theory associated with coincident M5-branes is 
maximally supersymmetric and interacting.
It has no known conventional Lagrangian description, yet its compactifications
generate a large range of lower-dimensional strongly-coupled supersymmetric
QFTs.
In the holographic regime, seven-dimensional gauged supergravity provides 
a natural effective description on the gravitational side of the
six-dimensional physics.
Maximal SO$(5)$-gauged seven-dimensional supergravity arises 
from the consistent reduction of eleven-dimensional supergravity on $S^4$, 
and its maximally-supersymmetric AdS$_7$ vacuum 
is dual to the six-dimensional $(2,0)$ theory at large $N$
\cite{Maldacena:1997re,Henningson:1998gx,Pernici:1984xx,
	Nastase:1999cb,Nastase:2000tu}.
The reduction retains the R-symmetry gauge fields, scalars and higher-form
degrees of freedom (DOFs) needed to describe charged deformations of the AdS$_7$
vacuum and wrapped-brane backgrounds, while admitting an uplift back to the
full eleven-dimensional theory.
Seven-dimensional supergravity is thus more than an effective reduction of
string theory: it is a common gravitational arena for studying six-dimensional 
SCFTs and the associated lower-dimensional QFTs, 
together with defects and states arising across these dimensions.

This convergence of diverse physics in seven dimensions occurs across more than 
a single supergravity theory.
The admissible gauge deformations of the maximal seven-dimensional theory 
are organized by an embedding tensor; 
the original compact SO$(5)$ gauging obtained from the $S^4$ reduction 
is the one considered here 
\cite{Sezgin:1982gi,Pernici:1984xx,Samtleben:2005bp}.
A different reduction of type IIA supergravity on $S^3$ gives maximal
SO$(4)$-gauged supergravity \cite{Cvetic:2000ah}.
The minimal theory is half-maximal, with SU$(2)$ as its full R-symmetry gauging 
\cite{Townsend:1983kk}. 
Its gravity multiplet contains a metric, a scalar, an SU$(2)$ triplet of vector 
fields and a $3$-form potential. 
The theory admits two independent deformations: the SU$(2)$ gauging 
and a topological mass for the $3$-form. 
The minimal theory has several higher-dimensional origins. 
Its ungauged form is obtained by truncating the half-maximal reduction of 
ten-dimensional type I theory on $T^3$ \cite{Chamseddine:1999cr}.
When both deformations are present, the minimal theory both embeds into the
eleven-dimensional theory on $S^4$, and arises as a consistent 
truncation of the ten-dimensional massive type IIA theory
\cite{Lu:1999bc,Apruzzi:2013yva,Passias:2015gya}.
\textit{The common Cartan truncation provides a strategically useful   
window onto this wider network of gauged seven-dimensional theories 
and their string-theory origins.}

Gauging is indispensable for the holographic applications of these
theories, yet the corresponding ungauged limits remain part of the underlying
supergravity problem.
In seven dimensions, several features combine to make this problem highly 
distinctive.
The vector and 3-form gauge fields are always coupled through a Chern--Simons
term, yet notably,
when the theory is gauged the $3$-form can be topologically massive. 
It is then subject to an odd-dimensional self-duality relation, which couples 
the $3$-form potential to its field strength, a distinctive feature of 
seven-dimensional gauged supergravity \cite{Townsend:1983xs,Pernici:1984xx}.
In Lorentzian dimensions $d=4k-1$, this first-order massive relation for
real $(2k-1)$-form potentials takes the schematic form
\be
\qquad \qquad \qquad A_{2k-1}\propto\star_d\dd A_{2k-1}\, ,\qquad d=4k-1\, .
\ee
Its eleven-dimensional analogue would therefore require a $5$-form potential, 
which eleven-dimensional supergravity does not contain.
Indeed, the latter theory has only the metric and a $3$-form potential in its
bosonic sector 
\cite{Gauntlett:2002fz}.

Furthermore, a Hodge self-dual decomposition of real forms at a single middle 
degree requires a positive Hodge square, which can be identified 
from the convention-independent rule
\be
\star_{2k{\rm \bf E}}^2
= -\star_{2k{\rm \bf L}}^2
= (-1)^k\, ,
\ee
for middle-degree $k$-forms in Euclidean or Lorentzian even dimension $2k$.
For a spacetime of dimension $d$, the relevant manifold is therefore
Lorentzian spacetime itself when $d$ is even, or its Euclidean base
of dimension $d-1$ when $d$ is odd.
Therefore, among the supergravity dimensions $4\leq d\leq11$,
only $d=4,7,8,11$ lack this real middle-degree splitting.
In pure minimal ungauged theories in the other dimensions, 
the relevant Clifford-algebra properties then
constrain the middle-degree bilinears to definite Hodge duality.
For spinors in five and nine dimensions 
in the case with a timelike Killing vector, 
the induced spinors on the Euclidean bases are chiral.
Their respective $2$- and $4$-form bilinears therefore have definite
Hodge duality \cite{Gauntlett:2002nw,Tsimpis:2005kj}.
In six and ten dimensions, spinor chirality instead fixes the duality
of the Lorentzian $3$- and $5$-form bilinears, respectively
\cite{Gutowski:2003rg,Tomasiello:2011eb}.

Nevertheless, even the Lorentzian Clifford algebras in $d=4,8,11$ admit 
ordinary Majorana spinors, and their bilinears vanish identically 
at several degrees by the symmetry of their Clifford algebras 
\cite{VanProeyen:1999ni}:
a single spinor yields only two, five, and three non-vanishing
real bilinear forms in these dimensions, respectively.
In seven dimensions, the properties of Cliff$(1,6)$ instead require 
a symplectic-Majorana (SM) doublet, yielding eight non-vanishing 
real bilinears.
Moreover, its scalar and $3$-form bilinears form SM SU$(2)$ 
triplets, whose internal orientation therefore additionally 
entails two SM angles.
Their variation enters the coupled geometric and flux constraints,
introducing an angular dependence that is also absent in $d=4,8,11$.
Seven dimensions is thus singled out as uniquely intricate 
even among pure minimal ungauged theories: 
the confluence of its Clifford-algebra properties and the absence 
of the above real Hodge splitting 
yields a larger, more complex spinor bilinear system.

The underlying geometry of the six-dimensional base further compounds the 
distinctive complexities already identified in seven dimensions.
A timelike Killing vector in seven-dimensional spacetime determines a fibration 
over a base of odd complex dimension three, whose generic SU$(3)$ structure 
is specified by a real $2$-form and a complex $3$-form.
Unlike the hyper-K\"ahler or K\"ahler four-dimensional bases familiar from
five-dimensional supergravity \cite{Gauntlett:2002nw,Gauntlett:2003fk},
its middle-degree structure is intrinsically complex, and its intrinsic
torsion is distributed among five classes 
\cite{chiossi2002intrinsic,Cabrera:2005}.
Moreover, the seven-dimensional fluxes consist of numerous
irreducible representations of the SU$(3)$ structure group.
Together with the interacting higher-form gauge dynamics, these algebraic
and geometric complications compound to make 
the seven-dimensional supergravity problem uniquely challenging.
This helps explain why progress has relied largely on rather specialized
ans\"atze.

The supergravity solutions program has therefore developed along two broad, 
complementary lines, both motivated by holographic investigation. 
The first direction is the search for black holes.
Charged and rotating asymptotically AdS$_7$ black holes encode 
finite-temperature, %
finite-density and angular-momentum sectors of six-dimensional SCFTs.
Their BPS limits provide precision tests through the connection of black-hole 
thermodynamics to the superconformal index and microscopic entropy 
\cite{Cvetic:1999ne,Liu:1999ai,Chong:2004dy,Cvetic:2005zi,Chow:2007ts,
	Kantor:2019lfo}.
Despite the early construction of several restricted families, the general 
solution carrying all three independent
angular momenta and both Cartan charges was obtained only recently
\cite{Bobev:2023bxl}.
Subsequent work developed its holographic interpretation and constructed
a distinct extension of the equal-rotation black holes of
\cite{Chong:2004dy,Cvetic:2005zi}, with horizon geometries based on the
toric Sasaki--Einstein spaces $L^{p,q,r}$ \cite{Bobev:2025xan}.
That this general charge and rotation completion remained unknown for almost
four decades after the gauged theory was constructed is a telling measure 
of how much of the seven-dimensional solution space can remain hidden by 
a tractable ansatz.

The second line in this solutions program concerns wrapped branes and 
holographic RG flows.
Compactifying the M5-brane world-volume theory on a curved cycle,
with its R-symmetry background coupled to the geometry,
produces lower-dimensional QFTs.
Their fixed points and interpolating geometries are naturally described
in seven-dimensional gauged supergravity
\cite{Maldacena:2000mw,Gauntlett:2000ng}.
In the minimal theory, a dyonic $3$-form supports BPS domain walls and
AdS$_3$-sliced flows \cite{Dibitetto:2017tve}, while smooth half-BPS Janus
solutions then describe two-dimensional conformal defects in six-dimensional
SCFTs \cite{Conti:2024rwd}.
The Cartan truncation of the minimal theory has recently been used to 
construct monodromy defects in six-dimensional $(1,0)$ SCFTs through its uplift 
to massive type IIA supergravity \cite{Conti:2025wyj}.
Most recently, classifying AdS$_3$ warped-product solutions within the same 
U$(1)$ truncation led to new eleven-dimensional backgrounds describing M5-branes 
wrapped on weighted projective spaces and on spindle fibrations
\cite{Conti:2026row}.
Although the latter monodromy and wrapped-orbifold constructions are not among
the solution families relevant to the present analysis, with a timelike vector 
bilinear of a Killing spinor, they further demonstrate a similar lesson:
even a small Abelian truncation of the seven-dimensional theory retains
enough information to connect different six-dimensional parent theories,
distinct higher-dimensional origins, and holographic observables in lower
dimensions.

The present breadth of known solutions in seven dimensions makes a systematic 
treatment even more necessary.
An explicit solution reveals what one ansatz permits, but not which of its 
features are imposed by the theory itself, 
and which are merely imposed by its ansatz.
Nor does it reveal which flux components of its geometric structure 
group, or geometries remain available beyond that solution. 
A classification of solutions reverses this logic: 
it first determines the fluxes and geometry that are compatible with the 
complete set of constraints imposed by the Killing spinor equations (KSEs) and 
the bosonic field equations. 
Only then does one impose additional symmetries or coordinates in order to 
construct solutions.
In five dimensions, such classifications became a common language for organizing 
supersymmetric geometries and for discovering black holes, rings and other black 
objects \cite{Gauntlett:2002nw,Gauntlett:2003fk,Gutowski:2004ez,Elvang:2004rt}.
Seven dimensions have not acquired a comparably general working framework. 
An early timelike analysis in minimal supergravity identified a generic SU$(3)$ 
structure \cite{Cariglia:2004qi}. 
A comprehensive timelike classification framework that explicitly resolves the 
supersymmetry constraints and the geometric structure has remained missing, 
as has a unified treatment of the maximal and minimal theories 
in a common Cartan sector.

An additional timely motivation to establish such a framework for organizing 
seven-dimensional information geometrically is its usefulness for the novel 
program of localization in supergravity. 
This new methodology has recently made it possible to derive on-shell actions, 
entropies and central charges from spinor bilinears and fixed-point data, 
without requiring the construction of full solutions
\cite{BenettiGenolini:2023kxp,Martelli:2023oqk,
	BenettiGenolini:2023ndb,Colombo:2023fhu}.
Recent developments have extended localization to odd-dimensional theories
with boundaries and Chern--Simons interactions
\cite{Cassani:2024kjn,Colombo:2025ihp,Cassani:2025iix,BenettiGenolini:2025icr,
	Colombo:2025yqy,Cassani:2026teb,BenettiGenolini:2026cdw}.
In seven dimensions, localization has been applied in the restricted setting
of warped AdS$_5\times M_2$ backgrounds and a vanishing $3$-form 
in the Cartan %
truncation \cite{Suh:2024asy}.
Yet a broader treatment extending the localization program in seven dimensions 
requires precisely the data exposed by a geometric classification: 
the supersymmetric Killing vector and the other spinor bilinears, 
a consistent complete set of bosonic action and KSEs, and the gauge potentials, 
including the topologically massive $3$-form.
Developing the seven-dimensional analogue of the mature lower-dimensional 
toolbox therefore has value both for discovering new solutions and for 
extracting observables without solving for them explicitly.

The purpose of the present work is to provide this framework in the common
Abelian sector of maximal and minimal seven-dimensional supergravity. 
We begin with the U$(1)^2$ truncation of the maximal SO$(5)$-gauged
theory, and subsequently restrict to the equal-charge sector that incorporates
the minimal U$(1)$-gauged theory. 
The Cartan-gauged sector is a strategically-chosen tractable common window. 
It retains a non-trivial scalar, an Abelian vector field
and the topologically massive $3$-form, admits both maximal- and minimal-theory 
interpretations, and encompasses representatives of both strands of physical 
solutions described above.
Moreover, while the gauged theory enables the AdS$_7$ holographic applications, 
treating the gauged and ungauged theories together clearly distinguishes the 
constraints due to gauging from those intrinsic to seven-dimensional 
supergravity.

We start here by establishing a complete formulation of the maximal 
U$(1)^2$-gauged theory including both its bosonic action and supersymmetry 
variations, in terms of the gauge-invariant $4$-form field strength. 
We derive these two critical components, which specify the supersymmetric 
solutions, directly from the original maximal SO$(5)$-gauged theory, 
yet their complete form cannot be fixed within the maximal theory alone. 
We fix their form by requiring that the maximal theory admits a consistent 
truncation in the equal-charge limit to the minimal supergravity at both the 
bosonic and fermionic levels.
This requirement uniquely determines, in particular, 
the topological mass term and the higher-form field-to-field-strength map.
The resulting action then yields the self-duality condition
directly, up to its associated gauge freedom, and admits a smooth ungauged
limit, while the maximal-supersymmetry coupling relation emerges via the 
supersymmetry variations. The fermionic truncation also identifies the 
additional maximal-theory content that still remains in the equal-charge sector. 
This formulation provides a single complete starting point for the two theories 
before any ansatz is imposed.

Starting from a single SM Killing spinor, 
we next construct its complete set of non-vanishing bilinears assuming the 
timelike geometry as a fibration over a six-dimensional base, 
which turns out to generically admit an SU$(3)$ structure.
Keeping the internal SM spinor orientation general, we decompose the algebraic 
KSE constraints into independent SU$(3)$ isotypic systems 
and analyse the differential bilinear relations.
This determines the permitted components of the dilaton gradient and the
Cartan and $4$-form fluxes, the fibration twist, 
and the intrinsic torsion of the base, including all five torsion
forms, whenever an SU$(3)$ structure is preserved.
Gauging forces the almost-complex structure of the base to be integrable.
In gauged configurations where the Abelian flux acts trivially on
the preserved spinor, the holomorphic $3$-form rotates in phase along
the Killing direction, and the base instead carries a U$(3)$ structure,
with no intrinsic fifth torsion form.
We further construct compatible Killing gauges for both gauge potentials,
including the $3$-form potential in the presence of gauging, namely in 
conjunction with the self-duality condition.

Combining these supersymmetry results with the bosonic field equations yields 
four solution classes, distinguished independently by the presence of gauging 
and by whether the Abelian flux acts non-trivially on the Killing spinor.
We further establish that, given the KSEs and Bianchi identities, 
the independent bosonic equations of motion reduce 
to the transverse $(2,2)$-form component of the $3$-form equation 
in every class. This specific equation should be supplemented by the scalar 
one only when the Abelian flux acts non-trivially on the Killing spinor, 
and the complex scalar bilinear vanishes.
All remaining equations of motion then follow.
Tables~\ref{tab:preliminary-classes-ug} and \ref{tab:preliminary-classes-g} 
summarize the ungauged and gauged classifications, respectively, including the 
constraints on the fields, internal SM orientation, and the resulting 
intrinsic torsion and geometric structure.
The classification applies to both the equal-charge maximal theory 
and the minimal U$(1)$-gauged theory with the coupling relation required 
by the maximal embedding.
Finally, we use the classification framework as a diagnostic for the known and 
proposed solutions landscape.

Within the equal-charge Lorentzian sector, the gauged active class
has two distinct black-hole frontiers: the solutions with three
independent angular momenta of \cite{Chow:2007ts}, also recovered
from \cite{Bobev:2023bxl}, and the $L^{p,q,r}$ black holes of
\cite{Bobev:2025xan}.
The complementary gauged null class with vanishing Cartan field strength
likewise has two geometrically distinct BPS-flow frontiers: the
$\mathbb R^{1,2}\times S^3$ background of \cite{Dibitetto:2017tve}
and the Janus family of \cite{Conti:2024rwd},
which contains the earlier AdS$_3\times S^3$ background solution of
\cite{Dibitetto:2017tve}.
These black holes and BPS flows constitute 
the two complementary solution frontiers. 
As such, they provide complementary tests for the utility of our classification.
Their SU$(3)$ content illustrates the reach of the common Cartan window,
as well as the geometric and functional freedom that remains within the 
classification and solution space.
Our review also rules out early proposals of solutions in 
\cite{Cariglia:2004qi}. 
Notably, to the best of our knowledge, the ungauged solution space 
remains essentially unexplored.
Tables~\ref{tab:known-gactive-solutions} and \ref{tab:minimal-solutions}
summarize the presently known solution landscape.
This framework therefore also serves as an accessible toolbox for studying 
seven-dimensional theories and geometries, providing a common starting point for 
constructing and analysing further solutions and developing new geometric 
applications.

The paper is organized as follows. 
Section~\ref{7dsugra} derives the complete formulation of
the bosonic action and KSEs of the maximal U$(1)^2$-gauged theory 
in terms of the 4-form field strength,
and establishes its bosonic and fermionic equal-charge truncation to the
minimal theory. 
Section~\ref{bilinearsandstructure} constructs the spinor bilinears, and 
expresses them in terms of the timelike fibration, the internal SM orientation, 
and the SU$(3)$ structure of the six-dimensional base. 
In Section~\ref{Killingresults} we decouple the algebraic KSE constraints by
SU$(3)$ isotypic components, 
analyse the differential bilinear relations, construct compatible
Killing gauges for the vector and 3-form potentials, 
and determine the intrinsic torsion of the base.
Section~\ref{classifyingbps} identifies the independent bosonic
equations through KSE integrability, organizes the resulting
systems into four solution classes, and applies this
classification framework to the known and proposed maximal and
minimal solutions that fall within this setup.
We summarize the results and end with future directions in Section~\ref{lafin}. 
Appendices~\ref{7Cliff}--\ref{altcompletion} collect necessary technical material, 
as well as intermediate results used throughout.

\section{Supergravity in Seven Dimensions}
\label{7dsugra}

In this work we focus on the consistent reduction of eleven-dimensional 
supergravity on $S^4$, associated with a stack of M5-branes, 
which yields maximal ${\cal{N}}=4$ gauged supergravity in seven
dimensions \cite{Pernici:1984xx}. 
The latter theory has a local SO$(5)$ Yang-Mills gauge symmetry, 
which is the isometry group of the internal space $S^4$, 
related with a real gauge coupling $g$, to be denoted as SO$(5)_g$. 
There is also a local composite USp$(4)$ $\cong$ Spin$(5)$ R-symmetry, 
acting on fermion indices, which is also denoted by SO$(5)_c$ 
when acting on vector indices.
This R-symmetry is not an independent Yang--Mills gauge factor 
with a separate set of gauge fields.  
Thus the physical Yang--Mills gauge group is SO$(5)_g$,
whose maximal torus is U$(1)^2$.

The bosonic field content of the theory consists of the graviton,
ten gauge fields in the adjoint representation of SO$(5)_g$, 
$A_{\mu}^{IJ} = - A_{\mu}^{JI}$, with SO$(5)_g$ indices $I,J$,  
and spacetime coordinate indices, $\mu,\nu,\ldots$ 
(note that in \cite{Pernici:1984xx} spacetime is considered in
a Euclidean signature),
and five $3$-form potentials in the fundamental multiplet, 
$B_{\mu\nu\sigma}^I$.
In addition, there are fourteen scalars, which parametrize 
the coset space SL$(5,{\mathbb R})/\text{SO}(5)_c$, $\Pi_I^i$ 
with $i$ denoting vector indices of SO$(5)_c$. 
The fermionic field content consists of four gravitini, $\psi_\mu^A$, 
where $A$ denotes spinorial indices of Spin$(5)$, and sixteen spin-$1/2$ fields, 
$\lambda_i^A$, in a spinor-vector representation of Spin$(5)$-SO$(5)_c$, 
satisfying the gamma-tracelessness condition $\Gamma_i\lambda_i=0$,
which selects the irreducible representation $16$ from
$5\otimes4=16\oplus4$,
with the Clifford matrices $\Gamma_i\in \text{Cliff}(5)$, 
as further detailed in Section~\ref{spinors}. 

The topological mass term for the $3$-form gauge field requires an
odd-dimensional self-duality condition, a distinctive feature of
seven-dimensional gauged supergravity.
The theory must therefore be formulated carefully,
in particular so that the limit $g\to0$ to the ungauged theory is smooth.
In the field-strength formulation we employ below, the integrated 
self-duality condition introduces an auxiliary $2$-form. 
This auxiliary field may be gauged away locally,
together with a corresponding choice of gauge for the $3$-form. 
These points will be properly discussed below.

\subsection{Bosonic Theory}
\label{bostheory}

We now further focus on the truncation of the gauged theory 
to the U$(1)^2$ Cartan of the gauge group SO$(5)_g$.
The local SO$(5)_c$ freedom is then used to rotate the composite index 
$i$, such that the scalar coset $\Pi_I^i$ is specialized to diagonal form, 
which for the Cartan of the gauge group reduces to 
\be
\label{Cartansector}
\Pi_{I=i} = \diag \left( (X_1)^{-1/2}, (X_1)^{-1/2}, 
(X_2)^{-1/2}, (X_2)^{-1/2}, X_1 X_2 \right)\, , 
\ee 
that now consists of only two independent real scalars, relabelled as 
$X_{I=1,2}$. 
In addition, the bosonic sector now contains the graviton, two Abelian 
gauge fields, relabelled as $A^{12} \to A_{1}$, $A^{34}\to A_2$, and a 
single $3$-form gauge potential, relabelled as $B^5 \to B$. 
Solutions to the truncated sector will necessarily extend 
to those of the full gauged theory.

The bosonic action of the truncated U$(1)^2$-gauged sector
of the maximal theory reads:
\begin{align}
\label{bosaction}
S = \frac{1}{16\pi G_N} \int & \bigg[ \star R 
-\frac{1}{2} \sum_{I=1}^{2}  \dd\varphi_I \wedge \star \dd\varphi_I
- \frac{1}{2} \sum_{I=1}^{2} X_I^{-2} F_I \wedge \star F_I
- \frac{1}{2}  (X_1 X_2)^{2} G \wedge \star G \nn \\
& - 2g^2 \star {\cal{V}}(X_I) 
- m B \wedge G
+ B \wedge F_1 \wedge F_2  \bigg]\, ,
\end{align} 
where the scalars $X_I$ are defined via the dilatons, 
$\varphi_I$, as
\begin{subequations}
\begin{align}	
X_1 \equiv & \exp \left( - \frac{1}{\sqrt{10}}\varphi_1 
- \frac{1}{\sqrt{2}}\varphi_2 \right)
\, , \\
X_2 \equiv & \exp \left( -\frac{1}{\sqrt{10}}\varphi_1 
+ \frac{1}{\sqrt{2}}\varphi_2 \right)
\, ,
\end{align}
\end{subequations}
the field strengths read
\be 
I=1,2\, ,\quad F_I = \dd A_I\, ,
\qquad G=\dd B\, , 
\ee
the scalar potential reads
\be
\label{scapot}
{\cal{V}}(X_I) = (X_1 X_2)^{-4} - 4 (X_1 X_2)^{-1} 
\left(X_1^{-1}+X_2^{-1}\right) - 8 X_1X_2\, ,
\ee
and lastly there are two $B$-dependent Chern-Simons terms.

Note in particular the topological mass term with a real St\"uckelberg coupling, 
$m$, which makes the $3$-form $B$ massive.
The latter is therefore further constrained by a self-duality 
condition, which is unique to seven-dimensional theories 
\cite{Townsend:1983xs}.
This condition must be imposed in order to ensure that the $3$-form DOFs fit 
into the massless supermultiplet with the graviton, as we will see below, 
when we present the resulting equations of motion (EOMs). 

At this point it should be highlighted that in our formulation of the action 
in eq.~\eqref{bosaction}, as opposed to the original one in 
\cite{Pernici:1984xx}, the $3$-form potential, $B$, appears 
directly only in the %
Chern-Simons terms, 
whereas its kinetic term is given in terms of its field strength, $G$. 
The original formulation in \cite{Pernici:1984xx} features only the explicit 
$3$-form potential in the action, which is unfavourable both because it results 
in an apparent discontinuity in the limit $g\to0$ and because the form of its 
kinetic term is not gauge-invariant.
We will return to this critical point in the subsequent derivation 
of the KSEs in Section~\ref{spinors}. 

Furthermore, the topological mass term appears in our action
in eq.~\eqref{bosaction} with the negative coefficient $-m$, as in the
action of \cite{Gauntlett:2000ng,Chong:2004dy}, 
and in the minimal-theory action of \cite{Dibitetto:2017tve,Conti:2024rwd}. 
By contrast, the bosonic works
\cite{Chow:2007ts,Bobev:2023bxl,Bobev:2025xan} use an alternative
coefficient $+g$ in their maximal-theory action. 
Appendix~B of \cite{Bobev:2023bxl} relates the bosonic conventions used in 
several black-hole papers through either $g\to-g$ or a simultaneous sign 
reversal of the $3$-form $B$.
These transformations establish bosonic maps, but they do not provide 
consistent supersymmetry variations.
We therefore tested whether this alternative bosonic action could be 
incorporated into a complete supersymmetric setup, 
that is also compatible with an equal-charge truncation to the minimal theory, 
which is defined below.
Unlike the case for our formulation in eq.~\eqref{bosaction}, we found this 
mapping to be non-unique, as it leaves two possible completions  
of the maximal theory, which turn out to be inconsistent with one another.
This can be seen only once the bosonic action and the KSEs are treated 
together, as we explain in Section~\ref{spinors}. 
In Appendix~\ref{altcompletion}, we constructed the candidate supersymmetric 
completions of this alternative bosonic action and examined their validity via 
an integrability analysis. 
We therefore adhere to our formulation in eq.~\eqref{bosaction} throughout.

As noted in \cite{Nastase:2000tu}, there exists a consistent truncation 
of the maximal SO$(5)$-gauged theory to the minimal SU$(2)$-gauged one 
from \cite{Townsend:1983kk,Mezincescu:1984ta}, 
where the relation between the mass, $m$, 
and the gauge coupling, $g$, is fixed.
In this work we focus on the truncation of the maximal SO$(5)$-gauged theory 
to the U$(1)^2$-gauged sector, which can thus be consistently truncated to the 
minimal U$(1)$-gauged (and ungauged) theory, by specializing to the so-called 
equal-charge case of equal Abelian gauge fields
\be  
\label{eqchargelim}
A \equiv A_1 = A_2 \, ,
\ee
where also 
\be 
\label{eqchargescalar}
X \equiv X_1 = X_2\, .
\ee
The scalar potential in eq.~\eqref{scapot} then simplifies to
\be 
\label{scapoteqcharge}
{\cal{V}} (X) = X^{-8} - 8 X^{-3} -8 X^2\, ,
\ee
which has a stationary point at $X=1$, 
as its derivative reads
\be 
\label{derscapoteqcharge}
{\cal{V}}' (X) = - 8 X^{-9} 
\left( X^5 - 1 \right) \left( 2X^{5} - 1 \right)\, ,
\ee
with the second root, $X = \sqrt[5]{\frac12}$, 
as an unstable stationary point.
Beginning in Section~\ref{Killingresultsalg}, we will specialize to 
this equal-charge sector, which incorporates the minimal theory.

The EOMs of the bosonic fields are derived directly from the 
variation of the action in eq.~\eqref{bosaction}. 
The Einstein field equations take the form:
\begin{align} 
\label{EinsEOM}
R_{\mu\nu} - \frac{1}{2} g_{\mu\nu} R & = 
\frac{1}{2} \sum_{I=1}^{2} 
\left[ \partial_\mu\varphi_I \partial_\nu\varphi_I	
- \frac{1}{2} g_{\mu\nu} \partial^\alpha\varphi_I 
\partial_\alpha\varphi_I 
\right] 
\nn \\*
& 
+ \frac{1}{2} \sum_{I=1}^{2} X_I^{-2} 
\left[ F_{I\mu\alpha} F_{I\nu}^{\,\,\,\,\alpha}
- \frac{1}{4} g_{\mu\nu} F_{I\alpha\beta} F_I^{\,\,\alpha\beta} 
\right] 
\nn \\*
&
+ \frac{1}{12} (X_1 X_2)^2 
\left[ G_{\mu\alpha\beta\gamma} G_{\nu}^{\,\,\alpha\beta\gamma} 
- \frac{1}{8} g_{\mu\nu} G_{\alpha\beta\gamma\delta} 
G^{\alpha\beta\gamma\delta} \right] -  g^2 \,g_{\mu\nu} 
\,{\cal{V}}(X_I) \, .
\end{align}	
The EOMs of the scalars read:
\begin{subequations}
\label{scalarEOM}	
\begin{align}
\nabla^{\mu}\nabla_{\mu} \varphi_1 
& = \frac{1}{2\sqrt{10}} \sum_{I=1}^{2} X_I^{-2} F_{I\mu\nu} 
F_{I}^{\,\mu\nu}
- \frac{1}{12\sqrt{10}} (X_1 X_2)^{2} 
G_{\mu\nu\alpha\beta} G^{\mu\nu\alpha\beta} \nn \\
& + \frac{8}{\sqrt{10}} g^2 
\Big( 2 (X_1 X_2)^{-4} - 3 (X_1 X_2)^{-1} \left(X_1^{-1} + 
X_2^{-1}\right) 
+4 X_1X_2 \Big) \, , \\
\nabla^{\mu}\nabla_{\mu} \varphi_2 
& = \frac{1}{2\sqrt{2}} \Big( X_1^{-2} F_{1\mu\nu} F_1^{\,\mu\nu} 
- X_2^{-2} F_{2\mu\nu} F_2^{\,\mu\nu} \Big) 
+ 4\sqrt{2} g^2 (X_1 X_2)^{-1} \left(X_2^{-1} - X_1^{-1}\right)\, .
\end{align}	
\end{subequations}
The EOMs of the two Abelian gauge fields read:
\begin{subequations}
\label{AEOM}	
\begin{align}	
\dd\left(X_1^{-2} \star F_1 \right) = F_2 \wedge G \, ,\\
\dd\left(X_2^{-2} \star F_2 \right) = F_1 \wedge G \, ,
\end{align}
\end{subequations}
and the EOM of the $3$-form gauge field reads:
\be 
\label{EOM3-form}
\dd\left[ (X_1 X_2)^{2} \star G \right] 
= - 2 m G + F_1 \wedge F_2\, .
\ee
We see here that the ``magnetic'' components of $G$ serve as its 
electric source, hence the so-called self-duality of $B$.
Finally, the Bianchi identities for the gauge field strengths read:
\be 
\label{Bianchi}
\dd F_I = \dd G = 0. 
\ee

As noted above, the $3$-form is also subject to a self-duality 
condition \cite{Townsend:1983xs}
\be
\label{selfdual}
(X_1 X_2)^{2} \star G 
= - 2 m B 
+ \omega_{(3)} - \dd C_{(2)}\, ,
\ee
where we defined 
\be
\label{someCS}
\omega_{(3)} \equiv \frac{1}{2} \left( 
A_1 \wedge F_2 +  A_2 \wedge F_1 \right)\, , 
\ee
which is a Chern-Simons form, and we introduced 
an auxiliary $2$-form potential $C_{(2)}$. 
Such a $2$-form is ensured to exist locally due to Poincar\'e's lemma 
applied to the integrated EOM in eq.~\eqref{EOM3-form}, which is a 
second-order differential equation in $B$. 
The self-duality condition thus imposes a first-order differential 
equation in $B$, which reduces the number of DOFs by half from those of a 
massive 3-form to those of a massless one. 
Again we see that the self-dual condition in eq.~\eqref{selfdual} 
relates the $3$-form gauge potential to its curl, hence 
we refer to it as self-dual. 

Unlike the gauge-invariant EOM in eq.~\eqref{EOM3-form},
the self-dual condition in eq.~\eqref{selfdual} is an integrated
first-order relation, which effectively constitutes a gauge condition,
and therefore depends on the choice of gauge representatives for
$B$ and $C_{(2)}$.
Indeed, its form can be preserved under the following combined gauge 
transformations of $B$ and $C_{(2)}$
\be
\label{BCgauge}
B \longrightarrow B + \dd\Lambda_{(2)}\, ,
\qquad
C_{(2)} \longrightarrow C_{(2)} - 2 m \Lambda_{(2)} + \dd\Lambda_{(1)}\, ,
\ee
under which both $G$ and the combination $-2 m B - \dd C_{(2)}$ remain invariant.
Thus, $C_{(2)}$ may locally be gauged away,
but this simultaneously selects a gauge representative for $B$.
We shall therefore not impose $\dd C_{(2)}=0$ generally.
Instead, we retain $\dd C_{(2)}$ in eq.~\eqref{selfdual},
and later choose it consistently with the so-called Killing gauge for $B$ 
when $g\neq0$, see Section~\ref{diff1-2forms}.

It is important to highlight that when the bosonic theory is 
formulated in terms of the field strength, as in our eq.~\eqref{bosaction}, 
the transition to the ungauged theory, as in \cite{Sezgin:1982gi}, 
see also \cite{Samtleben:2005bp}, 
is smooth in the limit $g \to 0$. 
The self-dual condition on the $3$-form is no longer relevant 
as $g \sim m \to 0$, and the 3-form field is no longer massive. 
The ungauged theory can then be formulated instead in terms of
a $2$-form field, as in $C_{(2)}$, whose field strength is dual 
in seven dimensions to that of the 3-form.

\subsection{Killing Spinors}
\label{spinors}

In this work we generally use the conventions of 
\cite{Freedman:2012zz,VanProeyen:1999ni}, unless otherwise 
explicitly noted, regarding irreducible spinors, and the 
corresponding representation of the Clifford algebra.

In seven-dimensional spacetime the Dirac spinors carry 
$2^{\lfloor7/2\rfloor} = 8$ complex components. 
The $8\times8$ gamma matrices, $\{\gamma^{\mu}\}_{\mu=0,\ldots,6}$, 
span the Clifford algebra with a Lorentzian signature, Cliff$(1,6)$, 
whose details are given in Appendix~\ref{7Cliff}.
The DOFs of the Dirac spinor can be rearranged 
into a doublet of SM spinors, $\chi^i$, ${i=1,2}$, 
which satisfies the SM condition:
\be
\label{sympMajorana}
\chi^i = \epsilon^{ij} (\chi^j)^c\, ,
\ee
where $\epsilon_{ij}$, the $2$-dimensional Levi-Civita symbol, is the 
symplectic metric on the SM pair indices, with 
\be 
\epsilon_{12} = 1\, , 
\qquad \epsilon_{ij}\epsilon^{jk} = - \delta_{i}{}^k\, ,
\ee 
and each such pair of SM spinors carries $16$ real DOFs as required.

Due to the USp$(4) \cong \text{Spin}(5)$ R-symmetry, both the
gravitini and the spin-$\tfrac{1}{2}$ fermions, as well as the
supersymmetry parameters or Killing spinors, carry a Spin$(5)$
spinor index $A=1,\ldots,4$, in addition to their Spin$(1,6)$
spinor index $\alpha=1,\ldots,8$.
The associated Euclidean Clifford algebra, Cliff$(5)$, is
detailed in Appendix~\ref{5ECliff}.
In the U$(1)^2$-gauged sector, all internal matrices entering
the KSEs are diagonal in the common eigenbasis of the
Spin$(5)$ Cartan generators.
The four USp$(4)$ component equations therefore decouple.
We may thus analyse a single Cartan component of the Killing spinor, 
denoted by $\epsilon^{\alpha A}$, writing it as the direct product 
of the associated spacetime spinor $\chi^\alpha$ 
and the chosen internal eigenvector $\chi^A$:
\be
\label{spinorasdirectproduct}
\epsilon^{\alpha A}
= \chi^\alpha \otimes \chi^A
\in \mathbb{C}^8 \otimes \mathbb{C}^4\, .
\ee

Although truncation of the maximal SO$(5)$-gauged theory to its 
U$(1)^2$-gauged sector was tackled in \cite{Liu:1999ai}, 
that work did not include a complete treatment of the $3$-form, 
from the action to the ensuing equations, 
including the self-duality condition. 
Therefore, here we derived the KSEs directly from the supersymmetry variations 
of the SO$(5)$-gauged theory in \cite{Pernici:1984xx}. 
First, we restricted the variations, also displayed in \cite{Gauntlett:2000ng},
to the U$(1)^2$-gauged sector. 
Yet, \cite{Gauntlett:2000ng}, like \cite{Liu:1999ai}, 
followed the explicit $3$-form formulation as in the original maximal-theory 
construction of \cite{Pernici:1984xx}, unlike our action in 
eq.~\eqref{bosaction}, which instead incorporates its field strength, 
as we already noted in Section~\ref{bostheory}. 

Thus, the relation between our formulation and that of
\cite{Pernici:1984xx,Gauntlett:2000ng} requires a joint treatment 
of the bosonic action, and of the fermion variations.
The mapping is further complicated by the absence of explicitly fixed 
conventions therein for the spacetime-orientation, as well as the
odd-dimensional irreducible Cliff$(1,6)$ and internal Cliff$(5)$ 
representations. 
Consequently, mapping the action initially only fixes absolute normalizations, 
while leaving different correlated signs possible, 
associated with the missing convention data.
The latter are then constrained by the KSEs, 
and most importantly by our subsequent additional requirement 
of a consistent truncation to the minimal Cartan-gauged theory.

Denoting the fields and couplings of \cite{Gauntlett:2000ng}
by a subscript ${\rm G}$, the combination of bosonic action and KSEs 
initially leaves the correlated sign freedom
\be
g_{\rm G}=4\kappa_{\rm G}g\, ,
\quad
m_{\rm G}=2\omega_{\rm G}m\, ,
\quad
\left(gA_I,F_I,m,G\right)
\longrightarrow
\left(
\kappa_{\rm G}\eta_{\rm G}gA_I,
\eta_{\rm G}F_I,
\omega_{\rm G}m,
\xi_{\rm G}G
\right)\, ,
\ee
where
$\kappa_{\rm G},\omega_{\rm G},\eta_{\rm G},\xi_{\rm G}=\pm1$.
We then additionally require a consistent equal-charge truncation to the 
minimal Cartan-reduced bosonic action and supersymmetric variations of 
\cite{Townsend:1983kk,Mezincescu:1984ta}, 
following \cite{Dibitetto:2017tve} which is discussed below. 
This then uniquely fixes
\be
\kappa_{\rm G}
=\omega_{\rm G}
=\eta_{\rm G}
=\xi_{\rm G}
=+1\, ,
\ee
up to a simultaneous reversal of Cartan generators and 1-form gauge fields. 
The resulting two-coupling, non-equal charges, general map reads
\begin{align}
g_{\rm G}&=4g\, ,
\qquad m_{\rm G}=2m\, ,
\qquad e^{2\lambda_I}=X_I\, ,
\nn\\
2B_{\rm G}^{12}&=A_1\, ,
\qquad 2B_{\rm G}^{34}=A_2\, ,
\qquad 2F_{\rm G}^{12}=F_1\, ,
\qquad 2F_{\rm G}^{34}=F_2\, ,
\end{align}
with the two scalars from the diagonal of $\Pi^i_A \to \Pi_I^i|_{I=i}$. 
The corresponding and most critical field-to-field-strength map is given by
\be
\label{BtoGmap}
2\sqrt{3}\,m_{\rm G}\,(S_{(3)})_{G}
\longrightarrow
+\left(X_1X_2\right)^2 \star G\, ,
\ee
consistent with the map of the $3$-form,  
$ 2\sqrt{3}\,(S_{(3)})_{G}\to - B$, and the self-duality condition in 
eq.~\eqref{selfdual}.
We reiterate that the sign in eq.~\eqref{BtoGmap} is uniquely fixed only when
the equal-charge action and KSEs are required to match those
of the minimal Cartan-gauged theory, as detailed below.

The theory in \cite{Gauntlett:2000ng} obeys 
the maximal-supersymmetry coupling relation $g_{\rm G}=2m_{\rm G}$. 
The above mapping therefore yields the maximal-supersymmetry relation 
of couplings in our formulation as
\be
\label{maxsusy}
m=g\, .
\ee
Nevertheless, for the purpose of later constructing 
the minimal-theory truncation, we retain $g$ and $m$ as independent parameters 
in the displayed Cartan-reduced KSEs. 
This bookkeeping permits the gauge and St\"uckelberg deformations 
of the minimal theory to be mapped independently; 
the maximal-theory relation in eq.~\eqref{maxsusy} will then be imposed 
only after that mapping has been fixed.
The residual simultaneous reversal of the Cartan generators and 
$1$-form gauge fields noted above is a remaining convention freedom 
that should be distinguished from the foundational aspects anticipated 
in Section~\ref{bostheory}. 

Passing from the original formulation in terms of the $3$-form potential 
to the field-strength formulation in eq.~\eqref{bosaction},
the topological mass term cannot be treated independently of the KSEs.
Our joint treatment of the action and the Cartan-reduced fermionic variations, 
which further requires a consistent truncation to the minimal theory, 
then fixes the complete formulation of this supersymmetric system. 
In particular, it selects the field-to-field-strength map in 
eq.~\eqref{BtoGmap}, the coefficient $-m$ in eq.~\eqref{bosaction}, 
and the maximal-supersymmetry coupling relation in eq.~\eqref{maxsusy}.
The explicit equal-charge analysis below shows how our formulation 
uniquely admits a consistent truncation to the minimal theory.
By contrast, the coefficient $+g$ used in 
\cite{Chow:2007ts,Bobev:2023bxl,Bobev:2025xan} is presented only within a 
bosonic action, without the corresponding supersymmetry variations. 
We tested whether this alternative bosonic action could be incorporated into a 
complete supersymmetric setup. Unlike the case for our formulation, 
we found its mapping to be non-unique, as it leaves two 
possible completions of the maximal theory, which turn out to be inconsistent 
with one another. In Appendix~\ref{altcompletion}, we constructed the two 
candidate supersymmetric completions of this alternative bosonic action and 
investigated them via an integrability analysis.

In the truncated U$(1)^2$-gauged sector of \cite{Pernici:1984xx}, 
as noted in Section~\ref{bostheory}, we are left with the USp$(4)$ quadruplet 
of gravitini, $\psi^A$, and two independent spin-$\tfrac{1}{2}$ fermion 
USp$(4)$ quadruplets, $\lambda_i^A$, so that two SO$(5)_c$ vector indices can 
be specified.
We then simplify the KSEs by rewriting them in 
terms of shifted gravitini defined as
\be 
\hat{\psi}_\mu^{\alpha A} \equiv \psi_\mu^{\alpha A} 
+ \frac{1}{2} \gamma_\mu \Gamma_5\, \lambda_5^{\alpha A}\, ,
\ee
and two independent shifted spin-$\tfrac{1}{2}$ fermions defined as 
\be 
\label{shiftedfermion}
\hat{\lambda}_{1}^{\alpha A} 
\equiv \frac{3}{2} \,\Gamma_1 \lambda_1^{\alpha A} 
+ \Gamma_3 \lambda_3^{\alpha A}\, , 
\qquad
\hat{\lambda}_{2}^{\alpha A} 
\equiv \Gamma_1 \lambda_1^{\alpha A} 
+ \frac{3}{2}\,\Gamma_3 \lambda_3^{\alpha A}\, .
\ee

Using the properties of Cliff$(5)$ from Appendix~\ref{5ECliff}, 
and our conventions from Appendix~\ref{bilinearformulae} for dual 
Clifford matrices of Cliff$(1,6)$ in eq.~\eqref{dualclifford7} 
and for the Hodge dual in eq.~\eqref{stardef} for terms with 
the 3-form, the KSEs from the gravitini 
take the form:
\begin{align}
\label{Killinggravitin}
0 = \delta\hat{\psi}_\mu^{\alpha A}
&= \bigg[
\nabla_\mu
+ig\left(
A_{1\mu}\Gamma_{\hat{1}}
+A_{2\mu}\Gamma_{\hat{2}}
\right)
+\frac{i}{4}\gamma^\nu
\left(
X_1^{-1}F_{1\mu\nu}\Gamma_{\hat{1}}
+X_2^{-1}F_{2\mu\nu}\Gamma_{\hat{2}}
\right)
\nn\\*
&\qquad
+\frac{m}{2}\left(X_1X_2\right)^{-2}\gamma_\mu
+\frac{1}{4}\gamma_\mu\gamma^\nu
\left(
X_1^{-1}\partial_\nu X_1
+X_2^{-1}\partial_\nu X_2
\right)
\nn\\
&\qquad
- \frac{X_1X_2}{96}
\gamma_{\mu}^{\,\,\,\nu\lambda\sigma\rho}
G_{\nu\lambda\sigma\rho}\Gamma_5
\bigg]\epsilon^{\alpha A}\, ,
\end{align}
where we use here the Levi-Civita connection, 
$\nabla_\mu \equiv \left( 
\partial_\mu + \tfrac{1}{4} \omega_{\mu}^{ab} \gamma_{ab} \right)$. 
The KSEs from the two spin-$\tfrac{1}{2}$ fermions read:
\begin{subequations}
\label{Killingspinhalf}
\begin{align}
0 = \delta\hat{\lambda}_1^{\alpha A}
&= \bigg[
\frac{m}{2}
\left(
X_1-\left(X_1X_2\right)^{-2}
\right)
-\frac{1}{4}\gamma^\mu
\left(
\frac{3}{2}X_1^{-1}\partial_\mu X_1
+X_2^{-1}\partial_\mu X_2
\right)
\nn\\*
&\qquad
-\frac{i}{16}\gamma^{\mu\nu}
X_1^{-1}F_{1\mu\nu}\Gamma_{\hat{1}}
+ \frac{X_1X_2}{192}
\gamma^{\mu\nu\lambda\sigma}
G_{\mu\nu\lambda\sigma}\Gamma_5
\bigg]\epsilon^{\alpha A}\, ,
\\*
0 = \delta \hat{\lambda}_2^{\alpha A} 
& = \delta \hat{\lambda}_1^{\alpha A} (1 \leftrightarrow 2)\, ,
\end{align}
\end{subequations}
where the supersymmetry variations of the two spin-$\tfrac{1}{2}$ 
fermion quadruplets are related by an exchange of field labels, 
$1 \leftrightarrow 2$. 
Note that the differential KSE from the gravitini variation 
in eq.~\eqref{Killinggravitin} is symmetric under an exchange of field labels, 
$1 \leftrightarrow 2$. 
This symmetry together with the simultaneous exchange of 
bosonic and fermionic field labels in eqs.~\eqref{Killingspinhalf} 
is a symmetry of the U$(1)^2$-Cartan truncation.

Furthermore, as detailed in Appendix~\ref{5ECliff}, 
the USp$(4)$ gamma matrices which are featured in these equations, 
are defined as
\begin{subequations}
\label{SO5cartan}	
\begin{align}	
i \Gamma_{\hat{1}} & \equiv \Gamma_{12} = i \, \diag (+1,+1,-1,-1) \, ,\\
i \Gamma_{\hat{2}} & \equiv \Gamma_{34} = i \, \diag (+1,-1,+1,-1)\, ,\\
\Gamma_5 = -\Gamma_{1234} & = \Gamma_{\hat{1}} \Gamma_{\hat{2}} 
= \diag (+1,-1,-1,+1)\, ,
\end{align}
\end{subequations}
see eqs.~\eqref{cliff5}.
Hence we have a USp$(4)$ quadruplet of Spin$(1,6)$ Killing spinors to solve for. 
With this understanding, both spinorial indices of the Killing spinors 
will be suppressed for the remainder of this section.

As noted already, the KSE in eq.~\eqref{Killinggravitin} 
due to the gravitini variation is symmetric under an exchange of field 
labels, $1 \leftrightarrow 2$. As will be seen shortly, it is useful 
to transform the KSEs in eqs.~\eqref{Killingspinhalf}, 
due to the spin-$\tfrac{1}{2}$ fermions, into equations that also have 
definite-symmetry properties under 
the exchange of field labels. 
We can then use instead the following pair of algebraic KSEs, which are 
symmetric, and antisymmetric, respectively, under the exchange of labels, 
$1 \leftrightarrow 2$:
\begin{subequations}
\label{Killinghalfspinexchange}
\begin{align}
\label{Killingfermionsym}
0
&= \bigg[
\frac{m}{2}
\left(
X_1+X_2
-2\left(X_1X_2\right)^{-2}
\right)
-\frac{5}{8}\gamma^\mu
\left(
X_1^{-1}\partial_\mu X_1
+X_2^{-1}\partial_\mu X_2
\right)
\nn\\*
&\qquad
-\frac{i}{16}\gamma^{\mu\nu}
\left(
X_1^{-1}F_{1\mu\nu}\Gamma_{\hat{1}}
+X_2^{-1}F_{2\mu\nu}\Gamma_{\hat{2}}
\right)
+ \frac{X_1X_2}{96}
\gamma^{\mu\nu\lambda\sigma}
G_{\mu\nu\lambda\sigma}\Gamma_5
\bigg]\epsilon\, ,
\\*
\label{Killingfermionantisym}
0
&= \bigg[
\frac{m}{2}\left(X_1-X_2\right)
-\frac{1}{8}\gamma^\mu
\left(
X_1^{-1}\partial_\mu X_1
-X_2^{-1}\partial_\mu X_2
\right)
\nn\\*
&\qquad
-\frac{i}{16}\gamma^{\mu\nu}
\left(
X_1^{-1}F_{1\mu\nu}\Gamma_{\hat{1}}
-X_2^{-1}F_{2\mu\nu}\Gamma_{\hat{2}}
\right)
\bigg]\epsilon\, .
\end{align}
\end{subequations}

Considering these definite-symmetry KSEs in the equal-charge case, 
as discussed above in eqs.~\eqref{eqchargelim}, 
\eqref{eqchargescalar}, yields the following further simplified form: 
\begin{align}
\label{Killinggravitineqcharge}
\nabla_\mu\epsilon
= -\bigg[
&2igA_\mu\Gamma_+
+\frac{i}{2}X^{-1}\gamma^\nu
F_{\mu\nu}\Gamma_+
+\frac{m}{2}X^{-4}\gamma_\mu
\nn\\*
&+\frac{1}{2}X^{-1}\gamma_\mu\gamma^\nu
\partial_\nu X
- \frac{X^2}{96}
\gamma_{\mu}{}^{\nu\lambda\sigma\rho}
G_{\nu\lambda\sigma\rho}\Gamma_5
\bigg]\epsilon\, ,
\end{align}
and 
\begin{subequations}
\label{Killinghalfspinequal}
\begin{align}
\label{Killingfermionsymeq}
0
= \bigg[
&m\left(X-X^{-4}\right)
-\frac{5}{4}X^{-1}\gamma^\mu\partial_\mu X
-\frac{i}{8}X^{-1}\gamma^{\mu\nu}
F_{\mu\nu}\Gamma_+
+ \frac{X^2}{96}
\gamma^{\mu\nu\lambda\sigma}
G_{\mu\nu\lambda\sigma}\Gamma_5
\bigg]\epsilon\, ,
\\
\label{Killingfermionantisymeq}
0
&= \bigg[
\gamma^{\mu\nu}F_{\mu\nu}\Gamma_-
\bigg]\epsilon\, ,
\end{align}
\end{subequations}
where from the Cliff$(5)$ Cartan matrices in eqs.~\eqref{SO5cartan}, 
we defined $\Gamma_\pm$ matrices as 
\begin{align}
\label{cartanbasis}	
\Gamma_{+} \equiv \frac{1}{2} 
\left( \Gamma_{\hat{1}} + \Gamma_{\hat{2}} \right)
= \diag (1,0,0,-1)\, , 
\qquad
\Gamma_{-} \equiv \frac{1}{2} 
\left( \Gamma_{\hat{1}} -\Gamma_{\hat{2}} \right)
= \diag (0,1,-1,0)\, ,
\end{align}
taken as the two Cartan generators, that are orthogonal, 
and have non-trivial kernels, see Appendix~\ref{5ECliff} for further details.

The benefit of switching to equations of definite symmetry under
$1\leftrightarrow2$ now becomes apparent. 
In the equal-charge sector, the symmetric and antisymmetric equations involve 
$\Gamma_+$ and $\Gamma_-$, respectively. In particular,
eq.~\eqref{Killingfermionantisymeq} reduces to a simple $\Gamma_-$ projection, 
which acts only on the Cartan field strength $F$. 
This equation vanishes identically on $\Ker\Gamma_-$, whereas for
spinor components in $\Ker\Gamma_+$ it constrains $F$ to be primitive,
as we will show in Section~\ref{uniqueF}. 
This symmetry-definite, equal-charge form of the KSEs thus considerably 
disentangles the problem, while retaining an illuminating view of the essential 
features of the general U$(1)^2$-gauged sector.

At this stage, eqs.~\eqref{eqchargelim}, \eqref{eqchargescalar}
restrict only the bosonic fields within the maximal theory.
The two shifted spin-$\tfrac12$ quadruplets therefore remain
independent. It is useful to denote their symmetric and antisymmetric
combinations by
\begin{align}
\label{shiftedfermionexchange}
\hat{\lambda}_+^A
&\equiv\frac12\left(
\hat{\lambda}_1^A+\hat{\lambda}_2^A
\right),
&
\hat{\lambda}_-^A
&\equiv\frac12\left(
\hat{\lambda}_1^A-\hat{\lambda}_2^A
\right).
\end{align}
Their supersymmetry variations give, respectively, the symmetric and
antisymmetric equations in eqs.~\eqref{Killinghalfspinequal}.
Both equations must be imposed when determining supersymmetry of a
bosonic configuration in the equal-charge maximal theory.

Let us now comment on the comparison within maximal theory with 
the bosonic work of \cite{Chong:2004dy}. 
Their bosonic action is also written directly in terms of the $4$-form field 
strength, and agrees up to unfixed signs with eq.~\eqref{bosaction}, though they 
have $m$ fixed to $g$ to begin with. 
Its bosonic mapping may be written as
\begin{align}
\label{Chongmap}
X_{I{\rm Ch}}&=X_I\, ,
&
A_{\rm Ch}^{I}&=\eta_I A_I\, ,
&
F_{\rm Ch}^{I}&=\eta_I F_I\, ,
\nn\\
g_{\rm Ch}&=\omega_{\rm Ch}g\, ,
&
A_{(3){\rm Ch}}&=\omega_{\rm Ch}\eta_1\eta_2 B\, ,
&
F_{(4){\rm Ch}}&=\omega_{\rm Ch}\eta_1\eta_2 G\, ,
\end{align}
where $\eta_I=\pm1$, and $\omega_{\rm Ch}=\pm1$ records the relative
orientation convention. In the equal-charge sector
$\eta_1=\eta_2$, so the common sign of the two Abelian fields remains
a charge-convention choice.
However, a direct comparison with the single algebraic supersymmetric variation 
displayed in eq.~(19) of \cite{Chong:2004dy} cannot be established. 
That equation suppressed the internal Cliff$(5)$ matrices, and did not identify 
either the USp$(4)$ component or the linear combination of maximal 
spin-$\tfrac12$ fermions whose variation is displayed. 
Moreover, \cite{Chong:2004dy} used this equation only as 
a bosonic eigenvalue test, and selected an unphysical branch, 
which differs from the regular supersymmetric black-hole branch that was 
identified only subsequently in \cite{Cvetic:2005zi}. 
We therefore retain here only the bosonic mapping in eq.~\eqref{Chongmap}.

We now proceed to connect our equal-charge maximal Cartan-gauged  truncation 
with the U$(1)$-Cartan truncation of the minimal SU$(2)$-gauged theory of 
\cite{Dibitetto:2017tve}. 
This means setting in the latter:
\be
\label{Cartanrestrict}
A_{\rm D}^{1}=A_{\rm D}^{2}=0\, ,
\qquad
F_{\rm D}^{3}=\dd A_{\rm D}^{3}\, .
\ee
The minimal-theory spinor $\zeta$ is then obtained by restricting 
the maximal-theory spinor:
\begin{align}
\label{maxtominspinor}	
\epsilon\in\Ker\Gamma_-
&=\operatorname{span}\{\epsilon^1,\epsilon^4\}\, ,
\qquad
\Gamma_+\epsilon^{1,4}=\pm\epsilon^{1,4}\, ,
\nn\\
\zeta^1\leftrightarrow\epsilon^1\, ,&
\quad
\zeta^2\leftrightarrow\epsilon^4
\, ,
\qquad \qquad
\left.\Gamma_5\right|_{\Ker\Gamma_-}=1\, .
\end{align}
The complete fermionic truncation to the minimal theory additionally
imposes
\begin{subequations}
\label{maxtominfermions}
\begin{align}
\hat{\lambda}^{1,4}& 
\equiv\hat{\lambda}_1^{1,4}=\hat{\lambda}_2^{1,4}\, ,
\\
\label{maxtominremove}
\hat{\psi}_\mu^{2,3}
&=\hat{\lambda}_1^{2,3}
=\hat{\lambda}_2^{2,3}
=0\, .
\end{align}
\end{subequations}
The two retained components of the maximal USp$(4)$ quadruplet
therefore form the minimal SM doublet, with their opposite U$(1)$ charges 
represented by the two non-zero eigenvalues of $\Gamma_+$.
The symmetric fermion $\hat{\lambda}_+$ is retained, whereas
$\hat{\lambda}_-$ and the complementary USp$(4)$ components are
removed by eqs.~\eqref{maxtominfermions}. The variation of
$\hat{\lambda}_-$ in eq.~\eqref{Killingfermionantisymeq}
vanishes identically on the retained spinor sector because
$\Gamma_-\epsilon=0$. It therefore imposes no additional constraint
on $F$ in the minimal theory, whose spin-$\tfrac12$ KSE is instead 
the symmetric equation \eqref{Killingfermionsymeq}.

The complementary maximal spinors are not retained in the
minimal-theory fermionic truncation. 
When $F=0$, however, they obey the alternative identification
\begin{align}
\label{maxtominspinorcomplement}
\epsilon\in\Ker\Gamma_+
&=\operatorname{span}\{\epsilon^2,\epsilon^3\}\, ,
\qquad
\Gamma_-\epsilon^{2,3}=\pm\epsilon^{2,3}\, ,
\nn\\
\zeta^1\leftrightarrow\epsilon^2\, ,&
\quad
\zeta^2\leftrightarrow\epsilon^3\, ,
\qquad \qquad
\left.\Gamma_5\right|_{\Ker\Gamma_+}
=-1\, ,
\end{align}
after choosing $A=0$ locally and reversing the maximal $3$-form
potential, $B\rightarrow-B$.
The map between minimal and maximal theories, 
which we will present below in eq.~\eqref{Dibitettomap}, then applies. 
This complementary spinor branch describes additional supersymmetry of the 
$F=0$ configuration embedded in the maximal theory.

We thus determine the maximal-to-minimal truncation to \cite{Dibitetto:2017tve} 
first from the bosonic action, and then from the KSEs, 
keeping the two couplings of minimal theory independent to begin with. %
Again, the action fixes the displayed magnitudes and sign correlations, but
does not fix the independent Abelian charge sign $\eta$, or
$\omega_{\rm D}$ that records the relative spacetime-orientation
convention, which is not specified in \cite{Dibitetto:2017tve}.
At the action level, the most general map to \cite{Dibitetto:2017tve} 
reads
\begin{align}
\label{Dibitettobosmap}
X_{\rm D}=X\, ,
\qquad
A_{\rm D}^{3}&=\eta\sqrt{2}A\, ,
\quad
F_{\rm D}^{3}=\eta\sqrt{2}F\, ,
\nn\\
g_{\rm D}=2\sqrt{2}\omega_{\rm D}g\, ,
\quad
h_{\rm D}=\omega_{\rm D}m\, ,
\qquad
B_{\rm D}&=\omega_{\rm D}B\, ,
\quad
F_{(4){\rm D}}=\omega_{\rm D}G\, ,
\qquad
\eta,\omega_{\rm D}=\pm1\, .
\end{align}

For the mapping of KSEs, we identify the minimal-theory Cartan generator as
\be
\sigma^3=q\Gamma_+\, ,
\qquad q=\pm1\, ,
\ee
where $q$ records the remaining overall sign in the identification of
the two Cartan generators. 
From the KSEs, we then fix 
\be
\omega_{\rm D}=-1\, ,
\qquad
\eta q=-1\, .
\ee
Choosing the convention $q=+1$ then gives the final map from 
\cite{Dibitetto:2017tve} to our work as
\begin{align}
\label{Dibitettomap}
X_{\rm D}=X\, ,
\qquad
A_{\rm D}^{3}&=-\sqrt{2}A\, ,
\quad
F_{\rm D}^{3}=-\sqrt{2}F\, ,
\nn\\
g_{\rm D}=-2\sqrt{2}g\, ,
\quad
h_{\rm D}=-m\, ,
\qquad
B_{\rm D}&=-B\, ,
\quad
F_{(4){\rm D}}=-G\, .
\end{align}
The alternative choice $q=-1$, $\eta=+1$ is equivalent under the simultaneous 
reversal of the minimal Cartan generator and 1-form gauge potential; 
it is not a different truncation mapping.

Using eq.~\eqref{Dibitettomap} and $\sigma^3=\Gamma_+$, the gravitino variation 
in \cite{Dibitetto:2017tve} agrees with eq.~\eqref{Killinggravitineqcharge} on 
$\Ker\Gamma_-$. 
For the spin-$\tfrac{1}{2}$ variation, the gauging contribution printed in
eq.~(2.7) of \cite{Dibitetto:2017tve} contains a redundant factor
of $-1/\sqrt{5}$. Removing it gives the correct coefficient
confirmed independently by, e.g., eq.~(2.11) of \cite{Conti:2026row}.
With this correction, keeping $m$ and $g$ independent, 
the scalar shifts in the KSEs compare as
\begin{align}
\left.X\partial_Xf_{\rm D}\right|_{\eqref{Dibitettomap}}
&=2mX^{-4}-2gX%
\quad \longleftrightarrow \quad
2m\left(X^{-4}-X\right)\, .
\end{align}
They coincide precisely upon imposing the maximal-supersymmetry coupling relation 
in eq.~\eqref{maxsusy}. The scalar potentials of the two actions provide
a similar consistency check.
Finally, eqs.~\eqref{maxsusy}, \eqref{Dibitettomap}, then imply the relation of 
couplings in the minimal theory as
\be
\label{mintomax}
h_{\rm D}=\frac{g_{\rm D}}{2\sqrt{2}}\, .
\ee

Thus, the bosonic actions, the gravitino KSE, and the retained
spin-$\tfrac12$ KSE select a consistent maximal-to-minimal truncation.
This completes the consistency requirement announced in Section~\ref{bostheory}, 
and confirms the formulation in eq.~\eqref{bosaction} as part of the complete 
supersymmetric system.

\section{Spinor Bilinears and G-Structure}
\label{bilinearsandstructure}

\subsection{Spinor Bilinears}
\label{bilinear}

Spinor bilinears are constructed from Killing spinors with their 
Majorana conjugate.
We fix a single commuting Killing spinor throughout the analysis, 
and construct all bilinears from it. 
In SM notation, the two members of its SM pair are related by the reality 
condition in eq.~\eqref{sympMajorana}, 
and do not represent independent Killing spinors.
We will define $r$-form bilinears with matrices of the Clifford 
algebra Cliff$(1,6)$ up to rank $r = \lfloor d/2\rfloor=3$, 
where $d$ is the spacetime dimension, that take the generic form:
\be
\gamma_{(r)} \equiv \frac{1}{r!} \gamma_{\mu_1\cdots\mu_r} 
\dd x^{\mu_1} \wedge \cdots \dd x^{\mu_r}\, ,  \qquad 0\le r \le3\, ,
\ee
and with the following diagonal SO$(5)$ Clifford matrices up to rank 
$s = \lfloor 5/2\rfloor=2$:
\be 
\label{Gammas}
\Gamma_{a_1\cdots a_s} 
= \{ \mathds{1}_4, \Gamma_5, \Gamma_{12}, \Gamma_{34} \} 
\subset M_4(\mathbb{C})\, ,
\ee 
as detailed in eq.~\eqref{SO5cartan} and Appendix~\ref{5ECliff}.

In order to keep the reality of all bilinears on par, 
we take instead the following set:
\be
\label{realGammas}
\Gamma_I = \{\Gamma_{\pm},P_{\pm}\} \subset M_4(\mathbb{R})\, ,
\ee
where $\Gamma_+$ and $\Gamma_-$ are the two real Cartan generators,
symmetric and antisymmetric under the exchange of labels, $1 \leftrightarrow 2$, 
respectively, that we defined in eq.~\eqref{cartanbasis}, and $P_\pm$ 
are the two $\tfrac{1}{2}$-supersymmetry projections that we defined as 
\begin{align}
\label{susyprojections}	
P_{\pm} \equiv  \frac{1}{2} \left( \mathds{1}_4 \pm \Gamma_{5} \right)
= \Gamma_{\pm}^2\, , 
\end{align}
and explicitly read
\be
P_+ = \diag (1,0,0,1) \, , 
\qquad 
P_- = \diag (0,1,1,0)\, ,
\ee
see Appendix~\ref{5ECliff}. 
Thus, it is actually sufficient to consider the two basis generators 
$\Gamma_{\pm}$.

First, let us consider a generic term in the KSEs. 
Since the Cartan matrices $\Gamma_I$ are diagonal, and recalling
eq.~\eqref{spinorasdirectproduct}, such a term can be written as
\be 
\label{minmaxdecompose}
f_{\mu_1\cdots \mu_r} \gamma^{\,\beta}_{_{(r)}\,\alpha} 
\Gamma^{\,B}_{I\,A} \left( \chi^\alpha \otimes \chi^A \right) = 
f_{\mu_1\cdots \mu_r} 
\left( \gamma^{\,\beta}_{_{(r)}\,\alpha}\, \chi^\alpha \right)
\Gamma^{\,B}_{I} \chi^B\, , 
\ee
where $f_{\mu_1\cdots\mu_r}$ is some generic tensor that contains the 
fields, $\alpha,\beta$, are the Spin$(1,6)$ spinor indices,
$A,B$, are the USp$(4)$ spinor indices, and the notation on 
$\Gamma_I$ in the last equality denotes its $B$th component on 
the diagonal, with no summation on the USp$(4)$ indices. 
Thus the USp$(4)$ dependence reduces to four decoupled scalar
eigenvalues, labelled by $B$.
Let us then contract such a generic term with the proper 
Spin$(1,6)$ spinor from the left, 
so that a generic building block can be written as
\be 
\left (\overline{\chi^j}_{\beta}
\,\gamma^{\,\beta}_{_{(r)}\,\alpha}\, \left(\chi^i\right)^\alpha \right)
\Gamma^{\,A}_{I} \chi^A 
\equiv b_{(r)} \Gamma^{\,A}_{I} \chi^A\, , 
\ee
where we denoted by $b_{(r)}$ what is just a standard $r$-form bilinear 
in the minimal seven-dimensional theory. 
Therefore, we can now focus on the generic seven-dimensional bilinears 
$b_{(r)}$, where in the following we suppress again the spinorial indices. 

In order to determine which seven-dimensional bilinears are non-vanishing,
and analyse them, some bilinear formulae given
in Appendix~\ref{bilinearformulae} are required,
starting with the Majorana flip relation
in eq.~\eqref{Majoranaflip}.
Next, in order to determine the reality properties of these bilinears,
we use the formula for complex conjugation of spinor bilinears with
an arbitrary matrix $M$, which is given in eq.~\eqref{Lorentzcc}.
The bilinear forms of Cliff$(1,6)$ are conveniently packaged and
manipulated when the spinors are considered in the SM form in
eq.~\eqref{sympMajorana}. An analysis similar to that in
\cite{Gauntlett:2002nw,Gauntlett:2003fk},
see also \cite{Cariglia:2004qi},
and \cite{Ortin:2015hya} for a comprehensive pedagogic overview,
then yields the following non-vanishing bilinears:
\begin{subequations}
\label{bilineardef}
\begin{align}
\label{scalarbil}
s^{(ij)} & \equiv \overline{\chi^i} \,\chi^j\, ,\\*
\epsilon^{ij} K_\mu 
& \equiv i\,\overline{\chi^i}\,\gamma_\mu\, \chi^j\, ,\\*
\epsilon^{ij} T_{\mu\nu} &
\equiv i\, \overline{\chi^i} \, \gamma_{\mu\nu}	\, \chi^j\, ,\\*
\label{3formsdef}
u^{(ij)}_{\mu\nu\rho} 
& \equiv \overline{\chi^i} \,\gamma_{\mu\nu\rho}\, \chi^j\, .
\end{align}
\end{subequations}
Here $i,j=1,2$, are the SM pair indices. 
Since the spinors $\chi^i$ form an SU$(2)$ doublet, with indices raised 
and lowered by the invariant antisymmetric tensor $\epsilon_{ij}$, 
bilinears carrying two SM indices decompose as $2\otimes 2 = 1 \oplus 3$, 
where the antisymmetric part in the SM indices is an SU$(2)$ singlet,
while the symmetric part is an SU$(2)$ triplet. 
Thus the 1- and 2-form bilinears $K$ and $T$, which are written 
with an explicit factor of $\epsilon^{ij}$, are real SU$(2)$ singlet forms. 
On the other hand, the scalars $s^{(ij)}=s^{ij}$ and 3-forms
$u^{(ij)}_{\mu\nu\rho}=u^{ij}_{\mu\nu\rho}$ are symmetric in the SM
indices, and therefore form SU$(2)$ triplets.

To make this triplet structure explicit, we lower one SM index with
$\epsilon_{ij}$,
\be
s^i{}_{j}\equiv \epsilon_{jk}s^{ik}\, ,
\qquad
(u_{\mu\nu\rho})^i{}_{j}
\equiv \epsilon_{jk}(u_{\mu\nu\rho})^{ik}\, .
\ee
The resulting traceless $2\times2$ matrices can then be expanded in the
basis of Pauli matrices 
\begin{subequations}
\begin{align}
\label{matrixscalars}
s^{i}{}_{j} & \equiv \vec{S} \cdot (\vec{\sigma})^{i}{}_{j}\, ,\\
\label{matrix3forms}
(u_{\mu\nu\rho})^{i}{}_{j} &
\equiv \vec{U}_{\mu\nu\rho} \cdot (\vec{\sigma})^{i}{}_{j}\, ,
\end{align}
\end{subequations}
where $\sigma_a/2$, $a=1,2,3$, are the generators of SU$(2)$ with the
Pauli matrices $\sigma_a$ defined in eq.~\eqref{pauli}, and $S_a$,
$\left(U_a\right)_{\mu\nu\rho}$, are real bilinears. 
The scalar triplet then explicitly reads
\be
s^i{}_{j}
=
\begin{pmatrix}
s^{12}\, & \,-s^{11}\\
s^{22}\, & \,-s^{12}
\end{pmatrix}
=
\begin{pmatrix}
S_3\, & \,S_1-iS_2\\
S_1+iS_2\, & \,-S_3
\end{pmatrix}.
\ee
It is useful to record the relations of general to real scalar bilinears
\begin{subequations}
\label{bilinscalars}
\begin{align}
\label{comscalarstoreal}
s^{11} & = - (s^{22})^* = -S_1 + i S_2 \, ,\\
s^{12} & = s^{21} = S_3\, ,
\end{align}
\end{subequations}
with similar relations for the 3-forms, namely
\begin{subequations}
\begin{align}
\label{c3bilineartoreal}
u^{11}_{\mu\nu\rho} & = - (u^{22}_{\mu\nu\rho})^*
= -\left(U_1\right)_{\mu\nu\rho} + i \left(U_2\right)_{\mu\nu\rho} \, ,\\
u^{12}_{\mu\nu\rho} & = u^{21}_{\mu\nu\rho}
= \left(U_3\right)_{\mu\nu\rho}\, .
\end{align}
\end{subequations}
Thus the components $s^{11}$ and $u^{11}_{\mu\nu\rho}$ are not 
independent complex bilinears. They are complex combinations of the real SU$(2)$ 
triplet components $S_1,S_2$, and 
$\left(U_1\right)_{\mu\nu\rho},\left(U_2\right)_{\mu\nu\rho}$, while $s^{12}$ 
and $u^{12}_{\mu\nu\rho}$ give the third real components of the triplets.

In order to get geometric bilinear relations, we apply the Fierz 
identity in $d=7$ in eq.~\eqref{Fierz} first with 
$\bar{\epsilon}_1=\overline{\chi^i}$, 
$\bar{\epsilon}_3=\overline{\chi^k}$,
$\epsilon_2=\chi^l$, $\epsilon_4=\chi^j$, 
and in turn switching to 
$\bar{\epsilon}_1=\overline{\chi^i} \gamma^\mu$, 
$\epsilon_4=\gamma_\mu\chi^j$, and then to
$\bar{\epsilon}_1=\overline{\chi^i} \gamma^{\mu\nu}$, 
$\epsilon_4=\gamma_{\mu\nu}\chi^j$.
This yields the following relations among the bilinear norms:
\begin{align}
\label{Fierznorms}	
\sum_{a=1}^{3} S_a^2 & = - K^2
= \frac{1}{6} T^2 = \frac{1}{30} \sum_{a=1}^{3} U_a^2  \, , %
\end{align}
where we see that the Killing vector must 
be either timelike or null, and the scalar contractions of 
the 2- and 3-form bilinears are later used to normalize 
the defining structure forms on the base manifold, 
see Section~\ref{gstructure}. 
In the above applications of the Fierz identity, we also needed the identity in 
eq.~\eqref{gammasandwich}, and to derive its extension in 
eq.~\eqref{gammadblsandwich}. 

Next, applying the Fierz identity with 
$\bar{\epsilon}_1=\overline{\chi^i}$, 
$\bar{\epsilon}_3=\overline{\chi^k}$,
$\epsilon_2=\chi^l$, $\epsilon_4=\gamma_\mu\chi^j$, 
and in turn switching to $\epsilon_4=\gamma_{\mu\nu}\chi^j$, 
and then to
$\bar{\epsilon}_1=\overline{\chi^i} \gamma_{\mu\nu\rho}$, 
$\epsilon_4=\gamma^{\mu}\chi^j$,
yields the following geometric relations of the interior products with the 
Killing vector:
\begin{subequations}
\label{Fierzintprods}	
\begin{align}
\label{Fierzintprod1}	
\iota_K T & = 0\, ,\\      
\label{Fierzintprod2}
\forall a\, ,\quad \iota_K U_a & = S_a T\, .  
\end{align}
\end{subequations}
In the above applications of the Fierz identity we used the duality 
relation between Clifford matrices of ranks $r$ and $d-r$ when the dimension is 
odd, which in $d=7$ dimensions is specified in eq.~\eqref{dualclifford7}.
We also used the interior product of a vector $V$ 
and a form $\omega$, as defined in eq.~\eqref{intprod}, and in what follows 
also the more general contraction with 
a tensor $T$, as defined in eq.~\eqref{formcontract}.

\subsection{G-Structure}
\label{gstructure}

Let us consider the G-structure associated with the spinor bilinears 
of the seven-dimensional theory, when the Killing vector is 
timelike. 
We choose the time coordinate $t$ accordingly, such that 
\be
\label{Killingtime}
K = \frac{\partial}{\partial t }\, . 
\ee
The general metric admitting such a Killing vector reads:
\be
\label{genmetric}
ds^2 = -h^2 ( dt + \omega )^2 + g_{ij} dx^idx^j\, ,
\ee
where $h$, $\omega$ and $g_{ij}$ are independent of $t$,  due to the 
isometry, and $\omega$ is the $1$-form connection of the timelike fibration.
This Killing vector need not exhaust the isometries of a supersymmetric
solution; rather, it is the minimal isometry captured by the bilinear
of the chosen Killing spinor.
Using the same notation for its metric-dual $1$-form, we have
\be 
\label{Killingdual}
K = - h^2 ( dt + \omega ) \equiv - h e^0\, . 
\ee

The chirality matrix on the Riemannian six-dimensional base space 
with the metric $g_{ij}$ is defined by
\be 
\label{chiral6}
\gamma_* \equiv \gamma_1 \ldots \gamma_6 = \gamma_0 
= h^{-1} K^{\mu} \gamma_\mu\, ,
\ee
where in the second equality we used eqs.~\eqref{Cliff1+6}, 
\eqref{7dchirality}. 
With this convention for the 6d chirality 
$\gamma_*^2=-\mathds{1}_8$,
so its eigenvalues are $\pm i$. 
Thus we define positive chirality by
\be
\label{chiraleigen}
\gamma_*\psi = i \psi\, .
\ee
The Fierz identity in eq.~\eqref{Fierznorms} then implies 
the following projection by the 6d chirality matrix of the 
seven-dimensional SM spinor:
\be
\label{chiralproj}
\gamma_*\chi^i = - i h^{-1} s^{i}{}_{j} \chi^j\, ,
\ee
which is obtained by hitting with $\overline{\chi^k}$ from the left, 
using the 0- and 1-form bilinear definitions, 
and recalling the product rule of the Pauli matrices in 
eq.~\eqref{Pauliprod}.

Let us consider then the G-structure on the six-dimensional base 
manifold. 
Note that on this base we switch to considering 6-dimensional 
spinors within the corresponding Euclidean Clifford algebra, 
Cliff$(6)$, as detailed in Appendix~\ref{6ECliff}.
Now, let us entertain the assumption that there are two orthogonal 
unit-norm spinors of positive chirality, $\psi$, $\hat{\psi}$, 
namely 
\be
\label{funnyassumption}
\psi^\dag\hat{\psi} 
=0\, .
\ee
As in \cite{Gauntlett:2004zh}, we can then always take two 
non-orthogonal unit-norm spinors of positive chirality 
to be $\psi$, and 
\be 
\label{tildepsi}
\tilde{\psi} \equiv a\psi+b\hat{\psi}\, , 
\qquad |a|^2+|b|^2=1\, .
\ee
Therefore, in seven dimensions a general spinor of no definite chirality, 
such as the SM Killing spinors, can be written as 
\be
\label{7to6ansatz}
\chi^1 
= \chi^1_+ + \chi^1_- = c_1 \psi + c_2 \left( a\psi+b\hat{\psi} 
\right)^*\, ,
\ee
which is a chiral decomposition over the six-dimensional base 
\cite{Gauntlett:2004zh}. 
$\psi$ and $\hat{\psi}$ are uniquely defined up to phases, which are 
fixed such that $c_{1}$, $c_{2}$, are taken to be real.

Using the SM condition in eq.~\eqref{sympMajorana} requires some
care in distinguishing the seven-dimensional Lorentzian charge
conjugation from its six-dimensional Euclidean counterpart, denoting them 
by $c_{(7)}$ and $c_{(6)}$, respectively.
The conventions in Appendices~\ref{7Cliff}, \ref{6ECliff}, together
with $C=\mathds{1}_8$ and $\gamma^0=-\gamma_*$, give $B=-i\gamma^0=i\gamma_*$.
Consequently, using the chiral projections of $\psi$ and
$\hat\psi$, we find
\be
\psi^{c_{(7)}}=\psi^*\, ,
\qquad
(\psi^*)^{c_{(7)}}=-\psi\, ,
\ee
with identical relations for $\hat\psi$. By contrast, Euclidean
charge conjugation on the base satisfies
\be
\psi^{c_{(6)}}=\psi^*\, ,
\qquad
(\psi^*)^{c_{(6)}}=\psi\, ,
\ee
with analogous relations for $\hat\psi$. The relative minus sign
in the second seven-dimensional relation is therefore essential
when constructing the SM counterpart of $\chi^1$.

Using $\chi^2=-(\chi^1)^{c_{(7)}}$, which follows from
eq.~\eqref{sympMajorana}, we obtain the SM counterpart spinor as
\be
\label{smcompanion}
\chi^2
= -\left(\chi^1\right)^{c_{(7)}}
= c_2\left(a\psi+b\hat\psi\right)-c_1\psi^*\, .
\ee
The scalar bilinears are seven-dimensional Majorana bilinears,
as defined in eq.~\eqref{scalarbil}, and must therefore be
evaluated using the seven-dimensional charge conjugation. In
particular, the unit Hermitian normalization
$\psi^\dagger\psi=1$ gives
\be
\overline{\psi^{c_{(7)}}}\psi
=\psi^\dagger\psi=1\, ,
\qquad
\overline{(\psi^*)^{c_{(7)}}}\psi^*
=-\psi^T\psi^*=-1\, ,
\ee
whereas the corresponding Euclidean charge-conjugation norms of
both $\psi$ and $\psi^*$ are positive. Thus the norm assigned to
$\psi^*$ in the seven-dimensional bilinears should not be interchanged
with its six-dimensional Euclidean counterpart.
Equivalently, using $C=\mathds{1}_8$, the contractions required
below are
\be
\psi^T\psi^*=1\, ,
\qquad
\tilde\psi^{*T}\tilde\psi=1\, ,
\qquad
\psi^T\tilde\psi^*=a^*\, ,
\ee
while scalar contractions between spinors of the same chirality vanish. 

Using the orthogonality condition in
eq.~\eqref{funnyassumption}, together with
eqs.~\eqref{matrixscalars}, \eqref{bilinscalars}, we then obtain
\begin{subequations}
\label{scalarsfromspinors}
\begin{align}
s^1{}_{1} & = s^{12}=c_2^2-c_1^2\, ,\\
\label{s11to6dunsolved}
s^1{}_{2} & =-s^{11}=-2c_1c_2a^*\, .
\end{align}
\end{subequations}
Using the projection in eq.~\eqref{chiralproj}, and the chirality 
conventions in eqs.~\eqref{chiraleigen}, \eqref{chiralfix}, 
we then find the following system of equations to solve for 
the coefficients $c_1$, $c_2$, $a$ and $b$:
\begin{subequations}
\label{syseq}	
\begin{align}	
c_1 \left( c_1^2 - c_2^2 \left( 1 - 2|a|^2 \right) - h \right) 
& = 0\, ,\\
c_2 a \left( c_1^2 + c_2^2 - h \right) & = 0\, ,\\
\label{syseq3}
c_1 c_2 a b & = 0\, ,\\
c_2 b \left( c_1^2 - c_2^2 + h \right) & = 0\, . 
\end{align}	
\end{subequations}
A critical equation here is that in eq.~\eqref{syseq3}, as it is 
already telling that at least one of the coefficients in our general 
ansatz in eq.~\eqref{7to6ansatz} must vanish, and thus a Killing 
spinor in seven dimensions is actually more constrained than what we 
assumed. 

As we will show in Section~\ref{Killingresults}, 
an important class of solutions constrains the complex scalar bilinear to vanish,
\be
\label{cscalarto6d}
s^{11}=S_1=S_2=0\, .
\ee
Within the general system of eqs.~\eqref{syseq}, we will then show 
that this implies
\be
\label{reallygeneral}
c_1c_2=0\, , 
\ee
which directly satisfies eq.~\eqref{s11to6dunsolved}.
As we will show in Section~\ref{classifyingbps}, this result 
is consequential for the geometry of the base manifold, 
and hence for the permitted supersymmetric geometries. 
Nevertheless, here we analyse the general system of eqs.~\eqref{syseq}, 
without assuming eq.~\eqref{cscalarto6d} holds.

Assume first that $c_1\neq0$. Equation~\eqref{syseq3}
then separates the solutions into the branches $c_2=0$,
$a=0$, or $b=0$. On the first branch, the first equation
in eqs.~\eqref{syseq} gives
\be
c_2=0\, ,\qquad c_1^2=h\, .
\ee
If instead $a=0$, then $|b|=1$. Assuming also $c_2\neq0$,
the first and fourth equations in eqs.~\eqref{syseq} would give,
respectively, $h=c_1^2-c_2^2$, $h=c_2^2-c_1^2$,
and hence $h=0$, in contradiction with the timelike case.
Thus the branch $a=0$ also necessarily reduces to
$c_2=0$, $c_1^2=h$. The remaining possibility $b=0$,
with both $c_1$ and $c_2$ non-vanishing, will be considered
below.

For $c_1^2=h$, $c_2=0$, we then get for the SM Killing spinor in 
eqs.~\eqref{7to6ansatz}, \eqref{smcompanion}, the following form:
\be
\label{simpleisdeep}
\chi^1 = \sqrt{h} \psi\, , \qquad \chi^2 = - \sqrt{h}\psi^*\, , 
\ee
where an optional flip in the choice of root sign on both counterparts 
can be absorbed into an overall phase redefinition. 
On this branch the Killing spinor depends on only one
positive-chirality spinor and therefore defines an SU$(3)$
structure, as defined in Appendix~\ref{gstructure6d}. 
The second orthogonal chiral spinor assumed in
eq.~\eqref{funnyassumption} does not enter the Killing spinor.
Notably, this branch also satisfies eqs.~\eqref{reallygeneral}, 
\eqref{cscalarto6d}. 

Assuming instead that $c_1=0$, which also satisfies 
eqs.~\eqref{reallygeneral}, \eqref{cscalarto6d}, 
necessarily leads to $c_2^2=h$, and yields a similar solution 
for the SM Killing spinor of the form:
\be
\label{simpleisdeep2}
\chi^1 = \sqrt{h} (\tilde{\psi})^*\, , 
\qquad 
\chi^2= \sqrt{h} \tilde{\psi}\, , 
\ee
where here too an optional sign flip on both counterparts can 
be absorbed into an overall phase redefinition. 
On this branch the Killing spinor also depends only on the single
positive-chirality spinor $\tilde{\psi}$ from eq.~\eqref{tildepsi}, 
giving rise to an SU$(3)$ structure. 

Finally, taking $b=0$, assuming $c_1\neq0$ and $c_2\neq0$, 
which allows us to evade eq.~\eqref{cscalarto6d}, 
leads to a solution where the Killing spinor has non-vanishing 
coefficients $c_1$, $c_2$, $a$, 
which can be parametrized by two angles specifying the internal
SU$(2)$ orientation of the SM doublet, or equivalently the direction
of its normalized scalar-bilinear triplet on $\mathbb S^2$, as we will show now. 
Yet, on this branch the Killing spinor also depends only on a single
positive-chirality spinor, $\tilde{\psi} = a \psi$. 
In this case $|a|=1$, and we get from eqs.~\eqref{syseq} that $c_1^2+c_2^2 = h$, 
which can be parametrized as 
\be
c_1 = \sqrt{h} \cos \theta\, , 
\qquad
c_2 = \sqrt{h} \sin \theta\, ,  
\qquad 
a = \exp (-i\varphi)\, ,
\ee
where here too an optional sign flip on both counterparts can be 
absorbed into an overall phase redefinition. 
The angles $\theta$ and $\varphi$ are real $0$-forms on the
six-dimensional base, valued in the internal $\mathbb S^2$ of
SM orientations; they are not angular coordinates of the base.
Thus the domain of parameters in this case is
$\theta\in(0,\pi)$, $\theta \neq \pi/2$, and $\varphi\in[0,2\pi)$, 
where $\theta = 0,\pi/2$, are excluded due to 
the assumption on $c_1$, $c_2$.
The two limiting values $\theta=0,\pi/2$ correspond to the two cases 
above with $c_2=0$, or $c_1=0$, respectively.

From our ansatz for the SM spinor in eqs.~\eqref{7to6ansatz}, 
\eqref{smcompanion}, we then get in this case
\begin{subequations}
\label{fullsmspinors}	
\begin{align}
\chi^1 & = 
\sqrt{h} \left( 
\cos \theta \, \psi + e^{i\varphi} \sin \theta \psi^* \right)\, , 
\\*
\chi^2 & = 
\sqrt{h} \left( 
e^{-i\varphi} \sin \theta \psi - \cos \theta \,\psi^* \right)\, ,
\end{align}	
\end{subequations}
and we see that indeed the simple solutions 
in eqs.~\eqref{simpleisdeep}, \eqref{simpleisdeep2}, 
are obtained when $\theta=0,\pi/2$, up to an overall phase, 
and then again $c_2=0$, or $c_1=0$, respectively.
It is easy to see that this case is similar to those 
in eqs.~\eqref{simpleisdeep}, \eqref{simpleisdeep2}, 
as the general SM doublet can be transformed into either of the
simple forms by an SU$(2)$ rotation, which induces the
corresponding SO$(3)$ rotation of the scalar and 3-form
bilinear triplets.

The analysis above shows that this Killing spinor always depends on a single
positive-chirality spinor on the base, and hence defines a single SU$(3)$
structure.
This does not exclude solutions with enhanced supersymmetry. An
additional independent Killing spinor may define a different
timelike bilinear Killing vector and hence a different timelike
decomposition and base geometry. The underlying bosonic fields
remain fixed, while their projected components relative to the chosen
spinor-induced structure may differ.
Using eqs.~\eqref{fullsmspinors} for the generic SM Killing spinor in
terms of the six-dimensional chiral spinors, one can now relate all
of its seven-dimensional bilinears to the defining forms of an
SU$(3)$ structure, as defined in Appendix~\ref{gstructure6d}.

First, for the scalar bilinears in eqs.~\eqref{scalarsfromspinors} we get
\begin{subequations}
\label{genericscalars}	
\begin{align}
\label{complicates11}	
s^{11} %
& = h \sin 2\theta e^{i\varphi} 
\equiv h s p \, ,
\qquad s\equiv \sin 2 \theta\, ,\quad p\equiv e^{i\varphi}\, ,\\ 
\label{complicateS3}
s^{12} %
& = - h \cos 2 \theta 
\equiv - h c\, ,
\qquad c\equiv \cos 2 \theta\, , 
\end{align}
\end{subequations}
and we see that indeed eq.~\eqref{cscalarto6d} 
is obtained when $\theta=0,\pi/2$.
Therefore, the two additional seven-dimensional real scalar bilinears, 
$S_1$, $S_2$, are encoded by the two additional angles
$\theta$, $\varphi$, defined on the six-dimensional base,
which parametrize the internal SU$(2)$ orientation of the SM doublet,
or equivalently of its scalar-bilinear triplet.
It is easy to verify consistency with taking the SM Killing spinors 
in eqs.~\eqref{simpleisdeep}, \eqref{simpleisdeep2}, respectively,
which fixes the complex scalar bilinear to eq.~\eqref{cscalarto6d}, 
and the real scalar bilinear to
\be
\label{realscalarto6d}
S_3 = \mp h\, . 
\ee

The values $\theta=\pi/4, 3\pi/4 $ are special points of the general
parametrisation. At these points $c=\cos2\theta=s^{12}=S_3=0$, 
so that the seven-dimensional real scalar bilinear vanishes.
This case is also distinguished as the SM pair can be reorganized into spinors 
that are individually Majorana on the base,
which arises in 6d/7d Euclidean Clifford algebras, as we noted 
in Appendix~\ref{6ECliff}.
For example, for $\theta=\pi/4$ and $\varphi=0$, the SM pair in 
eqs.~\eqref{fullsmspinors} becomes
\be
\chi^1
=
\sqrt{\frac{h}{2}}\left(\psi+\psi^*\right)\, ,
\qquad
\chi^2
=
\sqrt{\frac{h}{2}}\left(\psi-\psi^*\right)\, .
\ee
Since charge conjugation on the six-dimensional base simply 
exchanges $\psi$ and $\psi^*$, one finds
\be
(\chi^1)^{c_{(6)}}=\chi^1\, ,
\qquad
(\chi^2)^{c_{(6)}}=-\chi^2\, , 
\ee
such that $\chi^1$ and $i\chi^2$ are individually Majorana spinors with
respect to six-dimensional Euclidean charge conjugation. 

Next, for the 2-form bilinear in terms of the fundamental structure 
form $J$, from eq.~\eqref{Fierzintprod1} we infer that $T_{0i}=0$, so 
$T_{\mu\nu}=T_{ij}$, with $i,j,\ldots$, indices over base space. 
Then from eq.~\eqref{defJ}, it is easy to get that
\be
\label{TtoJ}
T = h J\, ,
\ee
independent of the angle parameters $\theta$, $\varphi$.

Next, we turn to the fundamental 3-form of the base structure, 
defined in eq.~\eqref{defPsi}, which along with the fundamental 
2-form, satisfies by definition that
\be 
\label{transversesu3}
\iota_K J = \iota_K \Psi = 0\, .
\ee
We thus proceed to express the 3-form bilinears in terms of 
the structure forms, recalling the geometric relations 
in eqs.~\eqref{Fierznorms}, \eqref{Fierzintprods}, 
so that generally $u^{11}_{0ij}, u^{11}_{ijk}\neq0$, 
and $(U_3)_{0ij},(U_3)_{ijk}\neq0$. 
Using our definitions of the bilinear forms in eqs.~\eqref{bilineardef}, 
the structure forms in eqs.~\eqref{structuredef}, and the angle parameters 
in eqs.~\eqref{genericscalars}, it is then straightforward to get 
\begin{subequations}
\label{generic3forms}
\begin{align}
\label{complicateu11}
u^{11}
& = - s p K \wedge J
-\frac{h}{2}\Bigl(
\left(c-1\right)p^2\Psi
+\left(c+1\right)\overline{\Psi}
\Bigr)\, ,\\
\label{complicateU3}
U_3
& = c K \wedge J
-hs\,\Rea\left(p\Psi\right)\, .
\end{align}
\end{subequations}

It is easy to verify consistency when we take eq.~\eqref{cscalarto6d} 
to hold, taking the SM Killing spinors in eqs.~\eqref{simpleisdeep}, 
\eqref{simpleisdeep2}, respectively. We then get that 
\be
\label{intprodKU11}
\iota_K u^{11} = 0\, , 
\ee
and thus $u^{11}_{0ij}=0$, so 
$u^{11}_{\mu\nu\rho}=u^{11}_{ijk}$. 
Then from eq.~\eqref{defPsi}, it is easy to get that

\be
\label{u11to6d}
u^{11}=-h\overline{\Psi}
\quad\text{or}\quad
u^{11}=h p^{2}\Psi\, ,
\ee
or if we write the first option in terms of the real and imaginary parts
\be
U_1
=\frac{h}{2}\left(\Psi+\overline{\Psi}\right)
=h\Psi_+\, ,
\qquad
U_2
=-i\frac{h}{2}\left(\Psi-\overline{\Psi}\right)
=h\Psi_-\, ,
\ee
where the real 3-forms $\Psi_{\pm}$ are the real and imaginary parts 
of $\Psi$, as in eq.~\eqref{gettingreal}.
Therefore, when considering the general case, 
in which $s^{11}=0$ is relaxed, 
we simply use the generic scalar, and 3-form bilinears 
in eqs.~\eqref{genericscalars}, \eqref{generic3forms}, respectively, 
in the full domain 
$\theta\in[0,\pi)$, and $\varphi\in[0,2\pi)$.
Finally, when $\theta=0,\pi/2$, from  eqs.~\eqref{Fierzintprod2}, 
\eqref{realscalarto6d}, \eqref{TtoJ}, we also get that 
$\iota_KU_3=S_3 T= \mp h^2 J$, 
and it is easy to infer that
\be
\label{u3to6d}
U_3= \pm K \wedge J\, ,  
\ee
consistent with eq.~\eqref{complicateU3}. 

To recap, considering a general SM Killing spinor in 
eqs.~\eqref{fullsmspinors}, eqs.~\eqref{genericscalars}, 
\eqref{TtoJ}, \eqref{generic3forms}, together with 
eq.~\eqref{Killingdual}, then provide the relations 
of the spinor bilinears to the geometric  structure on the base. 
When the former are recast in terms of the latter, we see that 
five real forms, $\{h,K,J,\Psi_+,\Psi_-\}$, together with the
two additional real base $0$-forms, $\{\theta,\varphi\}$,
encode the spinorial bilinears and the associated G-structure data, 
in terms of which the real bosonic fields $\{X, A, B, g_{ij}\}$ 
are subsequently constrained.

Finally, a generic SU$(3)$ structure is classified according to its 
intrinsic torsion, which decomposes into five torsion forms, 
$\{{\mathcal{W}}_i\}_{i=1,\ldots,5}$, defined through $\dd J$ and $\dd \Psi$
in Appendix~\ref{torsiondef}, see also 
Appendices~\ref{gstructure6dwg}, \ref{almostcalculus}.
Appendix~\ref{gstructure6dwg} distinguishes this SU$(3)$ data from 
the intrinsic torsion of the underlying U$(3)$ structure \cite{Gray:1980}, 
while Appendix~\ref{almostcalculus} fixes the single-index and all-index actions 
of $J$ used in the corresponding decompositions of forms.
The algebraic and differential bilinear relations, which are derived 
from the KSEs in Section~\ref{Killingresults}, will then enable us to 
study the intrinsic torsion of the base, and classify the relevant 
structures and solutions in Section~\ref{classifyingbps}.

\section{Bilinear Relations and Base Torsion}
\label{Killingresults}

We now turn to the relations among the spinor bilinears, with algebraic ones
arising from the variations of spin-$\tfrac{1}{2}$ fermions in 
eqs.~\eqref{Killinghalfspinequal}, 
and differential ones from the gravitini variations in 
eq.~\eqref{Killinggravitineqcharge}.

We recall that in the equal-charge sector algebraic equations
contain either $\Gamma_+$, or $\Gamma_-$, 
which are the matrices taken as the two Cartan generators 
that have non-trivial orthogonal kernels, 
see Appendix~\ref{5ECliff}. 
Recalling the general form of a Killing spinor in 
eq.~\eqref{spinorasdirectproduct}, 
it thus holds for the USp$(4)$ spinor that 
\be
\label{directsumofkers}
\chi^A\in
\mathbb C^4
=
\Ker\Gamma_+\oplus\Ker\Gamma_-
=
\image\Gamma_+\oplus\image\Gamma_-,
\qquad
\Ker\Gamma_\pm=\image\Gamma_\mp.
\ee
This direct-sum decomposition considerably simplifies the analysis
of bilinear relations. We keep the factor of USp$(4)$ spinor $\chi^A$
explicit and resolve it through the two kernels, while the
Spin$(1,6)$ SM spinor $\chi^\alpha$ remains encoded in the
seven-dimensional bilinears defined in Section~\ref{bilinear}.

The algebraic relations below retain the two Cartan operators until
a USp$(4)$ spinor component is selected. The relations containing
$\Gamma_-$ will then distinguish the complementary spinor branches
according to their Cartan flux, as shown in Section~\ref{uniqueF}.

\subsection{Algebraic Relations}
\label{Killingresultsalg}

Using the spin-$\tfrac12$ variations in
eqs.~\eqref{Killinghalfspinequal}, together with their conjugate
modifications in Appendix~\ref{KSEforMCCC}, we contract the
equations from the left with the Majorana conjugate of the
Spin$(1,6)$ SM spinor, with insertions of Cliff$(1,6)$ matrices
of ranks $0$ through $3$. This covers all potentially independent
relations in seven dimensions. The resulting equations are split into
parts that are symmetric and antisymmetric in the SM indices, with
the latter further separated into their real and imaginary parts.
Thus, at each of the four ranks these contractions produce 
one complex and two real equations.

All relations containing $\Gamma_+$ are consequential for the
analysis of possible solutions, and survive in the 
truncation to the minimal theory. 
The $\Gamma_-$ system of equations is used primarily to distinguish the
spinor branches according to their Cartan flux in Section~\ref{uniqueF}, 
for which relations through rank $2$ suffice. 
We therefore display here the $\Gamma_+$ relations through
rank $3$ and the $\Gamma_-$ relations through rank $2$. A rank-$4$
relation containing $\Gamma_+$, used to simplify the differential
relations, and the rank-$3$ relations containing $\Gamma_-$ are
collected in Appendix~\ref{algbilrel}. 
It is important to highlight that the restriction in eq.~\eqref{cscalarto6d}, 
which fixes the internal spinor orientation, 
is not imposed in deriving this algebraic system.

First, let us define the dilaton-potential dependent term, 
using eq.~\eqref{derscapoteqcharge}, as
\be
\label{DeltaX}
\Delta(X)\equiv X^2-X^{-3}\, ,
\qquad
\mathcal V'(X)
=
-8X^{-6}(2X^5-1)\Delta(X)\, .
\ee
Then, stemming from the spin-$\tfrac{1}{2}$ fermion variation 
with the generator $\Gamma_+$ in eqs.~\eqref{Killingfermionsymeq}, 
\eqref{Killingfermionsymcharge}, we obtain from the part that 
is symmetric in the SM indices the following complex relation at rank 0: 
\begin{align}
\label{br0cg}	
0 & = \left[ 4 g s^{11} \Delta (X)
+ X^3 \left( u^{11} \lrcorner \star G \right)
\Gamma_5 \right] \chi\, ,
\end{align}
where here $\chi$ stands only for the USp$(4)$ spinor, 
$\chi^A$, discussed above, 
and the USp$(4)$ index is suppressed here and henceforth.
Moreover, the identity operator $\mathds{1}_4$ is also suppressed here 
and henceforth, such that it is implicit in terms with no explicit Cartan 
operator. 
From the part that is antisymmetric in the SM indices we extract 
the following real and imaginary parts at rank 0:
\begin{subequations}
\label{br0rrg}	
\begin{align}
\label{br0rg}
0 & = \left[ 4 g S_3 \Delta (X) 
+ \left( T \lrcorner F \right) \Gamma_{+} 
+ X^3 \left( U_3 \lrcorner \star G \right) \Gamma_5 \right] 
\chi\, ,\\ %
\label{br0r}
0 & = \left[ \iota_K \dd X \right]\chi\, .
\end{align} 
\end{subequations}
Similarly, we obtain the following relations at rank 1:
\begin{subequations}
\label{br1}
\begin{align} 
\label{br1c}	
0 & = \left[ 5 s^{11} \dd X_{\mu} 
+ i \left( F \lrcorner u^{11} \right)_\mu \Gamma_+ 
+ X^3 \left( u^{11} \lrcorner G \right)_{\mu} 
\Gamma_5 \right] \chi\, ,\\
\label{br1r}
0 & = \left[ 5 S_3 \dd X_\mu + \left( \iota_K F \right)_\mu \Gamma_+ 
+ X^3 \left( U_3 \lrcorner G \right)_{\mu} 
\Gamma_5 \right] \chi\, ,\\
\label{br1rg}
0 & = 
\Big[ 4 g \Delta (X)  K_\mu  
+ 5 \left( \dd X \lrcorner T \right)_\mu  
-\left( F \lrcorner U_3 \right)_\mu \Gamma_+ 
+ X^3 \left( T \lrcorner \star G \right)_{\mu} 
\Gamma_5 \Big] \chi\, ,
\end{align}	
\end{subequations}
the following relations at rank 2:
\begin{subequations}
\label{br2}	
\begin{align}
\label{br2c}
0 & = 
\left[ 15 \left( \dd X \lrcorner u^{11} \right)_{\mu\nu}
- 3 i \left( s^{11} F - F \lrcorner \star u^{11} \right)_{\mu\nu} \Gamma_+ 
+ X^3 \left( \left( \star u^{11} \right)^{\alpha\beta\gamma}{}_{[\mu}
G_{\nu]\alpha\beta\gamma} \right) \Gamma_5 \right] \chi\, ,\\	
\label{br2r}
0 & = 
\left[ 15 \left( \dd X \lrcorner U_3 \right)_{\mu\nu} 
+ 6 \left( T^{\alpha}{}_{[\mu} F_{\nu]\alpha} \right) \Gamma_+
+ X^3 \left( (\star U_3)^{\alpha\beta\gamma}{}_{[\mu} 
G_{\nu]\alpha\beta\gamma}\right) \Gamma_5 \right] \chi\, , \\
\label{br2rg}
0 & = 
\bigg[ 4 g \Delta(X) T_{\mu\nu} 
+ 10 \dd X_{[\mu} K_{\nu]}  
+ \Bigl( S_3 F - F \lrcorner \star U_3 \Bigr)_{\mu\nu} 
\Gamma_+ 
- X^3 \Bigl( \iota_K \star G + T \lrcorner G \Bigr)_{\mu\nu} 
\Gamma_5 \bigg] \chi\, ,
\end{align}	
\end{subequations}
and the following relations at rank 3:
\begin{subequations}
\label{br3} 	
\begin{align} 
\label{br3cg}	
0 & = \bigg[ 8 g \Delta(X) u^{11}_{\mu\nu\sigma}
+ 10 \left( \dd X \lrcorner \star u^{11} \right)_{\mu\nu\sigma}  
+ 6 i \Bigl( F^{\alpha}{}_{[\mu} (u^{11})_{\nu\sigma]\alpha} \Bigr) 
\Gamma_+ 
\nn \\*
& \qquad \qquad \qquad \qquad \qquad \qquad \qquad 
- X^3 \left( 2 s^{11} \left( \star G \right)_{\mu\nu\sigma} 
+ 3 \left( u^{11} \right)^{\alpha\beta}{}_{[\mu} 
G_{\nu\sigma]\alpha\beta} \right) 
\Gamma_5 \bigg] \chi\, ,
\\* 
\label{br3rg}
0 & = \bigg[ 8 g \Delta(X) \left(U_3\right)_{\mu\nu\sigma}  
+ 10 \left( \dd X \lrcorner \star U_3 \right)_{\mu\nu\sigma} 
- 2 \left( 3 K_{[\mu} F_{\nu\sigma]} 
- \left( F \lrcorner \star T \right)_{\mu\nu\sigma} \right) \Gamma_+  
\nn\\*
& \qquad \qquad \qquad \qquad \qquad \qquad	\qquad 
- X^3 \left( 2 S_3 \left( \star G \right)_{\mu\nu\sigma} 
+ 3 (U_3)^{\alpha\beta}{}_{[\mu} G_{\nu\sigma]\alpha\beta} \right) 
\Gamma_5 \bigg] \chi\, ,
\\* 
\label{br3r}
0 & = \left[ 15 \dd X_{[\mu} T_{\nu\sigma]}  
- 3 \left( F^{\alpha}{}_{[\mu} 
\left(U_3\right)_{\nu\sigma]\alpha} \right) \Gamma_+
- X^3 \left( \left( \iota_K G \right)_{\mu\nu\sigma} 
+3 T^{\alpha}{}_{[\mu} ( \star G )_{\nu\sigma]\alpha} 
\right) \Gamma_5 \right] \chi\, . 
\end{align}	
\end{subequations}

Next, we turn to the spin-$1/2$ fermion variation with 
the generator $\Gamma_-$ in eqs.~\eqref{Killingfermionantisymeq},
\eqref{Killingfermionantisymcharge}, 
which as noted at the end of Section~\ref{spinors}, transformed in the 
equal-charge sector into a decoupled projection of $\Gamma_-$. 
From this variation we obtain only a single non-trivial relation 
at rank 0:
\begin{align}
\label{antibr0}
0 & = \left( T \lrcorner F \right) \Gamma_{-} \chi\, . 
\end{align}
Yet, we obtain the following complete set of non-trivial relations 
at rank 1:
\begin{subequations}
\label{antibr1}	
\begin{align}
\label{antibr1c}	
0 & = \left( F \lrcorner u^{11} \right)_\mu \Gamma_- \chi\, ,\\
\label{antibr1r1}
0 & = \left( F \lrcorner U_3 \right)_\mu \Gamma_- \chi\, ,\\
\label{antibr1r2}
0 & = \left( \iota_K F \right)_\mu \Gamma_- \chi\, ,
\end{align}	
\end{subequations}
and at rank 2: 
\begin{subequations}
\label{antibr2}
\begin{align}
0 & = \left[ s^{11} F - F \lrcorner \star u^{11} \right]_{\mu\nu} 
\Gamma_- \chi\, ,\\
\label{antibr2r1}
0 & = \left[ S_3 F - F \lrcorner \star U_3 \right]_{\mu\nu}  
\Gamma_- \chi\, ,\\
0 & = \left[ T^{\alpha}{}_{[\mu} F_{\nu]\alpha} \right] \Gamma_- \chi\, .
\end{align}	
\end{subequations}
Together with eqs.~\eqref{antibr3}, 
these are the only relations containing $\Gamma_-$. 
They constrain $F$ only for $\chi^A\in \Ker\Gamma_+$, independently of gauging. 

The $\Gamma_+$ and $\Gamma_-$ relations displayed here, together with the 
3-rank relations in Appendix~\ref{algbilrel}, 
form the complete algebraic system used below. 
We first record several immediate consequences.
\begin{prop}[Dilaton Gradient is Transverse]
\label{dxtransverse}
The dilaton gradient lies in the transverse base. Consequently,
$X$ is preserved by the Killing isometry.
\end{prop}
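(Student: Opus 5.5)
The natural starting point is the rank-0 relation in eq.~\eqref{br0r}, $\left[\iota_K \dd X\right]\chi=0$. It arises from the antisymmetric, real part of the $\Gamma_+$ spin-$\tfrac12$ system. It carries no Cartan operator, so it multiplies the USp$(4)$ spinor only through the implicit identity $\mathds{1}_4$. Since the Killing spinor is non-trivial, at least one USp$(4)$ component $\chi^A$ is non-zero. This holds whichever kernel in eq.~\eqref{directsumofkers} it lies in, so the scalar coefficient must vanish, giving $\iota_K\dd X = K^\mu\partial_\mu X = 0$. The only point to check is that this relation is genuinely $\Gamma$-independent. I would verify this by tracing its origin: contracting eq.~\eqref{Killingfermionsymeq} with $\overline{\chi^i}\gamma_0$-type insertions of rank 0, the term $-\tfrac54 X^{-1}\gamma^\mu\partial_\mu X$ produces $\overline{\chi^i}\gamma^\mu\chi^j\partial_\mu X\propto\epsilon^{ij}K^\mu\partial_\mu X$ in the SM-antisymmetric part. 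The $F$- and $G$-terms contribute only to the other real and complex parts, as displayed in eqs.~\eqref{br0cg}, \eqref{br0rg}.

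Next I would translate this into the geometric statement using the timelike fibration of eq.~\eqref{genmetric}. With $K=\partial_t$ from eq.~\eqref{Killingtime}, the relation reads $\partial_t X=0$, so $\mathcal{L}_K X=\iota_K\dd X=0$ and $X$ is preserved by the isometry. For transversality, write $\dd X = \dd X_t\,\dd t+\partial_i X\,\dd x^i$ in the frame $\{e^0, e^i\}$ with $e^0=h(\dd t+\omega)$. The $e^0$-component of $\dd X$ is $h^{-1}\iota_K \dd X\cdot(\text{const})$, via $K=-he^0$ from eq.~\eqref{Killingdual}. Its vanishing means $\dd X=\partial_iX\,(\dd x^i)$, where $\partial_i X$ are base components and $X$ is $t$-independent. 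Equivalently, $\dd X$ is annihilated by contraction with $K$, and so lies in the span of the base coframe, with no $e^0$ leg.

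The main (mild) obstacle is ensuring that the coefficient in eq.~\eqref{br0r} is a multiple of $\mathds{1}_4$ rather than of $\Gamma_\pm$ or $P_\pm$, which could vanish on one kernel. If it were, say, $\Gamma_+$, the conclusion would fail on $\Ker\Gamma_+$. The argument would then need the alternative relation eq.~\eqref{br2rg}, contracted with $K$ using $\iota_KT=0$ from eq.~\eqref{Fierzintprod1} and $K^2=-h^2$, to isolate $\iota_K\dd X$. As displayed, however, eq.~\eqref{br0r} carries no Cartan operator, so the direct argument suffices.
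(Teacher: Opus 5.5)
Your argument is correct and is the paper's own. The paper simply reads off $0=\iota_K\dd X=\mathcal{L}_KX$ from eq.~\eqref{br0r}, which carries only the implicit identity on the USp$(4)$ factor, and your frame computation just spells out that this means $\dd X$ has no $e^0$ leg; the fallback via eq.~\eqref{br2rg} is unnecessary and would not have worked anyway, since $\iota_K(\dd X\wedge K)=(\iota_K\dd X)K+h^2\dd X=h^2(\dd X)_\perp$ does not depend on $\iota_K\dd X$.
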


\begin{proof}
The algebraic relation in eq.~\eqref{br0r} gives
\be
\label{LieofX}
0=\iota_K\dd X={\cal L}_K X\, .
\ee
\end{proof}

\begin{prop}[Running Dilaton Requires Non-Trivial Gauge Flux]
\label{codilaton}	
Timelike supersymmetric solutions with a running dilaton 
and no active gauge fields are ruled out. 
Thus, regardless of gauging, it holds that
\be
\label{nolonedilaton}
F = G = 0 \qquad \implies \qquad \dd X = 0\, .
\ee	
\end{prop}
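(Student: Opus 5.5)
Setting $F=G=0$ in the algebraic system should leave only the dilaton and gauging terms, and one equation then forces $\dd X=0$ directly. The natural candidate is the rank-1 relation in eq.~\eqref{br1rg}. With $F=G=0$ it reduces to
\be
0=\Big[4g\Delta(X)K_\mu+5\left(\dd X\lrcorner T\right)_\mu\Big]\chi\, ,
\ee
where all USp$(4)$ matrices have dropped out, so the relation holds for any component $\chi^A$ and we may strip off $\chi$.

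First, split this into components along $K$ and transverse to it. By Proposition~\ref{dxtransverse}, $\dd X$ lies in the base. By eq.~\eqref{Fierzintprod1} and eq.~\eqref{TtoJ}, $T=hJ$ is purely transverse. Hence $\dd X\lrcorner T$ is a base 1-form, while $K=-he^0$ has a non-vanishing $e^0$ component in the timelike case.

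The $e^0$ component then gives $g\Delta(X)h=0$, so $g\Delta(X)=0$. This is consistent with, and in fact reproduced by, the rank-0 relation eq.~\eqref{br0rg}, which reduces to $4gS_3\Delta(X)=0$. The base component gives $h\,(\dd X\lrcorner J)=0$. Since $J$ is non-degenerate on the six-dimensional base (it is the fundamental 2-form of the SU$(3)$ structure, with $J\lrcorner$ acting as the almost-complex structure up to normalization), this forces $\dd X=0$, with $h\neq0$ because $K$ is timelike.

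The argument is independent of $g$: if $g\neq0$ one also obtains $\Delta(X)=0$, i.e.\ $X=1$, consistent with $\dd X=0$. As a cross-check, the rank-1 relation eq.~\eqref{br1r} gives $5S_3\,\dd X=0$, and eq.~\eqref{br1c} gives $5s^{11}\dd X=0$. Since $\sum_a S_a^2=h^2\neq0$ by eq.~\eqref{Fierznorms}, at least one of $S_3$ or $s^{11}$ is non-zero for every internal orientation $\theta,\varphi$. This gives a second route that avoids $T$ entirely and is perhaps the cleanest.

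The only step needing care is contraction conventions: one must confirm that $\dd X\lrcorner T$ indeed means $(\dd X)^\nu T_{\nu\mu}$, up to sign, and that $J$ has maximal rank on the base. Given the SU$(3)$ normalization $T^2=6h^2$ from eq.~\eqref{Fierznorms}, this is routine.
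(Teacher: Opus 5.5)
Your argument is correct. Your primary route differs from the paper's, while your ``second route'' is exactly the paper's proof.

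\textbf{The paper's route.} The paper sets $F=G=0$ in eqs.~\eqref{br1c}, \eqref{br1r}, obtaining $s^{11}\dd X=S_3\dd X=0$. It then uses that $s^{11}$ and $S_3$ cannot vanish simultaneously, which it reads off the parametrization in eqs.~\eqref{genericscalars}. Your appeal to $\sum_a S_a^2=h^2$ from eq.~\eqref{Fierznorms} reaches the same fact without the chiral decomposition of Section~\ref{gstructure}. This route gives $\dd X=0$ as a full seven-dimensional statement and needs no transversality input.

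\textbf{Your main route.} Going through eq.~\eqref{br1rg} trades the scalar-bilinear data for the 2-form bilinear. It needs $\iota_KT=0$, $T=hJ$ with $J$ invertible on the base, and Proposition~\ref{dxtransverse}. Without the last of these, $\dd X\lrcorner T=0$ would only give $\dd X\propto K$. So it leans on more of the G-structure analysis than the paper's argument. In exchange, this single relation yields both $\dd X=0$ and, from its $K$-component, $g\Delta(X)=0$. That is Corollary~\ref{gaugedvev} at the same stroke; the paper proves the corollary separately from eqs.~\eqref{br0cg}, \eqref{br0rg}. Your reasoning is also borne out by the paper's own projections of eq.~\eqref{br1rg} in Appendix~\ref{higherproj}, which at $F=G=0$ reduce to $8g\Delta(X)=0$ and $5hD=0$.

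\textbf{A small correction to an aside.} Eq.~\eqref{br0rg} alone gives only $gS_3\Delta(X)=0$. That reproduces $g\Delta(X)=0$ only where $S_3\neq0$. At $c=0$ you also need eq.~\eqref{br0cg}, where $s^{11}\neq0$.
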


\begin{proof}
When $F=G=0$, eqs.~\eqref{br1c}, \eqref{br1r} reduce to
\be
s^{11}\dd X=0\, ,
\qquad
S_3\dd X=0\, .
\ee
Since the Killing vector is timelike, $K^2=-h^2\neq0$, while
eqs.~\eqref{genericscalars} show that $s^{11}$ and $S_3$
cannot vanish simultaneously for any value of the internal spinor angle $\theta$.
Consequently, when the gauge fluxes are trivial, $\dd X=0$.
\end{proof}

\begin{cor}[Unique VEV with Gauging]
\label{gaugedvev}
When the theory is gauged and in the vacuum, the dilaton 
is uniquely fixed to its value at the supersymmetric stationary point 
in eq.~\eqref{derscapoteqcharge} of the scalar potential 
in eq.~\eqref{scapoteqcharge}, namely 
\be
\label{gneq0vev}
g \neq 0\, , \qquad F = G = 0 
\qquad  \implies  \qquad 
X=1\, .
\ee 
\end{cor}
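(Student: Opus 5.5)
The plan is to combine Proposition~\ref{codilaton} with the algebraic relations that carry an explicit factor of $g$. With $F=G=0$, Proposition~\ref{codilaton} already gives $\dd X=0$, so $X$ is constant and every term in the spin-$\tfrac12$ relations involving $\dd X$, $F$ or $G$ drops out. What survives are the pure gauging terms proportional to $g\Delta(X)$, multiplied by a bilinear.

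The cleanest choice is the rank-1 relation in eq.~\eqref{br1rg}. For $F=G=0$ and $\dd X=0$ it reduces to
\be
4g\,\Delta(X)\,K_\mu\,\chi=0\, .
\ee
The timelike condition gives $K^2=-h^2\neq0$, so $K$ is nowhere vanishing. The USp$(4)$ spinor component $\chi$ is chosen non-zero, and the term carries only the implicit identity $\mathds{1}_4$, so no kernel of $\Gamma_\pm$ can absorb it. Hence $g\Delta(X)=0$, and since $g\neq0$ we get $\Delta(X)=X^2-X^{-3}=0$, i.e. $X^5=1$. The same conclusion follows independently of the angle $\theta$ from eq.~\eqref{br2rg}, through $4g\Delta(X)T=0$ with $T=hJ\neq0$. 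As a cross-check one can also use eqs.~\eqref{br0cg}, \eqref{br0rg}, noting that $s^{11}$ and $S_3$ cannot vanish simultaneously.

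The remaining step is to select the real root. The scalars $X_I$ are exponentials of the real dilatons $\varphi_I$, so $X>0$, and $X^5=1$ then forces $X=1$. I would then remark, using eq.~\eqref{DeltaX}, that this is exactly the root $X^5=1$ of ${\cal V}'(X)$ in eq.~\eqref{derscapoteqcharge}. The other stationary point $X=\sqrt[5]{1/2}$ corresponds to the factor $2X^5-1$, which does not appear in $\Delta$, so it is excluded by supersymmetry.

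The only step needing care is the claim that the coefficient $g\Delta(X)$ acts as a scalar on $\chi$, which is what rules out a projection loophole. This holds because the $g\Delta$ term in eqs.~\eqref{br1rg}, \eqref{br2rg} carries no Cartan operator, in contrast to the $F$ terms, which carry $\Gamma_+$.
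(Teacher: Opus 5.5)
Your proof is correct, and the logic matches the paper's: with $F=G=0$, an algebraic relation collapses to $g\Delta(X)$ times a non-vanishing bilinear acting on $\chi$. The difference is which relation you use.

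The paper takes the rank-$0$ relations \eqref{br0cg}, \eqref{br0rg}. These contain no $\dd X$ term and reduce to $gs^{11}\Delta(X)=gS_3\Delta(X)=0$. It then uses \eqref{genericscalars}, i.e.\ $|s^{11}|^2+S_3^2=h^2\neq0$, so the two scalar bilinears cannot vanish together.

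You take \eqref{br1rg} (or \eqref{br2rg}). There $g\Delta(X)$ multiplies $K$ (or $T=hJ$), which is non-zero directly by the timelike assumption.

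Your claim that this route is independent of the internal angle is not right as written. You drop the $\dd X$ terms by invoking Proposition~\ref{codilaton}. Its proof rests on exactly the fact you meant to avoid: $s^{11}$ and $S_3$ cannot vanish simultaneously.

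You can make the route genuinely angle-free by projecting instead. Contract \eqref{br1rg} with $K^\mu$. Since $\iota_KT=0$, the term $K^\mu(\dd X\lrcorner T)_\mu$ vanishes identically, leaving $4g\Delta(X)K^2\chi=-4g\Delta(X)h^2\chi=0$. This is the longitudinal singlet projection $8g\Delta-2c\lambda-X^3\sigma=0$ of Appendix~\ref{higherproj} at $\lambda=\sigma=0$. Likewise, contracting \eqref{br2rg} with $J^{\mu\nu}$ kills the $\dd X_{[\mu}K_{\nu]}$ term, because $J$ is transverse.

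The remaining steps agree with the paper: $X>0$ selects $X=1$ from $X^5=1$, and $X=2^{-1/5}$ is excluded because the factor $2X^5-1$ is absent from $\Delta(X)$.
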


\begin{proof}
When $F=G=0$, eqs.~\eqref{br0cg}, \eqref{br0rg}, reduce to
\be
g s^{11}\Delta(X)=0\, ,
\qquad
g S_3\Delta(X)=0\, .
\ee
For a timelike Killing vector, eqs.~\eqref{genericscalars} imply
that $s^{11}$ and $S_3$ cannot vanish simultaneously. Hence, for
$g\neq0$, these relations require
\be
\Delta(X)=X^{-3}(X^5-1)=0\, ,
\ee
and therefore $X=1$. Although the scalar potential in
eq.~\eqref{scapoteqcharge} has a second stationary point at
$X=2^{-1/5}$, that point does not satisfy $\Delta(X)=0$, and is
therefore not a supersymmetric vacuum in the present sector.
\end{proof}

To further uncover the rich information that is encapsulated in the 
algebraic bilinear relations, we must invoke at this point the 
representation-theoretic content of solutions in terms of the 
transverse SU$(3)$ structure. 
In Appendix~\ref{decomposeformstosu3irreps} we spell out the 
algebraic decomposition of the seven-dimensional real forms 
$\{\dd X, F, G\}$ into finite-dimensional SU$(3)$ irreducible 
representations. 
Their SU$(3)$-irreducible components are obtained via 
projection maps into transverse forms that are built from elementary
contractions with the three invariant bilinear forms, 
$\{K, J, \Psi\}$, as detailed in Appendix~\ref{su3projections} 
(note that $K$ is a scalar singlet with respect to the transverse 
SU$(3)$ structure). 
Moreover, all of the operators that appear in the algebraic 
bilinear relations, $\{\lrcorner, \wedge, \star\}$, are 
SU$(3)$-equivariant. 
Consequently, every algebraic bilinear relation defines 
a pointwise SU$(3)$-equivariant linear map between 
finite-dimensional SU$(3)$ representation spaces.

By \textit{Schur's lemma}, an SU$(3)$-equivariant linear map
between two inequivalent irreducible representations must vanish,
whereas an equivariant endomorphism of a single irreducible
representation acts as multiplication by a scalar. Consequently,
components belonging to inequivalent SU$(3)$ representations cannot
mix in the algebraic bilinear relations.
When an irreducible representation occurs more than once, its
equivalent copies may mix with one another. Their mixing is described
by an ordinary finite-dimensional matrix acting on the corresponding
multiplicity space. We may therefore suppress the tensor indices
internal to each irreducible representation and represent every copy
by a real or complex coefficient, as appropriate. In this way, the
tensorial algebraic bilinear relations reduce to linear scalar
equations among the projected components within each
\textit{isotypic component}.

The projected components of $\{\dd X,F,G\}$, together with an
additional gauging-dependent scalar, are thus organized
into SU$(3)$ isotypic components in the extended algebraic 
representation space $\mathcal V$:
\begin{enumerate}

\item \textbf{Fundamental Isotypic Component} 
$\mathbf{3} \oplus \bar{\mathbf{3}}$: \textit{Multiplicity 5}. 
This sector contains the complex 1-forms of type $(1,0)$, 
together with their conjugates:
\be
\mathcal{I}_{\mathbf{3} \oplus \bar{\mathbf{3}}} 
= \bigl\{ D, \, f, \, \alpha, \, q,  \, \beta \bigr\}\, ,
\ee
which are defined in Appendices~\ref{decomposeformstosu3irreps}, 
\ref{su3projections}. 

Note that the two irreducible components, $\alpha$, $\beta$, 
are defined as contractions with $\Psi$. Thus they are automatically 
of type $(1,0)$, and therefore require no further application of 
the projector in eq.~\eqref{projector}, namely
\be
\alpha,\beta\in \mathbf{3}_{\mathbb C}
\, .
\ee

\item \textbf{Singlet Isotypic Component} 
$\mathbf{1} \oplus (\mathbf{1} \oplus \bar{\mathbf{1}})$: 
\textit{Multiplicity 4+1}. 
This sector contains the four real scalar components represented by the
two real flux singlets $\lambda,\sigma$, and the complex flux
singlet $\zeta$, together with its conjugate:
\be
\mathcal{I}_{\text{Singlet}} 
= \bigl\{ \lambda, \, \sigma, \, \zeta , \bar\zeta \bigr\}\, ,
\ee
which are projected flux components defined in 
Appendices~\ref{decomposeformstosu3irreps}, \ref{su3projections}.
In the gauged theory we also include the additional real scalar singlet
\be
\label{Deltag}
\Delta_g \equiv g \Delta(X)\, ,
\ee
where $\Delta(X)$ is defined in eq.~\eqref{DeltaX}. 
Unlike the projected components, the additional scalar $\Delta_g$
does not arise from the SU$(3)$ decomposition of a differential
form, and is not an independent DOF.
It enters the gauged algebraic relations as an additional singlet source, 
and vanishes identically in the ungauged theory.

\item \textbf{Adjoint Isotypic Component} 
$\mathbf{8}$: \textit{Multiplicity 2}. 
This sector contains the real primitive $(1,1)$ forms:
\be
\mathcal{I}_{\mathbf{8}} = \bigl\{ \tau, \, \kappa \bigr\}\, ,
\ee
which are defined in Appendices~\ref{decomposeformstosu3irreps}, 
\ref{su3projections}.

\item \textbf{Symmetric Isotypic Component} 
$\mathbf{6} \oplus \bar{\mathbf{6}}$: \textit{Multiplicity 1}. 
This sector contains the single primitive $(2,1)$ form:
\be
\mathcal{I}_{\mathbf{6} \oplus \bar{\mathbf{6}}} 
= \bigl\{ \rho \bigr\}\, ,
\ee
which is defined in Appendices~\ref{decomposeformstosu3irreps}, 
\ref{su3projections}.
\end{enumerate}
The extended representation-theoretic content in 
algebraic \textit{representation space} thus reads:
\begin{equation}
\mathcal{V} \cong  
(\mathbf{3} \oplus \bar{\mathbf{3}})^{\oplus 5} \oplus
\bigl( \mathbf{1}^{\oplus 2} \oplus (\mathbf{1} \oplus \bar{\mathbf{1}}) 
\oplus \mathbf{1}_g 
\bigr) 
\oplus \mathbf{8}^{\oplus 2} 
\oplus (\mathbf{6} \oplus \bar{\mathbf{6}})\, .
\end{equation}

Accordingly, the full set of algebraic bilinear relations yields 
a linear system, which has the following schematic block-diagonal form:
\begin{align}
\label{irrepmatrix}
\begin{pmatrix}
\mathcal M^{(1,0)} & 0 & 0 & 0\\
0 & \mathcal M^{(0,0)} & 0 & 0\\
0 & 0 & \mathcal M^{(1,1)_0} & 0\\
0 & 0 & 0 & \mathcal M^{(2,1)_0}
\end{pmatrix}
\begin{pmatrix}
(D,f,\alpha,q,\beta)\\
(\lambda,\sigma,\zeta,\bar\zeta)\\
(\tau,\kappa)\\
(\rho)
\end{pmatrix}
+
\begin{pmatrix}
0\\
\mathcal S^{(0,0)}\Delta_g\\
0\\
0
\end{pmatrix}
=0\, .
\end{align}
Here each block $\mathcal M^{(r)}$ maps the copies of
the SU$(3)$ representation $r$ into equations of the
same representation, and need not be square, while
$\mathcal S^{(0,0)}$ inserts the gauging-dependent scalar only
into the singlet equations.
This block structure is the direct consequence of 
\textit{SU$(3)$-equivariance} and \textit{Schur's lemma}, as discussed above.

Even though when there is gauging, the additional scalar, $\Delta_g$, 
enters the singlet equations, it is related to the fundamental component
$D=\Pi^{(1,0)}(\dd X)$ because both arise from the same scalar field, 
i.e.~the dilaton $X$:
\be
\dd \Delta_g
= g\left(2X+3X^{-4}\right)(D+\overline D)\, .
\ee
Consequently, 
\be
g\neq0\, ,\quad \Delta_g\equiv0
\quad \implies \quad
X\equiv1
\quad \implies \quad
D = \dd X = 0\, .
\ee
Equivalently, a running dilaton implies that $\Delta_g$ cannot
vanish identically when there is gauging:
\be
\label{singletcoupling}
g\neq0\, ,\qquad \dd X \not\equiv 0
\qquad\implies\qquad
\Delta_g \not\equiv 0\, .
\ee
Thus, the singlet and fundamental linear systems remain separate
under the SU$(3)$ decomposition, but for $g\neq0$ their common
dependence on $X$ allows a constraint in either sector to restrict
the other.

We now project the complete algebraic system onto the singlet,
fundamental, adjoint and symmetric isotypic components, 
in that order, before using the $\Gamma_-$ relations 
to distinguish the two spinor branches.
All projections below are pointwise and do not assume integrability
of the base. We retain the general scalar and $3$-form bilinears in
eqs.~\eqref{genericscalars}, \eqref{generic3forms}, without imposing
$s^{11}=0$. 
In each isotypic sector below, form degree and SU$(3)$ type determine
which projected relations can contain that sector. 
SU$(3)$ equivariance then separates its projection from all inequivalent
components, so that it must vanish independently.

The $\Gamma_+$ projections in the four isotypic analyses below similarly
hold in the minimal seven-dimensional theory, with or without
U$(1)$ gauging. For the explicit maximal-theory projections we take
the first USp$(4)$ component, on which $\Gamma_5$ and $\Gamma_+$
act as $+1$. The other component in $\Ker\Gamma_-$ is obtained by
reversing the signs of the $\Gamma_+$-dependent $F$ terms. 
On $\Ker\Gamma_+$, the $\Gamma_+$-dependent terms vanish and the
$\Gamma_5$-dependent $G$ terms reverse sign; 
the $\Gamma_-$ relations separately restrict $F$ to its primitive component, 
as shown in Section~\ref{uniqueF}.

\subsubsection{Singlet Isotypic Component}
\label{su3singlet}

\begin{prop}[Constrained Singlet Sector]
\label{constrainedsingletsector}		
The singlet isotypic component is subject to the following constraints:
\begin{enumerate}
\item The two real singlets of $F$ and $G$ are subject 
to a single constraint.
\item If either the complex scalar bilinear,
or the real singlet in $F$, is identically vanishing,
then the complex singlet in $G$ vanishes identically.
\end{enumerate}
\end{prop}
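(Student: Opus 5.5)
We project the complete algebraic system onto its SU$(3)$ singlet isotypic component and read off the resulting scalar equations. By the block structure in eq.~\eqref{irrepmatrix}, only the singlet block $\mathcal M^{(0,0)}$ and the source $\mathcal S^{(0,0)}\Delta_g$ survive. Form degree and SU$(3)$ type decide which relations contribute. The rank-0 relations eqs.~\eqref{br0cg}, \eqref{br0rg} are already scalars. The rank-1 relations eqs.~\eqref{br1} contribute only through their $K$-component, since transverse 1-forms carry no singlet. The rank-2 relations eqs.~\eqref{br2} contribute through their contraction with $J$ and through their $K\wedge(\cdot)$ part. The rank-3 relations eqs.~\eqref{br3} contribute through their $K\wedge J$, $\Psi$ and $\overline\Psi$ components.

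In each contributing relation we substitute the bilinears $K=-he^0$, $T=hJ$, the generic scalars eqs.~\eqref{genericscalars} and the generic 3-forms eqs.~\eqref{generic3forms}. We also insert the SU$(3)$ decompositions of $F$ and $G$ from Appendix~\ref{decomposeformstosu3irreps}, keeping only their singlet pieces. Schematically, $F\supset \lambda J$ together with a possible $e^0$-free singlet. $G$ supplies $\sigma$ through its $\frac12 J\wedge J$ component, and $\zeta$ through its $e^0\wedge\Psi$ component, or equivalently through its $\star$-dual. The contractions we need, such as $J\lrcorner J$, $\Psi\lrcorner\overline\Psi$ and $\star(K\wedge J)$, follow from the SU$(3)$ identities of Appendix~\ref{gstructure6d}. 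These turn each relation into a linear equation of the form
\begin{equation*}
a_i\,\lambda+b_i\,\sigma+c_i\,\zeta+\bar c_i\,\bar\zeta+e_i\,\Delta_g=0,
\end{equation*}
with coefficients depending on $h$, $c=\cos2\theta$, $s=\sin2\theta$ and $p=e^{i\varphi}$. We evaluate on the first USp$(4)$ component, where $\Gamma_+=\Gamma_5=1$. The other components follow from the sign flips recorded at the end of the preceding subsection.

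The main work, and the expected obstacle, is organizing this overdetermined system and computing the rank of the real part of $\mathcal M^{(0,0)}$ as a function of $(\theta,\varphi)$. We would first use the real equations with $S_3=-hc$ and $U_3=cK\wedge J-hs\,\Rea(p\Psi)$. Combining the rank-0 equation \eqref{br0rg} with the $K$-component of \eqref{br1rg} and the $J$-trace of \eqref{br2rg} should eliminate $\Delta_g$ and the $\Rea(p\zeta)$ terms. What remains should be a single $\theta$-independent linear relation between $\lambda$ and $\sigma$, proportional to $\lambda\pm X^3\sigma$ up to numerical factors. This gives item~1. We would then check that the remaining real equations are consistent with this relation and impose no further condition on the pair $(\lambda,\sigma)$.

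For item~2 we turn to the complex equations. From the rank-0 relation \eqref{br0cg} and the $\Psi$, $\overline\Psi$ components of \eqref{br3cg} we expect a relation of the form $s\,p\,(\text{combination of }\lambda,\sigma,\Delta_g)\propto \zeta$. We expect companion relations, derived from \eqref{br3rg} and the $(c\pm1)$ terms in $u^{11}$, in which $\zeta$ multiplies $\lambda$. If $s^{11}\equiv0$, then $\theta\in\{0,\pi/2\}$, so $c=\pm1$. One of the $(c\pm1)p^2\Psi$, $(c\pm1)\overline\Psi$ coefficients then survives alone in \eqref{br3cg} and forces $\zeta=0$. If instead $\lambda\equiv0$, item~1 forces $\sigma\equiv0$ as well. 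The relations \eqref{br0rg} and \eqref{br0cg} then tie $\Delta_g$ to $\zeta$ in two independent ways, with coefficients $c$ and $sp$, and solving them yields $\zeta=0$. The delicate point is to confirm that these two coefficient combinations never vanish simultaneously for timelike $K$, in the same way that $s^{11}$ and $S_3$ cannot both vanish, as used in Proposition~\ref{codilaton}.
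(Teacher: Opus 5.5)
Your overall strategy of projecting every relation onto its singlet part and solving the resulting linear system is the right one. However, the outcome you predict for item~1 is wrong, and the error propagates into item~2.

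In all singlet projections the gauging source enters only through the combination $A\equiv 8g\Delta(X)-X^3\sigma$. So $\Delta_g$ cannot be eliminated without eliminating $\sigma$ as well, and no $\theta$-independent relation of the form $\lambda\pm X^3\sigma$ exists.

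Write $\zeta=p^{-1}(u-iv)$. The two rank-0 projections alone already give everything: \eqref{br0cg} yields $0=hsA-2X^3(iu-cv)$, and \eqref{br0rg} yields $0=hcA-2h\lambda-2X^3sv$. The imaginary part of the first gives $u=0$. Combining the two real equations with weights $(s,c)$ and $(c,-s)$, and using $s^2+c^2=1$, gives $A=2c\lambda$ and $X^3v=-hs\lambda$. The single constraint of item~1 is therefore $2c\lambda=8g\Delta(X)-X^3\sigma$. It depends on $\theta$ through $c$ and is sourced by the gauging. The $K$-component of \eqref{br1rg} reproduces it directly, and the higher-rank projections serve only as consistency checks. (Minor: the $K\wedge(\cdot)$ parts of the rank-2 relations carry no singlet, since a transverse 1-form is $\mathbf{3}\oplus\bar{\mathbf{3}}$; also $F$ has only the one singlet $\lambda$.)

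For item~2, your claim that $\lambda\equiv0$ forces $\sigma\equiv0$ holds only at $g=0$. With gauging one gets $X^3\sigma=8g\Delta(X)$, which is non-zero whenever the dilaton runs; the gauged null BPS flows have $\lambda=0$ and $\sigma\neq0$. If you set $\sigma=0$ and solve \eqref{br0cg}, \eqref{br0rg} for $(\Delta_g,v)$, you get the spurious conclusion $\Delta_g=0$, i.e.~$X=1$. The correct unknowns are $(A,v)$, whose $2\times2$ determinant is $-2hX^3(s^2+c^2)\neq0$.

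Your $s=0$ argument also cites the wrong relation. In the $\Psi$ and $\overline\Psi$ projections of \eqref{br3cg}, $u$ and $v$ appear only multiplied by $s$, through $s^{11}\star G$ and the $-spK\wedge J$ piece of $u^{11}$. So at $c=\pm1$ the surviving $(c\pm1)$ coefficient multiplies $A-2c\lambda$ and merely reproduces item~1. It is the rank-0 relation \eqref{br0cg} that gives $iu-cv=0$, hence $u=v=0$ at $c=\pm1$.

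Most directly, the rank-0 results combine into $\zeta=-ip^{-1}v=ip^{-1}X^{-3}hs\lambda$. So $\zeta$ vanishes identically precisely when $s\lambda$ does. This is item~2 with no case analysis and no non-degeneracy check beyond $s^2+c^2=1$.
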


\begin{proof}

The singlet isotypic component is populated by two real singlets
that are defined in eqs.~\eqref{projlambda}, \eqref{projsigma}, as 
\begin{subequations}
\begin{align}
\lambda &\equiv J\lrcorner F\, ,\\*
\sigma &\equiv J\lrcorner J\lrcorner G\, ,
\end{align}
\end{subequations}
and a complex holomorphic singlet defined in eq.~\eqref{projchi} as
\be 
\zeta \equiv \Psi \lrcorner \iota_K G\, .
\ee
In addition, we consider the real scalar $\Delta_g \equiv g \Delta(X)$, 
defined in eq.~\eqref{Deltag}, as an additional singlet, 
that is switched off when $g=0$. 
It is useful to exchange the complex singlet $\zeta$ 
with two real singlets, $u,v$, by defining
\begin{align}
\label{chitoreal}	
\zeta \equiv p^{-1} \left( u - i v \right)\, , 
\quad
\overline \zeta \equiv p \left( u + i v \right)	
\quad\longleftrightarrow\quad 
u \equiv \frac12 \left( p \zeta + p^{-1} \overline \zeta \right)\, ,
\quad  v \equiv \frac1{2i} \left( p^{-1} \overline \zeta - p \zeta \right)
\, ,
\end{align}
where we used the parameter of internal spinor orientation defined in 
eqs.~\eqref{genericscalars}.

The singlet irreducible component can occur in all eleven $\Gamma_+$
relations (other than eq.~\eqref{br0r}): in the rank-$0$ relations;
in the longitudinal $[0]$ projections of the rank-$1$ relations,
or equivalently after contraction with $h^{-1}\iota_K$; in the
transverse rank-$2$ relations contracted with $J$; and in the
longitudinal $[0ij]$ and transverse $[ijk]$ projections of the
rank-$3$ relations, contracted with $J$ and $\Psi$, respectively.

Using the contraction and Hodge-dual identities provided in 
Appendices~\ref{contractidentities}, \ref{hodgeidentities}, we extract 
the singlets from all the algebraic relations as described,
yielding the following independent equations from rank 0:
\begin{subequations}
\label{projsingrank0}	
\begin{align}
\text{Eq.~\eqref{br0cg}:} 
\quad 0 & = 8 h s g \Delta - X^3 h s \sigma 
- 2 X^3 \left( i u - c  v \right)\, , 
\\ %
\text{Eq.~\eqref{br0rg}:} 
\quad 0 & =  8 h c g \Delta - 2 h \lambda - X^3 h c \sigma 
- 2 X^3 s v\, .
\end{align}
\end{subequations}
All higher-rank singlet projections are collected in
Appendix~\ref{higherproj}. They provide consistency checks but no
additional independent constraints. Real projections by $K$ or
$J$ of a complex relation each yield two independent real
equations, while contractions of the complex rank-$3$ relation
with $\Psi$ and $\overline\Psi$ yield independent complex
equations, since they are not related by complex conjugation.
Thus, in total, we examine $23$ potentially independent real
equations for $4+1$ real scalar quantities.

As noted, the two rank-0 projections in eqs.~\eqref{projsingrank0} 
provide all three independent generic constraints:
\be
\label{gensingletconstraint}	
\begin{cases}	
u = 0\, ,\\
h s \lambda = - X^3 v\, ,\\
2 c \lambda = 8 g \Delta - X^3 \sigma \, .
\end{cases}
\ee
It is useful to exchange these generic constraints by separating
them into the following three branches:
\be
\label{specsingletconstraint}
\begin{cases}
s = 0\, , c = \pm1 
\quad \qquad \qquad \qquad \implies \qquad
\begin{cases}
u = v = 0\, ,\\
2 c \lambda = 8 g \Delta - X^3 \sigma\, ;
\end{cases}
\\
s \neq 0\, ,\quad \lambda = 0\, ,\quad  (g \neq0) 
\qquad \implies \qquad
\begin{cases}
u = v = 0\, ,\\
0 =	8 g \Delta - X^3 \sigma\, ;
\end{cases}
\\
s \neq 0\, ,\quad \lambda \neq 0 \implies g = 0
\qquad \implies \qquad
\begin{cases}
	u=0\, ,\\
	(v,\sigma)=-X^{-3}\lambda\,(hs,2c)\, .
\end{cases}	
\end{cases}	
\ee
The second branch remains valid at $s=0$, where it is precisely
the $\lambda=0$ specialization of the first. 
The third branch is restricted to $g=0$, as follows later from
Propositions~\ref{uniquespinor}, \ref{vanishingcscalar}.
Here $\lambda$ determines $v$ and $\sigma$, which belong to
the longitudinal and transverse parts of $G$, respectively;
in particular, $\sigma=0$ at $c=0$.

Notably, from the first two branches we see that 
\be 
s = 0 \quad \mathtt{or} \quad \lambda = 0
\quad 
\implies  
\qquad 
\zeta = 0\, ,
\ee
so that the holomorphic singlet in the longitudinal part of $G$ 
from eq.~\eqref{f4su3} therefore vanishes.

\end{proof}

\subsubsection{Fundamental Isotypic Component}
\label{su3fundamental}

\begin{prop}[Constrained 1-Form Sector]
\label{constrained1form}
The fundamental isotypic component is subject to the following constraints:
\begin{enumerate}
\item The fundamental isotypic component is generically subject
to two constraints.
\item When the complex scalar bilinear vanishes identically, the five 
fundamental 1-form components decouple into a three-component subset, 
and a two-component one, each subject to one independent constraint.
\end{enumerate}	
\end{prop}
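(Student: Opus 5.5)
The proof will follow the template of Proposition~\ref{constrainedsingletsector}, now in the $\mathbf{3}\oplus\bar{\mathbf{3}}$ isotypic component. This component is populated by the five $(1,0)$ projections $\{D,f,\alpha,q,\beta\}$ of Appendices~\ref{decomposeformstosu3irreps}, \ref{su3projections}. Here $D=\Pi^{(1,0)}(\dd X)$ and $f$ is the $(1,0)$ part of the transverse $F$. The components $\alpha,\beta$ are the $\Psi$-contractions of the transverse $F$ and $G$, and $q$ is the remaining vector-type projection of $G$. First I will list which $\Gamma_+$ relations can carry a $\mathbf 3$, using form degree and SU$(3)$ type:
\begin{itemize}
\item the transverse $[i]$ parts of the rank-1 relations \eqref{br1c}, \eqref{br1r}, \eqref{br1rg};
\item the longitudinal $[0i]$ parts of the rank-2 relations, and their transverse parts contracted with $\Psi$;
\item the $[0ij]$ parts of the rank-3 relations contracted with $\Psi$, and their $[ijk]$ parts contracted with $J$;
\item the rank-4 relation of Appendix~\ref{algbilrel}.
\end{itemize}
Since $\Delta_g$ is a singlet, by Schur's lemma it cannot appear in any of these equations. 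The block $\mathcal M^{(1,0)}$ in eq.~\eqref{irrepmatrix} is therefore homogeneous.

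Next, with the contraction and Hodge identities of Appendices~\ref{contractidentities}, \ref{hodgeidentities}, I will reduce every such projection to a linear equation with coefficients in $\{h,s,c,p,X^3\}$. The generic bilinears \eqref{genericscalars}, \eqref{generic3forms} will be kept throughout, and the first USp$(4)$ component will be used, as in the singlet analysis. Because $p$ and $p^2$ multiply $\Psi$ in $u^{11}$, I will absorb the phases by rescaling. For example, I expect combinations like $p\,\alpha$ and $p^{-1}\bar\alpha$ to appear, analogous to the $u,v$ of eq.~\eqref{chitoreal}. After this rescaling the coefficient matrix has entries polynomial in $s,c$.

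I will then compute the rank of this matrix for generic $s\neq0$. The claim is that it is $2$. The natural candidates are the two independent combinations coming from the rank-1 relations \eqref{br1c} and \eqref{br1r}. These relations are the ones containing $D$ with the coefficients $5s^{11}$ and $5S_3$. I would then check that all higher-rank projections are linear combinations of these, placing the checks in Appendix~\ref{higherproj}.

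For part 2, I will set $s=0$ and $c=\pm1$. Then $u^{11}$ reduces to a pure multiple of $\overline\Psi$ or $\Psi$ by eq.~\eqref{u11to6d}, and $U_3=\pm K\wedge J$ by eq.~\eqref{u3to6d}. In this case the terms coupling $\{D,f,q\}$ to $\{\alpha,\beta\}$ carry factors of $s$ and drop out. The matrix should then become block diagonal, with one block for the $\iota_K$/$J$-type components (3 components) and one for the $\Psi$-type components (2 components). I expect each block to have rank one: from \eqref{br1r} on one side and from \eqref{br1c} on the other.

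The main obstacle is the rank computation. Showing that no further independent combination arises from the many higher-rank projections requires carefully tracking the signs and factors in the $\Psi\lrcorner\star(\cdot)$ identities. My approach there is to evaluate the matrix explicitly at a convenient generic point, such as $\theta=\pi/8$ and $\varphi=0$, which settles the rank for generic $s$. I would then confirm that the rank does not drop except at $s=0$ by checking that a $2\times2$ minor is a nonvanishing polynomial in $c$.
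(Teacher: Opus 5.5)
This is essentially the paper's proof: project the $\Gamma_+$ relations onto $(1,0)$ type, exhibit two independent constraints (the paper uses eq.~\eqref{br1r}$_{[i]}$ and eq.~\eqref{br2r}$_{[ij]}\lrcorner\Psi$), relegate the remaining projections to consistency checks in Appendix~\ref{higherproj}, and read off the $\{D,f,q\}\oplus\{\alpha,\beta\}$ split at $s=0$, $c=\pm1$, where every coupling term carries a factor of $s$. Your alternative pair works throughout provided you keep both type projections of the complex relation eq.~\eqref{br1c}: its $(1,0)$ projection alone is $s$ times the first paper constraint plus $(1-c)$ times the second, so it is vacuous at $c=1$. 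Note also that the rank stays $2$ at $s=0$ rather than dropping there, and that $f=\Pi^{(1,0)}(\iota_KF)$ comes from the electric part of $F$, not its transverse part.
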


\begin{proof}
	
The fundamental isotypic component is the most populated one,
consisting of five irreducible components defined in
eqs.~\eqref{8nonprim} as follows:
\begin{subequations}
\label{fundamentals}
\begin{align}
\label{D}
D &\equiv \Pi^{(1,0)}(\dd X)\, ,\\
\label{f}
f &\equiv \Pi^{(1,0)}(\iota_K F)\, ,\\
\label{alpha}
\alpha &\equiv F\lrcorner\Psi\, ,\\
\label{q}
q &\equiv \Pi^{(1,0)}
\bigl(J\lrcorner\iota_KG\bigr)\, ,\\
\label{beta}
\beta &\equiv \Psi\lrcorner G_\perp\, .
\end{align}
\end{subequations}

The fundamental component can occur in the rank-$1$ through
rank-$3$ relations in eqs.~\eqref{br1}--\eqref{br3}, through
their appropriate transverse or longitudinal $(1,0)$ projections.
Using the contraction and Hodge-dual identities provided in
Appendices~\ref{contractidentities}, \ref{hodgeidentities},
together with the holomorphic projector $\Pi^{(1,0)}$ in
eq.~\eqref{1formprojector}, we extract the $(1,0)$-type
components from the above noted algebraic relations.
A complete independent set of constraints, valid 
throughout the full domain of internal spinor angles, 
is given by the following two projections from ranks 1 and 2:
\begin{subequations}
\label{projfundind}
\begin{align}
\text{Eq.~\eqref{br1r}$_{[i]}$:} \qquad
0 & =
5 h c D - f - X^3 c q + \frac12 X^3 h s p \beta\, ,
\\
\text{Eq.~\eqref{br2r}$_{[ij]}\lrcorner\Psi$:} \qquad
0 & =
5 h s D - X^3 s q
+ \frac{i}{2} h p \left( \alpha + i c X^3 \beta \right)\, .
\end{align}
\end{subequations}
All other fundamental projections from ranks $1$ through $3$ are
collected in Appendix~\ref{higherproj}. They provide consistency
checks but no additional independent constraints. Thus, in total,
we examine $15$ potentially independent complex equations for five
complex $1$-form components.
	
The two independent generic constraints from eqs.~\eqref{projfundind} then read
\be
\label{gen1vectorconstraint}
\begin{cases}
f = 5 h c D - X^3 c q + \frac12 X^3 h s p \beta\, ,\\
5 h s D - X^3 s q = - \frac{i}{2} h p
\left( \alpha + i c X^3 \beta \right)\, .
\end{cases}
\ee
Similarly to the singlet sector, it is useful to separate these
constraints into the following three branches:
\be
\label{spec1vectorconstraint}
\begin{cases}
s = 0\, , c = \pm1
\qquad \qquad \implies \qquad
\begin{cases}
c f = 5 h D - X^3 q\, ,\\
0 = \alpha + i c X^3 \beta\, ;
\end{cases}
\\
s \neq 0\, ,\quad f = \alpha = 0
\qquad \implies \qquad
\begin{cases}
\beta = 0\, ,\\
0 = 5 h D - X^3 q\, ;
\end{cases}
\\
s \neq 0\, ,\quad (f,\alpha) \neq (0,0)
\quad \implies \quad
\begin{cases}
f = - \dfrac{hp}{2s}
\left( i c \alpha - X^3 \beta \right)\, ,\\
5 h D - X^3 q = -\dfrac{hp}{2s}
\left( i \alpha - c X^3 \beta \right)\, .
\end{cases}
\end{cases}
\ee
The second branch remains valid at $s=0$, where it is precisely
the $f=\alpha=0$ specialization of the first.
As we shall see later from Propositions \ref{uniquespinor} and 
\ref{vanishingcscalar}, the third branch is restricted to $g=0$.
\end{proof}

\subsubsection{Primitive Isotypic Components}
\label{su3primitive}

\paragraph{Adjoint isotypic component.}

\begin{prop}[Unconstrained Primitive Component $(1,1)_0$]
\label{unconst2formprim}
The primitive $(1,1)$ irreducible components of 
$F$ and $G$ are unconstrained, regardless of gauging. 	
\end{prop}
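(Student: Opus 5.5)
The plan follows the same pattern as the singlet and fundamental analyses in Propositions~\ref{constrainedsingletsector} and \ref{constrained1form}. By the block-diagonal structure in eq.~\eqref{irrepmatrix}, the adjoint isotypic block $\mathcal M^{(1,1)_0}$ acts only on $(\tau,\kappa)$. Here $\tau$ is the primitive $(1,1)$ part of $F$ and $\kappa$ is the primitive $(1,1)$ part of the transverse $2$-form obtained from $G$ (via $J\lrcorner G_\perp$, or equivalently $\star$ of the longitudinal part), as defined in Appendix~\ref{su3projections}. Since $\Delta_g$ enters only the singlet block, gauging cannot source this sector. It therefore suffices to show that every projection of the algebraic relations onto the $\mathbf 8$ vanishes identically, for all values of the angles $\theta,\varphi$.

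First I identify which relations can contain an $\mathbf 8$. Form degree and SU$(3)$ type exclude the rank-$0$ and rank-$1$ relations: their longitudinal and transverse parts are singlets and $(1,0)$ forms only. The adjoint can occur only in the transverse primitive $(1,1)$ projections of the rank-$2$ relations \eqref{br2c}, \eqref{br2r}, \eqref{br2rg}. It can also occur in the rank-$3$ relations \eqref{br3cg}, \eqref{br3rg}, \eqref{br3r}, through their longitudinal $[0ij]$ parts projected to $(1,1)_0$ and through their transverse $[ijk]$ parts. For the latter, a transverse $3$-form contains $\mathbf 8$ only via $J\wedge(\cdot)_0$, which is not an independent type: $\Lambda^3$ on the base decomposes as $\mathbf 1^{\oplus2}\oplus(\mathbf 3\oplus\bar{\mathbf 3})\oplus(\mathbf 6\oplus\bar{\mathbf 6})$ with no $\mathbf 8$. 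The rank-$3$ transverse parts therefore contribute nothing, leaving only the rank-$2$ transverse and rank-$3$ longitudinal projections, together with the $\Gamma_-$ relations \eqref{antibr2} and the rank-$4$ relation in Appendix~\ref{algbilrel}.

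Next, in each of these relations I insert the bilinears from eqs.~\eqref{TtoJ}, \eqref{generic3forms}. The key mechanism is that the operators acting on a primitive $(1,1)$ form $\tau$ are built from $J$, $K$ and $\Psi$, and act by fixed scalars. Under the single-index action of $J$, primitive $(1,1)$ forms are invariant, so $T^\alpha{}_{[\mu}\tau_{\nu]\alpha}=0$. Contraction with $\Psi$ maps $(1,1)_0$ to zero, since $\Psi\lrcorner$ on a $(1,1)$ form produces a $(1,0)$ form, and for primitive forms this contraction vanishes. Hodge duality on the base gives $\star_6(J\wedge\tau)=-\tau$. Using these identities from Appendices~\ref{contractidentities}, \ref{hodgeidentities}:
\begin{itemize}
\item the combination $S_3F-F\lrcorner\star U_3$ reduces on $\tau$ to $(S_3-hc)\tau$ up to the $\Psi$ terms, which drop out;
\item with $S_3=-hc$ from eq.~\eqref{genericscalars} and the sign fixed by the conventions, this coefficient vanishes;
\item the analogous $G$ combinations $\iota_K\star G+T\lrcorner G$ cancel in their $\kappa$ components in the same way;
\item the $s^{11}$ and $u^{11}$ terms give $hsp\,\tau-hsp\,\tau=0$.
\end{itemize}
Doing this projection by projection should show that the $2\times2$ (or rectangular) block $\mathcal M^{(1,1)_0}$ is identically zero.

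The main obstacle is this coefficient bookkeeping: verifying that the cancellation between, for example, $S_3F$ and $F\lrcorner\star U_3$, and between the $\iota_K\star G$ and $T\lrcorner G$ pieces, holds exactly for the generic angular bilinears, with the signs from $\star$ conventions and $\Gamma_5=\pm1$ on each USp$(4)$ component. I would first check the simple branch $\theta=0$, then the $\theta$-dependent terms using $c^2+s^2=1$. Finally, the $\Gamma_-$ relations constrain $F$ only for $\chi\in\Ker\Gamma_+$. There, the same identities give $T^\alpha{}_{[\mu}F_{\nu]\alpha}$ with vanishing $(1,1)_0$ part and $S_3\tau-\dots=0$, so primitivity restricts only the non-primitive parts and $\tau$ remains free. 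This is independent of $g$, which proves the claim.
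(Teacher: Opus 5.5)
Your proposal is correct and follows the paper's proof. The adjoint can only appear in the transverse rank-2 and longitudinal rank-3 projections, and the primitive identities $P\lrcorner(J\wedge J)=-2P$, $J^\alpha{}_{[i}P_{j]\alpha}=0$, $J\lrcorner(J\wedge P)=P$ and $\star_6(J\wedge P)=-P$ reduce all of them to $0=0$, independently of $g$ and of the spinor angles.

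One bookkeeping slip needs fixing. Because $\tau\lrcorner(J\wedge J)=-2\tau$, one gets $\tau\lrcorner\star U_3=-hc\,\tau$, so the coefficient is $(S_3+hc)$, which vanishes at $S_3=-hc$; the $(S_3-hc)$ you wrote would not vanish. Two justifications are also slightly off: $J^\alpha{}_{[i}\tau_{j]\alpha}=0$ and $\tau\lrcorner\Psi=0$ hold for any $(1,1)$ form by type, i.e.\ invariance under the all-index action $\mathcal J$, not by single-index invariance or by primitivity.
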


\begin{proof}

The isotypic component $\mathbf{8}$ is populated by two primitive 
$(1,1)$ forms defined in eqs.~\eqref{tau}, \eqref{kappa}, as
\begin{subequations}
\begin{align}
\tau & \equiv \Pi^{(1,1)_0} (F)\, , 
\qquad \quad
\iota_K \tau = J \lrcorner \tau = \tau \lrcorner \Psi = 0\, ,\\
\kappa & \equiv \Pi^{(1,1)_0} \bigl( J \lrcorner G \bigr)\, ,
\qquad 
\iota_K \kappa = J \lrcorner \kappa = \kappa \lrcorner \Psi = 0\, .
\end{align}
\end{subequations}

The adjoint component can occur in the transverse rank-$2$ relations in 
eqs.~\eqref{br2} and in the longitudinal projections of the rank-$3$ relations 
in eqs.~\eqref{br3}.
Using eqs.~\eqref{primids}, \eqref{primHodgeP}, \eqref{primHodgeJP}, 
and the Hodge-split formulae in Appendix~\ref{hodgeidentities},
we extract from eqs.~\eqref{br2}, \eqref{br3}, 
the primitive 2-form irreducible components, yielding only tautologies, namely
\be
\text{Eqs.~\eqref{br2}$_{[ij]}$:} 
\quad 0 = 0\, , 
\qquad	
\text{Eqs.~\eqref{br3}$_{[0ij]}$:} 
\quad 0 = 0\, , 
\ee
leaving both components in $F$ and $G$ unconstrained,
regardless of gauging.
\end{proof}

\paragraph{Isotypic component 
	$\mathbf{6} \oplus \bar{\mathbf{6}}$.}

\begin{prop}[Unconstrained Primitive Component $(2,1)_0$]
\label{unconst3formprim}	
The primitive component of $G$, $((2,1)+(1,2))_0$, is
unconstrained, regardless of gauging.
\end{prop}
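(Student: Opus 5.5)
The argument follows the proof of Proposition~\ref{unconst2formprim}: identify where the isotypic component $\mathbf{6}\oplus\bar{\mathbf{6}}$ can appear, project onto it, and show that every projection is a tautology. Among the bosonic fields only $G$ carries this representation, through its transverse primitive $(2,1)+(1,2)$ part $\rho\equiv\Pi^{(2,1)_0}(G_\perp)$, defined in Appendix~\ref{decomposeformstosu3irreps}. Neither $\dd X$ nor $F$ contains it, since a 1-form and a 2-form on the six-dimensional base decompose only into $\mathbf{1}$, $\mathbf{3}\oplus\bar{\mathbf{3}}$ and $\mathbf{8}$. The longitudinal part $\iota_K G$ is a transverse 3-form and could contain a $(2,1)_0$ piece, but its Hodge dual on the base is of the same type. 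I will therefore track $\rho$ together with the base-dual combination that appears through $\star G$, and confirm from eq.~\eqref{f4su3} that only one copy is independent. This matches the multiplicity one recorded in $\mathcal V$. By the block structure in eq.~\eqref{irrepmatrix}, any constraint on $\rho$ must then be homogeneous, of the form $\mathcal M^{(2,1)_0}\rho=0$, with no $\Delta_g$ source.

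The next step is to find the relations that can carry a $(2,1)_0$ projection. By form degree, the rank-$0$ and rank-$1$ relations in eqs.~\eqref{br0cg}--\eqref{br1} contain only singlets and fundamentals after splitting into longitudinal and transverse parts. The transverse rank-$2$ relations in eqs.~\eqref{br2} contain at most $\mathbf{8}$. So the only candidates are the transverse $[ijk]$ projections of the rank-$3$ relations eqs.~\eqref{br3cg}--\eqref{br3r}, together with the rank-$3$ relation involving $\Gamma_+$ and the rank-$4$ relation from Appendix~\ref{algbilrel}, which by Hodge duality also yields a transverse 3-form. The $\Gamma_-$ relations contain only $F$ and are therefore irrelevant here. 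In each candidate relation I will substitute the generic bilinears in eqs.~\eqref{genericscalars}, \eqref{TtoJ}, \eqref{generic3forms}, and apply $\Pi^{(2,1)_0}$ using the contraction and Hodge identities of Appendices~\ref{contractidentities}, \ref{hodgeidentities}.

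The terms without $G$ drop out at once. For example, $\dd X\lrcorner\star u^{11}$, $\dd X\wedge T$ and the $F$-dependent terms are built from a 1-form or a 2-form wedged or contracted with $J$, $\Psi$ or $K$. Such expressions have no primitive $(2,1)$ part, because they reduce to $\mathbf{1}$, $\mathbf{3}$ or $\mathbf{8}$ components wedged with invariant forms. The terms $g\Delta\,U_3$ and $g\Delta\,u^{11}$ project to zero because $K\wedge J$ and $\Psi$ are longitudinal or of pure type. What remains is the $G$-dependent pieces: $s^{11}\star G$, $S_3\star G$, the contractions $u^{\alpha\beta}{}_{[\mu}G_{\nu\sigma]\alpha\beta}$, and $T^\alpha{}_{[\mu}(\star G)_{\nu\sigma]\alpha}$.

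\textbf{This is the main obstacle.} I must show that on $\rho$ these combine to zero for every $(\theta,\varphi)$. The tools are the primitive $(2,1)$ identities: $\star_6\rho=\pm i\rho$ on the $(2,1)$ part with the sign set by type, $J^\alpha{}_{[i}\rho_{jk]\alpha}\propto i\rho$ from the all-index action of $J$ in Appendix~\ref{almostcalculus}, and $\Psi^{\alpha\beta}{}_{[i}\rho_{jk]\alpha\beta}=0$ by primitivity. With these, for instance in eq.~\eqref{br3rg}, the contribution $2S_3\star G$ on the transverse component should cancel exactly against $3(U_3)^{\alpha\beta}{}_{[\mu}G_{\nu\sigma]\alpha\beta}$ with $U_3=cK\wedge J-hs\,\Rea(p\Psi)$, using $S_3=-hc$. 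I will check this coefficient cancellation first, since it fixes the relative normalizations. The same computation then carries over to eqs.~\eqref{br3cg}, \eqref{br3r} and the rank-$4$ relation, giving $0=0$ in each case and hence leaving $\rho$ unconstrained, with or without gauging.
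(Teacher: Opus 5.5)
Your plan is the paper's proof. The $\mathbf{6}\oplus\bar{\mathbf{6}}$ component can only enter the transverse $[ijk]$ projections of eqs.~\eqref{br3}, and there the $G$-terms cancel by $J^\alpha{}_{[i}\rho_{jk]\alpha}=-\tfrac{i}{3}\rho_{ijk}$, $\star_6\rho=i\rho$ and the Hodge split. Your predicted cancellation of $2S_3\star G$ against the $cK\wedge J$ piece of $U_3$ does go through, and likewise $2s^{11}\star G$ cancels against the $-spK\wedge J$ piece of $u^{11}$.

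One correction: $\rho\equiv\Pi^{(2,1)_0}(\iota_K G)$, not $\Pi^{(2,1)_0}(G_\perp)$. A transverse $4$-form on the base contains only $\mathbf{1}$, $\mathbf{3}\oplus\bar{\mathbf{3}}$ and $\mathbf{8}$, so the single copy enters only through $\iota_K G$ and through $(\star G)_\perp=-h^{-1}\star_6(\iota_K G)$, i.e.\ via the longitudinal parts of the bilinears. The rank-$4$ relation adds nothing new, by Clifford duality.
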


\begin{proof}
	
The isotypic component $\mathbf{6}\oplus\bar{\mathbf{6}}$
comprises the complex primitive $3$-form, defined in eq.~\eqref{rho} as
\be
\rho \equiv \Pi^{(2,1)_0}\bigl(\iota_KG\bigr)\, ,
\qquad
\iota_K\rho=J\lrcorner\rho=\rho\lrcorner\Psi=0\, ,
\ee
together with its conjugate. 

This component can occur only in
the transverse rank-$3$ relations in eqs.~\eqref{br3}.	
Using eqs.~\eqref{rhoJinsert}, \eqref{primHodgerho}, and 
the Hodge split identities of Appendix~\ref{hodgeidentities},
we extract from eqs.~\eqref{br3} the primitive 3-form irreducible component, 
yielding only tautologies, namely
\be
\text{Eqs.~\eqref{br3}$_{[ijk]}$:} 
\quad 0 = 0\, , 
\ee	
leaving this irreducible component in $G$ unconstrained, regardless of gauging.

\end{proof}

\subsubsection{Branch of Primitive Cartan Flux 
}
\label{uniqueF}

We now use the algebraic bilinear relations containing $\Gamma_-$
to distinguish the two complementary USp$(4)$ spinor branches.
Since $\Gamma_-$ acts non-trivially precisely for
\be
\chi^A\in(\Ker\Gamma_-)^\perp=\Ker\Gamma_+\, ,
\ee
these relations constrain $F$ only on that branch.

\begin{prop}[Primitive Cartan Flux on USp$(4)$ Spinor Branch]
\label{uniquespinor}	
\be 
\label{PrimF}
\chi^A \in \Ker\Gamma_+ 
\qquad \implies \qquad 
F = F^{(1,1)_0}\, .  \qquad 
\ee
\end{prop}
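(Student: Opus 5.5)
On $\Ker\Gamma_+$ the operator $\Gamma_-$ acts as $\pm1$, so it is invertible there. The $\Gamma_-$ relations in eqs.~\eqref{antibr0}--\eqref{antibr2} therefore become genuine tensorial constraints on $F$, with no other fluxes present. The plan is to decompose $F$ into its SU$(3)$ irreducible components, exactly as in the isotypic analyses above:
\be
F=\tfrac{1}{3}\lambda J+\tau+\Rea\left(\text{terms built from }\alpha\right)+K\wedge(\ldots f\ldots)\, .
\ee
Here $\lambda=J\lrcorner F$ is the singlet, $\alpha=F\lrcorner\Psi$ and $f=\Pi^{(1,0)}(\iota_KF)$ are the two fundamental components, and $\tau$ is the primitive $(1,1)$ part. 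The goal is to show that the $\Gamma_-$ relations force $\lambda=0$, $\alpha=0$ and $f=0$ for every value of the internal angles $\theta,\varphi$. What remains is then $F=\tau=F^{(1,1)_0}$.

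\textbf{Longitudinal part.} Eq.~\eqref{antibr1r2} gives $\iota_KF=0$ directly, since $\Gamma_-\chi\neq0$. This kills $f$ together with its conjugate, so $F$ is purely transverse.

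\textbf{Singlet.} Using eq.~\eqref{TtoJ}, eq.~\eqref{antibr0} gives $h\,J\lrcorner F=h\lambda=0$. Because the Killing vector is timelike, $h\neq0$, and hence $\lambda=0$.

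\textbf{Fundamental components.} These are the step I expect to need care, since the bilinears $u^{11}$ and $U_3$ in eq.~\eqref{generic3forms} depend on the angles. Once $\iota_KF=0$, the $K\wedge J$ pieces drop out of eqs.~\eqref{antibr1c}, \eqref{antibr1r1}, which reduce to
\be
(c-1)p^2\,F\lrcorner\Psi+(c+1)\,F\lrcorner\overline\Psi=0\, ,\qquad s\,\Rea\left(p\,F\lrcorner\Psi\right)=0\, .
\ee
The first relation separates by type: $F\lrcorner\Psi$ is a $(1,0)$-form and $F\lrcorner\overline\Psi$ is a $(0,1)$-form, so the two terms vanish separately. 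Since $c\pm1$ cannot both vanish, this yields $\alpha=0$, or its conjugate, which is equivalent for a real $F$. If a degenerate value such as $c=\pm1$ leaves only one of these terms, I would instead use the rank-$2$ relation \eqref{antibr2r1} with $U_3=\pm K\wedge J$ from eq.~\eqref{u3to6d}. The identity $\star(K\wedge J)\propto J\wedge J$ and the Hodge-split identities of Appendix~\ref{hodgeidentities} turn this relation into the statement that $F$ is anti-self-dual with respect to $J\wedge$. That condition characterizes the primitive $(1,1)$ forms, and it eliminates the $(2,0)+(0,2)$ part $\alpha$ together with any remaining trace.

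\textbf{Cross-check.} Finally, I would verify that the remaining $\Gamma_-$ relations, namely the other rank-$2$ equations and eqs.~\eqref{antibr3}, are identically satisfied by a primitive $(1,1)$ form. This follows from eqs.~\eqref{primids}, \eqref{primHodgeJP} and confirms that no further restriction on $\tau$ arises. The main obstacle is making the angle-independence uniform, so that no degenerate $(\theta,\varphi)$ lets $\alpha$ escape. For that reason I would present the argument relying mainly on eq.~\eqref{antibr2r1}, whose $S_3F-F\lrcorner\star U_3$ structure is valid on the whole $\mathbb S^2$ of SM orientations, and use the rank-$1$ relations only as supplements.
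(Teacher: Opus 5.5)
Your first three steps are exactly the paper's proof: $\iota_KF=0$ from eq.~\eqref{antibr1r2}, $\lambda=0$ from eq.~\eqref{antibr0} with $T=hJ$, and $\alpha=0$ from $F\lrcorner u^{11}=-\tfrac{h}{2}\bigl((c-1)p^2\alpha+(c+1)\bar\alpha\bigr)=0$ by type separation, which already covers $c=\pm1$ (the $K\wedge J$ piece of $u^{11}$ contributes $-sp\,\lambda K$, so it drops because $\lambda=0$, not merely because $\iota_KF=0$). Your fallback to eq.~\eqref{antibr2r1} is therefore unnecessary, and your closing preference for it is backwards: at $c=0$ one has $S_3=-hc=0$ and $\star U_3$ has no $J\wedge J$ part, so that relation cannot detect $\lambda$, whereas your rank-$0$/rank-$1$ argument is the one that is uniform over all internal SM orientations.
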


\begin{proof}
	
We use here the rank-0 and rank-1 relations 
in eqs.~\eqref{antibr0}, \eqref{antibr1}, 
from which it is easy to infer
\be
\iota_K F = 0\, , \qquad 
J \lrcorner F = 0
\quad \implies \quad 
f = 0\, , \qquad \lambda = 0\, ,
\ee
and 
\begin{align}
0 & =  
F \lrcorner u^{11} 
= - \frac{h}{2} \Bigl( (c-1) p^2 \alpha 
+ (c+1) \bar{\alpha} \Bigr)\, . 
\end{align}
Since $\alpha$ and $\bar\alpha$ have types $(1,0)$ and $(0,1)$, respectively, 
the two terms vanish separately.
As $p\neq0$ and $c-1$ and $c+1$ cannot both vanish, 
complex conjugacy implies that for all $c$
\be
\alpha = \bar \alpha = 0\, .
\ee
Using the decomposition of $F$ into SU$(3)$-irreducible components
in eq.~\eqref{f2su3}, these results imply
\be
F=F^{(1,1)_0}\equiv\tau\, .
\ee

As to the higher-rank relations with $\Gamma_-$, 
the transverse projections of eqs.~\eqref{antibr2}, and the longitudinal 
projections of eqs.~\eqref{antibr3}, yield only tautologies for the remaining 
primitive component, exactly as in the adjoint analysis of 
Section~\ref{su3primitive}. 
Hence $\tau$ is not further constrained also on this branch.

\end{proof}

It follows that
\begin{cor}[Non-Primitive Cartan Flux on Complementary USp$(4)$ Spinor Branch]
\label{nonprimspinor}	
A non-primitive Cartan flux is compatible only with USp$(4)$ spinor components
in $\Ker\Gamma_-$, whereas a primitive Cartan flux is generally not excluded 
on either complementary spinor branch, namely
\be
\label{nonprimflux}
F^{\rm np} \equiv F - \tau \neq0
\quad\implies\quad
\chi^A\in\Ker\Gamma_-\, .
\ee	
\end{cor}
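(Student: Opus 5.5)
The corollary is essentially the contrapositive of Proposition~\ref{uniquespinor}, combined with the decomposition in eq.~\eqref{directsumofkers}. I would first fix a single Cartan component of the Killing spinor, as in eq.~\eqref{spinorasdirectproduct}. Since all internal operators are diagonal, the USp$(4)$ factor $\chi^A$ is a common eigenvector of $\Gamma_\pm$. Because $\Gamma_+$ and $\Gamma_-$ have orthogonal, complementary kernels, the chosen component lies either in $\Ker\Gamma_+$ or in $\Ker\Gamma_-$, and never in both: their intersection is trivial, since $\Gamma_+^2+\Gamma_-^2=P_++P_-=\mathds{1}_4$. This dichotomy is the only structural input needed.

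Next, assume $F^{\rm np}=F-\tau\neq0$. By the SU$(3)$ decomposition in eq.~\eqref{f2su3}, this means at least one of $f$, $\lambda$, $\alpha$ is non-zero. Proposition~\ref{uniquespinor} shows that $\chi^A\in\Ker\Gamma_+$ forces $f=\lambda=\alpha=0$, and hence $F=\tau$. So $\chi^A\notin\Ker\Gamma_+$, and by the dichotomy $\chi^A\in\Ker\Gamma_-$, which is eq.~\eqref{nonprimflux}.

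For the second half of the statement, that a primitive flux $\tau$ is not excluded on either branch, I would argue as follows. On $\Ker\Gamma_+$, the proof of Proposition~\ref{uniquespinor} already shows that the remaining $\Gamma_-$ relations of ranks 2 and 3 are tautological for $\tau$. On $\Ker\Gamma_-$, every $\Gamma_-$ relation vanishes identically, and the $\Gamma_+$ system leaves $\tau$ unconstrained by Proposition~\ref{unconst2formprim}. Hence no algebraic relation forbids $\tau\neq0$ on either branch.

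The only delicate point is to make precise that ``compatible'' refers to the selected Cartan component. If several USp$(4)$ components are Killing spinors simultaneously, the conclusion holds for each one separately, because the four component equations decouple. I would also note in passing that the USp$(4)$ component index plays no role, since the sign flips of the $\Gamma_+$-dependent terms on the other component of $\Ker\Gamma_-$ do not affect the argument.
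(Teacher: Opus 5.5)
Your argument is correct and matches the paper's proof, which simply states that the claim follows from the decomposition in eq.~\eqref{directsumofkers} together with Proposition~\ref{uniquespinor}; your explicit justification of the dichotomy via $\Gamma_+^2+\Gamma_-^2=\mathds{1}_4$, and of the per-component decoupling, spells out what the paper leaves implicit. Your treatment of the second clause is also consistent with the paper, with the caveat that it concerns only the algebraic relations: later differential constraints can still remove $\tau$ in particular sectors (e.g.\ Proposition~\ref{vanishingcscalar} forces $F=0$ on $\Ker\Gamma_-$ when $g\neq0$ and $s^{11}\neq0$), which is why the statement says ``generally''.
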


\begin{proof}
This is immediate from eq.~\eqref{directsumofkers} 
and Proposition \ref{uniquespinor}.	
\end{proof}

\subsection{Differential Relations}
\label{Killingresultsdiff}

We turn to the differential relations among the bilinear forms from 
the gravitini variations in eq.~\eqref{Killinggravitineqcharge}, and 
the related equations for the Majorana and charge conjugates of 
the Killing spinors in Appendix~\ref{KSEforMCCC}. 
We recall that from the two Cartan generators $\Gamma_\pm$, 
the differential bilinear relations contain only $\Gamma_+$.  
Therefore we consider $\chi^A \in \Ker \Gamma_-$, 
or $\chi^A \in \Ker\Gamma_+$, 
in which case terms containing $A$ and $F$ simply drop out of the 
differential relations, effectively as if $F=0$. 
It is thus the first component of the USp$(4)$ spinor which is generally 
considered below, effectively replacing the Cartan operators $\Gamma_5$, 
$\mathds{1}_4$, and $\Gamma_+$, with their first diagonal-matrix entries, 
i.e.~by the scalar 1. Then the gauge form $A$ is allowed to flip sign, 
as in eq.~\eqref{gauged1form} below, and when $A$ is turned off, 
then the gauge form $B$ is allowed to flip sign, as 
in eq.~\eqref{gauged3formnoF} below, essentially due to R-symmetry.

\subsubsection{Scalar Bilinears}
\label{diffscalars}

As to the scalar bilinears we first consider the complex scalar $s^{11}$, 
for which the following differential relation is obtained:
\begin{align}	
\label{diffs11}	
\left[ \dd \left(X s^{11} \right) \right] \chi
& = - 4i g \left[ \left(X s^{11}\right) A \right] \Gamma_{+} \chi\, ,
\end{align}
and we recall the link of $s^{11}$ to the real forms $S_1$, $S_2$, 
in eq.~\eqref{comscalarstoreal}.
Note that the gauge potential itself appears in this differential relation. 
Yet, as we will show shortly, this differential relation 
also implies that the complex scalar bilinear must vanish 
when $g,F\neq0$, and $\chi \in \Ker \Gamma_-$, as in eq.~\eqref{cscalarto6d}.

First, when $g=0$, or $F=0$ with the gauge choice $A=0$, 
or $\chi^A \in \Ker\Gamma_+$,
the RHS of eq.~\eqref{diffs11} drops out, which implies 
\be 
\label{diffs11nognoF}
\dd \left( X s^{11} \right) = 0\, , 
\ee
so that $Xs^{11}=C$ with $C$ some complex constant, 
and $C=0$ when $s^{11}=0$.
In any case, it is readily inferred %
from eq.~\eqref{diffs11nognoF} that 
\be
\label{Lieofs11}
{\cal{L}}_{K} s^{11} 
= \iota_K \dd s^{11} 
= 0\, , 
\ee
where we used eq.~\eqref{br0r}, %
which holds independently of gauging.
Thus the complex scalar bilinear $s^{11}$ is preserved 
by the Killing isometry, even if it is non-vanishing.

Furthermore, in the latter case, using eqs.~\eqref{diffs11nognoF}, 
\eqref{complicates11}, it is also easy to infer 
\be
s \neq 0  
\quad \implies \quad 
\dd \log (Xh) + \dd \log |s| + \dd \log p = 0 
\quad \implies \quad 
i \dd \varphi = 0\, , 
\ee
where we used that $X$, $h$, $s$, are all real, so that
\be
\label{nosLieofvarphi}
s \neq 0 %
\qquad \implies \qquad 
\dd \varphi = 0 
\quad \implies \quad 
{\cal{L}}_{K} \varphi = \iota_K \dd \varphi=0\, .
\ee
Moreover, due to Proposition \ref{dxtransverse}, 
and since $h$ is independent of time 
due to the isometry, namely
\be
\label{Lieofh}
{\cal{L}}_{K} h = \iota_K \dd h = 0\, ,
\ee 
then eq.~\eqref{diffs11nognoF} also yields 
\be 
s \neq 0 \qquad \implies \qquad 
\iota_K \left( \dd s + i s \dd\varphi \right) = 0\, ,
\ee
so that when $g=0$, or $F=0$ with the gauge choice $A=0$, 
or $\chi^A \in \Ker\Gamma_+$, it generally holds
\be 
\label{Lieoftheta} 
{\cal{L}}_{K} s = \iota_K \dd s = 0
\qquad \implies \qquad 
{\cal{L}}_{K} \theta = \iota_K \dd \theta 
= 0\, .
\ee

\begin{prop}[Vanishing Complex Scalar Bilinear]
\label{vanishingcscalar}	
The complex scalar bilinear vanishes identically 
when the gauging and Cartan flux are non-zero, 
for a USp$(4)$ spinor $\chi^A \in \Ker \Gamma_-$. 
Thus
\be 
\label{nocscalar}
g,F\neq0\, , \quad \chi^A \in \Ker \Gamma_- 
\qquad \implies \qquad s^{11}=0\, .
\ee	
\end{prop}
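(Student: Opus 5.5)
The plan is to use the differential relation for $s^{11}$ in eq.~\eqref{diffs11}, specialised to the first USp$(4)$ component in $\Ker\Gamma_-$, on which $\Gamma_+$ acts as $+1$. (The other component in $\Ker\Gamma_-$ only flips the sign of $A$.) Writing $\Phi\equiv Xs^{11}$, the relation becomes the covariant-constancy condition
\be
\dd\Phi=-4ig\,\Phi\,A\, .
\ee
The strategy is to take the exterior derivative once more and use $\dd^2=0$. This gives
\be
0=\dd^2\Phi=-4ig\left(\dd\Phi\wedge A+\Phi\,F\right)=-4ig\left(-4ig\,\Phi\,A\wedge A+\Phi\,F\right)=-4ig\,\Phi\,F\, ,
\ee
since $A\wedge A=0$ for a $1$-form. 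Hence $g\,\Phi\,F=0$ pointwise.

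With $g\neq0$, this gives $Xs^{11}F=0$ at every point. The dilaton $X$ is an exponential of real dilatons and is therefore strictly positive, so $s^{11}F=0$ pointwise. On the open set where $F\neq0$ we then get $s^{11}=0$ directly.

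It remains to handle points where $F$ vanishes while $F\not\equiv0$ globally. Two routes are available, and I would use them together.

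\begin{itemize}
\item \emph{Continuity.} If $s^{11}=0$ on a dense open set, then $s^{11}=0$ everywhere.
\item \emph{Rigidity.} The equation $\dd\Phi=-4ig\Phi A$ says $\Phi$ is covariantly constant for the U$(1)$ connection $A$. Hence $|\Phi|$ is constant, because $\dd|\Phi|^2=\Phi^*\dd\Phi+\Phi\,\dd\Phi^*=0$ using the reality of $A$. If $\Phi$ vanishes anywhere on a connected region, it therefore vanishes on the whole region.
\end{itemize}

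Combining these on the connected spacetime gives $s^{11}\equiv0$ whenever $F$ is not identically zero. This is the sense of ``$F\neq0$'' in the statement. Consistently with the earlier discussion, when $F\equiv0$ one may choose the gauge $A=0$ and recover only $\dd(Xs^{11})=0$, so no vanishing follows in that case.

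The step I expect to need care is the reading of eq.~\eqref{diffs11} as an equation holding component-wise on the chosen USp$(4)$ eigenvector. The Cartan operator there must act by the scalar eigenvalue $\pm1$, not be projected away, so that the integrability condition really produces the $\Phi F$ term. This is exactly where $\chi^A\in\Ker\Gamma_-$ enters. On $\Ker\Gamma_+$ the right-hand side is absent, and no constraint arises.

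In terms of the internal spinor angles, the result fixes $s=\sin2\theta=0$ through eq.~\eqref{complicates11}, since $h\neq0$ and $|p|=1$. This places the spinor on the branches of eqs.~\eqref{simpleisdeep}, \eqref{simpleisdeep2}. As stated earlier, it also excludes the third branches of eqs.~\eqref{specsingletconstraint} and \eqref{spec1vectorconstraint} in the gauged case.
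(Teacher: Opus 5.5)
Your proposal is correct and follows the paper's proof: you take the exterior derivative of eq.~\eqref{diffs11}, substitute the relation back so that the $A\wedge A$ term drops, and use that $\Gamma_+$ acts as $\pm1$ on $\Ker\Gamma_-$ to obtain $g\,Xs^{11}F=0$. Your rigidity step, that $|Xs^{11}|$ is constant because $A$ is real, is a worthwhile addition which the paper leaves implicit; it upgrades the pointwise conclusion to $s^{11}\equiv0$ on a connected region as soon as $F$ is non-zero at a single point, whereas the continuity route alone would additionally need $\{F\neq0\}$ to be dense.
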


\begin{proof}
Applying an exterior derivative on the differential relation 
of the complex scalar bilinear in eq.~\eqref{diffs11}, 
we obtain 
\be 
\label{integrableproof}
0 = g \Big[ \dd \left( X s^{11} \right) \wedge A 
+ \left( X s^{11} \right) F \Big] \,\Gamma_{+} \chi
= g \left( X s^{11} F \right) \Gamma_{+} \chi\, ,
\ee
where in the second equality we used eq.~\eqref{diffs11} 
again in order to substitute for $\dd (Xs^{11})$. 
Thus, either $\chi^A \in \Ker \Gamma_+$, 
in which case Proposition \ref{uniquespinor} implies $F=F^{(1,1)_0}$, 
or $\chi^A \in \Ker \Gamma_-$.
When the latter holds, we return to eq.~\eqref{integrableproof}, and since 
$X,g,F\neq0$, it is implied that $s^{11} = 0$. 
\end{proof}
This result similarly holds in the minimal U$(1)$-gauged theory,
\textit{independently of any gauge choice}. 
As noted in Section~\ref{gstructure}, a vanishing complex
scalar bilinear simplifies the geometric bilinear relations due to 
eq.~\eqref{Fierzintprod2}: 
The complex 3-form bilinear is then simply proportional 
to the holomorphic 3-form of the structure, as in eq.~\eqref{u11to6d}, 
whereas the real 3-form contains then 
only lower-degree forms, as in eq.~\eqref{u3to6d}. 
Moreover, this result considerably simplifies the intrinsic torsion 
of the base, most critically, as it implies that the base is integrable, 
which will be discussed in Section~\ref{intrinsictorsion}. 

We thus deduce from eq.~\eqref{diffs11} and from 
Proposition \ref{vanishingcscalar} that generally it holds that
\be
\dd \left( Xs^{11} \right) = 0\, , 
\ee
where we have taken $A=0$ when $F=0$.

Next, for the real scalar bilinear $S_3$ 
we obtain the differential relation
\begin{align}
\label{diffS3}	
0 = \left[ \dd \left( X S_3 \right) 
+ \iota_{K} F \,\Gamma_{+} \right]\chi 
= \left[ - \dd \left( X h c \right) 
+ \iota_{K} F \,\Gamma_{+} \right]\chi\, ,
\end{align}
where in the second equality we used eq.~\eqref{complicateS3}. 
Using eq.~\eqref{br0r}, 
which holds regardless of gauging as already noted, 
it is readily inferred that 
\be
\label{LieofS3}
{\cal{L}}_{K} S_3 = \iota_K \dd S_3 = 0\, ,
\ee
so that $S_3$ is also preserved by the Killing isometry. 

Finally, using the Bianchi identity for $F$ in eq.~\eqref{Bianchi}, 
it is also readily inferred from eq.~\eqref{diffS3}, and from  
Proposition~\ref{uniquespinor} in Section~\ref{uniqueF}, 
that for either spinor branch it holds that
\be
\label{LieofF}
{\cal{L}}_{K} F = 0\, . 
\ee

Let us highlight that \textit{no gauge choice is required 
when there is non-zero gauging and a Cartan flux, in order to ensure 
that the scalar bilinears are preserved by the Killing
isometry}, i.e.~the results in eqs.~\eqref{Lieofs11}, 
\eqref{nocscalar}, \eqref{LieofS3}.

Rather, if one would like to additionally ensure that for 
$F\neq0$, $\chi^A \in \Ker \Gamma_-$, 
the gauge potential itself is preserved by the Killing isometry -- 
to be referred to as Killing-invariant henceforth --
then eq.~\eqref{diffS3} suggests to choose 
the Killing gauge for $A$, namely
\be 
\label{gaugechoice}
\pm \iota_K A = XS_3\, , 
\ee 
which then indeed yields 
\be 
{\cal{L}}_{K} A=0\, .
\ee 
Moreover, as will be discussed below, this gauge additionally
ensures that in this case when $g\neq0$, the 3-form bilinears
are also Killing-invariant. 

Note that the Killing gauge in eq.~\eqref{gaugechoice} is an 
incomplete gauge, 
as using eqs.~\eqref{Fierznorms}, \eqref{realscalarto6d}, it is 
easily inferred that the 1-form gauge field is of the form:
\be
\label{gauged1form}
\pm A = c h^{-1} X K + A_{\perp}\, ,
\ee
where $A_{\perp}$ is the component of $A$ that is 
transverse to the Killing vector, namely
\be
\label{remaindergauge}
\iota_K A_{\perp} = 0\, ,
\ee
which thus still allows for time-independent gauge transformations, 
as it is easy to verify that it also holds that 
\be
\label{perpAnotime}
\iota_K \dd A_{\perp} = 0\, .  
\ee
Note that Proposition \ref{vanishingcscalar} implies $c=\pm1$ 
in the gauge potential in eq.~\eqref{gauged1form}.
We also recall that due to the generator $\Gamma_+$, 
or equivalently R-symmetry, 
$A$ is allowed to flip sign on the LHS of eq.~\eqref{gauged1form}.

\subsubsection{One- and Two-Form Bilinears}
\label{diff1-2forms}

For the single 1-form bilinear, which is dual to the Killing vector, 
we obtain the differential relation:
\begin{align}
\label{diffKraw}
\left( X^2 \nabla_\mu K_\nu \right) \chi 
= \Big[ 
& -2 X \dd X_{[\mu} K_{\nu]} 
- X S_3 F_{\mu\nu} \,\Gamma_{+} 
+ \frac{1}{2} X^4 \left( \iota_K \star G \right)_{\mu\nu} \,\Gamma_5
+ g X^{-2} T_{\mu\nu} \Big] \chi\, ,
\end{align}
where thanks to eq.~\eqref{LieofX} a term that is symmetric in the 
indices 
$(\mu\nu)$ dropped out, so that $\nabla_{(\mu} K_{\nu)} = 0$, and $K$ 
is indeed a Killing vector satisfying
\be
{\cal{L}}_{K} g_{\alpha\beta} = 0\, . 
\ee
Rearranging the terms in eq.~\eqref{diffKraw} yields the 
following simpler form:
\be
\label{diffK}	
\left[ \dd \left( X^2 K \right) \right] \chi
= \Big[ 2 c h X F \,\Gamma_{+} 
+ X^4 \iota_K \star G \,\Gamma_5
+ 2 g h X^{-2} J \Big] \chi\, , 
\ee
where we also used eqs.~\eqref{complicateS3}, \eqref{TtoJ}, 
in the first and last terms, respectively. 

For the single 2-form bilinear we get the following differential relation:
\begin{align}
\left( X^{-2} \nabla_{\mu} T_{\nu\sigma} \right) \chi = \Big[ 
& -3 X^{-3} \dd X_{[\mu} T_{\nu\sigma]} 
- 2 X^{-3} g_{\mu[\nu} \left( \dd X \lrcorner T \right)_{\sigma]}
+ X^{-3} F^{\alpha}{}_{\mu} \left( U_3 \right)_{\nu\sigma\alpha} \Gamma_{+} 
\nn\\ & 
+ 2 g X^{-6} g_{\mu[\nu}K_{\sigma]}
+ T^{\alpha}{}_{[\nu} \left( \star G \right)_{\sigma]\mu\alpha} \Gamma_5 \Big] 
\chi\, .
\end{align}
Using the 3-rank algebraic relation in eq.~\eqref{br3r} and 
further rearranging terms, we then obtain the following simpler form 
for the differential relation:
\be
\label{diffT}
\left[ \dd \left( X^{-2} T \right) \right] \chi 
= - \left( \iota_K G \right) \Gamma_5 \,\chi\, .
\ee
Using the Bianchi identity for $G$ in eq.~\eqref{Bianchi}, 
it is then readily inferred that
\be 
{\cal{L}}_{K} G = 0.
\ee
It is also readily inferred from eq.~\eqref{diffT}, using 
eqs.~\eqref{LieofX}, 
\eqref{Fierzintprod1}, \eqref{TtoJ}, \eqref{Lieofh}, that
\be
\label{LieofT}
{\cal{L}}_{K} T  = \iota_K \dd T = 0\, , 
\qquad {\cal{L}}_{K} J = \iota_K \dd J = 0\, ,
\ee
thus $dJ$ is transverse, as expected. 
We can thus conclude at this point that 
\textit{the Killing vector generates a symmetry 
of the full solution for $\{X, F, G,g_{\alpha\beta}\}$}. 

The differential relation in eq.~\eqref{diffT} further 
suggests to choose a Killing gauge also for the 3-form gauge field. 
When there is no gauging there is no self-duality condition imposed on $B$, 
and the Killing gauge simply reads 
\be
\label{Bgaugenog}
\pm \iota_K B = X^{-2} T\, .
\ee
It is easy to see that this gauge choice ensures 
that the 3-form gauge field itself 
is also Killing-invariant, namely
\be 
\label{LieofB}
{\cal{L}}_{K} B = 0\, .
\ee 

Similar to the 1-form in eq.~\eqref{gaugechoice}, the Killing gauge 
for the 3-form in eq.~\eqref{Bgaugenog} is incomplete, as using it 
together with eqs.~\eqref{Fierznorms}, \eqref{TtoJ}, \eqref{Fierzintprod1}, 
yields the 3-form gauge field as
\be
\label{gauged3formnog}
\pm B = - h^{-1} X^{-2} K \wedge J 
+ B_{\perp}\, ,
\ee 
where $B_{\perp}$ is the component of $B$ transverse to the Killing vector, 
namely
\be 
\label{Bremaindergauge}
\iota_K B_{\perp} = 0\, ,
\ee
which thus still allows for time-independent gauge transformations, 
as it is easy to confirm that it also holds that 
\be
\label{perpBnotime}
\iota_K \dd B_{\perp} = 0\, . 
\ee

Yet, when there is gauging, the consistency of such a Killing gauge
with the self-duality condition on $B$ in eq.~\eqref{selfdual} 
must be maintained.
We first consider $F\neq0$, $\chi^A\in\Ker\Gamma_-$, 
as in Proposition~\ref{vanishingcscalar}.
For a fixed spinor component in this branch, $\Gamma_+\chi = \pm \chi$, 
$\Gamma_5\chi = \chi$.
From eqs.~\eqref{eqchargelim}, \eqref{someCS}, in the equal-charge sector 
we have $\omega_{(3)}=A\wedge F$.
Using eq.~\eqref{diffS3}, the Killing gauge of $A$ in eqs.~\eqref{gaugechoice},
\eqref{gauged1form}, and $c^2=1$, we then find
\be
\label{KCScontraction}
\iota_K\omega_{(3)}
=
c\,\dd\left(XhA_\perp\right)
\mp2chXF
+\dd\left(X^2K\right)\, , 
\ee
where the upper and lower signs correspond to the two eigenvalues of $\Gamma_+$, 
independently of the value $c = \pm 1$, implied by 
Proposition~\ref{vanishingcscalar}.

We now use the gauge freedom in the auxiliary 2-form, 
that was introduced in the self-duality condition, to choose
\be
\label{C2Killinggauge}
\iota_K\dd C_{(2)}=0\, .
\ee
This fixes only the longitudinal component of $\dd C_{(2)}$, 
while leaving its transverse component subject to the self-duality condition.
With this choice, and using eq.~\eqref{diffK}, we find that the self-duality 
condition is consistent with the following simultaneous Killing gauge of $B$:
\be
\label{Bgaugechoice}
\iota_KB
= X^{-2} T + \frac{c}{2g}\,
\dd\left( X h A_\perp \right)\, .
\ee
Indeed, using eq.~\eqref{diffT}, and $c=\pm1$, it is easy to see 
that this Killing gauge ensures that the 3-form is Killing-invariant, 
namely that eq.~\eqref{LieofB} is satisfied. 
Note that the $g \to 0$ limit in eq.~\eqref{Bgaugechoice} should not be a 
concern, as to begin with, this gauge choice relies on 
the self-duality condition, which is irrelevant without gauging. 

Using eqs.~\eqref{perpAnotime}, \eqref{remaindergauge}, we find that
the corresponding incomplete gauge representative of the 3-form 
can be written as 
\be
\label{gauged3form}
B
=
- h^{-1}X^{-2}K\wedge J
-\frac{c}{2g}K\wedge
\left[
h^{-1}X\,\dd A_\perp
+h^{-2}\dd(hX)\wedge A_\perp
\right]
+B_\perp\, ,
\ee
where $B_\perp$ satisfies eqs.~\eqref{Bremaindergauge}, \eqref{perpBnotime}, 
in this case as well.
Its transverse component $B_{\perp}$ is not free, but is fixed by 
the transverse part of the self-duality condition.

When there is gauging, but $F=0$,  $\chi^A \in \Ker \Gamma_-$, 
or $\chi^A \in \Ker \Gamma_+$, 
we can still look for a Killing gauge that is consistent with the self-duality 
condition, following the same considerations noted above. 
For $F=0$ we take $A=0$, whereas for
$F=F^{(1,1)_0}\equiv\tau$ we take $A=A_\perp$ and
$\iota_KA_\perp=0$. 
Since $\iota_K\tau=0$, in both cases
\be
\iota_K\omega_{(3)} = 0\, .
\ee
Accordingly, the same auxiliary gauge choice in
eq.~\eqref{C2Killinggauge} applies.
Therefore, contracting eq.~\eqref{selfdual} with $K$, 
and using eq.~\eqref{diffK} yields
\be
\label{BgaugechoicenoF}
\pm \iota_K B 
= X^{-2} T - \frac{1}{2g}\,
\dd \left( X^2 K \right)\, ,
\ee
which is just the differential relation of the 1-form dual 
to the Killing vector, that satisfies eq.~\eqref{LieofB}.

In this case, the 3-form gauge field takes the form
\be
\label{gauged3formnoF}
\pm B = - h^{-1} X^{-2} K \wedge J 
+ \frac{1}{2g} 
h^{-2} X^2 K \wedge \dd K
+ B_{\perp}\, ,
\ee
with $B_{\perp}$ satisfying 
eqs.~\eqref{Bremaindergauge}, \eqref{perpBnotime} also in this case.
Also here $B_{\perp}$ is not free: 
it is fixed by the transverse self-duality condition in eq.~\eqref{selfdual}.
Note the unique term
\be 
K \wedge \dd K\, ,
\ee 
which shows up in this case, that stands for the twist of the Killing vector. 
We will return to it when we consider the Killing invariance 
of the 3-form bilinears, where there is gauging and 
$F=0$, $\chi^A \in \Ker \Gamma_-$, 
or $\chi^A \in \Ker \Gamma_+$.
Finally, recall that in this case, $B$ can flip sign on the LHS of 
eq.~\eqref{gauged3formnoF}, due to the two possible spinor branches 
and the operator $\Gamma_5$, or equivalently R-symmetry, 
similarly to what was noted for $A$.

\textit{The aspects noted above, regarding the Killing invariance
of the $3$-form gauge field, the relation of this invariance to the self-duality
condition with gauging, and the form of the $3$-form gauge field in the three
distinct cases above, similarly hold in the minimal theory.}

\subsubsection{Three-Form Bilinears}
\label{diff3forms}

As to the 3-form bilinears, we can focus on the complex 
one, as it always contains the complex 3-form bilinear 
of the SU$(3)$ structure on the base, %
as in eq.~\eqref{u11to6d}, whereas the real one can consist of
only lower-form bilinears that we already analysed, as in eq.~\eqref{u3to6d}.
For the complex 3-form bilinear $u^{11}$
we then get the differential relation
\begin{align}
\label{diffu11}	
\left( X^{-1} \nabla_{\mu} u^{11}_{\nu\sigma\rho} \right) \chi 
& = 
\Big[ - 4 X^{-2} \dd X_{[\mu} u^{11}_{\nu\sigma\rho]} 
- 4 i g X^{-1} A_{\mu} u^{11}_{\nu\sigma\rho} \,\Gamma_+
+ g X^{-5} \left(\star u^{11}\right)_{\mu\nu\sigma\rho}
\nn\\
& \qquad
+ i X^{-2} F^{\alpha}_{\,\,\,\mu} 
\left(\star u^{11}\right)_{\nu\sigma\rho\alpha} \Gamma_{+} 
- \frac{3}{2} X \left( u^{11} \right)^\alpha{}_{[\nu\sigma}
\left( \star G \right)_{\rho]\mu\alpha} \Gamma_5
\Big] \chi\, .
\end{align}
Totally antisymmetrizing this equation 
gives
\begin{align}
\label{diffu11exterior}
\left[ \dd \left( X^{-1} u^{11} \right)_{\mu\nu\sigma\rho}\right] \chi
= {} 
\Big[ &
- 20 X^{-2}\dd X_{[\mu} u^{11}_{\nu\sigma\rho]}
- 16 i g X^{-1} A_{[\mu}u^{11}_{\nu\sigma\rho]} \Gamma_+
+ 4 g X^{-5} \left( \star u^{11} \right)_{\mu\nu\sigma\rho}
\nn\\
& 
+ 4 i X^{-2} F^\alpha{}_{[\mu} 
\left( \star u^{11} \right)_{\nu\sigma\rho]\alpha} \Gamma_+
+ 6 X \left( u^{11} \right)^\alpha{}_{[\mu\nu} 
\left(\star G \right)_{\sigma\rho]\alpha} \Gamma_5 \Big] \chi\, .
\end{align}
Using the algebraic relation in eq.~\eqref{br4cg}, and further 
rearranging terms, we then obtain the following simpler form for the 
differential relation:
\begin{align}
\label{diffU}	
\left[ \dd \left( X^{-1} u^{11} \right) \right] \chi
& = \Big[ X s^{11} G \,\Gamma_5 
- 4 g \Big( \left[ i X^{-1} A \wedge u^{11} \right] \Gamma_+ 
- \star u^{11} \Big) \Big] \chi \, .
\end{align}
Note the term with the complex scalar $s^{11}$ in this equation, 
which arises from the algebraic relation in eq.~\eqref{br4cg}, 
as the result in Proposition \ref{vanishingcscalar} is relaxed
in deriving this general relation. 
Also notice the similarity to the differential relation of 
the complex scalar bilinear in eq.~\eqref{diffs11}, 
featuring an imaginary coefficient. 
Yet, in contrast to eq.~\eqref{diffs11}, 
when $g,F\neq0$, this equation already indicates 
that $\dd u^{11}$, and recalling eq.~\eqref{u11to6d}, 
also the exterior derivative of the complex 3-form 
bilinear of the structure, involve some gauge-dependence.

We proceed to examine whether the complex 3-form bilinears
are also Killing-invariant. 
Let us first consider eq.~\eqref{diffU} without gauging, 
where as noted, the result in Proposition~\ref{vanishingcscalar} is relaxed. 
Then using eqs.~\eqref{br0r}, \eqref{Fierzintprod2}, 
and due to eq.~\eqref{LieofX}, we find that
\be
{\cal{L}}_{K} \left( X^{-1} u^{11} \right) \chi 
= X s^{11} \left[ \dd \left (X^{-2} T \right) 
+ \iota_K G \Gamma_5 \right] \chi
= 0\, ,
\ee
where we also used eqs.~\eqref{diffs11}, \eqref{diffU}, 
in the second equality, and eq.~\eqref{diffT} in the last, 
and there occurs a non-trivial cancellation.
Using eq.~\eqref{LieofX}, 
it is thus easily inferred that
\be 
\label{Lieofu11}
0 = {\cal{L}}_K u^{11} 
= {\cal{L}}_K \Bigl[- s p K \wedge J 
- \frac{h}{2} \Bigl( \left( c-1 \right) p^2 \Psi
+ \left( c+1 \right) \overline{\Psi} \Bigr) \Bigr]\, , 
\ee
where we substituted in the general form of the complex 3-form 
bilinear from eq.~\eqref{complicateu11}. 
Due to eq.~\eqref{LieofT}, and since the Lie derivative 
of the scalar angle-dependent coefficients maintains the 
longitudinal/transverse and the complex form-types, we infer that 
eq.~\eqref{Lieofu11} yields independent equations for each of the types 
$K \wedge (1,1)$, $(3,0)$, or $(0,3)$.
Thus, using eqs.~\eqref{Lieoftheta}, \eqref{transversesu3}, 
we find
\begin{subequations}
\begin{align}
\label{Lieofangles}
0 = & s \, {\cal{L}}_K \varphi\, ,\\*
\label{rawLieofPsi}
0 = & \left( c + 1 \right) {\cal{L}}_K \overline{\Psi}\, ,
\end{align}  
\end{subequations}
which implies
\be
\label{LieofPsi}
{\cal{L}}_K \Psi 
= {\cal{L}}_K \bar \Psi 
= \iota_K \dd \Psi = \iota_K \dd\bar\Psi = 0\, .
\ee
For $\theta=\pi/2$, the positive-chirality spinor defining the
SU$(3)$ structure is phase-redefined as
$\widetilde{\psi}=p^{-1}\psi$, as in
eq.~\eqref{simpleisdeep2}. Using eq.~\eqref{defPsi}, its fundamental
complex 3-form is therefore
$\widetilde{\Psi} = p^2 \Psi$.
The $(3,0)$ component of eq.~\eqref{Lieofu11} then gives directly
$\mathcal{L}_K\widetilde{\Psi} = 0$.
Relabelling the structure form $\widetilde{\Psi}$ as $\Psi$ 
gives the final result in eq.~\eqref{LieofPsi}.
Therefore $\dd \Psi$ is also transverse, as expected.

Next, similarly using the general form of 
the real 3-form bilinear in eq.~\eqref{complicateU3}, 
and using eqs.~\eqref{Lieofangles}, \eqref{LieofT}, 
\eqref{Lieofh}, \eqref{LieofPsi},
it is also easily inferred that
\be 
\label{LieofU3}
{\cal{L}}_{K} U_3 = 0\, .
\ee
At this point we conclude that \textit{when there is no gauging,
the Killing vector generates a symmetry not 
only of the full solution of the bosonic fields, 
but also of all the bilinears, and the structure}.

Let us now turn to consider first that $g,F\neq0$, 
$\chi^A \in \Ker \Gamma_{-}$. 
The distinct case of $F=0$, $\chi^A \in \Ker \Gamma_{-}$, 
or $\chi^A \in \Ker \Gamma_{+}$, 
will be treated subsequently.
We recall the Killing gauge of the 1-form gauge field in 
eqs.~\eqref{gaugechoice}, \eqref{gauged1form}, \eqref{remaindergauge}, 
eq.~\eqref{u11to6d}, which holds in this case, 
and eq.~\eqref{hodgeu11gen} for the Hodge dual 
of the complex 3-form bilinear that reads
\be
\label{dualofu11}
\star u^{11} = - i K \wedge \overline \Psi 
\quad \mathtt{or} \quad 
\star u^{11} = - i p^2 K \wedge \Psi \, .
\ee
Then applying $\iota_K$, and using eqs.~\eqref{intprodKU11}, 
\eqref{LieofX}, \eqref{Lieofh}, \eqref{transversesu3}, 
it is easily implied that eqs.~\eqref{Lieofu11}, \eqref{LieofPsi}, 
and thus also eq.~\eqref{LieofU3}, 
hold in the Killing gauge.
Note that in this case, due to Proposition \ref{vanishingcscalar}, 
and eq.~\eqref{Fierzintprod2}, we also readily get
\be
\label{diffrentialu11}
{\cal{L}}_{K} u^{11} = \iota_K \dd u^{11} = 0\, .
\ee 
Let us highlight that these are gauge-dependent results, so that 
\textit{when $g,F\neq0$, $\chi^A \in \Ker \Gamma_{-}$, 
and the Killing gauge is applied, the 3-form bilinears 
and the structure are also Killing-invariant}. 

Moreover, we can deduce
\begin{prop}[No Holomorphic Component in $\dd A_\perp$ with Gauging]
\label{noholo1formwg}	
When there is gauging and $F\neq0$, the differential of 
the transverse component of the 1-form gauge field has 
no holomorphic component, namely
\be
g,F\neq0
\quad \implies \quad 
\dd A_\perp = \left( \dd A_\perp\right)^{(1,1)}\, .
\ee
\end{prop}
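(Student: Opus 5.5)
The idea is to take the exterior derivative of the differential relation in eq.~\eqref{diffU} once more and use $\dd^2=0$. After the gauge-dependent pieces are removed with the Killing gauge, this should leave a purely algebraic constraint of the form $\dd A_\perp\wedge\overline\Psi=0$. That constraint kills exactly the $(2,0)+(0,2)$ part of $\dd A_\perp$.

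\emph{Setup.} We work on the branch $g,F\neq0$, $\chi^A\in\Ker\Gamma_-$, with a fixed component $\Gamma_+\chi=\pm\chi$, $\Gamma_5\chi=\chi$. Proposition~\ref{vanishingcscalar} gives $s^{11}=0$, so the term $Xs^{11}G$ in eq.~\eqref{diffU} drops out. The bilinear $u^{11}$ is then given by eq.~\eqref{u11to6d}, and $\star u^{11}$ by eq.~\eqref{dualofu11}. Take first $u^{11}=-h\overline\Psi$, with $\star u^{11}=-iK\wedge\overline\Psi$. Equation~\eqref{diffU} then reads
\be
\dd\left(X^{-1}h\overline\Psi\right)=\mp4igX^{-1}A\wedge h\overline\Psi-4ig\,K\wedge\overline\Psi\, .
\ee
Insert the Killing gauge of eq.~\eqref{gauged1form}, $\pm A=ch^{-1}XK+A_\perp$. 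This splits the right-hand side into a longitudinal piece proportional to $K\wedge\overline\Psi$ and a transverse piece $\mp4igX^{-1}hA_\perp\wedge\overline\Psi$, with the sign fixed by the branch and possibly redefined.

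Next use ${\cal L}_K\Psi=0$ (eq.~\eqref{LieofPsi}, valid in the Killing gauge), together with $\iota_K\dd X=\iota_K\dd h=0$. These show that $\iota_K$ of the left-hand side vanishes. Hence the longitudinal $K\wedge\overline\Psi$ terms must cancel among themselves, which is a consistency check on $c=\pm1$ and the gauge choice. What remains is the relation
\be
\dd\left(X^{-1}h\overline\Psi\right)=\mp4ig\,A_\perp\wedge X^{-1}h\overline\Psi\, ,
\ee
a closed relation of the form $\dd\Omega=\mu\,A_\perp\wedge\Omega$ for the transverse $(0,3)$-form $\Omega\equiv X^{-1}h\overline\Psi$ and a constant $\mu$.

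\emph{Integrability step.} Apply $\dd$ to this relation:
\be
0=\mu\left(\dd A_\perp\wedge\Omega-A_\perp\wedge\dd\Omega\right)=\mu\,\dd A_\perp\wedge\Omega-\mu^2A_\perp\wedge A_\perp\wedge\Omega=\mu\,\dd A_\perp\wedge\Omega\, .
\ee
Since $g\neq0$ and $Xh\neq0$, this gives $\dd A_\perp\wedge\overline\Psi=0$. By eq.~\eqref{perpAnotime}, $\dd A_\perp$ is transverse, so it has a well-defined decomposition into $(2,0)$, $(1,1)$ and $(0,2)$ parts with respect to $J$. Only the $(2,0)$ part survives the wedge with $\overline\Psi$. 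Pointwise, the map $\Lambda^{2,0}\to\Lambda^{2,3}$, $\beta\mapsto\beta\wedge\overline\Psi$, is an SU$(3)$-equivariant isomorphism between two copies of $\bar{\mathbf 3}$, since $\overline\Psi$ is nowhere vanishing. Therefore $(\dd A_\perp)^{(2,0)}=0$, and reality of $A_\perp$ gives $(\dd A_\perp)^{(0,2)}=0$. The other option, $u^{11}=hp^2\Psi$ at $\theta=\pi/2$, is handled identically after relabelling $\widetilde\Psi=p^2\Psi$ as in the discussion before eq.~\eqref{LieofPsi}; it yields $\dd A_\perp\wedge\Psi=0$ and hence the same conclusion.

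\emph{Expected obstacle.} The delicate step is the clean separation of the longitudinal and transverse parts of $\dd(X^{-1}u^{11})$. The full seven-dimensional exterior derivative of a transverse form can pick up $K\wedge(\cdot)$ pieces through the fibration connection $\omega$ in $e^0$, so one must verify that these combine consistently with the $K\wedge\overline\Psi$ terms from $A$ and $\star u^{11}$. If that bookkeeping turns out to be awkward, I would skip it and apply $\dd$ directly to the full relation in eq.~\eqref{diffU}. That route gives $\mp i\dd A\wedge u^{11}+\dd\star u^{11}=0$ up to terms that cancel by the same substitution; I would then project the resulting 5-form onto its transverse $(2,3)$ part, where only $(\dd A_\perp)^{(2,0)}\wedge\overline\Psi$ can contribute.
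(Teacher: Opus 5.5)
Your proposal is the paper's proof: with $s^{11}=0$ from Proposition~\ref{vanishingcscalar}, the Killing gauge cancels the longitudinal terms in eq.~\eqref{diffU}, leaving eq.~\eqref{diffUgauged}, $\dd(X^{-1}u^{11})=-4igA_\perp\wedge X^{-1}u^{11}$, whose exterior derivative gives $\dd A_\perp\wedge\Psi=0$ and hence no $(2,0)+(0,2)$ part. The one omission is the $\chi^A\in\Ker\Gamma_+$ branch, which the hypothesis $g,F\neq0$ does not exclude; the paper dispatches it at once, since there $A=A_\perp$ and $\dd A_\perp=F=\tau$ is already primitive $(1,1)$ by Proposition~\ref{uniquespinor}. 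There is also a small sign slip: the $K\wedge\overline\Psi$ term in your first display should carry $+4ig$, which is what makes the longitudinal terms cancel explicitly against the $c\,h^{-1}XK$ part of $A$ at $c=1$.
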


\begin{proof}
For the spinor branch of $\chi^A \in \Ker \Gamma_+$, the result is trivial as
\be 
\iota_K A = 0\, ,
\qquad
A = A_\perp\, ,
\qquad
F = \dd A_\perp = F^{(1,1)_0}\, ,
\ee 
so we shall focus on $\chi^A \in \Ker \Gamma_-$.	
When $g,F\neq0$ in the Killing gauge, 
using the 1-form in eq.~\eqref{gauged1form}, we get 
\begin{align}
\label{diffUgauged}		
\dd \left( X^{-1} u^{11} \right) 
& = - 4 i g 
\left[ A_\perp \wedge \left(X^{-1} u^{11}\right) \right] %
\, , 
\end{align}
similarly to eq.~\eqref{diffs11} for 
the complex scalar bilinear.
Thus, similar to what we did with the complex scalar bilinear 
in eq.~\eqref{integrableproof}, and using our result there 
in Proposition \ref{vanishingcscalar}, and eq.~\eqref{u11to6d}, 
we take an exterior derivative of eq.~\eqref{diffUgauged}, 
and then substitute eq.~\eqref{diffUgauged} into the resulting equation. 
This derivation then yields 
\be
\label{dAperpconstraint}
0 = \dd A_\perp \wedge \Psi 
\quad \iff \quad 
0 = \dd A_\perp \lrcorner \Psi 
\qquad \iff \qquad 
\dd A_\perp = \left( \dd A_\perp\right)^{(1,1)}\, ,
\ee
where we also used eqs.~\eqref{perpAnotime}, 
\eqref{Hodgeidentity}, \eqref{su3hodge}.
\end{proof}

We proceed to the final distinct case where $F=0$, 
$\chi^A \in \Ker \Gamma_{-}$, or $\chi^A \in \Ker \Gamma_{+}$. 
\begin{prop}[Gauging and Null Cartan Flux yield U$(3)$ Structure]
\label{enlargedstructure}	
When there is gauging, and $F=0$, $\chi^A \in \Ker \Gamma_{-}$, 
or $\chi^A \in \Ker \Gamma_{+}$, 
the complex $3$-form bilinear and the complex $3$-form of the structure 
are no longer Killing-invariant, as the latter is
charged under the gauged U$(1)$ R-symmetry. 
The geometric structure is enlarged then from SU$(3)$ to U$(3)$.
\end{prop}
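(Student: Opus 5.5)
The plan is to start from the general differential relation for the complex $3$-form bilinear, eq.~\eqref{diffU}, in the case $g\neq0$, $F=0$. We choose the gauge $A=0$ locally when $\chi^A\in\Ker\Gamma_-$. When $\chi^A\in\Ker\Gamma_+$ the $\Gamma_+$ terms drop out identically. In either case eq.~\eqref{diffU} reduces to
\be
\dd\left(X^{-1}u^{11}\right) = X s^{11} G\,\Gamma_5 + 4g\,\star u^{11}\, .
\ee
We then compute $\mathcal L_K(X^{-1}u^{11}) = \iota_K\dd(X^{-1}u^{11}) + \dd\,\iota_K(X^{-1}u^{11})$ via Cartan's formula. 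The terms proportional to $s^{11}$ cancel exactly as in the ungauged computation preceding eq.~\eqref{Lieofu11}. That cancellation uses eqs.~\eqref{br0r}, \eqref{Fierzintprod2}, \eqref{diffs11nognoF}, which hold here because $A=0$ or $\Gamma_+\chi=0$, together with eq.~\eqref{diffT}. The $g$-dependent terms in eq.~\eqref{diffT} and eq.~\eqref{diffs11} are absent in this case. What survives is $\mathcal L_K(X^{-1}u^{11}) = 4g\,\iota_K\star u^{11}$, using eq.~\eqref{LieofX}.

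Next we insert the general bilinear eqs.~\eqref{complicateu11} and the corresponding Hodge dual (the general form behind eq.~\eqref{dualofu11}). For a generic orientation, $\star u^{11}$ contains terms of the form $K\wedge\Psi$ and $K\wedge\overline\Psi$ (up to the angle coefficients), plus a transverse $4$-form piece $\sim s p\,h\,J\wedge J$ coming from $\star(K\wedge J)$. Contracting with $\iota_K$ leaves a nonzero transverse $3$-form, proportional to $h^2$ times a combination of $\Psi$ and $\overline\Psi$. We then separate by type, exactly as around eq.~\eqref{Lieofu11}, into the components $K\wedge(1,1)$, $(3,0)$ and $(0,3)$. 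This gives equations of the form
\be
\mathcal L_K\Psi = i\,\nu\, g\, h\, X^{-4}\,\Psi\, ,\qquad \mathcal L_K\varphi \propto g\, ,
\ee
with $\nu$ a fixed nonzero numerical coefficient. We will fix $\nu$ by the cleanest branch, $\theta=0$, where $u^{11}=-h\overline\Psi$ and $\star u^{11}=-iK\wedge\overline\Psi$. There $\iota_K\star u^{11}=-i\,\iota_K K\,\overline\Psi = i h^2\overline\Psi$, so $\mathcal L_K\overline\Psi = -4igX^{-1}h\ldots\,\overline\Psi$. The $\theta=\pi/2$ branch follows by the relabelling $\widetilde\Psi=p^2\Psi$ used before, and intermediate angles by the $\mathrm{SU}(2)$ rotation argument of Section~\ref{gstructure}.

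Since $g\neq0$ and $h\neq0$, $\Psi$ (and hence $u^{11}$) is not Killing-invariant; it rotates by a phase along $K$. We must also show that no gauge choice can undo this. The only available freedom is $A\to A+\dd\lambda$ with $F=0$, which multiplies the spinor, and hence $u^{11}$, by a phase $e^{\pm 4ig\lambda}$ through the $\Gamma_+$ coupling. One could choose $\iota_K\dd\lambda$ constant to cancel the rotation. This is precisely the statement that $\Psi$ carries U$(1)_R$ charge and that $K$ acts as an R-symmetry generator. The invariant data are then $\{J,K,h\}$ and $\Psi$ only up to a phase, that is, the line bundle $\Lambda^{3,0}$ with a U$(1)$ connection. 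The structure group preserving $J$ and the line of $\Psi$ is U$(3)$. Meanwhile $J$ and $K$ remain invariant by eq.~\eqref{LieofT} and the Killing equation.

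The main obstacle is the Hodge-dual bookkeeping for generic $\theta,\varphi$. We need to be sure that the $(3,0)$ projection of $\iota_K\star u^{11}$ is nonvanishing for every orientation. We also need the resulting equation for $\mathcal L_K\varphi$ to be consistent with eq.~\eqref{nosLieofvarphi}, which was derived only under $\dd(Xs^{11})=0$ and therefore still holds here. If the generic case is awkward, we would first reduce to $\theta\in\{0,\pi/2\}$ via the SU$(2)$ rotation, which commutes with the $\Gamma_+$-free equations in this sector.
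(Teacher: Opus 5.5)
Your route is the paper's. With $A=0$ on $\Ker\Gamma_-$ (or $\Gamma_+\chi=0$), eq.~\eqref{diffU} reduces to eq.~\eqref{diffUnoF}. Cartan's formula, together with $\iota_Ku^{11}=s^{11}T$, $\dd(Xs^{11})=0$ and eq.~\eqref{diffT}, cancels the $s^{11}G$ terms and leaves $\mathcal L_K(X^{-1}u^{11})=4g\,\iota_K\star u^{11}$. Inserting eq.~\eqref{hodgeu11gen} and separating by type then finishes the argument.

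Carrying out that last step, however, corrects several of your anticipated details. Since $\iota_K(J\wedge J)=0$ and $\iota_K(K\wedge\omega)=-h^2\omega$ for transverse $\omega$,
\[
\mathcal L_Ku^{11}=4gX\,\iota_K\star u^{11}
=2igh^2X\bigl[(1-c)p^2\Psi+(1+c)\overline\Psi\bigr]\, ,
\]
which is purely transverse.
\begin{itemize}
\item[(i)] The $K\wedge(1,1)$ component therefore gives $s\,\mathcal L_K\varphi=0$, not $\mathcal L_K\varphi\propto g$; the latter would contradict eq.~\eqref{nosLieofvarphi}.
\item[(ii)] The $(3,0)$ and $(0,3)$ components give $(c-1)p^2\bigl(\mathcal L_K\Psi-4ighX\Psi\bigr)=0$ and $(c+1)\bigl(\mathcal L_K\overline\Psi+4ighX\overline\Psi\bigr)=0$. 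Hence $\mathcal L_K\Psi=4ighX\Psi$. The power is $X^{+1}$, not $X^{-4}$, and your $\theta=0$ check should read $\mathcal L_K\overline\Psi=-4igXh\,\overline\Psi$.
\item[(iii)] Your worry about the $(3,0)$ projection is resolved the same way. It does vanish at $c=1$, but $c-1$ and $c+1$ never vanish together, and complex conjugation covers every angle, including $c=0$.
\end{itemize}
The SU$(2)$ reduction is therefore unnecessary. If you use it, note that it needs constant rotation parameters. This holds here: eq.~\eqref{diffS3} and $\dd(Xs^{11})=0$ make $Xhc$ and $Xhsp$ constant, hence $Xh$, $\theta$ and, for $s\neq0$, $\varphi$ are constant.

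Drop the goal of showing that no gauge choice undoes the rotation. It is not part of the statement, and for $\Psi$ on $\Ker\Gamma_-$ it is false, as your next sentence already indicates. Because $Xh$ is constant, the flat gauge $A=\kappa\,\dd t$, with a suitable constant $\kappa\propto Xh$, removes the phase rotation of $\Psi$. The price is a $t$-dependent $\varphi$ when $s\neq0$.

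The proposition, like the paper's proof, is a statement in the gauge $A=0$, where eqs.~\eqref{diffs11nognoF} and \eqref{nosLieofvarphi} hold. This trade-off between $\iota_KA$ and the phase of $\Psi$ is what ``$K$ acts as an R-symmetry generator'' means. On $\Ker\Gamma_+$ the spinor is neutral under the equal-charge U$(1)$, so no such freedom exists there and the rotation is gauge-independent.

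The paper simply asserts the U$(3)$ enlargement once $\mathcal L_K\Psi\neq0$ is established. Your description, in terms of the invariant pair $(g,J)$ and the line spanned by $\Psi$, is a reasonable way to make that explicit.
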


\begin{proof}
When $g\neq0$, and $F=0$, $\chi^A \in \Ker \Gamma_{-}$, 
or $\chi^A \in \Ker \Gamma_{+}$, 
then we find that the differential relation in eq.~\eqref{diffU} takes the form:
\begin{align}
\label{diffUnoF}
\left[\dd\left(X^{-1}u^{11}\right)\right]\chi
&=
\left[
Xs^{11}G\,\Gamma_5
+4g\star u^{11}
\right]\chi\, .
\end{align}
Then using eqs.~\eqref{diffs11nognoF}, \eqref{Fierzintprod2},
\eqref{diffT}, \eqref{hodgeu11gen}, \eqref{Fierznorms},
\eqref{transversesu3}, \eqref{LieofX}, we infer that
\be
\label{noFLieofu11} 
{\cal{L}}_{K} u^{11} = 2 i g h^2 X 
\Bigl[ \left( 1 - c \right) p^2 \Psi 
+ \left( 1 + c \right) \overline \Psi \Bigr] \neq 0\, , 
\ee
which is always non-vanishing. 
In particular, if we specify $s=0$, $c=1$, we get
\be
\label{nosnoFLieofu11}
{\cal{L}}_{K} u^{11} = - 4 i g h X u^{11}\, . 
\ee
From eq.~\eqref{noFLieofu11}, 
using eqs.~\eqref{LieofT}, \eqref{nosLieofvarphi}, 
\eqref{Lieoftheta}, \eqref{Lieofh}, \eqref{LieofX}, 
we then get
\be 
\label{masternoFLieofpsi}
\left( c - 1 \right) p^2 \mathcal L_K\Psi
+
\left( c + 1 \right) \mathcal L_K\overline\Psi
=
4 i g h X 
\Big[ \left( c - 1 \right) p^2 \Psi 
- \left( c + 1 \right)  \overline\Psi \Big]\, .
\ee
Since $K$ preserves the metric and $J$, its Lie derivative
preserves form type. The $(3,0)$ and $(0,3)$ components of
eq.~\eqref{masternoFLieofpsi} therefore give separate equations.
As $p\neq0$ and $c-1$ and $c+1$ cannot both vanish,
complex conjugacy implies, including at $c=0$,
\be 
\label{LieofpsiwgnoF}
{\cal{L}}_{K} \Psi = 4 i g h X \Psi\, .  
\ee
Note that this result is independent of the internal spinor angle, 
so an identical result is obtained 
if we simply use eq.~\eqref{u11to6d} 
in eq.~\eqref{nosnoFLieofu11}.  
This result describes a constant phase rotation of the complex $3$-form 
on the base along the Killing direction.
Thus, when there is gauging, and $F=0$, $\chi^A \in \Ker \Gamma_{-}$, 
or $\chi^A \in \Ker \Gamma_{+}$, 
the complex seven- and six-dimensional 3-form bilinears 
are not Killing-invariant, and the geometric structure is enlarged to U$(3)$. 	
\end{proof}

In these unique configurations, where the Cartan gauge field is not active, 
or its flux is primitive, the timelike Killing vector 
is more than a pure coordinate translation, 
as it also acts as a generator of the gauged R-symmetry, 
under which the 3-form $\Psi$ is charged. 
Thus geometrically, translation along the Killing direction is accompanied 
by an internal U$(1)$ rotation of the transverse complex $3$-form, 
and thus the structure is enlarged to U$(3)$, 
see Appendix~\ref{gstructure6dwg}.
This is geometrically encoded in the twist of the Killing vector, 
which is naturally sourced by the active 3-form gauge potential, with 
the particular twist term that we found in eq.~\eqref{gauged3formnoF}, 
where the Killing gauge and the self-duality condition were considered in 
conjunction. 
The presence of this twist there is a necessary consequence 
of the geometrization of the R-symmetry rotation 
in the absence of a Cartan gauge field.
The realization of these configurations includes the
BPS flows discussed in Section~\ref{revsol} below.

Finally, still considering $g\neq0$, and $F=0$, $\chi^A \in \Ker \Gamma_{-}$, 
or $\chi^A \in \Ker \Gamma_{+}$, 
we repeat similar steps to what we did in eq.~\eqref{integrableproof}. 
We take an exterior derivative of eq.~\eqref{diffUnoF}, using 
eq.~\eqref{diffs11}, and the Bianchi identity for $G$ in eq.~\eqref{Bianchi}, 
which yields
\be
\label{noFKconstraint}
\dd\left(\star u^{11}\right)=0\, .
\ee
When the complex scalar bilinear is vanishing, 
similar to eq.~\eqref{dAperpconstraint}, the result reads
\be
\label{nosnoFKconstraint}
0 = \dd \left( h^{-1} X K \right) \wedge \Psi 
\qquad \implies \qquad \dd \left( h^{-1} X K \right) 
= \left[ \dd \left( h^{-1} X K \right)\right]^{(1,1)}\, , 
\ee
where we also used eq.~\eqref{diffS3}, and that the form $K$ is dual 
to a Killing vector, so that
\be 
\iota_K \dd \left( h^{-1} X K \right) = \dd (h X) = 0\, .
\ee

\subsection{Intrinsic Torsion of Base Manifold}
\label{intrinsictorsion}

With the differential relations at hand, together with the additional
algebraic constraints, we can read off the intrinsic torsion of the
six-dimensional base. We use the SU$(3)$ decomposition in
eqs.~\eqref{torsionforms}, while treating separately the cases for
which only a U$(3)$ structure is preserved.

For a Killing spinor in either spinor branch, we introduce
\be
\label{etadef}
\begin{cases}
\eta=+1
& \quad \chi^A\in\Ker\Gamma_-\, ,
\\
\eta=-1
& \quad \chi^A\in\Ker\Gamma_+\, .
\end{cases}
\ee
By eq.~\eqref{projections}, this is equivalently the eigenvalue of
$\Gamma_5$,
\be
\label{eigenGamma5}
\Gamma_5\chi^A=\eta\chi^A\, .
\ee
Thus $\eta$ labels the $\Gamma_5$ eigenvalue of the chosen spinor branch,
rather than an eigenvalue of $\Gamma_\pm$, whose kernels also distinguish
the two branches. 

Equation~\eqref{diffT}, together
with eq.~\eqref{TtoJ}, then gives
\be
\label{dJraw}
\dd J
=
- \eta h^{-1}X^2\iota_KG
+\dd\log\left(h^{-1}X^2\right)\wedge J\, .
\ee
Thus terms originating from $G\Gamma_5$ acquire the branch factor
$-\eta$, while the bosonic field strength $G$ itself is fixed.
We can now identify the torsion forms defined in eq.~\eqref{dJ}.

First, using eqs.~\eqref{equivw1}, \eqref{normalize},
\eqref{projchi}, we find
\be
\label{w1fromdj}
\mathcal W_1
=
\frac{i\eta}{6}h^{-1}X^2
\left(\Psi\lrcorner\iota_KG\right)
=
\frac{i\eta}{6}h^{-1}X^2\zeta\, .
\ee
On $\Ker\Gamma_-$, where $\eta=+1$, Proposition
\ref{constrainedsingletsector} and
eqs.~\eqref{chitoreal}, \eqref{gensingletconstraint} give
\be
\label{w1}
\mathcal W_1
=
-\frac16X^{-1}p^{-1}s\lambda\, ,
\qquad
s\equiv\sin2\theta\, ,
\quad
p\equiv e^{i\varphi}\, ,
\ee
where dependence on the spinor internal angles shows up.
Whenever it is non-vanishing, the phase of $\mathcal W_1$ 
is fixed by the spinor phase $\varphi$.
On the complementary branch $\Ker\Gamma_+$,
Proposition~\ref{uniquespinor} gives $\lambda=0$, while
Proposition~\ref{constrainedsingletsector} then gives $\zeta=0$.
Hence eq.~\eqref{w1} holds on either spinor branch. 
From our results in eqs.~\eqref{specsingletconstraint}, 
and Propositions \ref{uniquespinor}, \ref{vanishingcscalar}, 
we then deduce that in the gauged theory it holds
\be
\label{w1gone}
g \neq 0 
\quad \implies \quad 
\mathcal{W}_1 = 0\, . \qquad
\ee
In light of eqs.~\eqref{gensingletconstraint},\eqref{specsingletconstraint}, 
a non-vanishing first torsion form could only arise in the ungauged theory as
\be
\label{w1notgone}
\mathcal W_1\neq0
\quad \implies \quad
g = 0\, ,
\quad %
s, \lambda, \zeta \neq 0\, .
\ee
A non-vanishing $\mathcal W_1$ therefore produces a non-vanishing
Nijenhuis tensor, and hence a non-integrable base, see
Appendix~\ref{almostcalculus}.

Next, projecting the $G$-dependent term in eq.~\eqref{dJraw} onto
the primitive $(2,1)+(1,2)$ component gives
\begin{align}
\label{w3}
\mathcal W_3
&=
- \eta h^{-1}X^2
\left(\iota_KG\right)^{(2,1)_0+(1,2)_0}
= - \eta h^{-1}X^2
\left(\rho+\overline\rho\right)\, .
\end{align}
The component $\rho$ remains unconstrained by the bilinear relations,
see Proposition~\ref{unconst3formprim}. 
Although eq.~\eqref{w3} contains an explicit factor of $\eta$, 
this does not represent a sign ambiguity in the geometric torsion. 
Both $\rho$, as the primitive projection of $\iota_KG$, 
and $\mathcal W_3$, as the primitive projection of $\dd J$, 
are defined with respect to the SU$(3)$ structure induced 
by the Killing spinor. 
Only the combination
$\eta(\rho+\overline\rho)$ is therefore geometrically meaningful.
For a fixed bosonic configuration admitting both complementary
spinor branches with the same $K$ and $J$, eq.~\eqref{dJraw} instead
forces $\iota_KG=0$, and hence $\mathcal W_3=0$.

As the last torsion form featured in $\dd J$, the fourth one is
\be
\label{w4}
\mathcal W_4
=
- \frac{\eta}{2}h^{-1}X^2
J\lrcorner\left(\iota_KG\right)
+\dd\log\left(h^{-1}X^2\right)\, .
\ee
On the $\Ker\Gamma_-$ branch, where $\eta=+1$, the algebraic
constraints in eqs.~\eqref{gen1vectorconstraint}, together with
eqs.~\eqref{diffs11nognoF}, \eqref{diffS3},
\eqref{complicates11}, give
\be
X^3\left(q+\overline q\right)
=
- X\dd h + 4h\dd X
- hs\,\text{Im}(p\alpha)\, .
\ee
Substitution into eq.~\eqref{w4} yields
\be
\label{w4gen}
\mathcal W_4
=
-\frac12\dd\log h
+\frac{s}{2X}\text{Im}(p\alpha)\, .
\ee
On the complementary branch $\Ker\Gamma_+$, the Cartan flux is
necessarily primitive and $\alpha=0$. The corresponding algebraic
projection gives
\be
q=5\eta hX^{-3}D\, ,
\ee
while the differential relations give $\dd(Xh)=0$. The explicit
factor of $\eta$ in eq.~\eqref{w4} therefore cancels that in $q$,
again yielding
\be
\mathcal W_4=-\frac12\dd\log h\, .
\ee
Thus eq.~\eqref{w4gen}, with $\alpha=0$, also gives the result on
the complementary spinor branch.

For $g\neq0$, Propositions~\ref{uniquespinor}, \ref{vanishingcscalar}, 
and the algebraic constraints in eqs.~\eqref{spec1vectorconstraint} 
leave either $s=0$, or $f=\alpha=0$. 
Thus the fourth torsion form equals
\be
\label{w4exactornot}
\mathcal W_4
=
\begin{cases}
-\frac12\dd\log h
+\frac{s}{2X}\text{Im}(p\alpha)
& g=0\, ,
\\
-\frac12\dd\log h
& g\neq0\, .
\end{cases}
\ee
Hence $\mathcal W_4$ is necessarily exact in the gauged theory,
whereas in the ungauged theory it may not be exact.
Moreover, in the gauged theory explicit dependence on the dilaton can also be 
ruled out, and only that on the running warp factor $h$ remains.

Analysing the exterior derivative of $\Psi$ is considerably more involved.
We begin by extracting its $(2,2)$-type component, which contains the
Nijenhuis tensor, see eqs.~\eqref{dpsi}, \eqref{dpsi22},
\eqref{nijenhuis}. As we shall confirm later, when $s=0$, 
the $(2,2)$ projection of eq.~\eqref{diffU}
immediately gives $(\dd\Psi)^{(2,2)}=0$. 
We therefore focus on the sectors where $s\neq0$ remains possible.
By Propositions \ref{uniquespinor}, \ref{vanishingcscalar}, these consist
of $\chi^A\in\Ker\Gamma_-$ with either $g=0$ or $F=0$, and the
complementary branch $\chi^A\in\Ker\Gamma_+$. 

We first consider the first component of the former branch, 
for which $\Gamma_+\chi=\Gamma_5\chi=\chi$; 
the complementary branch will be treated separately below.
We thus return to the general differential relation in eq.~\eqref{diffU},
allowing for $s^{11}\neq0$, and substitute the general angle-dependent
bilinears and their Hodge duals from eqs.~\eqref{genericscalars},
\eqref{generic3forms}, \eqref{hodgeu11gen}. 
The gauge-dependent term, $A\wedge u^{11}$, has no transverse $(2,2)$ component, 
independently of the gauge choice. 
Taking the transverse $(2,2)$ projection and combining the resulting equation 
with its complex conjugate gives a linear system for $(\dd\Psi)^{(2,2)}$ and
$(\dd\overline\Psi)^{(2,2)}$, whose determinant is $-4c\neq0$.
Solving this system yields
\be
\label{rawdpsi22}
( \dd \Psi )^{(2,2)} 
= p^{-1} \,\frac{s}{c} \left( - X^2 G_{\perp} 
- h^{-1} (\dd K)_\perp \wedge J + 2 g X J \wedge J \right)^{(2,2)} \, .
\ee
This equation therefore applies on the $c\neq0$ angle domain, 
precisely where this linear system is invertible.
We now reduce eq.~\eqref{rawdpsi22} on this spinor branch. We use the
transverse projection of the differential relation for $K$ in
eq.~\eqref{diffK}, the Hodge-dual split in
eq.~\eqref{hodgedualsplit}, the SU$(3)$ decompositions of $F$ and
$G$ in eqs.~\eqref{f2su3}, \eqref{f4su3}, respectively, 
and the SU$(3)$ Hodge duals in eqs.~\eqref{su3hodge}, \eqref{primHodgeJP}.
We then obtain
\begin{align}
\label{ourdpsi22}	
(\dd \Psi)^{(2,2)} = - X^{-1} p^{-1} \, \frac{s}{c}  
\biggl( & \left[ \frac23 c \lambda
+ \frac{1}{4} X^3 \sigma
- 2 g \left( X^2 - X^{-3} \right) \right] J \wedge J 
+ 2 c \tau \wedge J \biggr)\, .
\end{align}

On the complementary spinor branch $\chi^A\in\Ker\Gamma_+$, 
due to Proposition \ref{uniquespinor},
all $F$-dependent terms vanish, whereas the $G$-dependent terms reverse
sign. Using the corresponding algebraic constraints 
on the singlet sector, as in eq.~\eqref{gensingletconstraint}, namely
\be
8 g \Delta + X^3 \sigma = 0\, ,
\ee
we therefore obtain
\be
\label{ourdpsi22kerplus}
\chi^A \in \Ker\Gamma_+ 
\qquad \implies \qquad 
(\dd\Psi)^{(2,2)} = 0\, , 
\ee
so that on this spinor branch a $(2,2)$-component in the torsion 
identically vanishes, thus the Nijenhuis tensor vanishes, 
and the base is integrable. 

Returning to the spinor branch $\chi^A\in\Ker\Gamma_-$, 
and recalling eq.~\eqref{equivw1}, where the first torsion form 
is defined equivalently to eq.~\eqref{w1fromdj}, 
via its extraction from eq.~\eqref{ourdpsi22}, we also have
\begin{align}
\mathcal{W}_1 & 
= \frac{1}{12} J \lrcorner J \lrcorner \dd \Psi \nn\\
& = - X^{-1} p^{-1} \frac{s}{c}
\left( \frac23 c \lambda
+ \frac{1}{4} X^3 \sigma
- 2 g \left( X^2 - X^{-3} \right) \right)
= - \frac16 X^{-1} p^{-1} s \lambda\, , 
\end{align}
which is consistent with eq.~\eqref{w1}, 
and where in the last equality we used the algebraic constraints 
on the singlet sector in eq.~\eqref{gensingletconstraint}.

Next, extracting the second torsion form on this spinor branch, we get
\be
\label{w2}
\chi^A\in\Ker\Gamma_-
\qquad \implies \qquad
\mathcal W_2
= \left(J\lrcorner\dd\Psi\right)^{(1,1)_0}
= -2X^{-1}p^{-1}s\tau\, ,
\ee
where the phase of this second complex torsion form is also fixed 
by the spinor phase $\varphi$.
On the complementary spinor branch
$\chi^A\in\Ker\Gamma_+$, the second torsion form vanishes
identically, namely $\mathcal W_2=0$, 
as follows from eq.~\eqref{ourdpsi22kerplus}.

To complete the derivation of $\mathcal W_2$ for the case $c=0$,
considering first the branch $\chi^A\in\Ker\Gamma_-$, we contract $\mu,\nu$
in eq.~\eqref{diffu11} and take the primitive $(1,1)$ projection:
\be
\left(\nabla^\mu u^{11}_{\mu ij}\right)^{(1,1)_0}
=-\left(\star\dd\star u^{11}\right)^{(1,1)_0}_{ij}
=-2ihspX^{-1}\tau_{ij}\, .
\ee
The first equality follows from eq.~\eqref{Hodgeidentity}
with $A=e^a$, $B=\nabla_a u^{11}$, using
$\nabla_a\star=\star\nabla_a$ and
$\dd=e^a\wedge\nabla_a$ for the Levi--Civita connection.
The second equality follows from eq.~\eqref{diffu11} 
using antisymmetry and type of terms, 
eq.~\eqref{primitiveJPantisym} to drop the $G$-dependent term, and
eq.~\eqref{primitivePJJsq} to reduce the surviving $F$-term.
At $c=0$, eq.~\eqref{hodgeu11gen} gives
$\iota_K\star u^{11}=ih^2p\,\operatorname{Re}(p\Psi)$, and
from \eqref{Lieofs11}, \eqref{LieofT}, we also 
have $(\mathcal L_K\star u^{11})^{(2,2)}_\perp=0$, so
Cartan's formula and eq.~\eqref{dpsi22} therefore give
\begin{align}
	-ih^2p\,\operatorname{Re}(p\mathcal W_2)
	&=\left[J\lrcorner\iota_K\dd\star u^{11}\right]^{(1,1)_0}
	\nn\\
	&=h\left(\star\dd\star u^{11}\right)_\perp^{(1,1)_0}
	=2ih^2spX^{-1}\tau\, ,
\end{align}
where for the first equality differentiation precedes setting $c=0$: 
derivatives of scalar coefficients have types $(3,1)+(1,3)$ and drop out.
The second equality uses eq.~\eqref{Hodgeidentity}
with $A=K$, $B=\dd\star u^{11}$, followed by
eqs.~\eqref{hodgedualsplit}, \eqref{primHodgeP},
\eqref{primitiveJJP}.
Together with $\operatorname{Im}(p\mathcal W_2)=0$
from eqs.~\eqref{complicateu11}, \eqref{diffU}, 
this coincides with eq.~\eqref{w2} at $c=0$.
On the complementary branch $\Ker\Gamma_+$, the $F$-term 
in eq.~\eqref{diffu11} is absent, giving $\mathcal W_2=0$.

Using Proposition~\ref{vanishingcscalar} on the branch $\chi^A\in\Ker\Gamma_-$,  
we then generally get from eq.~\eqref{w2} that
\be
\label{w2gone}
g\neq0
\quad \implies \quad
\mathcal W_2=0\, .
\ee
Together with eq.~\eqref{w1gone}, this implies that the Nijenhuis
tensor vanishes and the base is integrable. 
By contrast, a non-vanishing second torsion form could only arise 
in the ungauged theory as
\be
\label{w2notgone}
\mathcal W_2\neq0
\quad \implies \quad
g=0\, ,
\quad
s, \tau \neq 0\, ,
\quad
\chi^A\in\Ker\Gamma_-\, .
\ee
Together with eq.~\eqref{w1notgone}, we deduce that the base 
is non-integrable as  
\be
\label{noninteg}
N_J\neq0
\quad \iff \quad
g=0\, ,
\quad
s\neq0\, ,
\quad
(\lambda,\tau)\neq(0,0)\, ,
\quad
\chi^A\in\Ker\Gamma_-\, .
\ee

As established in Appendix~\ref{almostcalculus}, the transverse
exterior derivative of a $(3,0)$ form in complex dimension
three has only $(3,1)$ and $(2,2)$ components. Therefore,
independently of integrability,
\be
\label{gendpsi13}
(\dd\Psi)^{(1,3)}=0\, .
\ee
Still on the spinor branch $\chi^A\in\Ker\Gamma_-$,
a seemingly forbidden $(1,3)$ component arises in $\dd\Psi$.
Projecting eq.~\eqref{diffU} and its complex conjugate onto this
component, and combining them without dividing by $c$, gives
\begin{align}
\label{ourdpsi13}	
c (\dd \Psi)^{(1,3)}&%
= - \frac{1}{4} X^{-1} p^{-1} \, s
\Bigl[ 2 h^{-1} s p^{-1} f + i c \alpha - X^3 \beta \Bigr] 
\wedge \overline \Psi
=0 \, ,
\end{align}	
where the last equality follows from
eq.~\eqref{prim2formbr2rgproj}. 
Thus a generic algebraic constraint from the fundamental isotypic component
nullifies this forbidden term.
On the complementary spinor branch
$\chi^A\in\Ker\Gamma_+$, all $F$-dependent terms vanish,
while the corresponding fundamental constraint, with the
$G$-dependent terms reversed, as in eqs.~\eqref{spec1vectorconstraint}, 
implies $\beta=0$.
Hence the forbidden $(1,3)$ component vanishes on that spinor branch as well.

In the gauged configurations with
$F=0$, $\chi^A\in\Ker\Gamma_-$, or $\chi^A\in\Ker\Gamma_+$, 
$\dd\Psi$ also contains the longitudinal component
\be
\label{dpsilong}
\dd\Psi_{||}
=
-4igh^{-1}XK\wedge\Psi
\neq0\, .
\ee
This component was anticipated by Proposition
\ref{enlargedstructure}: in these cases the SU$(3)$ structure is
enlarged to U$(3)$, and $\Psi$ is no longer Killing-invariant, as
shown in eq.~\eqref{LieofpsiwgnoF}, 
independent of the spinor internal orientation.

Finally, for $s\neq0$ in the same sectors, 
we extract the transverse $(3,1)$ projection of eq.~\eqref{diffU}, 
and using eq.~\eqref{gendpsi13} for $(\dd\overline\Psi)^{(3,1)}=0$,
we find on both spinor branches
\be
(\dd\Psi)^{(3,1)}
= \dd \log \left( h^{-1}X \right) \wedge \Psi\, .
\ee

Considering now the complementary sector with $s^{11}=0$, 
we can extract the complete exterior derivative of $\Psi$ 
from the two distinct differential bilinear relations 
in eqs.~\eqref{diffUgauged}, \eqref{diffUnoF}.
On $c=-1$, $\Psi$ denotes the structure form introduced below 
eq.~\eqref{LieofPsi}.
We can now use eqs.~\eqref{u11to6d}, \eqref{dualofu11}, 
and get that in the gauged theory
the complete exterior derivative of $\Psi$ reads:
\be 
g \neq 0 \quad 
\implies %
\dd \Psi = 
\begin{cases}
\bigl[ \dd \log \left( h^{-1} X \right) - 4 i g A_\perp \bigr]
\wedge \Psi & \quad F\neq 0\, , 
\quad \chi^A \in \Ker \Gamma_-\, , \\
\bigl[ \dd \log \left( h^{-1} X \right) - 4 i g	h^{-1} X K \bigr]
\wedge \Psi & \quad F=0 \quad \mathtt{or} \quad \chi^A \in \Ker \Gamma_+\, . 
\end{cases}
\ee
When $s=0$, we now readily see, similar to eqs.~\eqref{w1}, \eqref{w2}, that
$\mathcal W_1=\mathcal W_2=0$, and therefore 
the almost-complex structure is integrable.
We can then identify the fifth torsion form 
from eqs.~\eqref{dpsi}, \eqref{dpsi31},
\eqref{w5complexifiedconj}, 
for the cases $g=0$, or $F\neq0$, $\chi^A \in \Ker\Gamma_-$,
when an SU$(3)$ structure is maintained.

Together with the result obtained above for $s\neq0$, we then get
for the last torsion form in $\dd \Psi$ that
\be
\label{w5wF}
g = 0
\quad \mathtt{or} \quad
\left( F \neq 0\, ,\quad \chi^A \in \Ker\Gamma_- \right)
\quad \implies \quad 
\mathcal{W}_5 = \dd \log \left( h^{-1} X \right) 
- 4 g A_\perp \lrcorner J\, ,
\ee
where we used the $J$ action in eq.~\eqref{1formprojector}, 
as in eq.~\eqref{projector}. 
Note that when $g,F\neq0$, 
the Killing gauge in eqs.~\eqref{gaugechoice}, \eqref{gauged1form}, 
\eqref{remaindergauge}, \eqref{perpAnotime}, is used, yielding 
the second term, whereas when $g = 0$, or $A_\perp=0$, 
the second term drops out, 
and this torsion form becomes exact.
In particular, when $g=0$, whether $s=0$ or not, eq.~\eqref{w5wF} holds as
\be
\mathcal W_5=\dd\log\left(h^{-1}X\right)\, .
\ee
Using eqs.~\eqref{diffs11nognoF}, \eqref{complicates11},
this ungauged expression may be rewritten, when $s\neq0$, as
\be
\mathcal W_5
=
-2\dd\log h
-2\cot(2\theta)\dd\theta\, .
\ee

Let us highlight the independence of intrinsic torsion on the spinor branch. 
Whenever $F\Gamma_+\chi^A\neq0$, only the branch $\chi^A\in\Ker\Gamma_-$ occurs, 
so that no branch ambiguity arises, and any $F\neq0$ is permitted. 
The complementary configurations for which $F\Gamma_+\chi^A=0$ are
\be
\label{nullcartanflux}
\begin{cases}
F=0\, ,& \chi^A\in\Ker\Gamma_-\, ,
\\
F=\tau\, ,& \chi^A\in\Ker\Gamma_+\, .
\end{cases}
\ee
On both configurations one has
\be
\mathcal W_1=\mathcal W_2=0\, ,
\qquad
\mathcal W_4=-\frac12\dd\log h\, ,
\ee
whereas the third torsion form only retains a relative branch sign,
\be
\mathcal W_3
=
- \eta h^{-1}X^2
\left(\rho+\overline\rho\right)\, .
\ee
In the ungauged theory the fifth torsion form is likewise identical
on the two configurations,
\be
\mathcal W_5=-2\dd\log h=4\mathcal W_4\, .
\ee
With gauging, both configurations instead admit a U$(3)$
structure and no intrinsic $\mathcal W_5$. 
Thus the two complementary $F\Gamma_+\chi^A=0$ configurations obey the same
intrinsic-torsion restrictions. The explicit $\eta$ in
eq.~\eqref{w3} only records the branch-adapted SU$(3)$ decomposition,
and does not define an additional bosonic torsion class.

The restrictions obtained in eqs.~\eqref{w1gone}, \eqref{w2gone}, \eqref{w3},
\eqref{w4exactornot}, \eqref{w5wF},
define special classes of SU$(3)$ structures.
In particular, for gauged configurations that retain an SU$(3)$
structure, if we additionally impose $\rho=0$, and hence
$\mathcal W_3=0$, the base is globally conformally K\"ahler.
The remaining intrinsic torsion lies in the class
$\mathcal W_4\oplus\mathcal W_5$, with the holomorphic volume
form twisted by the gauge connection when $A_\perp\neq0$.
In the gauged configurations of Proposition~\ref{enlargedstructure}, 
by contrast, the geometric structure is enlarged from SU$(3)$ to U$(3)$, 
as reviewed in Appendix~\ref{gstructure6dwg}.
These geometric restrictions are independent of the spinor branch,
survive on the complete solution space, and reproduce those of the
minimal theory, as we will discuss in Section~\ref{solanalysis}.

\section{Classifying Supersymmetric Solutions}
\label{classifyingbps}

We begin by collecting general results that apply across 
all supersymmetric solutions in the equal-charge sector of 
the maximal gauged theory considered here.

The timelike Killing vector and metric retain the forms fixed in
eqs.~\eqref{Killingtime}--\eqref{Killingdual}. The algebraic and
differential bilinear relations, together with the Bianchi
identities, imply
\be
\label{commonKillingdata}
\iota_K\dd X=\mathcal L_KX=0\, ,
\quad
\mathcal L_Kh=\mathcal L_K\theta=0\, ,
\quad
\mathcal L_KF=\mathcal L_KG=0\, ,
\quad
\mathcal L_Kg_{\mu\nu}=\mathcal L_KJ=0\, ,
\ee
with the spinor angles and geometric structure as defined in 
Section~\ref{gstructure}. Thus the dilaton gradient always has the SU$(3)$ 
decomposition in eq.~\eqref{dxsu3}. In particular, 
$K$ also preserves the spinor angle, 
and the almost-Hermitian structure of the base. 
Killing invariance of the gauge potentials themselves requires 
a choice of gauge representative.

Proposition~\ref{codilaton} and Corollary~\ref{gaugedvev} give
\begin{align}
\label{commonvacuumdata}
F = G = 0
&\qquad\implies\qquad
\dd X=0\, ,
\\
g\neq0\, ,\quad F=G=0
&\qquad\implies\qquad
X=1\, .
\end{align}
Thus, a running dilaton must be supported by at least one non-vanishing flux.

The internal-orientation data separate into the two complementary cases
\be
\begin{cases}
s=0
& 
\qquad
\theta=0, \frac{\pi}{2}\, ,
\qquad
\varphi \ \text{unconstrained}\, ,
\\
s\neq0
& 
\qquad
2\cot(2\theta)\dd\theta = - \dd\log(Xh)\, ,
\quad
\dd\varphi=0\, .
\end{cases}
\ee
On a connected regular region, the complex scalar bilinear that is
proportional to $s$, as in eq.~\eqref{complicates11}, 
therefore either vanishes identically or is nowhere zero. 
The remaining class-dependent restrictions on the spinor internal angles are
derived below from the complete KSE and bosonic systems.

The primitive components of the field strengths, $\tau$, $\rho$ and $\kappa$, 
defined in Appendix~\ref{su3projections}, are left
unconstrained by the algebraic KSEs, as shown in 
Propositions~\ref{unconst2formprim}, \ref{unconst3formprim}. 
After incorporating the differential bilinear relations, 
the components $\rho$ and $\kappa$ of $G$ remain undetermined, 
respectively sourcing the torsion form $\mathcal W_3$ 
and the primitive twist $(\dd \omega)^{(1,1)_0}$. 
The component $\tau$ of the Cartan flux, however, 
is further restricted in the gauged 
$\Ker\Gamma_-$ spinor branch.
By Proposition~\ref{vanishingcscalar},
\be
\chi^A\in\Ker\Gamma_-\, ,\quad
g\neq0\, ,\quad
s^{11}\neq0
\qquad\Longrightarrow\qquad
F=0
\quad\Longrightarrow\quad
\tau=0\, .
\ee 
The survival and further differential restrictions of the primitive components 
must then be determined from the bosonic field equations.

We organize these general results into a classification of bosonic
supersymmetric solutions, rather than of the Killing spinors themselves.
The decomposition in eq.~\eqref{directsumofkers} into two USp$(4)$ spinor 
components is retained only insofar as it determines the allowed bosonic
configurations, and the possible fraction of preserved supersymmetry.
As noted in Section~\ref{spinors}, the equal-charge sector of the
maximal theory defined in eqs.~\eqref{eqchargelim},
\eqref{eqchargescalar}, truncates to the minimal U$(1)$-gauged theory
under the restrictions in eqs.~\eqref{Cartanrestrict}, \eqref{maxtominspinor}, 
\eqref{maxtominfermions}, and the map in eq.~\eqref{Dibitettomap}. 
The classification is therefore naturally organized by the two independent 
bosonic distinctions of the presence or absence of gauging, and whether the 
Cartan flux operator, $F\Gamma_+$, vanishes on the Killing spinor.

The Cartan flux operator gives the exhaustive distinction
\be
\label{Cartanfluxclasses}
\begin{cases}
F\Gamma_+\chi^A\neq0
& \qquad \text{Active Cartan flux}\, ,
\\
F\Gamma_+\chi^A=0
& \qquad \text{Null Cartan flux}\, .
\end{cases}
\ee
Combining this criterion with the presence or absence of gauging
gives the four classes
\be
\mathcal C_{\rm ug}^{\rm active}\, ,
\qquad
\mathcal C_{\rm ug}^{\rm null}\, ,
\qquad
\mathcal C_{\rm g}^{\rm null}\, ,
\qquad
\mathcal C_{\rm g}^{\rm active}\, ,
\ee
where the labels ${\rm ug}$ and ${\rm g}$ denote $g=0$ and
$g\neq0$, respectively. These solution classes describe the distinct bosonic 
sectors selected by the complete KSEs and bosonic field equations. 
They are summarized in Tables~\ref{tab:preliminary-classes-ug},
\ref{tab:preliminary-classes-g}, and analysed in the same order in
the following sections.

\afterpage{
\begin{sidewaystable}[ht]
\centering
\footnotesize
\setlength{\tabcolsep}{4.0pt}
\renewcommand{\arraystretch}{1.5}
\begin{tabular}
{|
p{0.04\linewidth}| %
p{0.08\linewidth}| %
p{0.23\linewidth}|
p{0.36\linewidth}|
p{0.20\linewidth}| %
@{}}
\hline
\textbf{Class}
&
\textbf{Definition}
&
\textbf{Constraints on $F$, $A$ and $G$, $B$}
& 
\textbf{Constraints on $\theta$, $\varphi$, $X$, $h$, $\omega$ and $g_{\mu\nu}$}
&
\textbf{Structure and Torsion}
\\
\hline
$\mathcal C_{\rm ug}^{\rm active}$
&
$g=0$

$F \Gamma_+ \chi^A \neq 0$
&
$u=0,\qquad \zeta=-ip^{-1}v$

$hs\lambda=-X^3v,\qquad 2c\lambda=-X^3\sigma$

$f=5hcD-X^3cq+\tfrac12X^3hsp\,\beta$

$5hsD-X^3sq
=-\tfrac{i}{2}hp(\alpha+icX^3\beta)$

$F=-h^{-2}K\wedge\dd(Xhc)+F_\perp$

$A=ch^{-1}XK+A_\perp$

$\dd A_\perp=F_\perp+Xhc\,\dd\omega$

$G=h^{-2}K\wedge\dd(hX^{-2}J)+G_\perp$

$\star_6G_\perp
=-hX^{-2}\dd\omega-2cX^{-3}F_\perp$

$B=-h^{-1}X^{-2}K\wedge J+B_\perp$

$\dd B_\perp
=G_\perp-hX^{-2}J\wedge\dd\omega$

$\boxed{\boldsymbol{G=0\quad\Longrightarrow\quad F\wedge F=0}}$

$\boxed{\boldsymbol{G=0\quad\not\!\!\Longrightarrow\quad F=0}}$
&
$\boxed{\boldsymbol{s\neq0:
		\dd(Xhs)=0,\quad\dd\varphi=0}}$

$\boxed{\boldsymbol{s=0:
		\theta=0,\tfrac{\pi}{2}}},
\quad
\varphi\ {\rm unconstrained}$

$J\lrcorner\dd\omega=-ch^{-1}X^{-1}\lambda$

$\boxed{\boldsymbol{(\dd\omega)^{(1,1)_0}
		=h^{-1}X^2\kappa-2ch^{-1}X^{-1}\tau}}$

$\begin{aligned}
	(\dd\omega)^{(2,0)+(0,2)}
	={}&-\frac{c}{4hX}
	\operatorname{Re}\!\left[
	\alpha\lrcorner\overline\Psi
	\right]
	+\frac{s}{2h^2X}
	\operatorname{Im}\!\left[
	p^{-1}f\lrcorner\overline\Psi
	\right]
\end{aligned}$

$\dd\left(F_\perp+Xhc\,\dd\omega\right)=0$

$\dd\left(G_\perp-hX^{-2}J\wedge\dd\omega\right)=0$

$(E_B)_{\perp}^{(2,2)}=0$ in eq.~\eqref{ugnpGEOM}

$s=0:\quad E_X=0$ in eq.~\eqref{ugnpscalarEOM}

&
\textbf{Non-/Integrable SU$(3)$}

$\mathcal W_1
=-\tfrac16X^{-1}p^{-1}\boldsymbol{s\lambda}$

$\mathcal W_2
=-2X^{-1}p^{-1}\boldsymbol{s\tau}$

\fbox{\textbf{Integrable iff}}

\qquad \quad $\boxed{\boldsymbol{s=0$ or $\lambda=\tau=0}}$

$\mathcal W_3
=-h^{-1}X^2(\rho+\overline\rho)$

$\mathcal W_4
=-\tfrac12\dd\log h
\boxed{+\frac{\boldsymbol{s}}{2X}\text{Im}(p\boldsymbol{\alpha)}}$

$\boxed{\boldsymbol{\mathcal W_5=\dd\log(h^{-1}X)}}$

For $s\neq0$,

$\mathcal W_5
=-2\dd\log h-2(c/s)\dd\theta$
\\
\hline
$\mathcal C_{\rm ug}^{\rm null}$

&

$g=0$

$F \Gamma_+ \chi^A = 0$

&

$\boxed{\boldsymbol{f=\lambda=\alpha=\zeta=\sigma=\beta=0}}$

${\boldsymbol{\eta=\pm1}},
\qquad
\eta=+1\implies\tau=0$

$q=5\eta hX^{-3}D
=-5\eta X^{-2}(\dd h)^{(1,0)}$

$F=\tau,\qquad\dd\tau=0$

$A=A_\perp,\qquad\dd A_\perp=\tau$

$A=0$ may be chosen at $\tau=0$

$G=-\eta\,\dd\left(
h^{-1}X^{-2}K\wedge J
\right)$

$B=-\eta h^{-1}X^{-2}K\wedge J+B_\perp$

$\dd B_\perp=0$

$\boxed{\boldsymbol{G=0\quad\Longrightarrow\quad F=0}}$

&

$\boxed{\boldsymbol{\dd\theta=0,\qquad s\dd\varphi=0}}$

$\boxed{\boldsymbol{Xh=\mathrm{const}}>0}$

$\boxed{\boldsymbol{
		(\dd\omega)^{(1,1)_0}
		=\eta h^{-1}X^2\kappa}}$

$(E_B)_{\perp}^{(2,2)}=0$ in eq.~\eqref{ugpGEOM}

&

\textbf{Integrable SU$(3)$}

$\mathcal W_1=\mathcal W_2=0$

$\mathcal W_3=
-\eta h^{-1}X^2
(\rho+\overline\rho)$

$\mathcal W_4=-\tfrac12\dd\log h$

$\boxed{\boldsymbol{
		\mathcal W_5=-2\dd\log h}
	=4\mathcal W_4}$
\\
\hline
\end{tabular}

\caption{
Solution classes in the ungauged theory.
The SU$(3)$ irreducible components of the fluxes and dilaton gradient are
defined in Appendix~\ref{su3projections}.
The spinor-angle parameters $s$, $c$ and $p$, are defined 
in Section~\ref{gstructure}. 
In the null classes, $\eta=\pm1$ labels the two spinor branches, 
as in eqs.~\eqref{etadef}, \eqref{eigenGamma5}, and tracks the
relative signs of $G$-dependent terms.
Generally, $\kappa$ and $\rho$ of $G$ 
remain unconstrained by the bilinear relations.
The fourth column lists the remaining Bianchi conditions and the
independent EOMs of Proposition~\ref{independentbosonic}, with
$E_B$ and $E_X$ defined in eqs.~\eqref{intresiduals}.
}
\label{tab:preliminary-classes-ug}
\end{sidewaystable}

\begin{sidewaystable}[ht]
\centering
\footnotesize
\setlength{\tabcolsep}{4.0pt} 
\renewcommand{\arraystretch}{1.5}
\begin{tabular}{
|
p{0.04\linewidth}| %
p{0.08\linewidth}| %
p{0.32\linewidth}|
p{0.26\linewidth}|
p{0.20\linewidth}| %
@{}}
\hline
\textbf{Class}
&
\textbf{Definition}
&
\textbf{Constraints on $F$, $A$ and $G$, $B$}
& 
\textbf{Constraints on $\theta$, $\varphi$, $X$, $h$, $\omega$ and $g_{\mu\nu}$}
&
\textbf{Structure and Torsion}
\\
\hline
$\mathcal C_{\rm g}^{\rm null}$

&

$g\neq0$

$F \Gamma_+ \chi^A = 0$

&

$\boxed{\boldsymbol{f=\lambda=\alpha=\zeta=\beta=0}}$

${\boldsymbol{\eta=\pm1}},
\qquad
\eta=+1\implies\tau=0$

$q=5\eta hX^{-3}D
=-5\eta X^{-2}(\dd h)^{(1,0)}$

$\boxed{\boldsymbol{
		\sigma=8\eta gX^{-3}\Delta(X)}}$

$F=\tau,\qquad\dd\tau=0$

$A=A_\perp,\qquad\dd A_\perp=\tau$

$A=0$ may be chosen at $\tau=0$

$G=-\eta\,\dd\left(
h^{-1}X^{-2}K\wedge J
\right)
+2\eta gX^{-1}J\wedge J$

$\eta B
=-h^{-1}X^{-2}K\wedge J
-\tfrac{X^2}{2g}K\wedge\dd\omega+B_\perp$

$\dd B_\perp
=-\tfrac{(Xh)^2}{2g}
\dd\omega\wedge\dd\omega
+2gX^{-1}J\wedge J$

$2\eta gB_\perp
=h^{-1}X^4\star_6(\iota_KG)
+A_\perp\wedge\tau
-(\dd C_{(2)})_\perp$

$\boxed{\boldsymbol{
		G=0\quad\Longrightarrow\quad F=0}}$

&

$\boxed{\boldsymbol{
		\dd\theta=0,\qquad s\dd\varphi=0}}$

$\boxed{\boldsymbol{Xh=\mathrm{const}}>0}$

$J\lrcorner\dd\omega
=-\frac{2g}{h}(2X+X^{-4})$

$\boxed{\boldsymbol{
		(\dd\omega)^{(1,1)_0}
		=\eta h^{-1}X^2\kappa}}$

$(E_B)_{\perp}^{(2,2)}=0$ in eq.~\eqref{gpGEOM}

&

\textbf{Integrable U$(3)$}

$\mathcal W_1=\mathcal W_2=0$

$\mathcal W_3
=-\eta h^{-1}X^2
(\rho+\overline\rho)$

$\mathcal W_4=-\tfrac12\dd\log h$

\fbox{\textbf{No intrinsic}
	$\boldsymbol{\mathcal W_5}$}

$\boldsymbol{
	\mathcal L_K\Psi
	=4ig\underbrace{(Xh)}_{\text{const}}\Psi\neq0}$

\\
\hline

$\mathcal C_{\rm g}^{\rm active}$

&

$g\neq0$

$F \Gamma_+ \chi^A \neq 0$

&

$\boxed{\boldsymbol{\zeta=u=v=0}}$

$\boldsymbol{c=\pm1}$,
\quad
$2c\lambda=8g\Delta(X)-X^3\sigma$

$cf=5hD-X^3q$

$\boxed{\boldsymbol{\alpha=-icX^3\beta}}$

$F=-ch^{-2}K\wedge\dd(Xh)+F_\perp$

$A=ch^{-1}XK+A_\perp$

$\dd A_\perp
=-\tau+cX^3\kappa
-\tfrac{2cg}{3}(2X^2+X^{-3})J$

$G=h^{-2}K\wedge\dd(hX^{-2}J)+G_\perp$

$\star_6G_\perp
=-hX^{-2}\dd\omega
-2cX^{-3}F_\perp
-2gX^{-6}J$

$\begin{aligned}
	B={}&-h^{-1}X^{-2}K\wedge J
	-\tfrac{c}{2g}h^{-2}K\wedge
	\dd(XhA_\perp)
	+B_\perp
\end{aligned}$

$\begin{aligned}
	\dd B_\perp
	={}&G_\perp-hX^{-2}J\wedge\dd\omega
	-\tfrac{c}{2g}\dd\omega\wedge
	\dd(XhA_\perp)
\end{aligned}$

$\boxed{\boldsymbol{
		G=0\quad\Longrightarrow\quad F\wedge F=0}}$

$\boxed{\boldsymbol{
		G=0\quad\not\!\!\Longrightarrow\quad F=0}}$

&

$\boxed{\boldsymbol{
		\theta=0,\tfrac{\pi}{2}}},
\qquad
\varphi\ \text{unconstrained}$

$\begin{aligned}
	J\lrcorner\dd\omega
	=-\frac1h\left[
	cX^{-1}\lambda
	+2g(2X+X^{-4})
	\right]
\end{aligned}$

$\boxed{\boldsymbol{
		(\dd\omega)^{(1,1)_0}
		=h^{-1}X^2\kappa
		-2ch^{-1}X^{-1}\tau}}$

$\boxed{\boldsymbol{\begin{aligned}
			(\dd\omega)^{(2,0)+(0,2)}
			=-\frac{c}{4hX}
			\operatorname{Re}\!\left[
			\alpha\lrcorner\overline\Psi
			\right]
\end{aligned}}}$

$\dd\left(F_\perp+cXh\,\dd\omega\right)=0$

$\dd\left(G_\perp-hX^{-2}J\wedge\dd\omega\right)=0$

$(E_B)_{\perp}^{(2,2)}=0$ in eq.~\eqref{gnpGEOM}

$E_X=0$ in eq.~\eqref{gnpscalarEOM}

&

\textbf{Integrable SU$(3)$}

$\mathcal W_1=\mathcal W_2=0$

$\mathcal W_3
=-h^{-1}X^2(\rho+\overline\rho)$

$\mathcal W_4=-\tfrac12\dd\log h$

$\boxed{\mathcal W_5
	=\dd\log(h^{-1}X)
	\boldsymbol{-4gA_\perp\lrcorner J}}$

\fbox{\textbf{For $A$, $B$ in Killing gauges}}

$\boxed{\boldsymbol{\mathcal L_K\Psi=0}}$

\\
\hline
\end{tabular}

\caption{
Solution classes in the gauged theory. Notation and EOM conventions
are as in Table~\ref{tab:preliminary-classes-ug}.
}
\label{tab:preliminary-classes-g}
\end{sidewaystable}
\clearpage
}

The active/null classification is not identical to separating $F$
into primitive and non-primitive parts with respect to the SU$(3)$
structure induced by the Killing spinor. Recalling the definition
of $F^{\rm np}$ in eq.~\eqref{nonprimflux}, we have
\be
\label{primitivecriterion}
F^{\rm np}=0
\quad\iff\quad
f=\lambda=\alpha=0
\quad\iff\quad
F=\tau\, .
\ee
By Proposition~\ref{uniquespinor} and
Corollary~\ref{nonprimspinor}, a non-primitive Cartan flux 
is possible only in the active classes.
Conversely, the two null Cartan-flux configurations are those of
eq.~\eqref{nullcartanflux}, namely $F=0$ on $\Ker\Gamma_-$ and
$F=\tau$ on $\Ker\Gamma_+$. We therefore have the strict
implications
\be
F^{\rm np}\neq0
\quad\Longrightarrow\quad
\text{Active Cartan flux}\, ,
\qquad
F=0
\quad\Longrightarrow\quad
\text{Null Cartan flux}\, .
\ee
These implications are sufficient to uniquely assign every solution proposed in 
the literature to one of the four classes, as we show in Section~\ref{revsol}.

The four classes above also directly provide the classification framework
for timelike supersymmetric solutions of the minimal U$(1)$-gauged theory 
when the maximal-supersymmetry coupling relation is imposed. 
The complete bosonic and fermionic truncation
established in Sections~\ref{bostheory}, \ref{spinors}, 
identifies the minimal SM doublet with 
the retained maximal SM spinors in $\Ker\Gamma_-$, as in
eqs.~\eqref{maxtominspinor}, \eqref{maxtominfermions}.
On this spinor subspace the Cartan flux operator is null precisely when $F=0$. 
Hence the ungauged and gauged active classes contain all minimal-theory
configurations with $F\neq0$, whether $F$ is primitive or
non-primitive, while the corresponding null classes are their
$F=0$ specializations. Accordingly, Tables
\ref{tab:preliminary-classes-ug}, \ref{tab:preliminary-classes-g}
apply directly to the minimal theory: in their null rows one retains
the $\eta=+1$, $\Ker\Gamma_-$ branch, for which $\tau=0$.
The complete KSEs and bosonic systems in all four classes then coincide
under the maximal-to-minimal map in eq.~\eqref{Dibitettomap}. 

The equal-charge maximal theory contains in addition the orthogonal USp$(4)$ 
spinor components that are removed by the fermionic truncation. 
As shown in eq.~\eqref{maxtominspinorcomplement}, for the fixed bosonic 
configuration $F=0$, the $\Ker\Gamma_+$ Killing-spinor system 
is then identified with the minimal one after a sign reversal of the maximal 
$3$-form potential. 
The $F=0$ null classes therefore admit a second
maximal-theory spinor origin, which can enhance the supersymmetry of
the embedded configuration in the same bosonic solution class. 
Indeed, by eqs.~\eqref{directsumofkers},
\eqref{projections},
\be
\Ker\Gamma_-=\operatorname{Im}P_+\, ,
\qquad
\Ker\Gamma_+=\operatorname{Im}P_-\, ,
\ee
where $P_\pm$ are the complementary rank-two projectors.
Killing spinors in these two subspaces are independent, and their
simultaneous presence enlarges the internal Killing-spinor space.
The absolute BPS fraction remains dependent on the independent
Spin$(1,6)$ Killing-spinor solutions and projections. Moreover, the
SU$(3)$ structure is defined by the chosen Killing spinor, so
complementary spinors need not induce the same $K$ and $J$. 
If they do, however, applying eq.~\eqref{diffT} with both $\Gamma_5$ 
eigenvalues implies $\iota_KG=0$. Hence $\rho=0$, and therefore
$\mathcal W_3\sim\operatorname{Re}\rho=0$.

There remains a minor peculiarity specific to the equal-charge maximal theory. 
From eq.~\eqref{nullcartanflux} a non-zero primitive field strength,
\be
F=\tau\neq0\, ,
\ee
belongs to an active class when the Killing spinor lies in
$\Ker\Gamma_-$, and is then included in the minimal-theory
classification described above. For a Killing spinor in
$\Ker\Gamma_+$, however, the same primitive field strength belongs
to a null class. In this case the variation of the independent
antisymmetric spin-$\tfrac12$ multiplet $\hat\lambda_-$ restricts
$F$ to its primitive component, while the $F$-dependent terms in
the symmetric spin-$\tfrac12$ and gravitino KSEs vanish. The
KSE-derived twist, intrinsic torsion and structure consequently
have the same form as for $F=0$, although the non-zero $\tau$
continues to enter the bosonic equations. Those integrated equations 
give $G=0\Longrightarrow F=0$ in both null classes, so this 
possibility would necessarily satisfy
\be
\chi^A\in\Ker\Gamma_+\, ,
\qquad
F=\tau\neq0\, ,
\qquad
G\neq0\, ,
\ee
where the Bianchi identity additionally requires $\dd\tau=0$.
This $\Ker\Gamma_+$ supersymmetric system is not mapped to the
minimal theory, and no known solution realizes it. 
Yet, the combined KSE and bosonic systems have not produced a contradiction,
and this unresolved possibility is indeed accommodated 
in the $\eta=-1$ case of the maximal null classes. 

The case $s=0$ is the only value of internal spinor orientation shared 
by all four solution classes, and hence the only common setting for a parallel
comparison of the intrinsic torsion of their bases. In this case,
$\mathcal W_1=\mathcal W_2=0$, $\mathcal W_3$ is controlled by
$\rho$, and
\be
\mathcal W_4=-\frac12\dd\log h\, .
\ee
The surviving torsion therefore admits the geometric refinements
\be
\dd h=0
\quad\Longleftrightarrow\quad
\text{Balanced}\, ;
\qquad
\rho=0
\quad\Longrightarrow\quad
\dd\!\left(h^{1/2}J\right)=0\, ,
\ee
where the second condition gives a globally conformally-K\"ahler
base. Imposing both $\rho=0$ and $\dd h=0$ makes the base K\"ahler.

Since the preceding torsion forms share the same structural form
throughout, the essential distinction among the four classes 
is solely carried by the fifth torsion form:
\be
\label{s0w5comparison}
\mathcal W_5
=
\begin{cases}
\dd\log(h^{-1}X)&
\qquad
\mathcal C_{\rm ug}^{\rm active}\, ,
\\
-2\dd\log h&
\qquad
\mathcal C_{\rm ug}^{\rm null}
\, ,
\\
\text{No } \mathcal W_5&
\qquad
\mathcal C_{\rm g}^{\rm null}\, ,
\\
\dd \log(h^{-1}X)-4gA_\perp\lrcorner J&
\qquad
\mathcal C_{\rm g}^{\rm active}\, .
\end{cases}
\ee
In the ungauged sector, the active and null expressions coincide
on the primitive specialization, for which the KSEs give $\dd(Xh)=0$.
In the gauged sector, the active class carries an SU$(3)$ structure 
with an intrinsic fifth torsion form that is gauge-dependent,
whereas the null class carries only a U$(3)$ structure. 

The fifth torsion form also controls the departure from a
conformally Calabi--Yau structure. 
Under a conformal rescaling $g_{ij}\to e^{2f}g_{ij}$, 
the SU$(3)$ normalization in eq.~\eqref{normal} requires $J\to e^{2f}J$ 
and $\Psi\to e^{3f}\Psi$. Eqs.~\eqref{dJ}, \eqref{dpsireal} then give
\be
\mathcal W_4\to\mathcal W_4+2\dd f\, ,
\qquad
\mathcal W_5\to\mathcal W_5+3\dd f\, .
\ee
Thus, a single conformal rescaling can eliminate both torsion forms when
\be
2\mathcal W_5-3\mathcal W_4=0\, .
\ee
Since $\mathcal W_4$ is common to all classes on the $s=0$ locus,
their respective conformally Calabi--Yau conditions are therefore
$\rho=0$, together with
\be
\label{s0conformalCY}
\begin{cases}
4 \dd\log X = \dd\log h&
\qquad
\mathcal C_{\rm ug}^{\rm active}\, ,
\\
\dd h=0&
\qquad
\mathcal C_{\rm ug}^{\rm null}
\, ,
\\
4 \dd\log X = \dd\log h + 16 g A_\perp\lrcorner J&
\qquad
\mathcal C_{\rm g}^{\rm active}
\, .
\end{cases}
\ee
The ungauged null class therefore admits no non-trivial conformal
rescaling to a Calabi--Yau structure: its torsion must simply
vanish. For the gauged null class, the corresponding question is not
intrinsic to its U$(3)$ structure. 
Further restrictions on the surviving torsion 
select such special geometries, which refine the four solution classes, 
rather than define additional ones.

\subsection{Classification of Solutions}
\label{solanalysis}

We now determine which bosonic EOMs are necessary 
to supplement the KSEs and Bianchi identities. 
In the equal-charge sector, the two Abelian EOMs coincide and
the second scalar EOM vanishes identically. 
We define the scalar, Abelian and $3$-form EOM residuals, respectively, by
\begin{subequations}
\label{intresiduals}
\begin{align}
\label{intscalarresidual}
E_X
&=5\nabla^2\log X+\frac12X^{-2}F_{\mu\nu}F^{\mu\nu}
-\frac1{24}X^4G_{\mu\nu\rho\sigma}G^{\mu\nu\rho\sigma}
-g^2X\mathcal V'(X)\, ,\\
\label{intvectorresidual}
E_A
&=X\star\left[\dd(X^{-2}\star F)-F\wedge G\right]\, ,\\
\label{intformresidual}
E_B
&=\dd(X^4\star G)+2gG-F\wedge F\, .
\end{align}
\end{subequations}
We denote by $\mathcal E_{\mu\nu}$ the trace-reversed Einstein residual
of eq.~\eqref{EinsEOM} in the same sector, with curvature term $R_{\mu\nu}$.

\begin{prop}[Independent Bosonic Equations]
\label{independentbosonic}
Consider a bosonic configuration in the equal-charge sector with $X,h>0$, 
satisfying the KSEs and Bianchi identities. 
The only independent bosonic EOMs that must be imposed are
\be
\label{independentEOM}
(E_{B})_{\perp}^{(2,2)}=0\, ,
\ee
and in the active classes, i.e.~$F\Gamma_+\chi^A\neq0$, when $s=0$, also
\be
E_X=0\, .
\ee
All Abelian and Einstein equations, and the longitudinal and transverse
$(3,1)+(1,3)$ components of the $3$-form equation, then follow.
The scalar equation also follows whenever it is not independently imposed.
\end{prop}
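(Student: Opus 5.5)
The natural route is the standard KSE integrability argument, as in \cite{Gauntlett:2002nw,Gauntlett:2003fk}, adapted to the equal-charge system of eqs.~\eqref{Killinggravitineqcharge}, \eqref{Killinghalfspinequal}. Write the gravitino KSE as $\mathcal D_\mu\epsilon=0$ and the symmetric spin-$\tfrac12$ KSE as $\mathcal A\epsilon=0$. We will compute the integrability conditions $\gamma^\nu[\mathcal D_\mu,\mathcal D_\nu]\epsilon=0$ and $[\mathcal D_\mu,\mathcal A]\epsilon$-type identities, using the Bianchi identities of eq.~\eqref{Bianchi} throughout. The expected outcome has two parts:
\begin{align}
\gamma^\nu[\mathcal D_\mu,\mathcal D_\nu]\epsilon
&=\Big[\mathcal E_{\mu\nu}\gamma^\nu+\mathcal Q_\mu(E_A,E_B)\Big]\epsilon\, ,\nn\\
\gamma^\mu\mathcal D_\mu(\mathcal A\epsilon)&\supset E_X\,\epsilon+\mathcal Q'(E_A,E_B)\epsilon\, .
\end{align}
Here $\mathcal Q_\mu$ and $\mathcal Q'$ are Clifford-valued linear in the residuals of eqs.~\eqref{intresiduals}, carrying $\Gamma_+$ for $E_A$ and $\Gamma_5$ for $E_B$. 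Obtaining these requires the precise coefficients $-m$, the map \eqref{BtoGmap} and $m=g$ fixed in Section~\ref{spinors}. We impose $m=g$ from the start, since only then do the potential terms cancel. We then contract these identities with $\overline{\chi^i}$ and insertions of $\gamma$-matrices of rank $0$--$3$, exactly as in Section~\ref{Killingresultsalg}. This turns them into bilinear identities linear in $\mathcal E_{\mu\nu}$, $E_A$, $E_B$ and $E_X$, with coefficients built from $\{h,K,J,\Psi,\theta,\varphi\}$.

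Next we solve these linear identities component by component, using the SU$(3)$ decomposition and Schur's lemma of Section~\ref{Killingresultsalg}. First, the $\mu=0$ contraction with $\overline\chi$ gives $\mathcal E_{00}\propto$ linear in $E_X$ and in $(\iota_K E_B)$-type components. Second, the timelike nature of $K$, with $K^2=-h^2\neq0$, then lets the identity $\mathcal E_{\mu\nu}K^\nu\gamma^\mu\epsilon=\dots$ determine $\mathcal E_{0i}$ and $\mathcal E_{ij}$ in terms of the remaining residuals. The argument is the five-dimensional one: a Clifford element annihilating a timelike spinor with $\gamma_0\chi^i=-ih^{-1}s^i{}_j\chi^j$ must vanish componentwise. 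Third, $E_A$ is handled as follows. We take $\dd$ of the Killing-gauge relation \eqref{diffS3} for $A$ and of \eqref{diffK}, and combine them with the $\Gamma_+$ algebraic relations at rank~1 and~2. This should give $E_A$ as a linear combination of $E_X$ and of $\iota_K E_B$. Fourth, the longitudinal part $\iota_K E_B$ and the $(3,1)+(1,3)$ transverse parts follow from $\dd$ of eq.~\eqref{diffT} together with eq.~\eqref{diffK}. Concretely, $\dd(X^4\star G)$ contracted with $K$ reduces through \eqref{selfdual}, \eqref{KCScontraction} and the differential relations to an identity. The $(3,1)$ part arises from $\dd$ of eq.~\eqref{diffU}, i.e.\ from the $(3,1)$ projection that already produced $\mathcal W_5$. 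By form type, the only component left unconstrained is $(E_B)^{(2,2)}_\perp$, which is why it must be imposed.

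For the scalar equation we distinguish cases, and we expect this to be the main obstacle. When $s\neq0$, the rank-$0$ differential identity $\dd(Xs^{11})=0$ differentiated once more, i.e.\ $\nabla^\mu\nabla_\mu(Xs^{11})=0$, should reproduce $E_X$ up to $E_B$ terms. Substituting the algebraic singlet constraints \eqref{gensingletconstraint}, we then expect $s\,E_X$ to equal a combination of $E_B$ components, so $E_X$ follows. In the null classes, $Xh=\mathrm{const}$ together with $\dd(XS_3)=0$ gives the same conclusion through $\nabla^2(Xhc)$. In the active classes at $s=0$, however, eq.~\eqref{diffS3} contains $\iota_KF$. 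Differentiating it yields $\nabla^2(Xh)$ in terms of $\dd\star F$, so $E_X$ enters only through the combination already identified with $E_A$, and it cannot be separated. We would verify this by checking that the coefficient of $E_X$ in the rank-$0$ identity is proportional to $s$ (through $s^{11}$) and to $\delta$-terms that vanish when $F\Gamma_+\chi\neq0$. That failure of separability is precisely why $E_X=0$ must be imposed there.

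The hardest step will be establishing the clean linear form of the integrability identity with the $G$-dependent, five-index terms and the gauging terms. In particular, we must show that $E_B$ appears only through $\Gamma_5$-weighted pieces whose $(2,2)$ transverse component is annihilated by every bilinear contraction. We would first do this on the $s=0$, $\Ker\Gamma_-$ spinor \eqref{simpleisdeep}, where $u^{11}=-h\overline\Psi$ and the projections are simplest. We would then extend to general $\theta,\varphi$ by the SU$(2)$ rotation noted after eq.~\eqref{fullsmspinors}, and to $\Ker\Gamma_+$ by the sign reversals described before Section~\ref{su3singlet}.
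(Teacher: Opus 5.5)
You set up the same framework as the paper: integrability identities linear in $\mathcal E_{\mu\nu},E_A,E_B,E_X$, which the paper records as eqs.~\eqref{intspinorialidentities}. The component-by-component extraction, however, is where the proof actually lives, and your proposed mechanisms there cannot deliver the residuals you claim.
\begin{itemize}
\item Differentiating the first-order bilinear relations gives no new second-order content. $\dd$ of eq.~\eqref{diffS3} is just $\mathcal L_KF=0$, and $\dd$ of eq.~\eqref{diffK} produces $\iota_KE_B$, not $E_A$.
\item $\dd$ of eq.~\eqref{diffU} only constrains $\dd\star u^{11}$, because its $G$-term $Xs^{11}G$ is closed by the Bianchi identity (and for $g=0$ the result is empty). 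It therefore cannot yield $(E_B)_\perp^{(3,1)+(1,3)}$.
\item $\nabla^2(Xs^{11})=0$ and $\nabla^2(Xhc)=0$ are vacuous for constants and say nothing about $E_X$.
\item Invoking eqs.~\eqref{selfdual}, \eqref{KCScontraction} for $\iota_KE_B$ is circular. $\dd$ of eq.~\eqref{selfdual} is exactly eq.~\eqref{EOM3-form}, i.e.\ the full statement $E_B=0$, and no such condition exists when $g=0$. The longitudinal component must come from eqs.~\eqref{diffS3}, \eqref{diffK}, \eqref{diffT}, Killing invariance and Bianchi alone, which give $\iota_KE_B=2chX\,\dd F=0$.
\item On the $\Ker\Gamma_+$ branch with $F=\tau\neq0$, every $F$-term drops out of the gravitino and symmetric spin-$\tfrac12$ KSEs. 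Hence $F$ is absent from all differential bilinear relations and from all $\Gamma_+$-weighted identities. The Maxwell equation there can only come from the integrability of the antisymmetric equation, $(E_A)_\nu\gamma^\nu\Gamma_-\epsilon=0$, which you never use. It makes $E_A$ null and orthogonal to the timelike $K$, hence zero.
\end{itemize}

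The missing idea is to project the spinorial identities themselves onto the base-chirality components of the explicit pair \eqref{fullsmspinors}. This uses $\gamma_0=\gamma_*$ and the fact that odd-rank base Clifford elements flip chirality. The steps are then as follows.
\begin{itemize}
\item The $\mu=0$ Einstein identity gives $\mathcal E_{0i}=0$, $\mathcal E_{00}=-\tfrac15E_X-c(E_A)_0$ and $s(E_A)_0=0$.
\item The same-chirality part of the spatial Einstein identity gives $c(E_A)_i=0$. The case $c=0$ is settled by substituting the triplet projection \eqref{intspatialtriplet} of eq.~\eqref{intscalaridentity} into the spatial singlet Einstein projection.
\item Once $(E_A)_i=0$, the same triplet projection gives $(\star_6E_B)^{(2,0)+(0,2)}=0$, which is precisely the $(3,1)+(1,3)$ statement.
\item The scalar projection gives $E_X=-c(E_A)_0$. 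This relation is what decides the scalar equation. It follows from $s(E_A)_0=0$ when $s\neq0$, and from $(E_A)_0=0$ in the null classes. For active $s=0$ one has $c=\pm1$, so $E_X$ is tied to $(E_A)_0$ with nothing else fixing either, and one of them must be imposed.
\item $\mathcal E_{ij}=0$ then follows from $\mathcal E_{ij}\gamma^j\epsilon=0$ by squaring.
\end{itemize}
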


\begin{proof}
We commute the supercovariant derivatives defined from the gravitino KSE
in eq.~\eqref{Killinggravitineqcharge}, and contract the resulting identity
with Clifford matrices. Combining this with derivatives of the algebraic
KSEs in eqs.~\eqref{Killinghalfspinequal}, and using the KSEs 
and the Bianchi identities in eq.~\eqref{Bianchi}, gives
\begin{subequations}
\label{intspinorialidentities}
\begin{align}
\label{intscalaridentity}
0&=\left[\frac12E_X
-\frac1{12X^2}(\star E_B)_{\nu\rho\sigma}\gamma^{\nu\rho\sigma}\Gamma_5
+\frac i2(E_A)_\nu\gamma^\nu\Gamma_+\right]\epsilon\, ,\\
\label{intantisymidentity}
0&=(E_A)_\nu\gamma^\nu\Gamma_-\epsilon\, ,\\
\label{intEinsteinidentity}
0&=\left[\mathcal E_{\mu\nu}\gamma^\nu-\frac15E_X\gamma_\mu
+\frac1{12X^2}(\star E_B)_{\nu\rho\sigma}\gamma_\mu{}^{\nu\rho\sigma}\Gamma_5
-i(E_A)_\mu\Gamma_+\right]\epsilon\, .
\end{align}
\end{subequations}

\emph{Longitudinal $3$-form equation.}
Equations~\eqref{diffS3}, \eqref{diffK}, \eqref{diffT}, Killing
invariance and the Bianchi identity in eq.~\eqref{Bianchi}, give
\be
\label{intlongitudinalform}
\iota_K E_B=
\begin{cases}
2chX\,\dd F=0 & \eta=+1\, ,\\
0 & \eta=-1\, .
\end{cases}
\ee

\emph{Mixed Einstein equation and temporal components.}
In the frame of eq.~\eqref{Killingdual}, since $E_B$ is transverse,
its term in the $\mu=0$ component of
eq.~\eqref{intEinsteinidentity} vanishes. Inserting the SM pair in
eq.~\eqref{fullsmspinors} and using the chirality relations
\eqref{chiral6}, \eqref{chiralfix} gives
\be
\label{inttimeprojections}
\mathcal E_{0i}=\mathcal E_{i0}=0\, ,\qquad
\mathcal E_{00}=-\frac15E_X-c(E_A)_0\, ,\qquad s(E_A)_0=0\, .
\ee

\emph{Spatial Abelian equation.}
On $\Ker\Gamma_+$, eq.~\eqref{intantisymidentity} and
$\Gamma_-^2=1$ imply $(E_A)_\mu\gamma^\mu\epsilon=0$.
Squaring this action and taking its vector bilinear show that $E_A$
is null and orthogonal to the timelike $K$, hence $E_A=0$.
On the other null branch, $\Ker\Gamma_-$, we have $F=0$, so again $E_A=0$.
For the active classes, the spatial part of eq.~\eqref{intEinsteinidentity}
reverses base chirality except for the Abelian term. Its same-chirality
projection therefore gives $c(E_A)_i=0$.
At $c=0$, which may occur only in the ungauged active class, 
the triplet projection of eq.~\eqref{intscalaridentity} gives
\be
\label{intspatialtriplet}
(\star_6 E_B)^{(2,0)+(0,2)}_{jk}\gamma^{jk}\psi
=-2X^2sp(E_A)_j\gamma^j\psi^*\, .
\ee
Substitution into the spatial singlet projection of
eq.~\eqref{intEinsteinidentity}, using eqs.~\eqref{Cliff1+6},
\eqref{defJ}, yields
\be
\label{intspatialabelian}
\frac12\left[(E_A)_i+iJ_i{}^j(E_A)_j\right]=(E_A)_i
\quad\Longrightarrow\quad (E_A)_i=0\, ,
\ee
by reality. Thus the spatial Abelian equation holds in every class.

\emph{Transverse $(3,1)+(1,3)$ equation.}
With $(E_A)_i=0$, the triplet projection of eq.~\eqref{intscalaridentity}
and its complex conjugate, together with the Hodge identities in
eqs.~\eqref{hodgesu3prim}, \eqref{doublehodge}, give
\be
\label{intautomaticcomponents}
(\star_6 E_B)^{(2,0)+(0,2)}=0
\quad\Longleftrightarrow\quad (E_{B})_{\perp}^{(3,1)+(1,3)}=0\, .
\ee
Together with eq.~\eqref{intlongitudinalform}, the imposed
$(E_{B})_{\perp}^{(2,2)}=0$ therefore gives $E_B=0$.

\emph{Temporal Abelian and scalar equations.}
The remaining scalar projection of eq.~\eqref{intscalaridentity} now reads
\be
\label{intscalarrelation}
E_X=-c(E_A)_0\, .
\ee
In the null classes $(E_A)_0=0$ was already established.
In the ungauged active class with $s\neq0$, it follows from
eq.~\eqref{inttimeprojections}. For active $s=0$, $c=\pm1$, so the
additional condition $E_X=0$ implies $(E_A)_0=0$.
Thus $(E_A)_0=E_X=0$ in all cases.

\emph{Temporal and spatial Einstein equations.}
Equation~\eqref{inttimeprojections} now gives $\mathcal E_{00}=0$.
Since $E_X=E_A=E_B=0$ and $\mathcal E_{i0}=0$, the spatial part of
eq.~\eqref{intEinsteinidentity} reduces to
\be
\label{intspatialrelation}
\mathcal E_{ij}\gamma^j\epsilon=0
\quad\Longrightarrow\quad
\mathcal E_{ij}\mathcal E_i{}^j=0
\quad\Longrightarrow\quad \mathcal E_{ij}=0\, ,
\ee
where the middle equality follows by squaring, with no sum over $i$,
and the last uses the positive-definite base metric.
\end{proof}

An early timelike bilinear analysis of minimal seven-dimensional
supergravity was undertaken in \cite{Cariglia:2004qi}. 
The framework developed here extends this program by providing a complete
bilinear analysis and bosonic classification for the minimal U$(1)$-gauged 
theory, embedded within the broader equal-charge sector 
of the maximal U$(1)^2$-gauged theory.
The present work retains the full dependence of the geometric
bilinears and their algebraic and differential relations on the
internal SM orientation $(\theta,\varphi)$, 
without simplistically fixing that orientation.
In this work we provide the exhaustive algebraic system 
and its SU$(3)$-isotypic analysis, which together furnish a practical 
representation-theoretic diagnostic for the
irreducible components of the bosonic configuration $\{\dd X, F, G\}$.
The differential scalar system here also determines the restrictions
on this internal SM orientation dynamically.
In particular, Proposition~\ref{vanishingcscalar} proves, independently
of gauge choice, that the complex scalar bilinear vanishes for
$g,F\neq0$ in the minimal theory, and on the corresponding
$\chi^A\in\Ker\Gamma_-$ spinor subspace of the equal-charge maximal theory.

The differential analysis in this work treats the complete $3$-form relations, 
and provides Killing gauges for both gauge potentials.
In particular, the Killing gauge of the $3$-form potential is first 
established here, in conjunction with the self-duality condition.
It thereby captures the action of the gauged R-symmetry on the
holomorphic $3$-form.
In particular, in the gauged null class, or when $g\neq0$ and $F=0$, 
Proposition~\ref{enlargedstructure} shows $\Psi$ is U$(1)$-charged, enlarging 
the base structure from SU$(3)$ to U$(3)$.
We also extract the complete intrinsic torsion of the base and derive its 
integrability criteria and geometric refinements. 
Proposition~\ref{independentbosonic} then shows that, given the KSEs and
Bianchi identities, the independent bosonic EOMs reduce to the transverse
$(2,2)$ component of the $3$-form equation, supplemented by the scalar
equation only in the active classes when $s=0$.
This reduction is implemented explicitly in all four classes, with all
Abelian and Einstein equations, together with the remaining
components of the $3$-form equation, following from the resulting systems.

Integrating all of this analysis produces a framework 
for mapping all timelike supersymmetric solutions in the 
present sector into the four classes summarized in
Tables~\ref{tab:preliminary-classes-ug}, \ref{tab:preliminary-classes-g}.
This provides a direct diagnostic procedure for classifying a
proposed bosonic configuration, identifying its generic SU$(3)$ flux
components and geometric structure, and for singling out unexplored sectors.
The present framework can therefore be used both to construct new
supersymmetric solutions and to test proposed ones efficiently.

The following class analyses assemble the algebraic and differential
KSE constraints, the remaining transverse Bianchi conditions, and the
independent EOMs. The longitudinal Bianchi projections were already used in
Section~\ref{Killingresultsdiff} to establish Killing invariance of the fluxes. 
We also construct Killing gauges for both potentials and, in the gauged 
classes, verify compatibility with the integrated self-duality condition.

\subsubsection{Ungauged Active Cartan Flux}
\label{ugactiveclass} 

The class $\mathcal C_{\rm ug}^{\rm active}$ is defined by $g=0$ and
$F\Gamma_+\chi^A\neq0$, 
and therefore selects only the spinor branch $\chi^A\in\Ker\Gamma_-$. 
The algebraic KSE constraints give
\begin{subequations}
\label{ugnpalg}
\begin{align}
&u=0 \implies
\zeta=-ip^{-1}v\, ,
\qquad
X^3v=-hs\lambda\, ,
\qquad
X^3\sigma=-2c\lambda\, ,
\\
&f=5hcD-X^3cq+\frac12X^3hsp\,\beta\, ,
\qquad 5hsD-X^3sq
=-\frac{i}{2}hp\left(\alpha+icX^3\beta\right).
\end{align}
\end{subequations}
Thus, no SU$(3)$ component of either flux is forced to vanish
at this stage; in particular, the primitive components 
$\tau$, $\rho$ and $\kappa$ remain unconstrained.

The differential KSEs give
\be
\label{ugnpscalars}
\begin{cases}
s\neq0
& \qquad 
\dd(Xhs)=0\, ,
\qquad
\dd\varphi=0\, ,
\\
s=0
& \qquad
\theta=0,\frac{\pi}{2}\, ,
\qquad
\varphi\ \text{unconstrained}\, ,
\end{cases}
\ee
and combining them, the complete fluxes take the form
\begin{subequations}
\label{ugnpdiff}
\begin{align}
F
&=-h^{-2}K\wedge\dd(Xhc)+F_\perp\, ,
\qquad \qquad
\iota_KF =\dd(Xhc)\, ,
\\
G
& =h^{-2}K\wedge\dd(hX^{-2}J)+G_\perp\, ,
\qquad
\iota_KG =-\dd(hX^{-2}J)\, ,
\\
\star_6G_\perp
&=-hX^{-2}\dd\omega-2cX^{-3}F_\perp\, .
\end{align}
\end{subequations}
For $s\neq0$, these equations also imply
\be
\label{ugnptheta}
\iota_KF=-\frac{2Xh}{s}\dd\theta\, .
\ee
The SU$(3)$ components of the twist are fixed by
\begin{subequations}
\label{ugnptwist}
\begin{align}
J\lrcorner\dd\omega
&=-ch^{-1}X^{-1}\lambda\, ,
\\*
(\dd\omega)^{(1,1)_0}
&=h^{-1}X^2\kappa-2ch^{-1}X^{-1}\tau\, ,
\\*
(\dd\omega)^{(2,0)+(0,2)}
&=-\frac{c}{4hX}
\operatorname{Re}\!\left[
\alpha\lrcorner\overline\Psi
\right]
+\frac{s}{2h^2X}
\operatorname{Im}\!\left[
p^{-1}f\lrcorner\overline\Psi
\right].
\end{align}
\end{subequations}
The resulting torsion forms are displayed in 
Table~\ref{tab:preliminary-classes-ug}. 
In particular, integrability is fixed by
\be
\label{ugnpintegrability}
N_J \neq 0
\quad \iff \quad
s\neq0\, ,
\quad 
( \lambda, \tau ) \neq (0,0)\, .
\ee

We next impose the bosonic field equations.
The transverse Bianchi identities are
\be
\label{ugnpBianchi}
\dd\left(F_\perp+Xhc\,\dd\omega\right)=0\, ,
\qquad
\dd\left(G_\perp-hX^{-2}J\wedge\dd\omega\right)=0\, .
\ee
On the $s=0$ locus, these equations are equivalently obtained as the
$g\rightarrow0$ limit of the corresponding ones in eqs.~\eqref{gnpBianchi}
in the gauged active class.

The independent transverse $3$-form EOM is
\begin{align}
\label{ugnpGEOM}
\left[\dd\left[
X^4h^{-1}\star_6\dd(hX^{-2}J)
\right]\right]^{(2,2)}
={}&\Bigl[(Xh)^2\dd\omega\wedge\dd\omega
\nn\\*
&+2c(Xh)\dd\omega\wedge F_\perp
+F_\perp\wedge F_\perp\Bigr]^{(2,2)}\, .
\end{align}
For $s\neq0$, this is the only EOM to impose. For $s=0$, the
additional scalar equation is
\begin{align}
\label{ugnpscalarEOM}
h^{-1}(\nabla_6)_i(h\nabla_6^i\log X)
={}&-\frac15X^{-2}\left[
\lvert F_\perp\rvert^2
-h^{-2}\lvert\dd(Xhc)\rvert^2
\right]
\nn\\
&+\frac15X^4\left[
\lvert G_\perp\rvert^2
-h^{-2}\lvert\dd(hX^{-2}J)\rvert^2
\right]\, .
\end{align}

For $G=0$, the $3$-form EOM in eq.~\eqref{EOM3-form} directly implies
\be
G=0
\quad\Longrightarrow\quad
F\wedge F=0\, .
\ee

Finally, the Killing gauges may be chosen locally as
\be
\label{ugnpgauges}
A=ch^{-1}XK+A_\perp\, ,
\qquad
\dd A_\perp=F_\perp+Xhc\,\dd\omega\, ,
\ee
and
\be
B=-h^{-1}X^{-2}K\wedge J+B_\perp\, ,
\qquad
\dd B_\perp=G_\perp-hX^{-2}J\wedge\dd\omega\, .
\ee
Their integrability conditions are exactly those in
eqs.~\eqref{ugnpBianchi}, and impose no further restriction.

\subsubsection{Ungauged Null Cartan Flux}
\label{ugnullclass}

In the class $\mathcal C_{\rm ug}^{\rm null}$ 
we have $g=0$ and $F\Gamma_+\chi^A=0$. 
The two complementary configurations on the spinor branches 
are given in eq.~\eqref{nullcartanflux}, 
with the branch notation defined in eq.~\eqref{etadef}.
Their algebraic and differential constraints combine into
\begin{subequations}
\label{ugpalg}
\begin{align}
f& =\lambda=\alpha=0 \quad \implies \quad \zeta=\beta=0\, , 
\\ \label{ugpsingletalg}
\sigma & =8\eta gX^{-3}\Delta(X) \quad \implies \quad \sigma = 0\, ,
\\ q & =5\eta hX^{-3}D
= -5\eta X^{-2}(\dd h)^{(1,0)}\, ,
\end{align}
\end{subequations}
where it also holds that
\be
\label{Finnull}
F =\tau\, ,\quad \eta = +1 \quad \implies \quad \tau=0\, .
\ee
The surviving SU$(3)$ components of the fluxes are therefore 
$\tau,q,\rho,\kappa$, with $q$ fixed above, and we have
\begin{align}
\label{ugpfluxes}
G
={}&-h^{-2}K\wedge\left[
\frac12J\wedge(q+\overline q)
+\rho+\overline\rho
\right]
+J\wedge\kappa
=
- \eta\,\dd\left(h^{-1}X^{-2}K\wedge J\right),
\end{align}
where we used that the differential constraints give
\be
\label{ugpdiff1}
Xh=\mathrm{const}>0\, ,
\qquad
\dd\theta=0\, ,
\qquad
s\dd\varphi=0\, ,
\ee
and
\be
\label{ugpdiff2}
\dd\omega=\eta h^{-1}X^2\kappa\, ,
\qquad
\iota_KG=-\eta\,\dd(hX^{-2}J)\, .
\ee
As established in Section~\ref{intrinsictorsion}, the SU$(3)$
structure is integrable, and obeys
\be
\label{ugpweightedstructure}
\dd(hJ\wedge J)=0\, ,
\qquad
\dd(h^2\Psi)=0\, ,
\ee
or equivalently $\mathcal W_5=4\mathcal W_4$.

We next impose the bosonic field equations.
The Bianchi identity of $G$ is automatic from its exact expression
in eq.~\eqref{ugpfluxes}. The Bianchi identity of $F$ gives
\be
\label{ugpMaxwell}
\dd\tau=0\, .
\ee
The independent transverse $3$-form EOM is
\begin{align}
\label{ugpGEOM}
\left[\dd\left[
X^5\star_6(\iota_KG)
\right]\right]^{(2,2)}
={}&-(Xh)\left[
\eta X^6\kappa\wedge\kappa
+\tau\wedge\tau
\right].
\end{align}
The mixed Chern--Simons source is $F\wedge F=\tau\wedge\tau$,
which may be non-vanishing.
Equation~\eqref{ugpGEOM} nevertheless excludes a solution supported
only by $\tau$.
Indeed, $G=0$ implies $\iota_KG=\kappa=0$, and then
$\tau\wedge\tau=0$. Together with $\star_6\tau=-J\wedge\tau$,
this gives $\tau\wedge\star_6\tau=0$.
Hence $\lvert\tau\rvert^2=0$, and therefore $\tau=0$, so that
\be
\label{ugpnopureF}
G=0
\quad\Longrightarrow\quad
F=0\, .
\ee

The Killing gauges of the gauge potentials are
\be
\label{ugpgauges}
A=A_\perp\, ,
\qquad
\dd A_\perp=\tau\, ,
\qquad
B=-\eta h^{-1}X^{-2}K\wedge J+B_\perp\, ,
\qquad
\dd B_\perp=0\, .
\ee
At $\tau=0$, $A=0$ may be chosen locally. The closed form
$B_\perp$ labels an independent gauge representative and may, but
need not, be set to zero locally.

\subsubsection{Gauged Null Cartan Flux}
\label{gnullclass}

In the class $\mathcal C_{\rm g}^{\rm null}$ 
we retain $F\Gamma_+\chi^A=0$, and switch on $g$. The two configurations
remain those of eq.~\eqref{nullcartanflux}, with the notation in 
eq.~\eqref{etadef}.
The algebraic constraints are similar to those in eqs.~\eqref{ugpalg}, 
with eq.~\eqref{Finnull}, apart from eq.~\eqref{ugpsingletalg}, 
which does not vanish with gauging, so that in this class we have
\begin{align}
\label{gpalg}
\sigma=8\eta gX^{-3}\Delta(X).
\end{align}
Thus, the SU$(3)$ components $\tau,q,\sigma,\rho,\kappa$ may survive, 
with $q$ and $\sigma$ constrained above. 

The differential bilinear relations likewise agree with
eqs.~\eqref{ugpdiff1}, \eqref{ugpdiff2}, except for the twist, which becomes
\be
\label{gptwist}
\dd\omega
=\eta h^{-1}X^2\kappa
-\frac{2g}{3h}(2X+X^{-4})J\, .
\ee
The transverse and complete $G$ fluxes can then be written 
in terms of the geometry as
\begin{align}
\label{gpGforms}
G_\perp
&=\eta hX^{-2}\dd\omega\wedge J
+2\eta gX^{-1}J\wedge J\, ,
\\
G
&=-\eta\,\dd\left(h^{-1}X^{-2}K\wedge J\right)
+2\eta gX^{-1}J\wedge J\, .
\end{align}
The base admits an integrable U$(3)$ structure, 
namely with no intrinsic torsion form $\mathcal W_5$,
and the locally defined holomorphic volume form is not Killing invariant as
\be
\label{gpLiePsi}
\mathcal L_K\Psi=4ig(Xh)\Psi\neq0\, .
\ee

The Bianchi identity of $F$ gives eq.~\eqref{ugpMaxwell}, 
while that of $G$ is automatic since $\dd(X^{-1}J\wedge J)=0$ 
from eqs.~\eqref{ugpweightedstructure}, \eqref{ugpdiff1}.
The independent transverse $3$-form EOM is
\begin{align}
\label{gpGEOM}
\left[\dd\left[
X^5\star_6\dd(hX^{-2}J)
\right]\right]^{(2,2)}
={}&(Xh)\left[
(Xh)^2\dd\omega\wedge\dd\omega
-\tau\wedge\tau
-4g^2X^{-1}J\wedge J
\right].
\end{align}
Compared with the ungauged null class, this equation contains
additional sources from gauging.

As in the ungauged null class, there is no solution supported only by $\tau$,
and for $G=0$, the same argument gives $F=0$, and eq.~\eqref{gpalg} fixes $X=1$.
This gives the vacuum configuration
\be
\label{gpvacuum}
G=0
\quad\implies\quad
F=0\, ,
\quad X=1\, ,
\quad \dd\omega=-2gh^{-1}J\, ,
\quad R^{(6)}_{ij}=-8g^2g_{ij},
\ee
which is the supersymmetric AdS$_7$ vacuum in timelike-fibration form.

Finally, the simultaneous Killing gauges of the gauge potentials 
may be chosen as
\be
\label{gpgauges}
A=A_\perp\, ,
\qquad
\dd A_\perp=\tau\, ,
\qquad
\eta B
=-h^{-1}X^{-2}K\wedge J
-\frac{X^2}{2g}K\wedge\dd\omega+B_\perp\, ,
\ee
where $A=0$ may be chosen locally at $\tau=0$,
and compatibility with $G=\dd B$ gives
\be
\label{gpBperp}
\dd B_\perp
=-\frac{(Xh)^2}{2g}\dd\omega\wedge\dd\omega
+2gX^{-1}J\wedge J\, .
\ee
The longitudinal self-duality condition reproduces
eq.~\eqref{gptwist}, while its transverse component fixes
\be
\label{gpselfdualperp}
2\eta gB_\perp
=h^{-1}X^4\star_6(\iota_KG)
+A_\perp\wedge\tau
-(\dd C_{(2)})_\perp\, .
\ee
Thus, the Killing gauges and the integrated condition are mutually compatible.

\subsubsection{Gauged Active Cartan Flux} 
\label{gactiveclass}

Finally, the class $\mathcal C_{\rm g}^{\rm active}$ 
is defined by $g\neq0$ and $F\Gamma_+\chi^A\neq0$. 
It exclusively selects the spinor branch $\chi^A\in\Ker\Gamma_-$, 
and Proposition~\ref{vanishingcscalar} sets $s=0$, 
so that $c=\pm1$, and the internal spinor angle $\varphi$ is unconstrained.
The algebraic constraints are
\begin{subequations}
\label{gnpalg}
\begin{align}
&\zeta=u=v=0\, ,
\qquad
2c\lambda=8g\Delta(X)-X^3\sigma\, ,
\\
&cf=5hD-X^3q\, ,
\qquad
\alpha=-icX^3\beta\, .
\end{align}
\end{subequations}
Thus, just a single complex SU$(3)$ component, $\zeta$, vanishes
identically.

The differential relations give
\begin{subequations}
\label{gnpdiff}
\begin{align}
\label{gnpdiff1}
F
&=-ch^{-2}K\wedge\dd(Xh)+F_\perp\, ,
\\*
G
&=h^{-2}K\wedge\dd(hX^{-2}J)+G_\perp\, ,
\\*
X^3(q+\overline q)
&=-X\dd h+4h\dd X\, ,
\end{align}
\end{subequations}
together with
\be
\label{gnpGperp}%
\star_6G_\perp
=-hX^{-2}\dd\omega
-2cX^{-3}F_\perp
-2gX^{-6}J\, ,
\ee
with $F_\perp$ as in eq.~\eqref{f2su3}. 
The SU$(3)$ components of the twist are
\begin{subequations}
\label{gnptwist}
\begin{align}
J\lrcorner\dd\omega
&=-\frac1h\left[
cX^{-1}\lambda+2g(2X+X^{-4})
\right],
\\*
\label{gnptwist2}
(\dd\omega)^{(1,1)_0}
&=h^{-1}X^2\kappa-2ch^{-1}X^{-1}\tau\, ,
\\*
\label{gnptwist3}
(\dd\omega)^{(2,0)+(0,2)}
&=-\frac{c}{4hX}
\operatorname{Re}\!\left[
\alpha\lrcorner\overline\Psi
\right].
\end{align}
\end{subequations}
The base has an integrable SU$(3)$ structure, obeys $\dd(hJ\wedge J)=0$, 
and has the torsion forms displayed in Table~\ref{tab:preliminary-classes-g}.
In particular, $\mathcal W_5$ receives independent contributions 
from the scalar gradient, and the transverse gauge connection. 
In the Killing gauges used below, 
the complex 3-form of the structure $\Psi$ is Killing invariant.

We now impose the bosonic field equations.
The transverse Bianchi identities reduce to
\begin{subequations}
\label{gnpBianchi}
\begin{align}
0&=\dd\left[
\tau-cX^3\kappa
+\frac{2cg}{3}(2X^2+X^{-3})J
\right]\, ,
\\
0&=\dd\left[
\left(2gX^{-1}+\frac{c\lambda}{6X^3}\right)J\wedge J
+2cX^{-3}J\wedge\tau
\right]\, .
\end{align}
\end{subequations}
At $g\rightarrow0$, these equations reproduce the $s=0$
specialization of eqs.~\eqref{ugnpBianchi}.

The independent EOMs are the transverse $3$-form equation
\begin{align}
\label{gnpGEOM}
\left[\dd\left[
X^4h^{-1}\star_6\dd(hX^{-2}J)
\right]\right]^{(2,2)}
={}&\Bigl[(Xh)^2\dd\omega\wedge\dd\omega
+2c(Xh)\dd\omega\wedge F_\perp
+F_\perp\wedge F_\perp
\nn\\
&-\left(
4g^2X^{-1}+\frac{cg\lambda}{3X^3}
\right)J\wedge J
-4cgX^{-3}J\wedge\tau\Bigr]^{(2,2)}\, .
\end{align}
and the scalar equation
\begin{align}
\label{gnpscalarEOM}
h^{-1}(\nabla_6)_i(h\nabla_6^i\log X)
={}&-\frac15X^{-2}\left[
\lvert F_\perp\rvert^2
-h^{-2}\lvert\dd(Xh)\rvert^2
\right]
+\frac15X^4\left[
\lvert G_\perp\rvert^2
-h^{-2}\lvert\dd(hX^{-2}J)\rvert^2
\right]
\nn\\*
&-\frac45g^2
(2X^{-8}-6X^{-3}+4X^2)\, .
\end{align}

For $G=0$, the $3$-form EOM in eq.~\eqref{EOM3-form} implies 
\be
\label{noGentails}
G=0
\quad\Longrightarrow\quad
F\wedge F=0
\quad\Longrightarrow\quad
\begin{cases}
\dd(Xh)\wedge F_\perp=0\, ,\\
F_\perp\wedge F_\perp=0\, .
\end{cases}
\ee
Wherever $G=0$ and $\dd(Xh)\neq0$, the first condition makes $\dd(Xh)$ 
a local factor of $F_\perp$, consistently with the second condition.

Finally, the Killing gauge of the 1-form potential can be written as
\begin{subequations}
\label{gnpgauges1form}
\begin{align}
A
&=ch^{-1}XK+A_\perp\, ,
\\*
\dd A_\perp
&=F^{(1,1)}+cXh(\dd\omega)^{(1,1)}
\nn\\*
&=-\tau+cX^3\kappa
-\frac{2cg}{3}(2X^2+X^{-3})J\, .
\end{align}
\end{subequations}
Here the absence of a holomorphic component in $\dd A_\perp$
follows from Proposition~\ref{noholo1formwg}.
The Killing gauge of the 3-form potential can be written as
\begin{subequations}
\label{gnpgauges3form}
\begin{align}
B
&=-h^{-1}X^{-2}K\wedge J
-\frac{c}{2g}h^{-2}K\wedge\dd(XhA_\perp)
+B_\perp\, ,
\\
\label{gnpBperp}
\dd B_\perp
&=G_\perp-hX^{-2}J\wedge\dd\omega
-\frac{c}{2g}\dd\omega\wedge\dd(XhA_\perp)\, .
\end{align}
\end{subequations}
The latter equation follows from compatibility with $G=\dd B$.
With the auxiliary gauge choice in eq.~\eqref{C2Killinggauge},
the longitudinal self-duality condition reproduces
eq.~\eqref{gnpGperp}, while its transverse part fixes
\be
\label{gnpselfdualperp}
2gB_\perp
=-h^{-1}X^4\star_6\dd(hX^{-2}J)
+A_\perp\wedge F_\perp
-(\dd C_{(2)})_\perp\, .
\ee
Hence the Killing gauges are consistent with the complete bosonic solution.

\subsection{Relation to Solutions in Literature}
\label{revsol}

We now apply the complete classification framework of the previous section 
to examine supersymmetric solutions proposed within the present setup,
in which the vector bilinear of the Killing spinor is timelike.
Besides identifying their SU$(3)$-representation content, this
analysis illustrates how the classification provides efficient
diagnostics for testing such proposed solutions, 
and for organizing searches for new ones, 
without requiring a new full solution of the KSEs in every instance.
For configurations satisfying the KSEs and Bianchi identities,
Proposition~\ref{independentbosonic} then dramatically reduces 
the remaining bosonic checks in each of the four classes 
in our partition of solution space.

Tables~\ref{tab:known-gactive-solutions} and \ref{tab:minimal-solutions}
summarize the solution landscape within the present setup.
The six families constructed directly in the maximal theory all occupy
the gauged active class $\mathcal C_{\rm g}^{\rm active}$.
They exhibit a progression from magnetic fixed points with sparse
SU$(3)$ content, through a static charged singularity with running
dilaton and electric flux, to rotating black holes with increasingly
rich content of the $4$-form flux.
The equal-rotation black holes of \cite{Chong:2004dy} extend in two
distinct directions: three independent angular momenta in \cite{Chow:2007ts}
and Sasaki--Einstein $L^{p,q,r}$ horizon geometries in \cite{Bobev:2025xan}.
These constitute two distinct most general known black-hole endpoints
within the equal-charge sector.

The early search of \cite{Cariglia:2004qi} in the minimal theory 
proposed only isolated and sparse configurations, yielding no additional 
solutions satisfying the necessary requirements.
The Calabi--Yau, complex-Heisenberg and squashed AdS$_7$ configurations
therein display global pathologies, while the Hermitian
configuration admits no supersymmetric Cartan restriction.
To the best of our knowledge, the ungauged solution space remains essentially 
unexplored. 
Relaxing the narrow assumptions of that early search offers a promising route 
to finding solutions in this sector.

To the best of our knowledge, the additional solution families 
from the minimal theory that fall within the present setup appear in
\cite{Dibitetto:2017tve,Conti:2024rwd}.
These extend the picture to the $F=0$ gauged null class
$\mathcal C_{\rm g}^{\rm null}$.
Table~\ref{tab:minimal-solutions} contains their two geometrically distinct
families: the $\mathbb R^{1,2}\times S^3$ flow of \cite{Dibitetto:2017tve}
and the AdS$_3\times S^3$ Janus family of \cite{Conti:2024rwd},
which contains an earlier AdS$_3\times S^3$ flow of \cite{Dibitetto:2017tve}.
The vector-coupled SU$(2)$ configurations of \cite{Dibitetto:2017tve}
admit no supersymmetric Cartan restriction and thus provide no further
solutions within the present setup.
These two families of $F=0$ BPS flows also first permit Killing spinors 
with an internal SM orientation of $s\neq0$.

The generic black hole of \cite{Chow:2007ts} activates
every SU$(3)$ component permitted in the gauged active class, 
together with all non-trivial torsion forms $\{\mathcal W_{i}\}_{i=3,4,5}$.
In a complementary manner, both flow families of 
\cite{Dibitetto:2017tve,Conti:2024rwd} activate every component permitted in the 
$F=0$ gauged-null class, 
together with both non-trivial torsion forms $\mathcal W_3$ and $\mathcal W_4$.
Across all known solutions $\mathcal W_1=\mathcal W_2=0$, 
so their bases’ almost-complex structures are integrable, with
SU$(3)$ and U$(3)$ structures, respectively, for each of the above noted classes.
In the gauged active class, the torsion form $\mathcal W_5$ carries additional
geometric information through the transverse gauge connection,
beyond the SU$(3)$ component content of the fluxes.

The equal-rotation and $L^{p,q,r}$ black holes of \cite{Chong:2004dy} and 
\cite{Bobev:2025xan}, respectively, have identical entries
in Table~\ref{tab:known-gactive-solutions}, despite their different
local geometries and horizon topologies.
The complementary two geometrically distinct flow families of 
\cite{Dibitetto:2017tve,Conti:2024rwd} likewise have identical
entries in Table~\ref{tab:minimal-solutions}.
The structure component and torsion data therefore furnish a low-resolution
geometric classification.
In particular, complete activation of the permitted SU$(3)$ content does not
exhaust the functional and geometric freedom within either class,
so classification resolves a substantially larger solution space
than the known families.

Two preliminary observations substantially streamline 
and expedite the analysis of solutions.
Let $\mathcal S_{\rm old}$ and $\mathcal S_{\rm new}$ denote two
bosonic solution families such that
$\mathcal S_{\rm old}\subset\mathcal S_{\rm new}$. 
Since supersymmetry is a property of the bosonic configuration, 
\be
\label{BPSinheritance}
\mathcal S_{\rm old}\subset\mathcal S_{\rm new}
\qquad\implies\qquad
\mathcal S_{\rm old}^{\rm BPS}
\subset
\mathcal S_{\rm new}^{\rm BPS}\, .
\ee
Consequently, an irreducible component that is non-vanishing on an older
BPS solution cannot vanish identically throughout the BPS subset of
a family that contains it.
On the other hand, any irreducible component 
that vanishes identically in a general bosonic solution, 
remains absent under every parameter restriction, including its BPS limits.

The converse inference is not valid: a component that is
non-vanishing in a general non-supersymmetric solution need not
survive on its BPS limit. Such components must instead be inherited
from an already known BPS subfamily, or checked directly after the
proper BPS condition has been imposed. In practice, these two
observations determine most components for increasingly 
general bosonic configurations, leaving only a small number of 
projections from Appendix~\ref{su3projections} to be directly 
assessed and distinguished via progressively more refined pieces 
of the classification formalism.
This is particularly useful for the bosonically-constructed 
rotating black-hole families below.

\subsubsection{Maximal Theory}
\label{maxsol}

To the best of our knowledge, within the present setup
the six solution families analysed below exhaust the distinct
Lorentzian supersymmetric families presently constructed directly in
maximal Cartan-gauged seven-dimensional supergravity and covered
within the equal-charge sector:
the wrapped-brane fixed points of \cite{Maldacena:2000mw,Gauntlett:2000ng},
the static charged singularity of \cite{Cvetic:1999ne,Liu:1999ai},
and the rotating charged black-hole families
of \cite{Chong:2004dy,Chow:2007ts,Bobev:2025xan}.

The fully general rotating AdS$_7$ black holes of
\cite{Bobev:2023bxl} restore the second independent Cartan charge 
of the maximal theory,
and therefore extend beyond the equal-charge sector resolved here.
Their equal-charge Lorentzian specialization reduces to the
black-hole family of \cite{Chow:2007ts}, while their Euclidean
analytic continuations lie outside the Lorentzian timelike analysis
developed here.

All six solutions belong to the gauged active class
$\mathcal C_{\rm g}^{\rm active}$. 
For the fixed point of \cite{Maldacena:2000mw}, this follows immediately from 
$G=0$ and $F\neq0$, since eq.~\eqref{gpvacuum} would instead force $F=0$ in
the gauged null class. 
The fixed point of \cite{Gauntlett:2000ng} has $\lambda\neq0$, 
while the charged AdS$_7$ families of \cite{Cvetic:1999ne,Liu:1999ai}, 
and \cite{Chong:2004dy,Chow:2007ts,Bobev:2025xan}, have $f\neq0$; 
either component excludes the null class by eq.~\eqref{primitivecriterion}. 
In accordance with eq.~\eqref{gnpalg}, the complex singlet 
therefore vanishes in all six solutions,
\be
\label{nozetahere}
\zeta=0\, .
\ee

\paragraph{Magnetic fixed points.}

The simplest configurations are the wrapped M5-brane fixed points of
\cite{Maldacena:2000mw,Gauntlett:2000ng}. They have constant dilaton
and purely transverse magnetic Cartan flux, but already differ in the
4-form sector: $G=0$ for the fixed point of
\cite{Maldacena:2000mw}, whereas $G\neq0$ for that of
\cite{Gauntlett:2000ng}. 
Their SU$(3)$ content can be identified through a few invariant tests.

For the equal-charge-compatible AdS$_5\times H^2$ fixed point of
\cite{Maldacena:2000mw}, the scalar value given there maps in our
conventions to
\be
\label{dilsol1}
X^5=\frac34%
\quad \implies \quad
D=0\, .
\ee
Since $G=0$, all of its SU$(3)$ components vanish, and
$\alpha=0$ follows from eq.~\eqref{gnpalg}. The singlet and
fundamental relations in that equation then give
\be
\lambda \sim gX^{-3}\neq0\, ,
\qquad
f=0\, .
\ee
Were $\tau=0$, one would have
$F=(\lambda/3)J$, and hence
$F\wedge F=(\lambda^2/9)J\wedge J\neq0$, contradicting
eq.~\eqref{noGentails}. Therefore
\be
\label{MNdiagnostic}
\lambda,\tau\neq0\, ,
\qquad
f=\alpha=D=0\, ,
\qquad
G=0\, .
\ee

In particular, $\rho=q=0$, so eqs.~\eqref{w3},
\eqref{gnpdiff} give $\mathcal W_3=0$ and $\dd h=0$, and hence
$\mathcal W_4=0$. The remaining torsion component is consequently
\be
\label{w5ofMN}
\mathcal W_5=-4gA_\perp\lrcorner J\, .
\ee
The corresponding AdS$_5$ fixed point preserves eight of the
thirty-two supercharges, divided into four super-Poincar\'e and four
superconformal charges.

The K\"ahler four-cycle fixed point of \cite{Gauntlett:2000ng}
provides the next variation. 
The Killing-spinor projections imply that the almost-complex structure 
preserves separately the tangent bundle of the wrapped cycle $\Sigma_4$
and its orthogonal two-dimensional complement $N_2$, so that 
\be
J=J_4+J_2\, .
\ee
Thus, $J_4$ and $J_2$ denote the restrictions of $J$ to
$T\Sigma_4$ and $N_2$, respectively.
The Cartan field is proportional to the K\"ahler form of the
cycle,
\be
F_\perp=\mathcal FJ_4\, ,
\qquad
\mathcal F\neq0\, .
\ee
It is therefore of type $(1,1)$ and has no longitudinal component,
so $\alpha=f=0$. Moreover,
\be
\lambda=J\lrcorner F_\perp
=\mathcal F\,J_4\lrcorner J_4\neq0\, .
\ee
If $\tau$ vanished, one would have $F_\perp=(\lambda/3)(J_4+J_2)$, 
whose restriction to $N_2$ is $(\lambda/3)J_2\neq0$, whereas 
the actual field $F_\perp=\mathcal FJ_4$ has no legs along $N_2$.
Hence $\tau\neq0$.
The fixed-point scalar is again fixed to $X^5=3/4$, as in eq.~\eqref{dilsol1}, 
with $D=0$. 

Using the field-to-field-strength map established in eq.~\eqref{BtoGmap}, 
the non-vanishing 3-form $S_5\propto\operatorname{vol}(\mathrm{AdS}_3)$ 
of \cite{Gauntlett:2000ng} corresponds in our formulation to
\be
G\propto\operatorname{vol}(\Sigma_4)  
\propto J_4\wedge J_4\, .
\ee
Thus $G$ is transverse and purely of type $(2,2)$, giving
$q=\rho=\beta=0$, and consequently $\alpha=0$ consistently with
eq.~\eqref{gnpalg}. Its double contraction with $J$ is non-zero, so
$\sigma\neq0$. 
If $\kappa$ vanished, $G_\perp$ would be proportional to
\be
J\wedge J
=J_4\wedge J_4
+2J_4\wedge J_2\, ,
\ee
whereas a mixed term is absent from the actual 4-form. 
Therefore $\kappa\neq0$, and
\be
\label{GKWdiagnostic}
\lambda,\tau,\sigma,\kappa\neq0\, ,
\qquad
f=\alpha=D=q=\rho=\beta=0\, .
\ee

It follows again that $\mathcal W_3=\mathcal W_4=0$, while
$\mathcal W_5$ takes the form in eq.~\eqref{w5ofMN}. This
AdS$_3$ fixed point preserves four of the thirty-two supercharges,
divided into two super-Poincar\'e and two superconformal charges.

The two fixed points therefore activate the same singlet and adjoint
components $(\lambda,\tau)$ of $F$, while the K\"ahler four-cycle
fixed point additionally activates the singlet and primitive
$(2,2)$ components $(\sigma,\kappa)$ of $G$. 
Their distinct fixed-point geometries and preserved supersymmetries 
are thus resolved by a small number of classification diagnostics, 
without requiring an explicit SU$(3)$ basis. 

\paragraph{Running dilaton in charged singularity/black-hole solutions.}

The static charged AdS$_7$ solutions of \cite{Cvetic:1999ne,Liu:1999ai}
provide the first substantial activation in SU$(3)$ content.  
On the equal-charge BPS branch,
\be
X=H^{-1/5}\, ,
\qquad
h=X^4\, ,
\qquad
G=0\, .
\ee
The vanishing of $G$, inserted in the singlet relation in eq.~\eqref{gnpalg}, 
fixes
\be
\label{CLlambda}
\lambda \sim g\,\Delta(X)\not\equiv0\, .
\ee
Moreover, eq.~\eqref{gnpdiff1} gives directly
\be
\label{CLelectric}
\iota_KF
\sim \dd(X^5)\neq0\, ,
\ee
so that $f\neq0$, while the running dilaton gives $D\neq0$.
Moreover, $G=0$ gives $\beta=0$, and hence $\alpha=0$ by eq.~\eqref{gnpalg}.  
Were $\tau=0$, one would have $F_\perp \sim\lambda J$, and therefore
$F_\perp\wedge F_\perp \sim \lambda^2J\wedge J\neq0$, in contradiction with 
eq.~\eqref{noGentails}, and eq.~\eqref{CLlambda}. Hence $\tau\not\equiv0$.
Thus, the complete SU$(3)$ content is
\be
\label{CLfingerprint}
\lambda,\tau,f,D\neq0\, ,
\qquad
\alpha=0\, ,
\qquad
G=0\, .
\ee

The primitive 3-form component $\rho$ therefore still vanishes,
so $\mathcal W_3=0$. Since $h=X^4$ and $\dd X\not\equiv0$,
\be
\mathcal W_4=-\frac12\dd\log h=-2\dd\log X\not\equiv0\, .
\ee
Since $G=0$ also gives $q=0$, the fifth torsion form is
\be
\mathcal W_5
=-\frac34\dd\log h-4gA_\perp\lrcorner J\, .
\ee

This is the first solution in the progression with a running dilaton. 
In this static BPS solution, however, the dilaton runs to a singular endpoint. 
Near the charged origin, the areal radius of the transverse
5-sphere behaves as
\be
\label{CLsingularity}
H\sim \frac{Q}{r^4}\, ,
\qquad
X\sim Q^{-1/5}r^{4/5}\longrightarrow0\, ,
\qquad
R_{S^5}\equiv rH^{1/5}
\sim Q^{1/5}r^{1/5}\longrightarrow0\, .
\ee
Thus the dilaton reaches the limit of its physical range, while the
transverse sphere collapses, rather than approaching a regular
finite-area horizon.  
The supersymmetric equal-charge solution is therefore a naked singularity, 
rather than a black hole as noted in \cite{Liu:1999ai}, 
preserving eight supercharges.

Adding rotation activates a substantially richer 4-form flux.
The rotating charged black-hole family of \cite{Chong:2004dy}, with
three equal angular momenta and two independent electric charges,
also marks a methodological change in the present sequence: it was
constructed from a bosonic ansatz and verified against the field
equations, rather than derived from the KSEs.
The supersymmetric limit examined in \cite{Chong:2004dy}
has closed timelike curves.
The regular supersymmetric black hole instead was identified only in 
\cite{Cvetic:2005zi}, with its BPS condition 
and the additional horizon-regularity condition
given there in eqs.~(4.34), (4.38), respectively.
The component tests below use the latter supersymmetric branch of
\cite{Cvetic:2005zi} that admits regular black holes.

The static specialization of the solution family in \cite{Chong:2004dy} 
recovers the charged family of \cite{Cvetic:1999ne,Liu:1999ai}. 
Equation~\eqref{BPSinheritance}
therefore shows that the earlier BPS solution is contained in the
supersymmetric sector of the black-hole family of \cite{Chong:2004dy}. 
Consequently,
\be
\lambda,\tau,f,D\neq0\, ,
\ee
as inherited BPS components.

Direct projection of the general equal-rotation solution gives
\be
(\dd\omega)^{(2,0)+(0,2)}=0\, ,
\qquad
\left(\iota_KG\right)^{(2,1)_0+(1,2)_0}=0\, .
\ee
Their identically-vanishing values are consistent with the enhanced
equal-rotation symmetry, and remain unchanged upon restriction 
to the BPS branch. Equations~\eqref{gnptwist3}, \eqref{gnpalg}, then imply
\be
\alpha=\beta=\rho=0\, ,
\ee
and hence $\mathcal W_3=0$. 

The components that are non-vanishing in the general bosonic solution, 
but are not inherited from the static BPS solution, 
must instead be tested on the regular BPS branch.
Equations~\eqref{gnpdiff}, \eqref{gnpalg} and \eqref{gnptwist2}, 
provide the concise diagnostics, where direct projection 
on the regular BPS branch gives
\begin{align}
q+\overline q
&\sim J\lrcorner\iota_KG\not\equiv0\, ,
\\*
\sigma
&\sim J\lrcorner J\lrcorner G_\perp\not\equiv0\, ,
\\*
\kappa
&\sim
\Pi^{(1,1)_0}\!\left(J\lrcorner G_\perp\right)
\not\equiv0\, .
\end{align}
The inherited components also remain non-vanishing after imposing the 
horizon-regularity condition of \cite{Cvetic:2005zi}.
Thus the regular BPS black hole has
\be
\label{Chongfingerprint}
\lambda,\tau,f,D,\sigma,\kappa,q\neq0\, ,
\qquad
\alpha=\beta=\rho=0\, .
\ee
The warp factor is non-constant on the regular BPS branch, and
therefore $\mathcal W_4\not\equiv0$.
Thus, relative to the static charged solution, rotation additionally
activates $q$, $\sigma$ and $\kappa$. Using eq.~\eqref{gnpdiff},
the fifth torsion form is
\be
\label{w5withqon}
\mathcal W_5
=-\frac34\dd\log h
+\frac{X^2}{4h}(q+\overline q)
-4gA_\perp\lrcorner J\, .
\ee
The regular supersymmetric black hole preserves four of the
thirty-two supercharges.

The $L^{p,q,r}$ solutions of \cite{Bobev:2025xan} extend the
equal-rotation black holes of \cite{Chong:2004dy} by replacing
the $\mathbb{CP}^2$ base of the spherical-horizon fibration with
a more general four-dimensional K\"ahler--Einstein geometry.
This changes the local geometry, while smooth compact
$L^{p,q,r}$ horizons have topology $S^2\times S^3$.
The spherical-horizon solutions are recovered as a limit of
the local metric, with the proper choice of angular periods and identifications.
Let us highlight that this family was also constructed at the bosonic level.

We restrict here to the equal-charge Lorentzian solutions. 
The geometric replacement leaves the radial functions
and the BPS and radial regularity conditions unchanged 
with respect to \cite{Cvetic:2005zi}.
Equation~\eqref{BPSinheritance} therefore establishes the
generic non-vanishing components already identified in
eq.~\eqref{Chongfingerprint},
\be
\lambda,\tau,f,D,\sigma,\kappa,q\neq0\, .
\ee

Only the remaining components require separate projection tests.
Equation~\eqref{gnpGperp} determines $J=J_2+J_4$, where
$J_2$ and $J_4$ are the fundamental forms on the radial--fibre
plane and the four-dimensional K\"ahler--Einstein block of
the six-dimensional base, respectively.
The radial coefficients and K\"ahler fibre curvature in
eqs.~(4.1), (4.14a) of \cite{Bobev:2025xan} give
\be 
\dd\omega\in\operatorname{span}\{J_2,J_4\}\, , \qquad 
\dd J_2,\dd J_4\propto\dd r\wedge J_4\, ,
\ee 
where $r$ is radial.
Since $h,X$ are radial, eq.~\eqref{gnpdiff} then yields
\be
(\dd\omega)^{(2,0)+(0,2)}=0\, ,
\qquad
\iota_KG\propto\dd r\wedge J_4=\dd r\wedge J\, .
\ee
The second relation gives $\rho=0$, while the first,
together with eqs.~\eqref{gnptwist3}, \eqref{gnpalg},
gives $\alpha=\beta=0$.
Thus the $L^{p,q,r}$ solutions have the same 
non-vanishing SU$(3)$ components as in
eq.~\eqref{Chongfingerprint}, with
$\mathcal W_3=0$, $\mathcal W_4\not\equiv0$, and
$\mathcal W_5$ given by eq.~\eqref{w5withqon}.

A distinct extension of the solution family in \cite{Chong:2004dy} 
is the rotating charged black-hole solution of \cite{Chow:2007ts}, 
with three independent angular momenta. It provides an endpoint 
of the present progression in SU$(3)$-component activation.
Following \cite{Chong:2004dy}, it was constructed at the bosonic level, 
and its equal-rotation specialization recovers the former. 
Hence the components
\be
\lambda,\tau,f,D,\sigma,\kappa,q\neq0\, ,
\ee
are inherited by its supersymmetric sector.

Only $\alpha,\beta,\rho$ remain to be determined.
On the regular BPS black hole specified by
the BPS condition and the black-hole regularity condition in
eqs.~(3.31), (3.43) of \cite{Chow:2007ts}, only two residual projection tests 
are then required, which give
\be
\label{Chowtests}
\left(F_\perp\right)^{(2,0)+(0,2)}\not\equiv0\, ,
\qquad
\left(\iota_KG\right)^{(2,1)_0+(1,2)_0}\not\equiv0\, .
\ee
Thus $\alpha,\rho\not\equiv0$, and eq.~\eqref{gnpalg}
then implies $\beta\not\equiv0$.

The generic regular black hole of \cite{Chow:2007ts} consequently activates every
SU$(3)$ component permitted in the gauged active class:
\be
\label{Chowfingerprint}
\lambda,\tau,f,\alpha,D,\sigma,\kappa,q,\beta,\rho\neq0\, ,
\qquad
\zeta=0\, .
\ee
Thus, no additional SU$(3)$ representation can be activated within
this equal-charge class. In particular,
\be
\mathcal W_3\sim\operatorname{Re}\rho\neq0\, .
\ee
The warp factor is generically non-constant, so $\mathcal W_4\not\equiv0$, 
while $\mathcal W_5$ retains the form in eq.~\eqref{w5withqon}.

This generic regular BPS black hole preserves four of the thirty-two
supercharges. The same local supersymmetric family also admits
smooth topological solitons as distinct global completions. These
generically preserve four supercharges, with special admissible
parameter choices preserving eight.

\begin{table}[t]
\centering
\small
\setlength{\tabcolsep}{1.5pt}
\renewcommand{\arraystretch}{1.1}
\begin{tabular}{|l|c|c|c|c|c|c|c|c|c|c||c|c|c|}%
\hline
Solution
& $\lambda$
& $\tau$
& $f$
& $\alpha$
& $D$
& $\sigma$
& $\kappa$
& $q$
& $\rho$
& $\beta$
& $\mathcal W_3$
& $\mathcal W_4$
& $\mathcal W_5$
\\
\hline
\hline
Maldacena--Nu\~nez \cite{Maldacena:2000mw}
& $\checkmark$
& $\checkmark$
& $0$
& $0$
& $0$
& $0$
& $0$
& $0$
& $0$
& $0$
& $0$
& $0$
& $-4gA_\perp\lrcorner J$
\\
\hline
Gauntlett et al.~\cite{Gauntlett:2000ng}
& $\checkmark$
& $\checkmark$
& $0$
& $0$
& $0$
& $\checkmark$
& $\checkmark$
& $0$
& $0$
& $0$
& $0$
& $0$
& $-4gA_\perp\lrcorner J$
\\
\hline
\hline
Cveti\v{c} et al./Liu et al.~\cite{Cvetic:1999ne,Liu:1999ai}
& $\checkmark$
& $\checkmark$
& $\checkmark$
& $0$
& $\checkmark$
& $0$
& $0$
& $0$
& $0$
& $0$
& $0$
& $\checkmark$
& $-\frac34\dd\log h
-4gA_\perp\lrcorner J$
\\
\hline
Chong et al.~\cite{Chong:2004dy}
& $\checkmark$
& $\checkmark$
& $\checkmark$
& $0$
& $\checkmark$
& $\checkmark$
& $\checkmark$
& $\checkmark$
& $0$
& $0$
& $0$
& $\checkmark$
& $-\frac34\dd\log h
+\frac{X^2}{2h}\operatorname{Re}q%
-4gA_\perp\lrcorner J$
\\
\hline
Bobev et al.~\cite{Bobev:2025xan}
& $\checkmark$
& $\checkmark$
& $\checkmark$
& $0$
& $\checkmark$
& $\checkmark$
& $\checkmark$
& $\checkmark$
& $0$
& $0$
& $0$
& $\checkmark$
& $-\frac34\dd\log h
+\frac{X^2}{2h}\operatorname{Re}q%
-4gA_\perp\lrcorner J$
\\
\hline
Chow~\cite{Chow:2007ts}
& $\checkmark$
& $\checkmark$
& $\checkmark$
& $\checkmark$
& $\checkmark$
& $\checkmark$
& $\checkmark$
& $\checkmark$
& $\checkmark$
& $\checkmark$
& $\checkmark$
& $\checkmark$
& $-\frac34\dd\log h
+\frac{X^2}{2h}\operatorname{Re}q%
-4gA_\perp\lrcorner J$
\\
\hline
\end{tabular}
\caption{
Progressive activation of the SU$(3)$ components of the fluxes and dilaton 
gradient, as defined in Appendix~\ref{su3projections}, and of the base torsion 
in the known supersymmetric solution families constructed directly in 
the maximal Cartan-gauged theory, and covered within the equal-charge sector.
A check mark denotes a component or torsion form that is
generically non-vanishing.
All solutions belong to the gauged active class
$\mathcal C_{\rm g}^{\rm active}$, and thus have $\zeta=0$.
The Bobev et al.~row refers to the equal-charge Sasaki--Einstein extension, 
which locally generalizes \cite{Chong:2004dy}, with $L^{p,q,r}$ horizons.
The Chow row refers to the generic BPS black hole;
the same local supersymmetric family admits topological-soliton completions.	
The more general black-hole solutions of \cite{Bobev:2023bxl} 
extend beyond the equal-charge sector resolved here.}
\label{tab:known-gactive-solutions}
\end{table}

The progression of these supersymmetric solutions is summarized in 
Table~\ref{tab:known-gactive-solutions}. 
Only the three distinguishing torsion forms
$\mathcal W_3$, $\mathcal W_4$ and $\mathcal W_5$, are recorded in the table:
All six solution families belong to the class $\mathcal C_{\rm g}^{\rm active}$, 
where $s=0$, and hence $\mathcal W_1=\mathcal W_2=0$. 
Although $\mathcal W_3$ and $\mathcal W_4$ distinguish among solutions,
they add no independent information beyond the SU$(3)$ content. 
The genuinely additional geometric information is carried by $\mathcal W_5$
through its dependence on the transverse gauge connection
$A_\perp$, which is not determined by the SU$(3)$ content of the fluxes alone.

\subsubsection{Minimal Theory}
\label{minsol}

The minimal theory admits an SU$(2)$-gauged extension, which in turn admits 
a consistent truncation to the U$(1)$-gauged theory. 
As we established in Section~\ref{7dsugra}, and further discussed in the complete
classification framework of Section~\ref{solanalysis}, our formulation 
of the equal-charge maximal Cartan-gauged theory admits such a consistent 
truncation to the minimal theory.
Therefore we can also assess within our classification further timelike 
supersymmetric solutions that have been considered directly 
in the minimal theory.

\paragraph{Mapping to \cite{Cariglia:2004qi}.}

Before assessing the proposed solutions in the early search of  
\cite{Cariglia:2004qi}, we first establish its map to our conventions, 
noting that their odd-dimensional Clifford representation agrees with ours,
whereas their spacetime orientation is opposite to ours, 
so that $\star_{\rm C}=-\star$. This relative sign, together with their 
Clifford-duality identity in eq.~(A5), is included in the identifications below 
of the $3$- and $4$-form fluxes.
Accounting also for the scalar factors included in their definitions
of the fluxes, the map is therefore
\begin{align}
\label{Carigliamap}
e^{\phi_{\rm C}}&=X\, ,
&
A_{\rm C}^{3}&=A\, ,
&
e^{\phi_{\rm C}}F_{\rm C}^{3}&=F\, ,
\nn\\
A_{(3){\rm C}}&=-\frac12B\, ,
&
e^{-2\phi_{\rm C}}G_{\rm C}&=-\frac12G\, ,
&
g_{\rm C}&=-4g\, ,
&
h_{\rm C}&=-\frac14m\, ,
\end{align}
with an overall factor of two between the two actions. 
Upon imposing the maximal-supersymmetry coupling relation in 
eq.~\eqref{maxsusy}, equivalently $g_{\rm C}=16h_{\rm C}$, 
and identifying $\sigma^3=\Gamma_+$ for the 
Cartan-gauged reductions, this map agrees 
with our maximal to minimal truncation.

\paragraph{Ungauged sector.}

To the best of our knowledge, the work of \cite{Cariglia:2004qi} is 
the only one to have searched for supersymmetric solutions in 
the ungauged sector of either the minimal or maximal theory.
Yet, the search therein was highly restrictive. 
The dilaton and warp factor were fixed, while the internal
SM orientation---represented in their bilinears by the counterpart
of our $(\theta,\varphi)$---was also restricted. In the ungauged
theory this restriction freezes an otherwise independently specified
spinor-bilinear orientation.
In our formulation these restrictions amount to 
\be
\dd X=\dd h=0\, ,
\qquad
\dd\theta=0\, ,
\qquad
s\,\dd\varphi=0\, .
\ee
Therefore, according to our classification, 
the relevant SU$(3)$ content is already quite restricted,
\be
\label{identifiedsu3restrict}
D=f=q=\zeta=0\, .
\ee
Moreover, the intrinsic torsion of the base was 
restricted throughout to 
\be
\mathcal W_1 = \mathcal W_2 = \mathcal W_5 = 0\, ,
\ee 
while the remaining ans\"atze retained at most a few flux and twist components.
These restrictions probe only a thin part of the solution space 
permitted by our ungauged classes, and leave considerable room for 
constructing solutions with a running dilaton and richer SU$(3)$ content.

The Calabi--Yau configuration in eqs.~(42)--(44) of
\cite{Cariglia:2004qi} is already restricted to the Cartan, and satisfies
\be
F \propto J\, , %
\qquad
G \propto J\wedge J\, .
\ee
Thus, the configuration belongs 
to the class $\mathcal C_{\rm ug}^{\rm active}$, with
\be
c=1\, ,
\qquad
\lambda,\sigma\neq0\, ,
\ee
while all other SU$(3)$ flux components and intrinsic torsion forms vanish.
We verified that this solution satisfies the complete KSE conditions, 
Bianchi identities, and EOMs of the corresponding ungauged active class, and
that its base admits a torsion-free SU$(3)$ structure.  
Hence this is a valid local supersymmetric solution of both the minimal ungauged theory, and the corresponding equal-charge maximal truncation.
However, as already explicitly noted in \cite{Cariglia:2004qi}, the resulting 
geometry contains closed timelike curves, which constitute a global pathology. 
Therefore, this solution is not included in Table 
\ref{tab:minimal-solutions} as a valid supersymmetric solution.

The configuration in \cite{Cariglia:2004qi} whose base is the
three-dimensional complex Heisenberg group has
\be
\mathcal W_1=\mathcal W_2=\mathcal W_4=\mathcal W_5=0\, ,
\qquad
\mathcal W_3\neq0\, .
\ee
We verified directly that the SU$(2)$ configuration satisfies its
KSEs, Bianchi identities, and the bosonic EOMs derived from the
action of \cite{Cariglia:2004qi} (throughout these checks,
the 4-form metric-trace coefficient inside the $1/3$ bracket
of the Einstein equation is $3/20$, correcting their printed $1/10$). 

Under the Cartan restriction, the off-Cartan form in eqs.~(47), (58), 
is removed, whereas the independent holomorphic twist
in eqs.~(61), (62), remains in their retained Cartan through eqs.~(46), (59). 
Applying eq.~\eqref{Carigliamap}, with eqs.~(45), (60),  
in \cite{Cariglia:2004qi}, its SU$(3)$ content is
\be
D=f=\lambda=\tau=\zeta=q=\sigma=0\, ,
\qquad
\rho,\kappa\neq0\, ,
\ee
with $\alpha,\beta$ either both vanishing or both non-vanishing.
In the latter case, the active-class KSEs further impose
$s=0$ and $c=1$.

Therefore, our classification places their Cartan-restricted configurations 
in either the ungauged null class, %
with the internal SM bilinear orientation constant but not fixed, and
\be
\label{actualtrivialF}
F=\alpha=0
\quad \implies 
\quad \beta=0\, ,
\qquad \rho,\kappa\neq0\, ,
\ee
or in the ungauged active class and thus
\be
F,G\neq0\, ,
\qquad
\alpha,\beta,\rho,\kappa\neq0\, .
\ee
We then verified that both complex-Heisenberg Cartan-restricted configurations 
satisfy the complete KSEs, Bianchi identities, and bosonic EOMs of their 
respective ungauged classes. According to our classification their 
base then carries a balanced SU$(3)$ structure. 

However, eqs.~(54)--(56), (63), (64) of \cite{Cariglia:2004qi}
yield, for either Cartan restriction, suitably oriented circles
of radius $R$ in the four Cartesian base directions, at fixed time
and fibre coordinates, whose angular tangent has squared norm
at most $R^2-R^4/4$ in their normalization.
These contractible circles are timelike for $R>2$, including on
the universal cover, so both configurations contain closed timelike curves, 
which constitute a global pathology, and are thus excluded from
Table~\ref{tab:minimal-solutions}.

Finally, although the Hermitian SU$(2)$ configuration lies outside our Cartan 
classification, we verified directly that it would satisfy its KSEs, 
Bianchi identities, and bosonic EOMs, if eqs.~(77), (78), of 
\cite{Cariglia:2004qi} were replaced by the common corrected condition
\be
(C_6-C_5)^2=5+8|\alpha_{\text{Carig}}|^2\, .
\ee
Under the U$(1)$ restriction, the explicit ansatz for the Hermitian family 
in eqs.~(65), (67)--(76), together with eq.~\eqref{Carigliamap}, gives
\be
D=f=\alpha=\beta=\zeta=q=0\, ,
\qquad
\rho\neq0\, .
\ee
The Cartan restriction leaves the explicit base geometry, defined by their 
eqs.~(68)--(70), unchanged. However, applying the Cartan restriction 
in their KSE condition, eq.~(66), requires $\mathcal W_4=0$, which is 
incompatible with the geometry, 
since their eq.~(73) leads to $\mathcal W_4=\tfrac12(e^6-e^5)\neq0$. 
Independently in our classification, the constant warp factor and $\alpha=0$ 
likewise imply $\mathcal W_4=0$ through our eq.~\eqref{w4gen}. 
Thus, the proposed Hermitian SU$(2)$ configuration admits no U$(1)$-restricted 
supersymmetric solution.

To conclude the ungauged sector, the Calabi--Yau and complex-Heisenberg 
configurations are excluded due to their global pathology,
while the Hermitian configuration admits no supersymmetric
U$(1)$ restriction.
Thus none of these configurations yielded a valid ungauged solution.
This empty outcome should not be mistaken for a general obstruction in the 
theory: there remain good prospects for finding solutions beyond the narrow 
assumptions of that search.
The rapid examination of the proposed solutions via our classification 
framework illustrates its utility, which reduces the assessment of proposed 
configurations to short irreducible-component and projected-EOM tests. 
Notably, this analysis exposes the ungauged solution-space that remains 
unexplored.

\paragraph{Gauged sector.}

The early search of \cite{Cariglia:2004qi} further approached a rather
restricted sector of the possible gauged solutions. 
The search therein specialized to the Cartan-gauged sector, 
constant dilaton and warp factor, and further imposed $\iota_KG=0$. 
In terms of the SU$(3)$ decomposition, these simplifying assumptions of 
\cite{Cariglia:2004qi} directly amount to 
\be
\label{Carigliarestrict}
\zeta = D = f = q = \rho = 0\, .
\ee
Thus the fundamental isotype was mostly deactivated from the outset,
and the remaining holomorphic pair $(\alpha,\beta)$ also vanishes
in the gauged configurations worked out explicitly.
Among these configurations, the search did not go beyond
realizing the SU$(3)$ content $(\lambda,\tau,\sigma,\kappa)\neq0$
by recovering the formerly known solution of \cite{Gauntlett:2000ng}.
The earlier charged solution of \cite{Cvetic:1999ne,Liu:1999ai},
which had already activated the fundamental isotype, was absent from
both the discussion and bibliography of \cite{Cariglia:2004qi}.

We next examine the proposed squashed AdS$_7$ configuration
of \cite{Cariglia:2004qi}, which has
\be
F=0\, ,
\qquad
G \propto g J\wedge J 
\quad \implies \quad 
\sigma \neq 0\, , \quad \lambda=\alpha=\tau=\beta=\kappa=0\, ,
\ee
in addition to the initial restrictive assumption in 
eq.~\eqref{Carigliarestrict}. Its only activated SU$(3)$ flux
component is therefore the singlet $\sigma$.  
Although the gauge and mass couplings are independent in the
minimal theory, this particular solution imposed $g_{\rm C}=4h_{\rm C}$.
Under eq.~\eqref{Carigliamap}, its $X=1$ value therefore imposes $m=4g$.
Thus, it solves the two-coupling minimal theory, but does not satisfy
the maximal-theory supersymmetric relation $m=g$ in eq.~\eqref{maxsusy}.

To determine whether the same squashed ansatz admits a
maximal-theory embedding, we restore the constant scalar before
imposing the maximal-theory coupling relation. The resulting
condition is
\be
4h_{\rm C}X^{-4}=g_{\rm C}X
\quad\Longleftrightarrow\quad
m=4gX^5\, .
\ee
Imposing the coupling relation of the maximal theory in eq.~\eqref{maxsusy}, 
we then find that a squashed AdS$_7$ background can be obtained
when the dilaton is fixed at
\be
X^5=\frac14\, ,
\ee
which notably, is not a stationary point of the scalar potential,
the 4-form flux is set to
\be 
G = - 2 g X^{-1} J \wedge J\, , 
\ee
and the corresponding potential, twist and base curvature are
\be
B=\frac1{2X^2}(\dd t+\omega)\wedge J\, ,
\qquad
\dd\omega=-4gXJ\, ,
\qquad
R^{(6)}_{ij}=-16g^2X^2g_{ij}\, .
\ee
We verified that this maximally-embedded squashed solution satisfies %
the complete KSE conditions, Bianchi identities, EOMs, and self-duality 
condition of the corresponding gauged null class 
$\mathcal C_{\rm g}^{\rm null}$. 
Following our classification, this means that its base 
admits a torsion-free U$(3)$ structure.  

Nevertheless, using the radial coordinate $r$ and periodic
$S^3$ Euler angle $\psi$ in eqs.~(101), (102), (105) of
\cite{Cariglia:2004qi}, the metric gives
$g_{\psi\psi}\propto(\cosh r-1)(3-\cosh r)$,
with a positive proportionality factor.
The corresponding circles are contractible and timelike for
$\cosh r>3$; neither passage to the universal cover nor the
constant scalar rescaling above removes this obstruction.
The maximally embedded configuration is therefore globally unphysical, 
and thus excluded from Table~\ref{tab:minimal-solutions}.

\paragraph{BPS flows and Janus solutions.}

The empty outcome of the above early search contrasts sharply with the
rich solutions of minimal seven-dimensional supergravity constructed
in \cite{Dibitetto:2017tve,Conti:2024rwd}.
Rather than imposing highly restrictive geometric ans\"atze, the latter BPS 
flows were found directly by solving for KSE systems and explicit 
Killing-spinor ans\"atze. 
Moreover, their maximal-theory embeddings that lie in the $F=0$ sector of 
the gauged null class $\mathcal C_{\rm g}^{\rm null}$
comprise two geometrically distinct families that realize the complete SU$(3)$ 
content permitted in this class, as we now show.

The two coupling deformations $g_{\rm D}$ and $h_{\rm D}$ remain independent
in the action and KSEs of \cite{Dibitetto:2017tve}, whereas 
\cite{Conti:2024rwd}, which uses the same field normalizations, and, 
up to redefinitions of Clifford-rotation angle and curvature parameters,
the same Killing-spinor ansatz and KSEs, imposes
\be
h_{\rm Co}=\frac{g_{\rm Co}}{2\sqrt{2}}\, ,
\ee
from the outset. Under the map in eq.~\eqref{Dibitettomap}, this is just the 
maximal-supersymmetry relation in eqs.~\eqref{mintomax}, \eqref{maxsusy}. 
Thus the work of \cite{Conti:2024rwd}, 
which further sets all vector fields to zero,  
is automatically embedded into the equal-charge maximal theory.
Throughout this analysis we use the corrected version of the spin-$\tfrac12$ KSE 
printed in eq.~(2.7) of \cite{Dibitetto:2017tve}, already established in 
Section~\ref{spinors}.

In these works, $\theta_{\rm D}$ and $\theta_{\rm Co}=2\theta_{\rm D}$ are the
angles multiplying $\gamma^{012}$ in the respective Killing-spinor
ans\"atze. They parametrize a spacetime Clifford rotation and should
not be confused with the internal SM angle $\theta$ defined here in
Section~\ref{gstructure}. The latter is carried by the SU$(2)$
orientation of their constant SM spinor underlying their Dirac notation.
When the Yang--Mills fields vanish, and no SU$(2)$ SM-orientation
is imposed, the solutions of \cite{Dibitetto:2017tve,Conti:2024rwd} 
thus admit its full constant range, including $s\neq0$. 
These works therefore provide explicit non-trivial gauged-null solutions 
with $s\neq0$ Killing spinors, 
although in this case different internal SM orientations 
do not define distinct bosonic configurations.

We first consider the configurations with vanishing Yang--Mills fields, 
$F_{(2){\rm D}}^i=0$, and consider only solutions that have 
a maximal-theory embedding.
Moreover, recall that in this work we focus on timelike solutions, so we note 
that the causal character of the solutions can be read off directly
from the Clifford-rotation angle in eq.~(4.2) of \cite{Dibitetto:2017tve}: 
For a generic allowed choice of the constant spinor,
the norm of its vector bilinear is proportional to $-\sin^2(2\theta_{\rm D})$.
Furthermore, the domain wall in eq.~(3.5), for example, has 
$F_{(2){\rm D}}^i=F_{(4){\rm D}}=0$, but $\dd X_{\rm D}\neq0$. 
Proposition~\ref{codilaton} therefore immediately rules out 
a corresponding timelike Killing vector, 
without even examining explicit spinor bilinears. 

The BPS flow in eq.~(4.15) of \cite{Dibitetto:2017tve}, 
denoted here by Dibitetto et al.~(1) in Table~\ref{tab:minimal-solutions} below, 
has $\theta_{\rm D}=\arctan(\varrho^{-2})$, and is timelike at finite radius
(here the radial metric coefficient in their eq.~(4.15)
includes an additional factor of $16$, as required by their
radial gauge condition (4.14)).
The couplings are independent in the general minimal solution, while
the choice made below eq.~(4.16), $h_{\rm D}=g_{\rm D}/(2\sqrt{2})$, 
selects its maximal-theory specialization. 
In this specialization $X_{\rm D}\rightarrow1$ as 
$r=\log\varrho\rightarrow+\infty$, whereas $X_{\rm D}\rightarrow0$ towards 
$r\rightarrow0$. 
Since $g\neq0$ and $F=0$, the embedded solution belongs to the class 
$\mathcal C_{\rm g}^{\rm null}$. 
Therefore, due to its running scalar, our classification readily gives, 
generically
\be
f=\lambda=\tau=\alpha=\zeta=\beta=0\, ,
\qquad
D,q,\sigma\neq0\, .
\ee
Then, a direct projection of its dyonic four-form flux further shows 
that generically,
\be
\rho,\kappa\neq0\, .
\ee
The gauged-null classification then gives an integrable U$(3)$
structure with
\be
\mathcal W_1=\mathcal W_2=0\, ,
\qquad
\mathcal W_3\neq0\, ,
\qquad
\mathcal W_4\neq0\, ,
\ee
where the product of timelike warp factor and dilaton, $Xh$, is constant.

Next, the timelike flow in eq.~(4.29) of \cite{Dibitetto:2017tve}, 
denoted here by Dibitetto et al.~(2) in Table~\ref{tab:minimal-solutions} below, 
similarly embeds after imposing $h_{\rm D}=g_{\rm D}/(2\sqrt{2})$. 
This embedding is fully contained in the more general Janus solutions 
in eq.~(3.11) of \cite{Conti:2024rwd}. Indeed, the latter work identifies 
the former solution below eq.~(3.5) by the restrictions $C_1=1$, 
$C_2=5g_{\rm Co}/\sqrt{2}$, 
with $C_X$ fixed by the remaining parameter of \cite{Dibitetto:2017tve}. 
Thus, the work of \cite{Conti:2024rwd} genuinely enlarges the solution family, 
and the whole maximally-embedded Dibitetto et al.~(2) solution family 
is nested in it.
With $\theta_{\rm Co}=2\theta_{\rm D}$, its vector-bilinear norm is
proportional to $-\sin^2\theta_{\rm Co}$, so the Janus family is timelike
throughout $0<\theta_{\rm Co}<\pi/2$.
Moreover, the regularity condition in eq.~(3.13) selects the smooth 
one-parameter subfamily in eq.~(3.17), with $-1<\lambda_{\rm Co}<1$, 
and the AdS$_7$ vacuum recovered at $\lambda_{\rm Co}=0$. 

The Janus solution family of \cite{Conti:2024rwd} realizes the
complete SU$(3)$ content permitted in the $F=0$ gauged-null class;
indeed, the contained Dibitetto et al.~(2) family already activates
every component allowed by the corresponding constraints.
Since it is a BPS subfamily, eq.~\eqref{BPSinheritance} prevents any of these 
components from vanishing identically throughout the general Janus family, 
while the common $F=0$ gauged-null constraints preserve all components that
vanish identically. 
The complete qualitative SU$(3)$ signature of the Janus family therefore
follows without requiring a new projection of flux components in 
the general family.
Accordingly, Table~\ref{tab:minimal-solutions} assigns a single row 
to the Janus family containing Dibitetto et al.~(2), since
the two families have identical entries.

Finally, the vector-coupled configurations in Sections~5.2 and 5.3 of
\cite{Dibitetto:2017tve} employ the full SU$(2)$ ansatz in their eq.~(5.1).
Under the direct Cartan restriction in eq.~\eqref{Cartanrestrict},
with the remaining fields fixed, 
the sum of the orthonormal timelike and transverse-radial Einstein equations
loses two positive Yang--Mills contributions;
see eqs.~(2.9), (5.1) of \cite{Dibitetto:2017tve}.
Since the common gauge profile runs, neither configuration 
admits the Cartan restriction as a solution of the field equations.
Alternatively, setting the complete Yang--Mills sector to zero reduces the 
ans\"atze of their Sections~5.2 and 5.3 to the vector-free solution families of 
their Sections~4.2 and 4.4, respectively, which have already been discussed 
above. No additional relevant solutions therefore arise from these SU$(2)$ 
configurations.

\begin{table}[t]
\centering
\small
\setlength{\tabcolsep}{3.5pt}
\renewcommand{\arraystretch}{1.1}
\begin{tabular}{|l||%
		c|c|c|c|c||c|c|c|}
\hline
Solution %
& $D$
& $\sigma$
& $\kappa$
& $q$
& $\rho$
& $\mathcal W_3$
& $\mathcal W_4$
& $\mathcal W_5$
\\
\hline
\hline
Dibitetto et al.~(1)~\cite{Dibitetto:2017tve}
& $\checkmark$
& $\checkmark$
& $\checkmark$
& $\checkmark$
& $\checkmark$
& $\checkmark$
& $\checkmark$
& n/a
\\
\hline
Conti et al.~$\supset$ Dibitetto et al.~(2) 
\cite{Conti:2024rwd,Dibitetto:2017tve}
& $\checkmark$
& $\checkmark$
& $\checkmark$
& $\checkmark$
& $\checkmark$
& $\checkmark$
& $\checkmark$
& n/a
\\
\hline
\end{tabular}
\caption{
SU$(3)$ components of the fluxes and dilaton gradient, as defined in 
Appendix~\ref{su3projections}, and of the base torsion in the additional 
supersymmetric solution families from minimal seven-dimensional supergravity. 
A check mark denotes a generic non-vanishing component or torsion form.
Both lie in the $F=0$ sector of the gauged null class 
$\mathcal C_{\rm g}^{\rm null}$, and thus $f=\lambda=\tau=\zeta=\alpha=\beta=0$.
The fifth torsion form is not applicable in the gauged null class,
where the geometric structure is U$(3)$.
Dibitetto et al.~(1) and (2) denote the solutions of Sections~4.2 and
4.4 of \cite{Dibitetto:2017tve}, respectively.
The maximally-embedded family (2) of \cite{Dibitetto:2017tve} is entirely 
contained in the general solution in eq.~(3.11) of \cite{Conti:2024rwd}.
The two rows have identical entries, although they describe 
distinct flows on backgrounds $\mathbb R^{1,2}\times S^3$ 
and AdS$_3\times S^3$, respectively.}
\label{tab:minimal-solutions}
\end{table}

\section{Conclusions}
\label{lafin}

In this work we considered maximal Cartan-gauged seven-dimensional 
supergravity and its minimal theory truncation via the equal-charge sector of 
the maximal theory, for which we developed a geometric classification of their 
timelike supersymmetric solutions.
We first established the complete formulation of the maximal Cartan-gauged 
theory in terms of the gauge-invariant $4$-form field strength, including both 
the bosonic action and supersymmetry variations, directly from 
the original maximal SO$(5)$-gauged theory \cite{Pernici:1984xx}. 
Since the original theory was formulated in terms of the 3-form gauge potential, 
the action and supersymmetry variations must be treated together,
taking into account a topologically massive $3$-form, a distinctive feature of 
seven-dimensional gauged supergravity.
The resulting formulation, including a well-defined self-duality condition in 
the gauged theory, has a smooth ungauged limit and provides a single consistent 
starting point for both the gauged and ungauged theories.

Moreover, this formulation was achieved by requiring a consistent truncation 
from the maximal U$(1)^2$-gauged theory to the minimal U$(1)$-gauged one, 
which we established at both the bosonic and fermionic levels, 
using the equal-charge limit. 
This consistent truncation uniquely fixed 
the action and KSEs, especially the coefficient $-m$ of the topological mass 
term, and the corresponding higher-form field-to-field-strength map, 
while also yielding the maximal-supersymmetry coupling relation $m=g$.
The limit of equal bosonic fields within the maximal theory 
must be accompanied by restricting the maximal spinor USp$(4)$ components 
to form the minimal SM doublet, while the antisymmetric spin-$\tfrac12$ 
combination and the orthogonal USp$(4)$ subspace are removed. 
The complementary USp$(4)$ components nevertheless give an additional 
maximal-theory spinor branch, %
and can therefore enhance the supersymmetry of a maximally-embedded 
minimal bosonic configuration when $F=0$. 
This complete truncation makes the same geometric classification directly 
applicable to the minimal U$(1)$-gauged theory, while retaining the additional 
spinorial possibilities of the maximal theory.

Starting from a single SM Killing spinor, we constructed its complete 
set of non-vanishing bilinears, and expressed them in terms of a timelike 
fibration and the SU$(3)$ structure $(J,\Psi)$ on the six-dimensional base. 
We kept the internal SM spinor orientation %
general, rather than fixing it at the outset, which was important for deriving 
the complete algebraic and differential systems, and determining the 
restrictions they impose. 
Resulting general consequences include that
a running dilaton requires a non-vanishing flux, 
the second stationary point of the scalar potential is not supersymmetric, 
and the complex scalar bilinear identically vanishes, 
independently of gauge choice, 
whenever $g,F\neq0$ (and the maximal Killing spinor lies in 
$\Ker\Gamma_-$). 

The SU$(3)$-isotypic decomposition decouples the algebraic KSE
constraints into independent systems for the irreducible components
of $\dd X$, $F$ and $G$.
The differential bilinear relations further show that in addition to preserving 
the full bosonic solution, the Killing vector preserves the internal SM 
orientation angle, and the almost-Hermitian structure of the transverse base. 
Compatible Killing gauges can also be chosen for both gauge potentials, such 
that they are also Killing-invariant. 
When there is gauging, the Killing gauge of the 3-form potential is constructed 
in conjunction with its self-duality condition and the related gauge freedom. 
In the gauged null Cartan-flux configurations, the structure form $\Psi$ is not 
Killing-invariant, but instead undergoes a U$(1)$ R-symmetry phase rotation 
along the Killing direction. 
The base structure is consequently enlarged from SU$(3)$ to U$(3)$, with the 
corresponding phase rotation geometrized by the $K\wedge\dd K$ twist term in the 
$3$-form potential.

Completing this geometric bilinear analysis, we determined all five 
intrinsic-torsion forms, wherever the base retains an SU$(3)$ structure.
Non-integrability with either the first or second torsion form non-vanishing 
can occur only in the ungauged sector with an active Cartan flux. 
The third torsion form is controlled by the single flux component in the 
highest-degree sector of the SU$(3)$ decomposition, 
and the fourth torsion form is necessarily exact when there is gauging. 
At the common SM spinor-angle parameter value $s=0$, 
the essential distinction among the resulting geometries is carried solely by 
the fifth torsion form or its absence. 
In the class of gauged active Cartan flux this torsion form contains the 
transverse gauge connection, while in the gauged null Cartan flux class, 
there is no intrinsic fifth torsion form, as only a U$(3)$ structure is 
preserved. 
This complete torsion analysis, together with the resulting base-structure 
refinements, provides a comprehensive geometric picture, also for the minimal 
theory.

The resulting bosonic solution space separates exhaustively according to two 
independent distinctions: 
the presence or absence of gauging, and whether $F\Gamma_+$ acts non-trivially 
on the Killing spinor. 
This yields a four-class partition, 
see Tables~\ref{tab:preliminary-classes-ug}, \ref{tab:preliminary-classes-g}, 
where each class specifies the permitted irreducible flux components, scalar 
and internal spinor-angle restrictions, twist and intrinsic torsion, 
Killing-gauge potentials, and remaining bosonic equations. 
We established that, given the KSEs and Bianchi identities, 
the independent bosonic EOMs reduce to the transverse $(2,2)$-form component 
of the $3$-form equation in every class, supplemented by the scalar equation 
only in the active classes when $s=0$.
We then integrated the algebraic and differential KSE constraints
with the full bosonic Bianchi identities and these independent EOMs,
Killing gauges of potentials and, when there is gauging, also the
self-duality condition.
The same four-class framework applies directly to every solution of the
minimal U$(1)$-gauged theory for which the vector bilinear of the chosen
Killing spinor is timelike, when the maximal-supersymmetry coupling relation
is imposed.

Applying this classification framework as a diagnostic to the known
and proposed solutions in the literature reveals two complementary 
solution frontiers, constructed directly in the maximal and minimal theories, 
see Tables~\ref{tab:known-gactive-solutions}, \ref{tab:minimal-solutions},  
respectively.
On the black-hole side, in the maximal theory, the general rotating AdS$_7$ 
solutions of \cite{Bobev:2023bxl,Bobev:2025xan} restore the second Cartan 
charge and hence extend beyond the equal-charge sector resolved here;
the equal-charge Lorentzian specialization of \cite{Bobev:2023bxl}
reduces to the family with three independent angular momenta of
\cite{Chow:2007ts}, while the $L^{p,q,r}$ family of
\cite{Bobev:2025xan} retains its distinct horizon geometry when the
charges are equal.
Within the present analysis, these solutions give two distinct black-hole
endpoints of the progression from magnetic fixed points to charged
and rotating geometries. The generic BPS black hole of
\cite{Chow:2007ts} activates every SU$(3)$ component permitted in
the gauged active class.
On the complementary side of the gauged null Cartan flux, 
originating from the minimal theory,  the maximally-embedded BPS flows and Janus 
solutions of \cite{Dibitetto:2017tve,Conti:2024rwd} realize the complete 
permitted SU$(3)$ flux content with $F=0$.
They provide non-trivial solutions permitting $s\neq0$ Killing-spinor 
orientations, and constitute two geometrically distinct flows 
with identical qualitative flux and torsion data, also demonstrating 
the further geometric and functional freedom within such resolution.
The early search in \cite{Cariglia:2004qi} provided no additional
solutions satisfying the necessary requirements, owing to global
pathologies, or the absence of a supersymmetric Cartan restriction.

More broadly, the known configurations occupy only a small portion of the 
full solution space that is exposed by the classification, 
and the resulting diagnostics can guide the formulation of new ans\"atze in 
searches for further supersymmetric solutions.
Several concrete directions of future investigation now follow. 
Firstly, the ungauged classes remain essentially unexplored, 
particularly when the dilaton is taken to run 
and the complete SU$(3)$ content is permitted. 
Within the equal-charge maximal theory, the null Cartan-flux configuration with 
$\chi^A\in\Ker\Gamma_+$, closed primitive flux $F=\tau\neq0$ and necessarily 
$G\neq0$, is admitted by the complete system, 
but is not realized by any known solution. 
Extending the analysis beyond equal charges would bring the full families of
recently discovered black holes constructed in
\cite{Bobev:2023bxl,Bobev:2025xan} within the classification.
Treating the null Killing-vector case would further develop existing work, 
ranging from the early solutions of \cite{Gauntlett:2000ng} 
and the classification of \cite{MacConamhna:2004fb} to the more recent
works of \cite{Dibitetto:2017tve,Conti:2024rwd,Conti:2025wyj,Conti:2026row},
which contain additional null solutions.
The present complete formulation of the action and KSEs, together with the 
bilinears, gauge potentials, and intrinsic geometry, also provides the critical 
input required for developing geometric localization in odd-dimensional 
supergravity. 
The framework established here therefore provides 
a systematic geometric toolbox for constructing new solutions 
and developing more general geometric and holographic applications.

\begin{acknowledgments}
I thank Tom\'as Ort\'in for informative correspondence 
at an early stage of this work.
I am profoundly grateful to James Sparks for standing by me throughout my work 
on this project. 
His exceptionally generous support and genuine interest were instrumental 
to its completion.
ML has been supported by the Science and Technology Facilities Council 
(STFC) Rutherford Grant ST/V003895 \textit{``Harnessing QFT for 
Gravity''}, and by the Mathematical Institute, University of Oxford.
\end{acknowledgments}

\appendix

\section{Appendices}

\counterwithin{equation}{subsection}

\subsection{Clifford Algebra in Dimension 7}
\label{7Cliff}

In seven-dimensional spacetime the Dirac spinors carry 
$2^{\lfloor7/2\rfloor} = 8$ 
complex components. 
The $8\times8$ gamma matrices, $\{\gamma_{\alpha}\}_{\alpha=0,\ldots,6}$, 
which span the Clifford algebra with a Lorentzian signature, 
Cliff$(1,6)$, satisfy
\be
\label{Cliff1+6}
\{ \gamma^\alpha, \gamma^\beta \} = 2 \eta^{\alpha\beta} \mathds{1}_8 \, ,
\ee
with the tangent metric $\eta_{\alpha\beta} = \diag (-1, + \mathds{1}_6)$, 
such that $(\gamma^0)^{-1} = - \gamma^{0} = \gamma _{0}$,
and their conjugation properties are defined via
\begin{subequations}
\begin{align}
(\gamma^{\alpha})^\dag & = \gamma^0 \gamma^\alpha \gamma^0\, ,  
\\*
\label{complexconj} 
(\gamma^\alpha)^* & = B \gamma^\alpha B^{-1}\, , 
\\*
(\gamma^\alpha)^T & = - C \gamma^\alpha C^{-1}\, , 
\end{align}
\end{subequations}
with $\gamma^0$, $B$, and $C$, the matrices of hermitian, complex, 
and charge conjugation, respectively, where the latter two 
are always unitary, and satisfy
\begin{subequations}
\begin{align}	
C & = C^T \, , \\
B & = - i C \gamma^0 \, , 
\end{align}
\end{subequations}
such that $(\gamma^{0})^\dag = - \gamma^{0}$,
$(\gamma^{i})^\dag = \gamma^{i}$,
$C^{-1} = C^\dag = C^*$, and
$B^{-1} = B^\dag = - B^*$, 
that is $B = - B^T$.

The charge and Majorana conjugates, and the Majorana conjugate of 
the charge conjugate of any spinor are defined, respectively, via
\begin{subequations}
\begin{align}
\label{chargeconj}	
\epsilon^c & \equiv B^{-1} \epsilon^* 
= -i \gamma^0 C^{-1} \epsilon^* 
\, ,\\*
\overline{\epsilon} & \equiv \epsilon^T C\, ,\\*
\overline{\epsilon^c} & \equiv (\epsilon^c)^T C 
= - i \epsilon^\dag \gamma_0 \, .
\end{align}
\end{subequations}

We can then choose a complex antisymmetric representation of 
the gamma matrices, where the charge conjugation matrix is taken as 
\be
\label{Crepresent}
C = \mathds{1}_8\, , 
\ee 
such that 
\be 
\label{Majconj}
(\gamma^\alpha)^T = - \gamma^\alpha\, ,
\qquad \overline{\epsilon} = \epsilon^T\, .
\ee

Furthermore, in odd dimensions there are two inequivalent irreducible 
representations of the algebra to choose between 
as an independent convention, which we fix here as 
\be
\label{7dchirality}
\gamma_{0123456} \equiv 
\gamma_{0}\gamma_{1}\gamma_{2}\gamma_{3}\gamma_{4}\gamma_{5}\gamma_{6} 
= - \gamma^{0123456} = - \mathds{1}_8\, .
\ee

\subsection{Euclidean Clifford Algebra in Dimension 5}
\label{5ECliff}

Due to the USp$(4)$ R-symmetry, the Euclidean matrices in dimension $5$, 
which span the Clifford algebra, Cliff$(5)$, are also featured in our theory. 

We choose the gamma matrices, $\Gamma_a$, 
satisfying $\{ \Gamma_a, \Gamma_b \} = 2 \delta_{ab} \mathds{1}_4$, as 
follows:
\be
\Gamma_a \equiv \left\{\begin{array}{ll}
	\sigma_a \otimes \mathds{1}_2 \, , \quad & a=1,2 \\
	\sigma_3 \otimes \sigma_{a-2} \, , \quad & a=3,4,5 
\end{array} \right. \, , 
\ee
with the Pauli matrices, $\sigma_a$, defined here as  
\be
\label{pauli}
\sigma_0 \equiv -\mathds{1}_2\, , \quad 
\sigma_1 \equiv \left( \begin{array}{rr}
	0 & 1  \\
	1 & 0  \end{array} \right), \quad
\sigma_2 \equiv \left( \begin{array}{rr}
	0 & -i \\
	i & 0  \end{array} \right), \quad
\sigma_3 \equiv \left( \begin{array}{rr}
	1 & 0  \\
	0 & -1  \end{array} \right). 
\ee

Specifically in the U$(1)^2$-gauged sector we only need 
the following matrices of the Cartan subalgebra:
\begin{subequations}
\label{cliff5}	
\begin{align}
i \, \Gamma_{\hat{1}} \equiv \Gamma_{12} & 
= i \, \diag (+1,+1,-1,-1)\, ,  \\*
i \, \Gamma_{\hat{2}} \equiv \Gamma_{34} & 
= i \, \diag (+1,-1,+1,-1)\, , \\*
\Gamma_5 = - \Gamma_{1234} = \Gamma_{\hat{1}} & \Gamma_{\hat{2}}  
= \diag (+1,-1,-1,+1)\, ,
\end{align}
\end{subequations}
where we have 
\be
\Gamma_{12345} = -\mathds{1}_4\, .
\ee

Let us record the product table of this Abelian subgroup -- the 
Klein $4$-group $Z_2 \times Z_2$:
\begin{align}
\Gamma_{\hat{1}}^2 = \Gamma_{\hat{2}}^2 = 
\Gamma_{5}^2 = \mathds{1}_4\, ,\quad
\Gamma_{\hat{1}} \Gamma_{\hat{2}} = \Gamma_5\, ,\quad
\Gamma_{\hat{1}} \Gamma_{5} = \Gamma_{\hat{2}}\, ,\quad 
\Gamma_{\hat{2}} \Gamma_{5} = \Gamma_{\hat{1}}\, .
\end{align}
We then define the two U$(1)$ generators to form the Cartan basis as
\begin{align}
\label{cartangen}	
\Gamma_{+} \equiv \frac{1}{2} 
\left( \Gamma_{\hat{1}} + \Gamma_{\hat{2}} \right)
= \diag (1,0,0,-1)\, , \qquad
\Gamma_{-} \equiv \frac{1}{2} 
\left( \Gamma_{\hat{1}} -\Gamma_{\hat{2}} \right)
= \diag (0,1,-1,0)\, ,
\end{align}
as they satisfy 
\begin{align}
\Gamma_{+} \,\Gamma_{-} = 0\, .
\end{align}

Thus the two $\frac{1}{2}$-supersymmetry projections emerge as 
\begin{align}
\label{projections}	
P_{\pm} \equiv  \frac{1}{2} \left( \mathds{1}_4 \pm \Gamma_{5} \right)
= \Gamma_{\pm}^2\, , 
\end{align}
or explicitly
\be
P_+ = \diag (1,0,0,1) \, , 
\qquad 
P_- = \diag (0,1,1,0)\, .
\ee

\subsection{Conventions and Formulae for Bilinears}
\label{bilinearformulae}

In order to determine which bilinears in Section~\ref{bilinear} 
are non-vanishing, and analyse them, 
the Majorana flip relation is required, which reads
\be 
\label{Majoranaflip}
\bar{\lambda} \, \gamma_{\mu_1\cdots\mu_r} %
\chi
= (-1)^{r(r+1)/2} %
\, \bar{\chi} \, \gamma_{\mu_1\cdots\mu_r} %
\lambda\, ,
\ee
where we have an extra overall minus sign with respect to 
\cite{Freedman:2012zz}, due to the commuting spinors used 
throughout the present work.

In order to determine the reality properties of the bilinears, 
we use the formula for complex conjugation of spinor bilinears 
with an arbitrary matrix $M$, which reads:
\be 
\label{Lorentzcc}
( \bar{\lambda} M \chi )^* \equiv ( \bar{\lambda} M \chi )^c
= - \overline{\lambda^c} M^c \chi^c\, , 
\qquad
M^c \equiv B^{-1} M^* B\, .
\ee

To get the generic geometric bilinear relations, we apply the 
Fierz identity in $d=7$, which reads: 
\be 
\label{Fierz}
\bar{\epsilon}_1 \epsilon_2 \bar{\epsilon}_3 \epsilon_4 
= \frac{1}{8} \bigg[ \bar{\epsilon}_1 \epsilon_4 
\bar{\epsilon}_3 \epsilon_2
+ \bar{\epsilon}_1 \gamma_\mu \epsilon_4 
\bar{\epsilon}_3 \gamma^\mu \epsilon_2 
- \frac{1}{2} \bar{\epsilon}_1 \gamma_{\mu\nu} \epsilon_4 
\bar{\epsilon}_3 \gamma^{\mu\nu}\epsilon_2 
- \frac{1}{6} \bar{\epsilon}_1 \gamma_{\mu\nu\rho} \epsilon_4 
\bar{\epsilon}_3 \gamma^{\mu\nu\rho}\epsilon_2 \bigg]\, .
\ee 
In the above applications of the Fierz identity 
we also needed the following identity:
\be
\label{gammasandwich}
\gamma_{\alpha} \gamma^{\mu_1\cdots\mu_r}\gamma^{\alpha} 
= (-1)^r (d-2r) \gamma^{\mu_1\cdots\mu_r}\, ,
\ee
and its extension which we derived as
\be
\label{gammadblsandwich}
\gamma_{\alpha\beta} \gamma^{\mu_1\cdots\mu_r}\gamma^{\alpha\beta} 
= \left[ d -(d-2r)^2\right] \gamma^{\mu_1\cdots\mu_r}\, .
\ee

Furthermore, in the above applications of the Fierz identity we also used
the duality relation between Clifford matrices of ranks $r$ and $d-r$
in odd dimensions, $d=2m+1$. The overall sign in this relation depends on
the choice of irreducible Clifford representation, which was fixed 
in eq.~\eqref{7dchirality}, while the totally antisymmetric tensor is 
fixed independently by the orientation convention, so that we write
\be
\gamma_{\mu_1\cdots\mu_r}
=
\sigma_\gamma\, i^{m+1}
\frac{1}{(d-r)!}
\epsilon_{\mu_1\cdots\mu_r\mu_{r+1}\cdots\mu_d}
\gamma^{\mu_d\cdots\mu_{r+1}}\, ,
\qquad
\sigma_\gamma=\pm1\, .
\ee
With the representation convention fixed in Appendix~\ref{7Cliff}, and
the orientation convention
\be
\label{orient}
\epsilon_{0123456}
=
-\epsilon^{0123456}
=
1\, ,
\ee
one has $\sigma_\gamma=-1$. Therefore, in $d=7$, this gives
\be
\label{dualclifford7}
\gamma_{\mu_1\cdots\mu_r}
=
\frac{(-1)^{\lfloor r/2\rfloor}}{(7-r)!}
\epsilon_{\mu_1\cdots\mu_r\nu_1\cdots\nu_{7-r}}
\gamma^{\nu_1\cdots\nu_{7-r}}\, .
\ee

The above dual of Clifford matrices in eq.~\eqref{dualclifford7} 
is closely related to the Hodge dual of forms, once the Clifford 
matrices are put into bilinear forms. 
The Hodge-star operator for any complex p-forms $\omega$, $\eta$ 
is defined in this work as 
\be
\label{starmathdef}
\omega \wedge \overline{ \star \eta }
= \langle \omega, \eta \rangle \text{Vol}_d\, , 
\ee
or equivalently, in components 
\be
\label{stardef}
(\star \omega)_{\mu_{1}\cdots\mu_{d-p}} 
= \frac{1}{p!} \omega^{\nu_1\cdots\nu_p} 
\epsilon_{\nu_1\cdots\nu_p\mu_{1}\cdots\mu_{d-p}}\, .
\ee

The convention that we use for the interior product 
of a vector $V$ and $\omega$ reads:
\be
\label{intprod}
(\iota_V \omega)_{\mu_1\cdots\mu_{p-1}} 
\equiv V^\nu \omega_{\nu\mu_1\cdots\mu_{p-1}}\, ,
\ee
and for the more general contraction with a 
tensor $T$ of rank $q\le p$
\be 
\label{formcontract}
(T \lrcorner \omega)_{\mu_1\cdots\mu_{p-q}} 
\equiv \frac{1}{q!} T^{\nu_1\cdots\nu_q} 
\omega_{\nu_1\cdots\nu_{q}\mu_1\cdots\mu_{p-q}}\, .
\ee

For a real transverse $p$-form $\zeta$ on the six-dimensional base, 
we use the norm
\be
\label{6dnormconvention}
\lvert\zeta\rvert^2
\equiv
\frac1{p!}\zeta_{i_1\ldots i_p} \zeta^{i_1\ldots i_p}\, ,
\ee
with indices raised using the base metric $g_{ij}$, 
as defined in eq.~\eqref{genmetric}.

Finally, the Pauli matrices defined in eq.~\eqref{pauli} 
satisfy the product rule:
\be
\label{Pauliprod}
\sigma_a \sigma_b  = \delta_{ab} \mathds{1}_2 + i \epsilon_{abc} 
\sigma_c\, . 
\ee

\subsection{Euclidean Clifford Algebra in Dimension 6}
\label{6ECliff}

In dimension $d=6$ we take the same conventions of charge conjugation 
as those in dimension $d=7$ in Appendix~\ref{7Cliff}, 
so the focus here is on switching to a Euclidean signature. 
The matrices in the Euclidean Clifford algebra, Cliff$(6)$, satisfy
\be
\{ \gamma^a, \gamma^b \} = 2 \delta^{ab} \mathds{1}_8 \, ,
\ee
with the Euclidean gamma matrices related to the Lorentzian ones via
\begin{subequations}
\begin{align}
\label{Euclidgamma0}	
\{\gamma^{a}\}_{a=1,\ldots,6} & = \{\gamma^{\mu}\}_{\mu=1,\ldots,6}\, ,
\end{align}
\end{subequations}
where they are all Hermitian. 
As noted, in the Euclidean signature charge conjugation remains 
the same:
\begin{subequations}
\begin{align}
C_E & = C =C^T\, , \\
(\gamma^a)^T & = - C \gamma^a C^{-1}\, , 
\end{align}	
\end{subequations}
whereas hermitian and complex conjugation are modified as follows:
\begin{subequations}
\begin{align}
(\gamma^a)^\dag & = \gamma^a \, , \\
\label{complexconjEuclid}
\implies (\gamma^a)^* & = -B_E \gamma^a B_E^{-1}\, ,\\
\implies B_E & = C \, , 
\end{align}
\end{subequations}
that is $B_E = B_E^T$ in the Euclidean signature, and since $B_E$ is 
always unitary, it now holds that $B_E^*B_E = \mathds{1}_8$, which 
allows for Majorana spinors, in contrast to the Lorentzian case with 
the same conventions of charge conjugation. 
The charge conjugate, and its Majorana conjugate, are now defined via
\begin{subequations}
\label{Euclidcharge}	
\begin{align}	
\label{chargeEuclid}	
\epsilon^c & \equiv B_E^{-1} \epsilon^* = C^{-1} \epsilon^* \, , \\
\label{barofchargeEuclid}
\overline{\epsilon^c} & \equiv (\epsilon^c)^T C = \epsilon^\dag \, .
\end{align}
\end{subequations}

\subsection{SU\texorpdfstring{$(3)$}{(3)} Structure}
\label{gstructure6d}

First, note that on the six-dimensional Riemannian base manifold we 
switch to spinors in the corresponding Euclidean Clifford algebra, 
Cliff$(6)$, as detailed in Appendix~\ref{6ECliff} above.

Then, there exists a global spinor of positive chirality, $\psi$, 
along with its anti-chiral counterpart, $\psi^*$, 
such that their chiral projections by $\gamma_*$, 
which is defined in eq.~\eqref{chiral6}, read
\begin{align}
\label{chiralfix}	
\gamma_*\psi & = i\psi\, , 
\qquad 
\gamma_*\psi^* = -i\psi^*\, ,
\end{align}	
with a unit norm, namely
\begin{align}
\overline{\psi^c}\psi & 
= \psi^\dag \psi = 1\, ,\\
\implies \quad & 
\overline{\psi} \psi^c = \psi^T\psi^* = 1,
\end{align}
where as in Appendix~\ref{7Cliff}, $C=\mathds{1}_8$, so that 
$\overline{\psi}=\psi^T$, and $\psi^c=\psi^*$. 

This spinor defines an SU$(3)$ structure, 
see also \cite{Gauntlett:2003cy,Lust:2004ig}, 
via a pair of fundamental forms, $(J, \Psi)$, a real 2-form, $J$, 
and a complex 3-form, $\Psi$, which are defined as follows:
\begin{subequations}
\label{structuredef}	
\begin{align}
\label{defJ}	
J_{ij} & \equiv i \overline{\psi^*} \gamma_{ij} \psi 
= - i \psi^T \gamma_{ij} \psi^*\, ,\\
\label{defPsi}
\Psi_{ijk} & \equiv \overline{\psi^*} \gamma_{ijk} \psi^*\, , 
\qquad 
\overline{\Psi}_{ijk} \equiv - \overline{\psi} \gamma_{ijk} \psi\, ,
\end{align}	
\end{subequations}
where $J$ defines an almost-complex structure, namely
\begin{align} 
\label{defjsq}	
J^{i}{}_{j} \, J^{j}{}_{k} & = - \delta^{i}{}_{k}\, ,
\end{align}
see also Appendix~\ref{almostcalculus} below for algebraic 
and differential almost-complex calculus.

This pair of forms satisfies \textit{algebraic} compatibility and 
normalization conditions. 
Since $\Psi$ is of type $(3,0)$ with respect to $J$, it is primitive.
Consequently, the compatibility condition between the defining forms
can be written as
\be 
\label{compatible}
J \lrcorner \Psi=0 \quad \iff \quad J \wedge \Psi = 0\, , 
\ee
where the equivalence is easily shown using 
the general identity in eq.~\eqref{Hodgeidentity}, and 
eq.~\eqref{su3hodge}. 

\textit{At each point} the basis can always 
be chosen, so that $J$ and $\Psi$ take the standard form
\begin{align}
\label{stdbasis}	
J & = e^{12} + e^{34} + e^{56}\, ,\\
\label{holo3formstd}
\Psi & = (e^1 + ie^2) \wedge (e^3 + ie^4) \wedge (e^5 + ie^6)\, ,
\end{align}	 
from which it is easy to verify the \textit{algebraic} normalization 
conditions that read
\be
\label{normal}
\star_6 1 \equiv \text{Vol}_6 = \frac{1}{6} \,J \wedge J \wedge J 
= \frac{i}{8} \,\Psi \wedge  \overline \Psi\, . 
\ee
This is consistent with the normalization contractions
\be
\label{normalize}
J^{ij}J_{ij}=6\, , \qquad \Psi^{ijk}\overline{\Psi}_{ijk}=48\, ,
\ee 
which are also consistent with the geometric relations of the 
seven-dimensional bilinears that we found in 
eq.~\eqref{Fierznorms}.

It is easy to see from the definition in eq.~\eqref{starmathdef} 
and from eq.~\eqref{normal} that the Hodge duals 
of $J$ and $\Psi$ are given by
\begin{align}
\label{su3hodge}
\star_6 J = \frac12 J \wedge J\, ,
\qquad
\star_6 \Psi = -i\,\Psi
\, . 
\end{align}
The complex 3-form can be written in terms of two real 3-forms, 
defined as
\be 
\label{gettingreal}
\Psi = \Psi_+ + i \Psi_-\, ,
\ee
so that the Hodge duals of its real and imaginary parts satisfy
\be 
\label{PsiPMhodge}
\star_6 \Psi_+ = + \Psi_-\, , 
\qquad \star_6 \Psi_- = - \Psi_+\, .
\ee

Finally, let us highlight that the almost-complex structure $J$ 
of a generic SU$(3)$ structure is not necessarily integrable,
see also Appendix~\ref{almostcalculus}, 
and could in fact be non-integrable in the theory studied 
in the present work. 
We note that the term ``holomorphic'' for the complex 3-form is considerably 
ambiguous in the literature, as it is commonly assumed synonymous with the 
integrability of $J$, or takes on various meanings across different works.
Whenever the term ``holomorphic'' is used in this work for a form or one of its 
components, it refers only to its algebraic $(p,0)$ type with respect to $J$.

\subsection{Intrinsic Torsion of SU\texorpdfstring{$(3)$}{(3)} Structure}
\label{torsiondef}

A generic SU$(3)$ structure is classified according to its intrinsic 
torsion, which decomposes into five torsion forms, 
$\{{\mathcal{W}}_i\}_{i=1,\ldots,5}$, 
with the first two being complex. 
The torsion forms measure how much each of the two fundamental 
structure forms deviates from being closed 
\cite{Lust:2004ig,chiossi2002intrinsic,Cabrera:2005}. 
Thus, the torsion forms are defined via the exterior derivatives 
of the fundamental structure forms, 
decomposed into SU$(3)$ irreducible components 
(see also Appendix~\ref{decomposeformstosu3irreps} below),
as follows:
\begin{subequations}
\label{torsionforms}		
\begin{align}
\label{dJ}
\dd J & = \frac{3}{2} \,\text{Im}\left( \overline{\mathcal{W}_1} \Psi 
\right) 
+ \mathcal{W}_3 + \mathcal{W}_4 \wedge J\, ,\\
\label{dpsi}
\dd \Psi & = \mathcal{W}_1 J \wedge J + \mathcal{W}_2 \wedge J 
+ \overline{\mathcal{W}_5^{\mathbb{C}}} \wedge \Psi\, , 
\end{align}	
\end{subequations}
where $\mathcal{W}_1$, $\mathcal{W}_2$, can be further decomposed 
into independent real and imaginary parts, $\mathcal{W}_1^\pm$, 
$\mathcal{W}_2^\pm$, and such that 
\begin{subequations}
\label{torsionsu3irreps}
\begin{align}
\label{dpsi22}	
(\dd \Psi)^{(2,2)} & = \mathcal{W}_1 J \wedge J 
+ \mathcal{W}_2 \wedge J\, , 
\qquad 
0 = \mathcal{W}_2 \wedge J \wedge J\, ,\\
\mathcal{W}_3 & = (\dd J)^{((2,1)+(1,2))_0}\, , 
\qquad \qquad \,\,
0 = \mathcal{W}_3 \wedge J \, ,\\
\mathcal{W}_4 & = \frac12 J\lrcorner \dd J\, , \\  
\label{dpsi31}
(\dd \Psi)^{(3,1)} & = 
\overline{\mathcal{W}_5^{\mathbb{C}}} \wedge \Psi\, , 
\end{align}
\end{subequations}
where eq.~\eqref{dpsi22} arises within differential 
almost-complex calculus, see Appendix~\ref{almostcalculus} below.

The complexified torsion form
$\overline{\mathcal{W}_5^{\mathbb{C}}}$ is a complex mapping 
of the real form $\mathcal{W}_5$ via its projection onto 
its $(0,1)$ component, which reads
\be
\label{w5complexifiedconj}
\overline{\mathcal W_5^{\mathbb C}}{}_i
\equiv
(\Pi^{(0,1)})_i{}^j(\mathcal W_5)_j
=
\frac12
\left(\delta_i{}^j-iJ^j{}_i\right)
(\mathcal W_5)_j\, ,
\ee
see also Appendix~\ref{almostcalculus} below for a recap of 
almost complex calculus, and projectors.
We use the convention
\be
\mathcal W_a=\mathcal W_a^++i\mathcal W_a^-\, ,
\qquad a=1,2\, ,
\ee
where $\mathcal W_a^\pm$ are real.
Written in a real basis, the exterior derivatives of the real 
and imaginary parts of the fundamental 3-form of the structure then read	
\begin{align}
\label{dpsireal}	
d \Psi_\pm & = + \mathcal{W}_1^{\pm} J \wedge J 
+ \mathcal{W}_2^{\pm} \wedge J 
+ \mathcal{W}_5 \wedge \Psi_{\pm}\, ,
\end{align}	
where the algebraic identity in eq.~\eqref{almostidentity} is
used to recast the last term from $J \mathcal{W}_5 \wedge \Psi_{\pm}$. 

The $\mathcal W_5$ torsion form distinguishes the SU$(3)$ structure from
its underlying U$(3)$ structure. For an SU$(3)$ structure, the
$(3,1)$ component of $\dd \Psi$ must be expressible 
as in eqs.~\eqref{dpsi31}, \eqref{w5complexifiedconj},
for a globally-defined real one-form $\mathcal W_5$, and a globally-fixed 
normalized form $\Psi$. If such a global identification is not
possible, while the almost-Hermitian pair $(g,J)$ remains globally
defined, the structure is enlarged to U$(3)$, 
see Appendix~\ref{gstructure6dwg} below. 
The latter structure allows for phase transformations among local choices 
of the complex 3-form $\Psi$.

Finally, the compatibility and normalization conditions in
eqs.~\eqref{compatible}, \eqref{normal}, or equivalently the defining relations 
in eqs.~\eqref{dJ}, \eqref{dpsi}, show that the first torsion form can be
defined from either $\dd J$ or $\dd\Psi$ as
\be
\label{equivw1}
\mathcal{W}_1 
\equiv -\frac{4i}{3|\Psi|^2} \Psi\lrcorner \dd J 
\equiv \frac{1}{4|J|^2} J\lrcorner J\lrcorner \dd \Psi 
\, ,
\ee
where the norms of $J$ and $\Psi$, using the norm convention in 
eq.~\eqref{6dnormconvention}, and the contractions in eqs.~\eqref{normalize}, 
are $|J|^2=3$, $|\Psi|^2=8$.

\subsection{U\texorpdfstring{$(3)$}{(3)} Structure}
\label{gstructure6dwg}  

We briefly present here a generic U$(3)$ structure, 
and highlight how it differs from a generic SU$(3)$ structure.

A U$(3)$ structure on a six-dimensional Riemannian manifold consists
of an almost-complex structure $J$ compatible with the metric,
or equivalently an almost-Hermitian structure $(g,J)$. No
integrability condition on $J$ is assumed.

An SU$(3)$ structure requires, in addition, a globally-defined
nowhere-vanishing normalized complex 3-form $\Psi$ of type $(3,0)$.
A U$(3)$ structure does not include a globally fixed choice of this
complex volume form. Local representatives may instead be related on
overlaps by phase transformations,
\be
\Psi_\alpha=e^{i\lambda_{\alpha\beta}}\Psi_\beta\, .
\ee

The intrinsic torsion of a U$(3)$ structure decomposes into four forms
$\{{\mathcal{W}}_i\}_{i=1,\ldots,4}$ \cite{Gray:1980}. 
The exterior derivative $\dd J$
contains the torsion forms $\mathcal W_1,\mathcal W_3,\mathcal W_4$, 
as in eq.~\eqref{dJ}, 
whereas $\mathcal W_1$ and $\mathcal W_2$ together encode the
Nijenhuis tensor, $N_J$. Schematically,
\be
\label{nijenhuisscheme}
[N_J]\simeq\mathcal W_1\oplus\mathcal W_2\, ,
\ee
see also Appendix~\ref{almostcalculus} below. 
In particular, $\mathcal W_2$ is not detected by $\dd J$; rather, it is
the component of the Nijenhuis tensor that remains after
$\mathcal W_1$ in $\dd J$ is accounted for.

The additional SU$(3)$ torsion form $\mathcal W_5$ measures the
variation of the chosen complex volume form $\Psi$. It is therefore
not an intrinsic-torsion component of the underlying U$(3)$
structure.

\subsection{Almost Complex Calculus}
\label{almostcalculus}

We begin by fixing the action of the almost-complex structure directly
from the six-dimensional bilinear definitions in Appendix~\ref{gstructure6d}.

Recall that the almost-complex endomorphism, satisfying eq.~\eqref{defjsq}, 
is obtained by raising the first index of the real 2-form $J$, 
\be
J^i{}_j \equiv g^{ik}J_{kj}\, .
\ee
The SU$(3)$ structure defined by the spinor bilinears determines an
almost-complex structure compatible with the base metric, so that
\be
g_{mn}J^m{}_iJ^n{}_j=g_{ij}\, .
\ee
Since the base metric is Hermitian with respect to $J$, and
$J_{ij}=-J_{ji}$, raising the second index instead gives
\be
\label{Jindexrelation}
J_i{}^j\equiv J_{ik}g^{kj}=-J^j{}_i\, .
\ee

The relative action of $J$ on the complex 3-form is not an
independent almost-complex convention in the present construction.
It is fixed by the Clifford-bilinear definitions in eqs.~\eqref{structuredef},
together with the chiral projection in eq.~\eqref{chiralfix},
and the six-dimensional Clifford and charge-conjugation conventions
fixed in Appendices~\ref{6ECliff}, \ref{gstructure6d}.
Indeed, substituting the bilinear definitions and applying the
seven-dimensional Fierz identity in eq.~\eqref{Fierz} to the base
spinors, with all tensor indices restricted to the six-dimensional
base, gives
\begin{align}
\label{FierzJactionPsi}
J_i{}^m\Psi_{mjk}
&=
i\bigl(\overline{\psi^*}\gamma_i{}^m\psi\bigr)
\bigl(\overline{\psi^*}\gamma_{mjk}\psi^*\bigr)
=
+i\Psi_{ijk}\, ,
\end{align}
where the corresponding relation for $\overline\Psi$ follows by complex
conjugation.
By antisymmetry of $\Psi$, the same eigenvalue occurs when $J_i{}^m$
acts on either of its other two covariant indices
\be
\label{JlowerleftPsi}
J_i{}^m\Psi_{mjk}
=
J_j{}^m\Psi_{imk}
=
J_k{}^m\Psi_{ijm}
=
+i\Psi_{ijk}\, .
\ee
Using eq.~\eqref{Jindexrelation}, the corresponding identities with
the indices of $J$ in the opposite positions are
\be
\label{JlowerrightPsi}
J^m{}_i\Psi_{mjk}
=
J^m{}_j\Psi_{imk}
=
J^m{}_k\Psi_{ijm}
=
-i\Psi_{ijk}\, ,
\ee
with complex-conjugate relations for $\overline\Psi$.

We shall use the latter action, which is the pullback of the
almost-complex endomorphism to a covariant index. For a covariant
$k$-tensor $\omega$, define its action on the $r$th slot by
\be
\label{Jsingleindexaction}
\bigl(J_{(r)}\omega\bigr)_{a_1\ldots a_r\ldots a_k}
\equiv
J^b{}_{a_r}
\omega_{a_1\ldots b\ldots a_k}\, .
\ee
The tensor-product action on all covariant indices will be denoted by
$\mathcal J$:
\be
\label{Jendo}
(\mathcal J\omega)_{a_1\ldots a_k}
\equiv
\bigl(J_{(1)}\cdots J_{(k)}\omega\bigr)_{a_1\ldots a_k}
=
J^{b_1}{}_{a_1}\cdots J^{b_k}{}_{a_k}
\omega_{b_1\ldots b_k}\, .
\ee
These two operations must not be confused: $J_{(r)}$ acts on one
specified index, whereas $\mathcal J$ acts multiplicatively on every
covariant index.

Equations \eqref{JlowerrightPsi}, dictated by the spinor bilinears,
fix a $(1,0)$ covector to have eigenvalue $-i$ under the pullback
action $J^j{}_i$. Therefore, the all-index action on a $(p,q)$-form 
is given by
\be
\label{Jaction}
\mathcal J\omega^{(p,q)}
=
(-i)^p(+i)^q\omega^{(p,q)}
=
i^{\,q-p}\omega^{(p,q)}\, .
\ee
From this, it easily follows that any real $k$-form 
$\omega = \overline{\omega}$ decomposes into $(p,q)$-type components 
satisfying the reality condition
\be
\label{almostreal}
\overline{\omega^{(p,q)}} = \omega^{(q,p)}\, .
\ee
In particular, since $\Psi$ is of type $(3,0)$, it satisfies
\be
\label{Jonpsi}
(\mathcal J\Psi)_{ijk}
= +i\Psi_{ijk}\, .
\ee
Notice that the single-index and all-index eigenvalues are different:
each individual covariant index of $\Psi$ has eigenvalue $-i$, whereas
the product action on its three indices has eigenvalue $+i$.

With the $J$ action in eq.~\eqref{Jaction}, one can always define projection 
operators $\Pi^{(p,q)}$, that extract the $(p,q)$-part of any complex form.
Therefore, the corresponding projectors on complexified covectors are 
\be
\label{1formprojector}
(\Pi^{(1,0)})_i{}^j
=
\frac12\left(\delta_i{}^j+iJ^j{}_i\right)\, ,
\qquad
(\Pi^{(0,1)})_i{}^j
=
\frac12\left(\delta_i{}^j-iJ^j{}_i\right)\, .
\ee
Equivalently, using $J_i{}^j=-J^j{}_i$,
\be
(\Pi^{(1,0)})_i{}^j
=
\frac12\left(\delta_i{}^j-iJ_i{}^j\right)\, ,
\qquad
(\Pi^{(0,1)})_i{}^j
=
\frac12\left(\delta_i{}^j+iJ_i{}^j\right)\, .
\ee
When $J$ is integrable, these algebraic type projections coincide
locally with the holomorphic and anti-holomorphic cotangent
decomposition.

Since $\Psi$ is of type $(3,0)$, each of its three covariant indices belongs 
to the $(1,0)$ cotangent subspace.
Accordingly, applying the $(1,0)$ projector to any one of its indices, 
using eq.~\eqref{JlowerrightPsi}, leaves $\Psi$ unchanged:
\be
\label{Psislotprojectors}
(\Pi^{(1,0)})_i{}^m\Psi_{mjk}
=(\Pi^{(1,0)})_j{}^m\Psi_{imk}
=(\Pi^{(1,0)})_k{}^m\Psi_{ijm}
=\Psi_{ijk}\, .
\ee
Equivalently, applying the complementary $(0,1)$ projector to any
one of its indices, gives zero:
\begin{align}
(\Pi^{(0,1)})_i{}^m\Psi_{mjk}
&=0\, ,\qquad
(\Pi^{(0,1)})_j{}^m\Psi_{imk}
=0\, ,\qquad
(\Pi^{(0,1)})_k{}^m\Psi_{ijm}
=0\, .
\end{align}
These single-index projection conditions imply the all-index
eigenvalue in eq.~\eqref{Jonpsi}. 

Raising a covariant index with the Hermitian metric reverses its
$J$-eigenvalue. Indeed, raising the first index in
eq.~\eqref{JlowerrightPsi} and using
eq.~\eqref{Jindexrelation}, gives
\be
\label{JraisedPsi}
J^i{}_m\Psi^m{}_{jk}
=
+i\Psi^i{}_{jk}\, .
\ee
Consequently,
\begin{align}
\frac12(\delta^i{}_m-iJ^i{}_m)\Psi^m{}_{jk}
&=\Psi^i{}_{jk}\, ,
&
\frac12(\delta^i{}_m+iJ^i{}_m)\Psi^m{}_{jk}
&=0
\, .
\end{align}
If all three indices are raised, the corresponding tensor-product
action is
\be
J^i{}_aJ^j{}_bJ^k{}_c\Psi^{abc}
=-i\Psi^{ijk}
\, .
\ee

The contraction of two indices of $\overline{\Psi}$ with $\Psi$ 
leaves a mixed tensor proportional to the  $(1,0)$ projector. 
Using the normalization of $\Psi$ in eq.~\eqref{normalize}, 
together with the projector defined in eq.~\eqref{1formprojector}, 
one obtains
\begin{subequations}
\label{psiprojectcontract}
\begin{align}
\Psi_{kmn}\overline\Psi^{mnl}
&=
16(\Pi^{(1,0)})_k{}^l
=
8(\delta_k{}^l-iJ_k{}^l)\, ,
\\
\Psi_{kmn}\Psi^{mnl}
&=0\, ,
\qquad
\overline\Psi_{kmn}\overline\Psi^{mnl}=0\, .
\end{align}
\end{subequations}

With eq.~\eqref{gettingreal} and the all-index action in
eq.~\eqref{Jonpsi}, we obtain
\be
\label{Jonpsialg}
\mathcal J\Psi_+=-\Psi_-,
\qquad
\mathcal J\Psi_-=\Psi_+\, .
\ee

Let $\omega_{(1)}$ be a real 1-form. Then one finds
\be
\label{almostidentity}
\mathcal J\omega_{(1)}\wedge\Psi_\pm
=
\omega_{(1)}\wedge\mathcal J\Psi_\pm
=
\mp\omega_{(1)}\wedge\Psi_\mp\, ,
\ee
since $\Psi$ is of type $(3,0)$, and the base has complex dimension three, 
so the $(1,0)$ component of $\omega_{(1)}$ wedges trivially with $\Psi$, and 
only its $(0,1)$ component contributes.

On an almost-complex manifold the structure $J$ need not be integrable, 
in which case differential calculus is modified 
\cite{CIRICI2021107970}. 
The exterior derivative does not simply decompose into the Dolbeault 
operators, as it does on complex manifolds, 
i.e.~$\dd = \partial + \overline{\partial}$. 
Rather, it decomposes into four parts, due to the non-vanishing Nijenhuis 
tensor. Acting on a $(p,q)$-type form $\omega$, the exterior derivative 
decomposes as
\be 
\dd \omega = \partial \omega + \overline{\partial} \omega 
+ \mu \omega + \overline{\mu} \omega\, , 
\ee 
where $\partial$, $\overline{\partial}$, are the bidegree $(1,0)$ and $(0,1)$ 
components of $\dd$, which are the holomorphic, and anti-holomorphic derivatives, 
respectively, when $J$ is integrable,
and $\mu$ is a $(2,-1)$ operator
determined by the Nijenhuis tensor, such that
\begin{align}
\label{almostdiff}	
\mu & : \omega^{(p,q)} \rightarrow \omega^{(p+2, q-1)}\, ,\\*
\overline{\mu} & : \omega^{(p,q)} \rightarrow \omega^{(p-1, q+2)}\, ,
\end{align}
and $\mu=\overline{\mu}=0$ when the Nijenhuis tensor vanishes.

As a result, the exterior derivative of a $(3,0)$-type form generally
contains four components of types:
$(4,0)$, $(3,1)$, $(5,-1)$, $(2,2)$. 
Yet, $(5,-1)$ does not exist by definition, 
and $(4,0)$ does not exist in six real (or three complex) dimensions.
Therefore, the exterior derivative of the complex 3-form $\Psi$ in 
eq.~\eqref{dpsi} contains forms of two types. 
The terms with the complex torsion forms 
$\mathcal{W}_1$, $\mathcal{W}_2$, 
correspond to the $(2,2)$-type component, 
which amounts to the Nijenhuis tensor:
\be 
\label{nijenhuis}
[N_J]\simeq\mathcal W_1\oplus\mathcal W_2\, ,
\ee
whereas the $(3,1)$-type component of the exterior derivative 
contains the torsion form $\mathcal{W}_5$, when it is defined. 
For this reason, the real torsion form $\mathcal{W}_5$ 
is complexified via a projection onto its $(1,0)$-type component, 
with its conjugate of $(0,1)$ type, as in eq.~\eqref{almostreal}.
Thus, in complex dimension three,
\be
\dd \Psi=\overline\mu\,\Psi+\overline\partial\Psi\, ,
\ee
with
\be
\overline\mu\,\Psi
= \mathcal W_1J\wedge J+\mathcal W_2\wedge J\, ,
\qquad
\overline\partial\Psi
= \overline{\mathcal W_5^{\mathbb C}}\wedge\Psi\, .
\ee
The first equation identifies the
$\mathcal W_1\oplus\mathcal W_2$ component of the Nijenhuis tensor.
In particular, $J$ is integrable precisely when
$\mathcal W_1=\mathcal W_2=0$.

\subsection{Conjugates of Killing Spinors}
\label{KSEforMCCC}

We provide here the necessary spinor equations satisfied by the charge and 
Majorana conjugates of the Killing spinors, as inferred from 
eqs.~\eqref{Killinggravitin} and \eqref{Killinghalfspinexchange}, emphasizing 
the resulting modifications relative to the original KSEs. 
To that end, we use the conjugation conventions detailed in 
Appendix~\ref{7Cliff}, 
together with the real Cartan basis underlying eq.~\eqref{realGammas}, 
from the relevant Cliff$(5)$ matrices, as defined in Appendix~\ref{5ECliff}. 
Hence complex conjugation and transposition act non-trivially only on the 
explicit factors of $i$, and on the Cliff$(1,6)$ matrices.

For the charge conjugate of the Killing spinor we obtain
\begin{align} 
\label{Killinggravitincharge}	
\nabla_\mu \epsilon^c = \bigg[ & 
  ig \left( A_\mu^1 \Gamma_{\hat{1}} 
+ A_\mu^2 \Gamma_{\hat{2}}\right)
+ \frac{i}{4} \gamma^{\nu} \left( X_1^{-1} F_{\mu\nu}^1 \Gamma_{\hat{1}} 
+ X_2^{-1} F_{\mu\nu}^2 \Gamma_{\hat{2}} \right)
- \frac{m}{2} \left( X_1 X_2 \right)^{-2} \gamma_\mu 
\nn\\
& 
- \frac{1}{4} \gamma_\mu \gamma^\nu 
\left( X_1^{-1} \partial_\nu X_1 + X_2^{-1} \partial_\nu X_2 \right)
+ \frac{X_1 X_2}{96} \gamma_{\mu}{}^{\nu\lambda\sigma\rho} 
G_{{\nu\lambda\sigma\rho}} \Gamma_5 \bigg] \epsilon^c \, , 
\end{align}
and 
\begin{subequations}
\label{Killinghalfspinexchangecharge}
\begin{align} 
\label{Killingfermionsymcharge}
0 = \bigg[ 
& 
\frac{m}{2} \left( X_1 + X_2 - 2 \left( X_1 X_2 \right)^{-2} \right)
- \frac{5}{8} \gamma^\mu 
\left( X_1^{-1} \partial_\mu X_1 
+ X_2^{-1} \partial_\mu X_2 \right)
\nn\\
& + \frac{i}{16} \gamma^{\mu\nu} 
\left( X_1^{-1} F_{\mu\nu}^1 \Gamma_{\hat{1}} 
+ X_2^{-1} F_{\mu\nu}^2 \Gamma_{\hat{2}} \right)
+ \frac{X_1 X_2}{96} \gamma^{\mu\nu\lambda\sigma} 
G_{\mu\nu\lambda\sigma}  \Gamma_5 \bigg] \epsilon^c \, , \\
\label{Killingfermionantisymcharge} %
0 = \bigg[ 
& 
\frac{m}{2} \left( X_1 - X_2 \right) 
- \frac{1}{8} \gamma^\mu 
\left( X_1^{-1} \partial_\mu X_1 - X_2^{-1} \partial_\mu X_2 \right)
\nn\\
&
+ \frac{i}{16} \gamma^{\mu\nu} \left( X_1^{-1} F_{\mu\nu}^1 \Gamma_{\hat{1}} 
- X_2^{-1} F_{\mu\nu}^2 \Gamma_{\hat{2}} \right)
\bigg] \epsilon^c \, . 
\end{align}
\end{subequations} 

For the Majorana conjugate of the Killing spinor, we only need then 
the differential equation
\begin{align} 
\label{KillinggravitinMajconj}	
\nabla_\mu \overline{\epsilon} = \overline{\epsilon} \bigg[ 
& - ig \left( A_\mu^1 \Gamma_{\hat{1}} 
+ A_\mu^2 \Gamma_{\hat{2}}\right)
+ \frac{i}{4} \gamma^{\nu} \left( X_1^{-1} F_{\mu\nu}^1 \Gamma_{\hat{1}} 
+ X_2^{-1} F_{\mu\nu}^2 \Gamma_{\hat{2}} \right)
+ \frac{m}{2} \left( X_1 X_2 \right)^{-2} \gamma_\mu 
\nn\\
& 
- \frac{1}{4} \gamma^\nu \gamma_\mu  
\left( X_1^{-1} \partial_\nu X_1 + X_2^{-1} \partial_\nu X_2 \right)
- \frac{X_1 X_2}{96} \gamma_{\mu}{}^{\nu\lambda\sigma\rho} 
G_{{\nu\lambda\sigma\rho}} \Gamma_5 \bigg] \, .%
\end{align}

\subsection{Higher-Rank Algebraic Bilinear Relations}
\label{algbilrel}

We complete here the algebraic bilinear relations presented in
Section~\ref{Killingresultsalg} by providing the required rank-$4$
relation with the $\Gamma_+$ generator, and the rank-$3$ relations with the
$\Gamma_-$ generator, which arise from the variations of spin-$\tfrac{1}{2}$ 
fermions in eqs.~\eqref{Killinghalfspinequal}. 
No assumption on the complex scalar bilinear, such as
$s^{11}=0$ in eq.~\eqref{cscalarto6d}, which fixes the internal
spinor angle, is imposed in deriving these higher-rank relations.
In these relations we use the dilaton-potential dependent scalar, 
$\Delta(X)$, defined in eq.~\eqref{DeltaX}.

The following complex algebraic relation at rank 4, 
which contains the generator $\Gamma_+$, 
is required to simplify the differential relation of 
the complex 3-form bilinear in eq.~\eqref{diffu11}:
\begin{align}
\label{br4cg}
0 & = \bigg[ 4 g \Delta(X)
\left( \star u^{11} \right)_{\mu\nu\sigma\rho}
+ 20 \dd X_{[\mu} u^{11}_{\nu\sigma\rho]}
- 4 i \Bigl( F^{\alpha}{}_{[\mu} 
\left( \star u^{11} \right)_{\nu\sigma\rho]\alpha} \Bigr) 
\Gamma_{+} 
\nn\\
& \qquad \qquad \qquad \qquad \qquad 
+ X^3 \Bigl( s^{11} G_{\mu\nu\sigma\rho}
-6 \left( u^{11} \right)^{\alpha}{}_{[\mu\nu} 
\left( \star G \right)_{\sigma\rho]\alpha} \Bigr) 
\Gamma_{5} \bigg] \chi\, .
\end{align}	

As to the algebraic bilinear relations with the generator $\Gamma_-$, 
in addition to eqs.~\eqref{antibr0}, \eqref{antibr1}, \eqref{antibr2}
at ranks 0, 1, 2, respectively, 
we obtain at rank 3:
\begin{subequations}
\label{antibr3}
\begin{align}
0 & = F^{\alpha}{}_{[\mu} (u^{11})_{\nu\sigma]\alpha}
\,\Gamma_- \chi	\, ,\\
0 & = F^{\alpha}{}_{[\mu} (U_3)_{\nu\sigma]\alpha}
\,\Gamma_- \chi	\, ,\\
0 & = \left[ 3 K_{[\mu} F_{\nu\sigma]} 
- \left( F \lrcorner \star T \right)_{\mu\nu\sigma} \right] 
\Gamma_- \chi\, .
\end{align}	
\end{subequations}

\subsection{Decomposing 7d Forms into 
SU\texorpdfstring{$(3)$}{(3)} Irreps}
\label{decomposeformstosu3irreps}

\paragraph{Real 1-form.}

On the six-dimensional base any complexified 1-form decomposes as
\be
\Lambda^1_{(6d)} \otimes \mathbb C 
= \Lambda^{1,0} \oplus \Lambda^{0,1}\, , 
\ee
and thus due to eq.~\eqref{almostreal} any real 1-form $\alpha$ splits as
\be
\alpha = \alpha^{(1,0)} + \overline{\alpha^{(1,0)}}\, .
\ee
For the real 1-form $\dd X$ we then have
\be
\dd X = \gamma\,K + (\dd X)_{\perp}\, ,
\qquad 
\gamma \equiv -\frac{1}{h^{2}}\,\iota_{K} \dd X\, ,
\qquad 
\iota_{K}(\dd X)_{\perp} = 0\, ,
\ee
where due to the algebraic bilinear relation in eq.~\eqref{br0r}, 
we have $\gamma=0$, and thus
\be 
\dd X = (\dd X)_{\perp}^{(1,0)+(0,1)} 
\in \mathbf 3\oplus \bar{\mathbf 3}\, ,
\ee
so that the decomposition of $\dd X$ reads
\be 
\label{dxsu3}
\dd X = D + \overline D\, ,
\qquad 
D \equiv \Pi^{(1,0)}(\dd X)\, ,
\ee 
where we defined $D$ as the projected SU(3)-irreducible component of $\dd X$.

\paragraph{Real 2-form.}

Any complexified 2-form on the six-dimensional base decomposes as
\be
\Lambda^2_{(6d)} \otimes \mathbb C
= \Lambda^{2,0} \oplus \Lambda^{1,1} \oplus \Lambda^{0,2}\, ,
\qquad
\Lambda^{1,1} 
= \Lambda^{0} J \oplus \Lambda^{1,1}_0\, .
\ee
For the real 2-form $F$ we then have
\be
F = F_{\parallel} + F_{\perp}\, ,
\qquad
F_{\parallel}\equiv -\frac1{h^2}\,K\wedge \iota_KF\, ,
\qquad
\iota_K F_{\perp}=0\, .
\ee
Decomposing the transverse 2-form $F_\perp$
into SU$(3)$-irreducible components, we then get
\be
F_{\perp} 
= \frac13 \bigl(J\lrcorner F\bigr)\,J
+ F^{(1,1)_0}
+ F^{(2,0)+(0,2)}\, ,
\ee
with
\be
J \lrcorner F \in \mathbf 1_{\mathbb{R}}\, ,
\qquad
F^{(2,0)+(0,2)}\in \mathbf 3\oplus \bar{\mathbf 3}\, ,
\qquad
F^{(1,1)_0}\in \mathbf 8\, ,
\ee
where the real singlet $\mathbf1_{\mathbb{R}}$ is 
the component proportional to $J$, 
$\mathbf3\oplus\bar{\mathbf3}$ is the real $(2,0)+(0,2)$ 
fundamental component, and $\mathbf8$ is the adjoint one, 
realized here as the remaining primitive $(1,1)$ form.
The full seven-dimensional SU$(3)$-representation content thus reads
\be
\Lambda^2_{(7d)}
\cong
(\mathbf 3 \oplus \bar{\mathbf 3})_{K \wedge \Lambda^1_{\perp}}
\oplus
\mathbf 1_{J}
\oplus
\mathbf 8_{(1,1)_0}
\oplus
(\mathbf 3 \oplus \bar{\mathbf 3})_{(2,0)+(0,2)}\, ,
\ee
where we recall that the form dual to the Killing vector, 
$K$, is an SU$(3)$ singlet.

Thus the full seven-dimensional SU$(3)$ decomposition of $F$ reads
\be
\label{f2su3}
F = - \frac1{h^2}\,K\wedge (f + \overline f)
+ \frac13 \lambda \,J
+ \tau 
+ \frac18\left( \overline \alpha \lrcorner \Psi 
+ \alpha \lrcorner \overline \Psi \right)\, ,
\ee
where we defined the following projected SU$(3)$-irreducible components:
\be
f \equiv \Pi^{(1,0)}(\iota_K F)\, ,
\quad
\lambda \equiv J \lrcorner F\, ,
\quad
\alpha \equiv F \lrcorner \Psi\, ,
\quad 
\quad \tau \equiv F^{(1,1)_0}\, ,
\quad 
\iota_K \tau = J \lrcorner \tau 
= \tau \lrcorner \Psi = 0\, ,
\ee
and we derived the component $F^{(0,2)}$ 
with its proper numerical coefficient via
\be
\label{fixcoeff} 
F^{(0,2)}_{mn} = C (\alpha \lrcorner \overline \Psi)_{mn} 
= C \alpha^p \overline \Psi_{pmn}
\quad \implies \quad
\frac12 \left(\alpha \lrcorner \overline\Psi\right)_{mn} 
\Psi^{mn}{}_{q}
= 8 \alpha_q 
\quad \implies \quad
C = \frac18\, ,
\ee
where in the second step we used eq.~\eqref{psiprojectcontract}.
Finally, in the SU$(3)$ decomposition the primitive component 
$F^{(1,1)_0}$ is characterized as the common kernel of all 
applicable SU$(3)$-equivariant contraction maps. Thus, it is the 
part of $F$ left after removing all of the SU$(3)$-irreducible 
components, which are detected by the projection maps:
\be
F \mapsto \Pi^{(1,0)}(\iota_K F)\, , 
\qquad F \mapsto J \lrcorner F\, ,
\qquad F \mapsto F \lrcorner \Psi\, ,
\ee
that yield the other components in eq.~\eqref{f2su3}. 
All of the projection maps, which are required to decompose 
$\dd X$, $F$, and $G$, are collected and further discussed 
in Appendix~\ref{su3projections}.

\paragraph{Real 3- and 4-forms.}

Any complexified 3-form on the six-dimensional base decomposes as
\begin{align}
\Lambda_{(6d)}^3 \otimes \mathbb C
= \Lambda^{3,0} \oplus \Lambda^{2,1} 
\oplus \Lambda^{1,2} \oplus \Lambda^{0,3}\, , 
\qquad 
\Lambda^{3,0} 
= \Lambda^{0}_{\mathbb C} \Psi\, ,
\quad
\Lambda^{2,1} 
= J \wedge \Lambda^{1,0} \oplus \Lambda^{2,1}_0\, ,
\end{align}
and any complexified 4-form as
\begin{align}
\Lambda_{(6d)}^4 \otimes \mathbb C
= \Lambda^{3,1} \oplus \Lambda^{2,2} \oplus \Lambda^{1,3}\, , 
\qquad 
\Lambda^{3,1} 
= \Psi \wedge \Lambda^{0,1}\, ,
\quad
\Lambda^{2,2} 
= \Lambda^{0} J \wedge J  
\oplus J \wedge \Lambda^{1,1}_0.
\end{align}
For the real 4-form $G$ we then have
\begin{align}
\label{f47dto6d}	
G & = -\frac1{h^2} K \wedge \iota_K G + G_{\perp}\, ,
\qquad
\iota_K G_{\perp} = 0\, .
\end{align}

Decomposing the transverse 3-form $\iota_K G$, which is the 
longitudinal component of $G$, into SU$(3)$-irreducible components, 
we then get
\begin{align}
\label{longG}
\iota_K G & =
\frac18 \bigl( \overline{\zeta}\Psi 
+ \zeta \overline \Psi \bigr)^{(3,0)+(0,3)}
+ \frac12 J \wedge \left( q + \overline{q} \right)^{(1,0)+(0,1)} 
+ \left( \rho + \overline{\rho} \right)^{(2,1)_0+(1,2)_0}\, ,
\end{align}
with
\begin{align}
\zeta & \in \mathbf{1}_{\mathbb C}\, , 
\qquad
 q + \overline{q} \in \mathbf{3} \oplus \mathbf{\bar{3}}\, ,
\qquad
\rho + \overline{\rho} \in \mathbf{6} \oplus \mathbf{\bar{6}}\, ,
\end{align}	
where $\zeta$, $\bar{\zeta}$, are the complex singlets, which are  
coefficients of the components proportional to $\overline \Psi$, $\Psi$, 
and we defined the following projected SU$(3)$-irreducible components:
\be 
\zeta \equiv \Psi \lrcorner \iota_K G\, ,
\qquad 
q \equiv \Pi^{(1,0)}\bigl(J\lrcorner \iota_K G\bigr)\, ,
\qquad 
\rho \equiv \bigl(\iota_K G \bigr)^{(2,1)_0}\, ,
\quad 
J \lrcorner \rho = \Psi \lrcorner \rho = 0\, ,
\ee
so that $\rho$ is the primitive component,
which remains after removing the contractions with $J$ and $\Psi$.

Next, decomposing the transverse $4$-form of $G$ into 
SU$(3)$ irreducible components, we then get
\begin{align}
\label{perpG}
G_{\perp} & =
\frac1{12} \sigma J \wedge J + J \wedge \kappa^{(1,1)_0} 
+ \frac18 \bigl( \Psi \wedge \overline{\beta} 
+ \overline \Psi \wedge \beta \bigr)^{(3,1)+(1,3)}\, ,
\end{align}
with 
\begin{align}
\sigma \in \mathbf{1}_{\mathbb R}\, ,
\qquad
\beta + \overline{\beta} \in \mathbf{3} \oplus \mathbf{\bar{3}}\, ,
\qquad 
\kappa \in \mathbf{8}\, ,
\end{align}
where $\sigma$ is the singlet of a double contraction with $J$, 
$\beta$ is the form extracted by contracting with $\Psi$, 
and thus we defined the following projected SU$(3)$-irreducible 
components:
\be 
\sigma \equiv J\lrcorner J\lrcorner G\, ,
\qquad
\beta \equiv \Psi\lrcorner G_{\perp}\, ,
\quad 
\kappa \equiv (J \lrcorner G_\perp)^{(1,1)_0}\, ,
\quad 
J \lrcorner \kappa =0\, ,
\ee
so that $\kappa$ is the primitive component, which remains after 
contracting once with $J$, and removing the singlet term. 

Similarly to the derivation in eq.~\eqref{fixcoeff}, 
the numerical coefficients in the two decompositions of $G$ 
in eqs.~\eqref{longG}, \eqref{perpG}, were fixed according 
to our contraction convention in eq.~\eqref{formcontract}, 
the SU$(3)$ normalisation in eq.~\eqref{normalize}, and using the 
SU$(3)$ contraction identities in Appendix~\ref{contractidentities} 
in eqs.~\eqref{1formid1}, \eqref{2formid1}, \eqref{2formid2}.

The full seven-dimensional SU$(3)$-representation content thus reads
\begin{align}
\Lambda^4_{(7d)} & \cong
\bigl[ (\mathbf 1 \oplus \bar{\mathbf 1})_{\overline{\Psi} 
\oplus \Psi} 
\oplus ( \mathbf 3 \oplus \bar{\mathbf 3})_{J \wedge \Lambda^1_\perp}
\oplus (\mathbf 6\oplus \bar{\mathbf 6}) \bigr]_{K \wedge \Lambda^3_\perp}
\oplus \bigl[ \mathbf 1_{J \wedge J} \oplus \mathbf 8 
\oplus (\mathbf 3 \oplus \bar{\mathbf 3})\bigr].
\end{align}
Thus the full seven-dimensional SU$(3)$ decomposition of $G$ reads
\begin{align}
\label{f4su3}
G & = - \frac1{h^2} K \wedge 
\left( \frac{1}{8} \bigl( \overline{\zeta} \Psi 
+ \zeta \overline \Psi \bigr)
+ \frac12 J \wedge ( q + \overline q ) 
+ \rho + \overline\rho \right) 
\nn \\
& \qquad \qquad 
+ \frac1{12} \sigma J \wedge J + J \wedge \kappa 
+ \frac18 \bigl( \Psi \wedge \overline \beta 
+ \overline \Psi \wedge \beta \bigr)\, .
\end{align}

\subsection{Projection Maps into SU(3) Irreps}
\label{su3projections}

There are five elementary contraction maps with the three bilinear forms, 
$K$, $J$, and $\Psi$, which provide the basic building blocks for 
the projection maps that are used in the SU$(3)$ decomposition detailed in 
Appendix~\ref{decomposeformstosu3irreps}.
These elementary maps for any rank-compatible p-form $A$ read:
\begin{subequations}
\begin{align}
\label{realmaps}	
A \mapsto \iota_K A\, ,\\
\label{contractwJ}
A \mapsto A \lrcorner J\, ,
\qquad 
A \mapsto J \lrcorner A\, ,\\
A \mapsto A \lrcorner \Psi\, ,
\qquad
A \mapsto \Psi \lrcorner A\, ,
\end{align}
\end{subequations}
or more compactly:
\begin{align}
\label{5basicmaps}	
\iota_K,
\qquad
(\cdot)\lrcorner J\, ,
\qquad
J\lrcorner(\cdot)\, ,
\qquad
(\cdot)\lrcorner \Psi\, ,
\qquad
\Psi\lrcorner(\cdot)\, ,
\end{align}
where note that the last two maps are complex-valued 
on real forms, 
and on a transverse 1-form $\alpha$,
the basic map in eq.~\eqref{contractwJ} is used as follows:
\be
\label{projector}
\Pi^{(1,0)} 
\equiv
\frac12\bigl(\mathds{1}
+i(\cdot\lrcorner J)\bigr),
\qquad
\mathcal J\alpha=\alpha\lrcorner J,
\qquad
(\mathcal J\alpha)_i=J^j{}_i\alpha_j\, ,
\ee
which is equivalent to the $(1,0)$ projector in 
eq.~\eqref{1formprojector}.
The eleven independent SU$(3)$-irreducible components, appearing 
in $\dd X$, $F$ and $G$, are then extracted 
by various compositions of the elementary contractions, also with the 
holomorphic type projections, as in eq.~\eqref{projector}, 
and the primitive extractions described below.

We recall that in our sector of the theory there are eight 
non-primitive SU$(3)$-irreducible components, 
which can be directly detected via the proper projection maps, 
defined by
\begin{subequations}
\label{8nonprim}
\begin{align}
\label{projD}		
D &\equiv \Pi^{(1,0)}(\dd X)\, ,\\*
\label{projf}
f &\equiv \Pi^{(1,0)}(\iota_K F)\, ,\\*
\label{projlambda}
\lambda &\equiv J\lrcorner F\, ,\\*
\label{projalpha}
\alpha &\equiv F\lrcorner \Psi\, ,\\*
\label{projchi}
\zeta &\equiv \Psi \lrcorner \iota_K G\, ,\\*
\label{projq}
q &\equiv \Pi^{(1,0)}\bigl(J\lrcorner \iota_K G\bigr)\, ,\\*
\label{projsigma}
\sigma &\equiv J\lrcorner J\lrcorner G\, ,\\*
\label{projbeta}
\beta &\equiv \Psi\lrcorner G_{\perp}\, .
\end{align}
\end{subequations}
The component $\zeta$ is a complex scalar. 
The components $\alpha$ and $\beta$, obtained by contraction with
$\Psi$, are automatically of type $(1,0)$ and therefore require no
additional application of the projector in eq.~\eqref{projector}.

In addition to the eight non-primitive SU$(3)$-irreducible components 
in eqs.~\eqref{8nonprim}, there are three more SU$(3)$-irreducible 
components, denoted as $\tau$, $\rho$, $\kappa$, which are 
primitive ones, namely they belong to the kernels 
of all elementary contraction maps in eq.~\eqref{5basicmaps}. 
We proceed to spell out these three primitive components.

In $F$ there is a single primitive SU$(3)$-irreducible component, 
which is defined as
\be
\label{tau}
\tau \equiv F^{(1,1)_0} 
\equiv F + \frac1{h^2} K \wedge ( f + \overline {f} ) 
- \frac13 \lambda J 
- \frac18 \bigl(\overline \alpha \lrcorner \Psi 
+ \alpha \lrcorner \overline \Psi \bigr)\, ,
\ee
and satisfies the kernel conditions
\be
\iota_K \tau = 
J \lrcorner \tau = 
\tau \lrcorner \Psi = 0\, ,
\ee
so that all in all $F$ decomposes into four independent 
SU$(3)$-irreducible components as in eq.~\eqref{f2su3}.

In $G$ there are two primitive SU$(3)$-irreducible components,  
which are defined as
\begin{align}
\label{rho}	
\rho & \equiv \left(\iota_K G\right)^{(2,1)_0} 
\equiv \Pi^{(2,1)_0} \Bigl[ \iota_K G 
- \frac{1}{8} \bigl( \overline{\zeta} \Psi 
+ \zeta \overline \Psi \bigr) 
- \frac12 J \wedge ( q + \overline q ) \Bigr]\, ,
\\
\label{kappa}
\kappa & \equiv \left(J \lrcorner G\right)^{(1,1)_0} 
\equiv J \lrcorner \Bigl[ G + \frac1{h^2} K \wedge \iota_K G
- \frac1{12} \sigma J \wedge J 
- \frac18 \bigl( \Psi \wedge \overline \beta 
+ \overline \Psi \wedge \beta \bigr) \Bigr]\, ,
\end{align}
and satisfy the kernel conditions
\begin{align}
\iota_K \rho = 
J \lrcorner \rho = 
\Psi \lrcorner \rho = 0\, ,
\\
\iota_K \kappa =
J \lrcorner \kappa = 
\kappa \lrcorner \Psi = 0\, , 
\end{align}
so that all in all $G$ decomposes into six independent 
SU$(3)$-irreducible components as in eq.~\eqref{f4su3}.

\subsection{Contraction Identities in 
	SU\texorpdfstring{$(3)$}{(3)}}
\label{contractidentities}

To start with,  we recall the convention of wedge product 
that is used in this work, which is the standard component 
or ``sum-of-permutations'', defined as
\be 
\label{wedgeconv}
(A \wedge B)_{i_1 \dots i_{k+l}} 
= \frac{(k+l)!}{k!l!} A_{[i_1 \dots i_k} B_{i_{k+1} \dots i_{k+l}]}\, .
\ee

Let us record the following useful identity for a transverse 2-form $\eta$:
\be 
\label{2formid1}	
J \lrcorner \left( J \wedge \eta \right) 
= \eta + \left( J \lrcorner \eta \right) J\, .
\ee

To evaluate projections of the singlet isotypic component 
in Section~\ref{su3singlet} and Appendix~\ref{higherproj},
we obtained the following useful identities. 
They hold for forms on the six-dimensional base $A_p$ and $B_q$ 
of degrees $p\leq q$, respectively, and a 1-form $K$, dual 
to the timelike Killing vector in a Lorentzian seven-dimensional manifold, 
as discussed in Section~\ref{gstructure}:
\begin{subequations}
\begin{align}
\label{Kwedgecontraction}
\left(K\wedge A_p\right)\lrcorner\left(K\wedge B_q\right)
&=K^2\left(A_p\lrcorner B_q\right)\, ,
\\*
\label{Kcontractionidentity}
A_p\lrcorner\left(K\wedge B_q\right)
&=
(-1)^p K\wedge\left(A_p\lrcorner B_q\right)\, ,
\\*
\label{Kwedge3formcontraction}
\left(K\wedge A_3\right)^{abc}{}_{[i}
\left(K\wedge B_3\right)_{j]abc}
&=
-3K^2 A^{ab}{}_{[i}B_{j]ab}\, ,
\\*
\label{JPsiPsibarcontraction}
J^{ij}\Psi^{ab}{}_{[i}\overline{\Psi}_{j]ab}
&=-48i\, ,
\\*
\label{KwedgeA2B4contraction}
K^\mu
\left(K\wedge A_2\right)^{ab}{}_{[\mu}
(B_4)_{\nu\rho]ab}
&=
\frac{1}{3}K^2 A^{ab}(B_4)_{\nu\rho ab}\, ,
\\*
\label{A3KwedgeB3contraction}
K^\mu
(A_3)^{ab}{}_{[\mu}
\left(K\wedge B_3\right)_{\nu\rho]ab}
&=
-\frac{2}{3}K^2
(A_3)^{ab}{}_{[\nu}(B_3)_{\rho]ab}\, , 
\\*
\label{KwedgeA2KwedgeB3contraction}
\left(K\wedge A_2\right)^{ab}{}_{[i}
\left(K\wedge B_3\right)_{jk]ab}
&=
2K^2 A^a{}_{[i}(B_3)_{jk]a}\, ,
\\*
\label{PsiJPsiBarcontraction}
\Psi\lrcorner
\left(
J^\alpha{}_{[i}\overline{\Psi}_{jk]\alpha}
\right)
&=
8i\, ,
\\*
\label{PsiJJcontraction}
\Psi^{ab}{}_{[i}
\left(J\wedge J\right)_{jk]ab}
&=
4\Psi_{ijk}\, .
\end{align}
\end{subequations}

To evaluate projections of the fundamental isotypic component
in Section~\ref{su3fundamental} and Appendix~\ref{higherproj},
we obtained the following identities, where here $\alpha$ denotes 
a transverse 1-form, and $\eta$ denotes a transverse 1-form or 2-form: 
\begin{subequations}
\begin{align} 
\label{1formid1}
J \lrcorner \left( J \wedge \alpha \right) 
& = 2 \alpha\, ,\\*
\label{2formid2}
\Psi \lrcorner 
\left[ \overline\Psi\wedge \left( \eta + \overline{\eta} \right) \right]
& = 8\eta\, ,\\*
\label{1formid2}
\left( J \wedge \alpha \right) \lrcorner \left( J \wedge J \right) 
& = 4 \alpha \lrcorner J \, , \\*
K^\mu A_4^{abc}{}_{[\mu}
\left(K\wedge B_3\right)_{\nu]abc}
& =
-\frac12K^2 A_4^{abc}{}_\nu(B_3)_{abc}\, ,\\*
K^\mu\left(K\wedge A_3\right)^{abc}{}_{[\mu}(B_4)_{\nu]abc}
& = -\frac12K^2 A_3^{abc}(B_4)_{\nu abc}\, ,\\*
\left(\alpha\lrcorner\overline{\Psi}\right)
\lrcorner\left(J\wedge J\right)
& = 2\left(\alpha\lrcorner\overline{\Psi}\right)\, ,\\*
\label{fundid7}
\alpha \wedge \overline{\Psi} 
& = -i \left( \alpha \lrcorner \overline{\Psi} \right) \wedge J\, , \\*
\label{fundid8}
\left[ J\lrcorner\left(\overline{\Psi}\wedge\alpha\right)
\right]\lrcorner\Psi
& =  i \Psi \lrcorner \left( \overline{\Psi} \wedge \alpha \right) %
\, ,\\*
\alpha \lrcorner \left( J \wedge J \right) 
& = 2J \wedge \left( \alpha \lrcorner J \right) %
\, , \\*
J\lrcorner \left[ 3\Psi^{ab}{}_{[i}
\left( \overline{\Psi}\wedge\alpha\right)_{jk]ab} \right]
& = - 4 i \Psi \lrcorner \left( \overline{\Psi} \wedge \alpha \right) 
\, . 
\end{align}
\end{subequations}

Further identities were derived to evaluate projections 
of the primitive isotypic components in Section~\ref{su3primitive}. 
Using the $J$ endomorphism action defined in eq.~\eqref{Jendo}, 
we obtained the following identities for any transverse 
primitive $(1,1)_0$ form $P$:
\begin{subequations}
\label{primids}
\begin{align}
\label{primitivePJJsq}
P\lrcorner (J\wedge J) & = -2P\, ,
\\
\label{primitiveJJP}
J\lrcorner (J\wedge P) & = P\, ,
\\ 
\label{primitiveJPantisym}
J^{\alpha}{}_{[i} P_{j]\alpha} & = 0\, ,
\\
\label{primitiveJJsqJPantisym}
(J\wedge J)^{abc}{}_{[i} (J\wedge P)_{j]abc} & = 0\, .
\end{align}
\end{subequations}
Finally, %
together with eq.~\eqref{defjsq}, we obtained the following identity 
for any transverse primitive $(2,1)_0$ form $\rho$:
\be
\label{rhoJinsert}
J^\alpha{}_{[i}\rho_{jk]\alpha} = -\frac{i}{3}\rho_{ijk}\, .  
\ee

\subsection{7d to 6d Hodge-Dual Identities}
\label{hodgeidentities}

We first recall the following general algebraic identity 
for forms $A$ and $B$ of degrees
$k\leq l$, respectively, on a $d$-dimensional manifold:
\begin{equation}
\label{Hodgeidentity}	
A \lrcorner B = (-1)^{s+(l-k)(d-l)} \star(A\wedge\star B)\, ,
\end{equation}
with $s=0,1$ for a Euclidean or Lorentzian signature (mod 2), respectively.

We also use the double Hodge in a general dimension $d$ and both signatures $s$
\be
\label{doublehodge}
\star_d \star_d A = (-1)^{s+k(d-k)} A
\quad
\implies
\quad
\star_6 \star_6 A = (-1)^{k(6-k)} A\, ,
\quad
\star \star A = (-1)^{1+k(7-k)} A\, .
\ee

It is useful to record here the general Weil formula 
for the Hodge dual of a primitive $(p,q)$-type $k$-form $\alpha$ 
in an SU$(3)$ structure (in complex dimension 3) \cite{voisin2002},
which reads 
\be
\label{hodgesu3prim}
\star_6 \alpha = 
(-1)^{\frac12k(k+1)}i^{p-q} \frac{1}{(3-k)!} 
J^{3-k} \wedge \alpha\, .
\ee
Here $k=p+q\leq3$, $J^{3-k}$ denotes the corresponding wedge
power of $J$, and primitivity means $J\lrcorner\alpha=0$.
This formula is pointwise algebraic and does not require the
almost-complex structure to be integrable.

Using this Weil formula %
for any primitive $(1,1)_0$ form $P$, 
we obtain the Hodge dual on the base as
\begin{align}
\label{primHodgeP}
\star_6 P=-J\wedge P .
\end{align}
Then, using eqs.~\eqref{primHodgeP} and \eqref{doublehodge}, we get
\begin{align}
\label{primHodgeJP}
\star_6(J\wedge P)=-P .
\end{align}
Using the Weil formula %
for any primitive $(2,1)_0$ form $\rho$, 
we obtain the Hodge dual on the base as
\be
\label{primHodgerho}
\star_6\rho=i\rho\, .
\ee

For applications to the seven-dimensional bilinears, we require the $K$-split 
relation between the seven- and six-dimensional Hodge duals on the base. 
Using the seven-dimensional orientation convention in eq.~\eqref{orient} 
and the relation between $K$ and the orthonormal frame in 
eq.~\eqref{Killingdual}, the key identities are
\be
\label{hodgedualsplit}
\star \alpha = (-1)^{p+1} \frac1h K \wedge \star_6\alpha\, ,
\qquad 
\star ( K \wedge \alpha ) = h \star_6\alpha\, ,
\ee
where $\alpha$ is a transverse base p-form.
Using the Hodge duals of the SU$(3)$ structure on 
the six-dimensional base in eqs.~\eqref{normal}, \eqref{su3hodge}, 
we then get for the Hodge duals of the seven-dimensional bilinears:
\begin{align}
\star J &= -\frac1{2h} K \wedge J\wedge J\, ,
\quad  
\star( K \wedge J ) = \frac{h}{2} J \wedge J\, ,
\quad 
\star K = \frac{h}{6} J \wedge J \wedge J\, , 
\\
\star\Psi & = -\frac{i}{h}\,K\wedge\Psi\, ,
\qquad 
\star \overline \Psi = \frac{i}{h} K \wedge \overline \Psi\, ,
\quad 
\star 1 = - \frac{1}{6h} K \wedge J \wedge J \wedge J\, ,
\end{align}
and for the general 3-form bilinears in eqs.~\eqref{generic3forms}, 
using the internal SM angle parameters defined in eqs.~\eqref{genericscalars}, 
we get
\begin{align}	
\label{hodgeu11gen}
\star u^{11} & = - \frac{h s p}{2} J\wedge J 
- \frac{i}{2} K \wedge \Bigl( (1-c) p^2 \Psi + (1+c) \overline \Psi\Bigr)\, ,\\
\label{hodgeu3gen}
\star U_3 & = \frac{h c}{2}\,J\wedge J
+ \frac{is}{2} K \wedge \bigl( p \Psi - p^{-1} \overline \Psi \bigr)\, .
\end{align}

\subsection{Higher-Rank Projections by Isotypes}
\label{higherproj}

Using the contraction and Hodge-dual identities in
Appendices~\ref{contractidentities}, \ref{hodgeidentities}, we
collect here the remaining singlet and fundamental projections of
the algebraic relations: the independent lower-rank projections
were given in Sections~\ref{su3singlet}, \ref{su3fundamental};
the additional equations below provide consistency checks and
impose no further independent constraints.

\paragraph{Singlet projections.}

We obtain the following additional equations for the singlet
isotypic component from the bilinear relations at ranks 1 to 3:
\begin{subequations}
\label{high0formproj}
\begin{align}
\text{Eq.~\eqref{br1c}$_{[0]}$:} 
\quad 0 & = i h s \lambda + X^3 \left(c u + i v \right)\, , 
\\	%
\text{Eq.~\eqref{br1r}$_{[0]}$:} 
\quad 0 & = X^3 s u\, , 
\\ %
\text{Eq.~\eqref{br1rg}$_{[0]}$:} 
\quad 0 & = 8 g \Delta - 2 c \lambda - X^3 \sigma\, , 
\\ %
\text{Eqs.~\eqref{br1}$_{[0]}$} 
& = J \lrcorner \text{Eqs.~\eqref{br2}$_{[ij]}$}\, , 
\\ 
\text{Eq.~\eqref{br0cg}} 
&  = J \lrcorner \text{Eq.~\eqref{br3cg}$_{[0ij]}$}\, , 
\\
\text{Eq.~\eqref{br0rg}} 
&  = J \lrcorner \text{Eq.~\eqref{br3rg}$_{[0ij]}$}\, , 
\\
J \lrcorner \text{Eq.~\eqref{br3r}$_{[0ij]}$:} 
\quad 0 & = 0\, ,
\\ %
\Psi \lrcorner \text{Eq.~\eqref{br3cg}$_{[ijk]}$:} 
\quad 0 & = \left( c + 1 \right) h 
\left( 8 g \Delta - 2 \lambda - X^3 \sigma \right) 
- 2 X^3 s \left( v + i u \right)\, ,
\\ %
\overline \Psi \lrcorner \text{Eq.~\eqref{br3cg}$_{[ijk]}$:} 
\quad 0 & = \left( c - 1 \right) h 
\left( 8 g \Delta + 2 \lambda - X^3 \sigma \right) 
- 2 X^3 s \left( v - i u \right)\, ,
\\ %
\Psi \lrcorner \text{Eq.~\eqref{br3rg}$_{[ijk]}$:} 
\quad 0 & = 8 h s g \Delta - X^3 h s \sigma + 2 X^3 c \left( v + i u \right)\, ,
\\ %
\Psi \lrcorner \text{Eq.~\eqref{br3r}$_{[ijk]}$:} 
\quad 0 & = i h s \lambda - X^3 \left( u - i v\right)\, .
\end{align}
\end{subequations}

\paragraph{Fundamental projections.}

We obtain the following additional equations for the fundamental 
isotypic component from the bilinear relations at ranks 1 to 3:
\begin{subequations}
\label{high1formproj}
\begin{align}	
\text{Eq.~\eqref{br1c}$_{[i]}$:} 
\quad 0 & = 5 h s D - s f - X^3 s q 
+ \frac{i}{2} h p \left( 1 - c \right) 
\left( \alpha - i X^3 \beta \right)\, ,
\\ %
\text{Eq.~\eqref{br1rg}$_{[i]}$:} 
\quad 0 & = 5 h D - c f - X^3 q 
+ \frac{i}{2} h s p \alpha\, ,
\\%
\text{Eqs.~\eqref{br1}$_{[i]}$} 
& = \text{Eqs.~\eqref{br2}$_{[0i]}$} 
= J \lrcorner \text{Eqs.~\eqref{br3}$_{[ijk]}$}\, ,\\	
\text{Eq.~\eqref{br2c}$_{[ij]}\lrcorner \Psi$:} 
\quad 0 & = \left( 1 + c \right) \left( 5 h D - f - X^3 q \right) 
+ \frac{i}{2} h s p \left( \alpha - i X^3 \beta \right)\, ,
\\ %
\label{prim2formbr2rgproj}
\text{Eq.~\eqref{br2rg}$_{[ij]}\lrcorner \Psi$:} 
\quad 0 & = 2 s f + i h p \left( c \alpha + i X^3 \beta \right)\, , 
\\ %
\text{Eqs.~\eqref{br2}$_{[ij]}\lrcorner \Psi$} 
& = \text{Eqs.~\eqref{br3}$_{[0ij]}\lrcorner \Psi$}\, .
\end{align}
\end{subequations}

\smallskip
\subsection{Positive Topological Mass Term}
\label{altcompletion}

Here we construct and examine possible supersymmetric completions 
to an alternative bosonic action, which differs from ours in 
eq.~\eqref{bosaction}, only in its sign of the topological mass term being 
positive, and is used in the bosonic works of
\cite{Chow:2007ts,Bobev:2023bxl,Bobev:2025xan}. 
We test the consistency and validity of these possible completions 
via their consistent truncation to the minimal theory 
and an integrability analysis corresponding to those which we carried 
out in Sections~\ref{spinors} and \ref{solanalysis}, respectively.
With the maximal-coupling relation imposed, the alternative Lagrangian reads
\be
\label{altpositiveaction}
\mathcal L^+=\mathcal L_{\rm GI}+gB\wedge G+B\wedge F_1\wedge F_2\, ,
\ee
where $\mathcal L_{\rm GI}$ contains the kinetic terms and scalar potential of
eq.~\eqref{bosaction}. Following Section~\ref{spinors} we first map the 
bosonic action of \cite{Gauntlett:2000ng}, allowing for undetermined convention 
signs. We then match its Cartan-reduced KSEs, and their minimal-theory 
truncation to \cite{Dibitetto:2017tve}. 
Unlike the case for our action in eq.~\eqref{bosaction}, 
this mapping is found to be non-unique, as it leaves two possible sign 
assignments, which turn out to be inconsistent with one another.
We further examine these two candidate completions by testing their 
integrability in common spacetime, Hodge and Clifford conventions, 
which are taken to be those used throughout this work.

\paragraph{Mapping KSEs.}
After the mapping of bosonic actions, 
absolute field normalizations are the same as those of Section~\ref{spinors}.
The remaining undetermined signs are parametrized by $\sigma_g$,
which multiplies both couplings in \cite{Gauntlett:2000ng}, and
$\sigma_G$, which multiplies $G$ in the linear higher-form map
in eq.~\eqref{BtoGmap}.
Each candidate completion is obtained from the KSEs in
eqs.~\eqref{Killinggravitin}, \eqref{Killingspinhalf} by
\be
\label{altKSEmap}
(g,m,G)\longmapsto(\sigma_g g,\sigma_g g,\sigma_G G)\, ,
\ee
while the scalar and vector fields, internal Cartan matrices
and spinors are unchanged.
The complete relative sign mapping that we find is
\begin{equation}
\label{fullKSEmap}	
(\sigma_g,\sigma_G)=
\begin{cases}
(+1,+1) & \quad \text{Our KSEs},\\
(+1,-1) & \quad \text{Candidate 1},\\
(-1,+1) & \quad \text{Candidate 2},
\end{cases}
\end{equation}
so that candidate~1 reverses explicit $G$ terms with respect to our KSEs, 
whereas candidate~2 reverses the signs of gauge connection and the mass/scalar 
shifts in the KSEs. 
The simultaneous reversal of Cartan generators and vector potentials 
remains a free convention choice.

\paragraph{Mapping integrability.}
In Proposition~\ref{independentbosonic}, we used the spinorial
relations in eqs.~\eqref{intspinorialidentities}. 
Applying the substitutions in eq.~\eqref{altKSEmap} to these relations 
gives their counterparts for the two candidate KSE systems.
To identify the bosonic EOM residuals entering these relations,
we apply the same substitutions to our action in
eq.~\eqref{bosaction}, denoted here by $\mathcal L^-$:
\begin{equation}
\label{alttransportaction}
\mathcal L^-_{\sigma}
=
\mathcal L_{\rm GI}
-\sigma_g gB\wedge G
+\sigma_G B\wedge F_1\wedge F_2\, .
\end{equation}
We compare the resulting EOM residuals with those of the
positive-sign action $\mathcal L^+$ in eq.~\eqref{altpositiveaction},
to determine whether compatibility with the candidate KSEs imposes
additional restrictions to the possible solutions. 
We carry out this comparison in the equal-charge timelike sector of 
Proposition~\ref{independentbosonic}.

The positive-sign action has the same Abelian EOM residual as 
eq.~\eqref{intvectorresidual}, denoted by $E_A^+$, whereas its 3-form EOM 
residual is modified from ours in eq.~\eqref{intformresidual} to
\be
\label{altformresidual}
E_B^+=\dd(X^4\star G)-2gG-F\wedge F\, .
\ee
The positive-sign action has the same scalar and Einstein equations residuals. 
The normalized residuals associated with eq.~\eqref{alttransportaction} are
\begin{subequations}
\label{altresiduals}
\begin{align}
\mathcal A_{\sigma}
&\equiv E_A^++(1-\sigma_G)X\star(F\wedge G)\, ,\\*
\mathcal B_{\sigma}
& \equiv E_B^++2(\sigma_g+1)gG+(1-\sigma_G)F\wedge F\, .
\end{align}
\end{subequations}
The complete spinorial identities in eqs.~\eqref{intspinorialidentities} 
therefore hold with the replacements
\be
\label{altidentitymap}
(E_A,E_B)\longmapsto(\mathcal A_{\sigma},\sigma_G\mathcal B_{\sigma})\, .
\ee
The differential bilinears and Bianchi identities likewise transport
eq.~\eqref{intlongitudinalform} to
\be
\label{altlongitudinal}
\iota_K\mathcal B_{\sigma}=0\, .
\ee

\paragraph{Candidate completion 1.}
For the sign assignment in eq.~\eqref{fullKSEmap},
both topological terms of $\mathcal L^-_{\sigma}$ in
eq.~\eqref{alttransportaction} have opposite signs to those of
$\mathcal L^+$ in eq.~\eqref{altpositiveaction}.
The spinorial relations in eqs.~\eqref{intspinorialidentities}, with the 
replacements in eq.~\eqref{altidentitymap}, together with
eq.~\eqref{altlongitudinal}, force both residuals in
eqs.~\eqref{altresiduals} to vanish by the same arguments used in 
Proposition~\ref{independentbosonic}.
With the sign assignment in eq.~\eqref{fullKSEmap}, 
this imposes the additional restrictions
\begin{subequations}
\label{altextra}
\begin{align}
F\wedge G&=0\, ,\\*
2gG+F\wedge F&=0\, .
\end{align}
\end{subequations}
These additional restrictions exclude verified known timelike
supersymmetric solutions. The maximally embedded flow in
eq.~(4.15) of \cite{Dibitetto:2017tve}, discussed in Section~\ref{minsol},
has vanishing Cartan flux and nonzero 4-form flux at nonzero gauging. Yet, 
eqs.~\eqref{altextra} would force its 4-form to vanish. 
This candidate completion therefore fails to preserve the timelike 
supersymmetric solution space.

\paragraph{Candidate completion 2.}
For the sign assignment in eq.~\eqref{fullKSEmap}, 
eq.~\eqref{alttransportaction} is exactly the positive-sign action, and both 
residuals in eqs.~\eqref{altresiduals} coincide with its actual EOM residuals.
Proposition~\ref{independentbosonic} consequently applies with the
same independent-equation results: one must impose only
\be
\label{altindependentEOM}
(E_B^+)_{\perp}^{(2,2)}=0\, ,
\ee
and additionally the scalar equation in the active classes when the complex 
scalar bilinear vanishes.  
In fact, the complete unequal-charge action and KSEs are simply related
to ours by reversing the signed coupling while keeping all bosonic
fields and Killing spinors fixed. Thus, this mapping preserves
the supersymmetric solution space.

To conclude, the non-unique mapping that arises from the action with a positive 
topological mass term thus yields two candidate completions that are 
inconsistent with one another, as they have different integrability 
implications. 
Only one candidate qualifies as a supersymmetric completion that is consistent 
with our formulation of the theory and its known solutions.

\bibliographystyle{JHEPmod}
\bibliography{gravlocal}

\end{document}